\newtheorem{lem}{Lemma}
\def\mb{\mathbf}
\def\bs{\boldsymbol}
\def\ds{\displaystyle}
\def\RCMLEL{$\text{RCML}_\text{EL}$}
\def\RCMLC{$\text{RCML}_\text{Chen}$ }
\def\CNCEL{$\text{CNCML}_\text{EL}$ }
\def\CNCML{$\text{CNCML}$ }
\def\CNCtrue{$\text{CNCML}_\text{true}$ }
\def\ben{\begin{equation*}}
\def\een{\end{equation*}}
\def\be{\begin{equation}}
\def\ee{\end{equation}}
\def\beaa{\begin{eqnarray*}}
\def\eeaa{\end{eqnarray*}}
\def\bea{\begin{eqnarray}}
\def\eea{\end{eqnarray}}
\def\biea{\begin{IEEEeqnarray}{rCl}}
\def\eiea{\end{IEEEeqnarray}}
\DeclareMathOperator{\rank}{rank}
\DeclareMathOperator{\sinc}{sinc}
\DeclareMathOperator{\lr}{LR}
\DeclareMathOperator{\tr}{tr}
\DeclareMathOperator{\diag}{diag}
\DeclareMathOperator{\Kmax}{K_{\max}}
\title{Robust Covariance Matrix Estimation for Radar Space-Time Adaptive Processing (STAP)}
\author{Bosung Kang}
\begin{document}
\frontmatter

%


\psutitlepage

\psucommitteepage

\thesisabstract{SupplementaryMaterial/Abstract}

\thesistableofcontents

\thesislistoffigures

\thesislistoftables




\thesismainmatter

\allowdisplaybreaks{
%
\chapter{Introduction}
\label{Ch:Introduction}

\section{Motivation}
\label{Sec:Motivation}

Space-time adaptive processing (STAP), joint adaptive processing in the spatial and temporal domains \cite{Monzingo80,Guerci03,Klemm02}, is the cornerstone of radar signal processing and creates the ability to suppress interfering signals while simultaneously preserving gain on the desire signal. However, for STAP to be successful, interference statistics, specifically interference covariance matrix, must be estimated from observed signals. In the absence of prior knowledge about the interference environment, a large number of target free disturbance training samples are required to obtain accurate estimates. A compelling challenge emerges because such a large number of homogeneous training samples are generally not available in practice \cite{Himed97}. Therefore, recent research in radar STAP has focused on overcoming this practical issue. Knowledge-based processing which uses a priori information about the radar environment is one of the most popular approaches to this problem. Figure \ref{Fig:PreviousWorks} shows previous works related to the knowledge-based processing.

\begin{figure}
\centering
\includegraphics[scale=0.6]{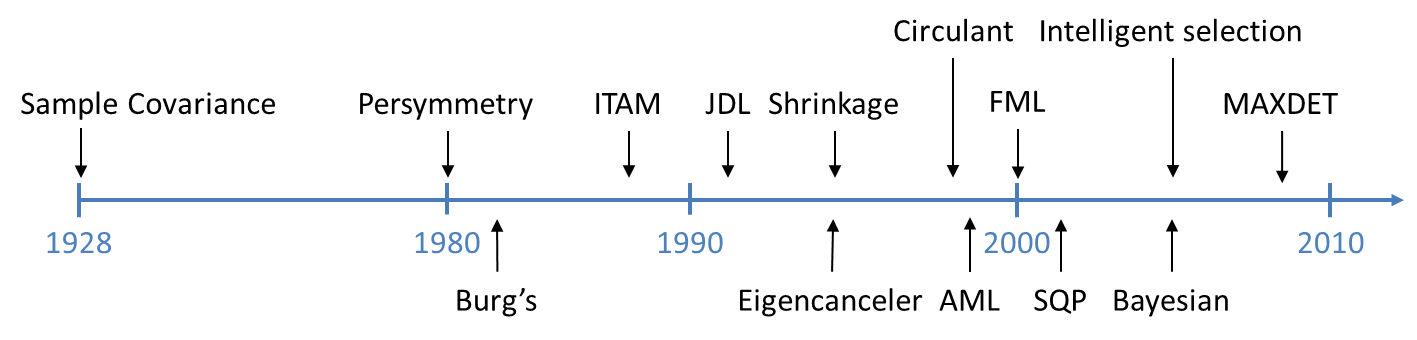}
\caption{Knowledge-base processing algorithms}
\label{Fig:PreviousWorks}
\end{figure}

A subset of these techniques includes intelligent training selection \cite{Guerci06} and spatio-temporal degree reduction \cite{Wicks06,Wang91,Gini08}. Enforcing structure on covariance matrices, which is the main focus of this dissertation, has also been pursued. Examples of structure include persymmetry \cite{Nitzberg80}, Toeplitz structure \cite{Burg82,Li99,AlHomidan02,Abramovich98,Wilkes88}, circulant structure \cite{Conte98}, and eigenstructure \cite{Hoffbeck96,Abramovich81,Carlson88,Haimovich96,Steiner00,Chen10,Aubry12}. The fast maximum likelihood (FML) method \cite{Steiner00} which enforces a special eigenstructure is shown to be the most competitive technique experimentally \cite{Rangaswamy04Sep,Gini08}. In particular, the disturbance covariance matrix represents the clutter matrix which is a low rank and positive semidefinite plus an identity matrix scaled by noise power which represents noise subspace. The FML method ensures that the estimated covariance matrix has eigenvalues all greater than the noise level. Under ideal conditions which means there are no mutual coupling between array elements and no internal clutter motion, the rank of the clutter matrix is easily calculated by the Brennan rule \cite{Ward94}. Even if there is mutual coupling in practice, it is well known that the rank of the clutter matrix is much less than the spatio-temporal degree (the dimension of the covariance matrix).

Since the covariance matrix from a stationary stochastic signal is Hermitian and Toeplitz, estimation of Toeplitz covariance benefits many applications such as array processing and time series analysis. Though various estimation and approximation techniques of Toeplitz covariance matrices have been proposed \cite{Miller87,Little02,Forster89,Jansson00}, it is well-known that there is no closed-form solution of the ML estimation of a Toeplitz covariance matrix. Therefore, many Toeplitz covariance estimation methods require large number of training samples for computational tractability \cite{Li99}, otherwise it involves iterations or numerical approaches \cite{AlHomidan02}. In particular, the iterated Toeplitz approximation method (ITAM) \cite{Wilkes88} and the iterative approach by Forster \emph{et al.} \cite{Forster89} deal with both the clutter rank and Toeplitz structure jointly. However, both approaches are based on a computationally expensive iterative procedure. The ITAM estimator has been shown to be effective under very low training, but does not show performance improvements for realistic training regime \cite{Kang15}.

Most covariance estimation techniques which exploit the knowledge of radar environment assume that the constraint used in the estimation problem is known or can be estimated from radar physics under ideal condition. For example, the Brennan rule says the rank is easily calculated. However, in practice the clutter rank departs from the Brennan rule prediction due to antenna errors and internal clutter motion. In this case, the rank is not known precisely and needs to be determined. Though determination of the number of signals is a classical eigenvalue problem, it is important to note that the problem does not have a simple and unique solution. A noise level and a condition number of a disturbance covariance matrix should be estimated as well if they are unknown or not precisely known in practice. The expected likelihood (EL) approach \cite{Abramovich07} proposed by Abramovich \emph{et al.} provides a novel criterion to determine a regularization parameter based on the statistical invariance property of the likelihood ratio (LR) values. The regularization parameters are selected so that the LR value of the estimate agree as closely as possible with the statistics of the LR value of the true covariance matrix which does not depend on the true covariance itself.

\section{Overview of Dissertation Contributions}
\label{Sec:Contributions}

In view of the aforementioned observations, we develop structured covariance estimation methods which exploit practical and powerful constraints, that is, the rank of the clutter matrix and Toeplitz structure. Furthermore, we also develop covariance estimation methods which automatically and adaptively determine the values of practical constraints via an expected likelihood approach. The main contributions of this dissertation are described in detail in the following sections.

\subsection{Estimation of Structured Covariance Matrices}
\label{Sec:Contribution1}

We first set up the optimization problem to estimate the disturbance covariance matrix with a structural constraint and the rank constraint. The estimation problem is unfortunately not convex, because neither the cost function nor the rank constraint are convex. However, we show that using a transformation of variables, reduction to a convex form is possible. Furthermore, by invoking Karush-Kuhn-Tucker (KKT) \cite{Boyd04} for the resulting convex problems, it is possible to derive a closed-form solution. Akin to the FML, we initially assume that the noise power is known. Then we extend our results to the case of unknown noise variance. In that case, we assume that only a lower bound on the noise power is available.

Our second contribution aims to break the classical trade-off between performance under low training regimes and computational complexity. We develop a computationally efficient approximation of structured Toeplitz covariance under a rank (EASTR) constraint. EASTR satisfies both the Toeplitz structure property (at least approximately) and the rank information of the clutter subspace at the same time. Decades of research has shown that enforcing even each constraint individually can be quite onerous. We propose to decouple the rank and Toeplitz constraints, which lends analytical tractability. Crucially, the EASTR solution does not need iterative steps, that is, involves a cascade of two steps in which a closed form solution is available in each step. First, a closed form solution using ML employing the rank constraint, is obtained from the RCML estimator. Next we propose a new method to perturb the eigenvalues of the RCML estimator in a rank preserving manner so as to impose the Toeplitz structure. The merits of the RCML estimator and EASTR are also verified experimentally over both simulated data and the knowledge-aided sensor signal processing expert reasoning (KASSPER) data set.

\subsection{Estimation of Constraints under Imperfect Knowledge}
\label{Sec:Contribution2}

We develop covariance estimation methods which automatically and adaptively determine the values of practical constraints via an expected likelihood approach for practical radar STAP. The proposed methods guide the selection of the constraints via the expected likelihood criteria in the case that the knowledge of the constraints is imperfectly known in practice. We consider three different cases of the constraints: 1) only the clutter rank constraint, 2) both the clutter rank and the noise power constraints, and 3) the condition number constraint. For each case, we develop significant analytical results. First we formally prove that the rank selection problem has a unique solution. Second, we derive a closed form solution of the optimal noise power for a given rank, which means an iterative or numerical method is not required to find the optimal noise power and enables fast implementation. Finally we also prove there exists the unique condition number for the condition number estimation method. Experimental investigation on a simulation model and on the KASSPER data set shows that the proposed methods for three different cases outperform alternatives such as the FML, leading rank selection methods in radar literature and statistics, and the ML estimation of the condition number constraint.

\section{Organization}
\label{Sec:Organization}

A snapshot of the main contributions of this dissertation is presented next. Publications related to the contribution in each chapter are also listed where applicable.

\textbf{Chapter \ref{Ch:Background}} discusses background of radar space time adaptive processing (STAP) and introduce previous works for knowledge-based processing in radar applications. First, significance and benefit of radar STAP which is main application of the proposed covariance estimation methods are presented. We also discuss the practical limitations of radar STAP and approaches to overcome the limitation. One of the approaches is to exploit knowledge of radar environment. In particular, many researchers have enforced structure on covariance and shrinkage estimation techniques have also been considered. A more thorough review of these techniques is provided. We introduce an expected likelihood approach which is useful to determine parameters in optimization problem. Our third contribution, the method of automatic selection of constraints, is based on this approach.

In \textbf{Chapter \ref{Ch:RCML}}, we propose a novel covariance estimation method which exploits the rank of the clutter subspace, rank constrained maximum likelihood (RCML) estimation. We formulate the RCML estimation problem, and subsequent derivation is provided. The solutions of the estimation and optimization problems are derived for the two cases of both known and unknown (known lower bound) noise levels. Experimental validation reports on the performance of the proposed estimator and compares it against well-known existing methods in terms of signal to interference and noise ratio. In addition, we discuss practical merits of the proposed method, such as probability of detection and whiteness test. The material related to this method was presented at the 2012 IEEE International Radar Conference \cite{Monga12} and the 2013 IEEE International Radar Conference \cite{Kang13} and was published in the IEEE Transactions on Aerospace and Electronic Systems \cite{Kang14}.

Our second contribution presented in \textbf{Chapter \ref{Ch:EASTR}} is towards breaking a classical trade-off between computational complexity and performance in low or realistic training regimes for estimation problem of Toeplitz disturbance covariance matrix. We develop a computationally efficient approximation of structured Toeplitz covariance under a rank constraint (EASTR). We consider two cases: 1) when the Toeplitz constraint is satisfied exactly, we obtain the exact Toeplitz estimate, satisfying the rank constraint and Toeplitz property and 2) when the Toeplitz constraint is not exactly satisfied, we make slight modification on the Toeplitz and obtain approximately the Toeplitz estimate. Experimental validation of the proposed method is also provided to compare the proposed method against widely used existing radar STAP covariance estimators. This material was presented at the 2013 IEEE Asilomar Conference on Signals, Systems and Computers \cite{Kang13Asilomar}, the 2013 IEEE 5th International Workshop on Computational Advances in Multi-Sensor Adaptive Processing (CAMSAP) \cite{Kang13CAMSAP}, and the 2014 IEEE International Radar Conference \cite{Kang14Radarcon} and was published in the IEEE Transactions on Aerospace and Electronic Systems \cite{Kang15} and in the IEEE Aerospace and Electronic Systems Magazine \cite{Kang15Magazine}.

\textbf{Chapter \ref{Ch:EL}} presents our third contribution considering a robust covariance estimation method under imperfectly known constraints. We propose a robust covariance estimation method via an expected likelihood approach and solutions of the constraint estimation problems for three different cases of the constraints. Experimental validation of our proposed methods is provided to report the performance of the proposed methods and compare them against existing parameter selection such as rank selection and maximum likelihood estimators on both a simulation model and the KASSPER data set. The material was presented at the 2015 IEEE International Radar Conference \cite{Kang15Radarcon} and has been submitted to the IEEE Transactions on Aerospace and Electronic Systems in May 2015 \cite{Kang16}.

\chapter{Background}
\label{Ch:Background}

\section{Space-Time Adaptive Processing (STAP)}
\label{Sec:STAP}
Signal detection using adaptive processing in both spatial and temporal domains offers significant benefits in a variety of applications including radar, sonar, satellite communications, and seismic systems \cite{Monzingo80}. Specifically, the space-time adaptive processing (STAP) enables to suppress interference while preserving gain on desired signal. Consider the airborne array radar with $J$ elements. The radar transmits a pulse in a certain direction and the transmitted signal is reflected by various objects such as buildings, land, water, vegetation, and one or more targets of interest. On receive, the radar samples this reflected wave at a high rate, with each of the $K$ samples corresponding directly to reflections from a specific range. The received signal may include not only desired signals but also undesired interfering effects from extraneous objects (clutter) and electronic counter-measures (ECM), such as jamming. In addition, the background white noise caused by the radar receiver circuitry as well as man-made sources and machinery is also included in the received returns. This process is repeated for $P$ pulses transmitted. Therefore the entire received data can be $J \times P \times K$ data cube.

More precisely, the radar receiver front end consists of an array of $J$ antenna elements, which receives signals from targets, clutter, and jammers. These reflections induce a voltage at each element of the antenna array, which constitutes the measured array data at a given time instant. Snapshots of the measured data collected at $P$ successive time epochs give rise to the spatio-temporal nature of the received radar data. The spatio-temporal product $JP=N$ is defined to be the system dimensionality. Figure \ref{Fig:interference} uses the angle-Doppler space to illustrate the need for space-time adaptive processing (STAP). A target at a specific angle and traveling at a specific velocity (corresponding to a Doppler frequency) occupies a single point in this space. A jammer originates from a particular angle but is temporally (in Doppler domain) white and the clutter occupies a ridge in this 2-D space. Consequently the target signal is masked by white jammer in Doppler domain and the clutter in spatial domain and therefore, carrying out merely temporal (Doppler domain) or spatial (angle domain) processing fails to separate the target from the interference \cite{Ward94}. On the other hand, joint domain processing in angle and Doppler enables target detection as shown in Figure \ref{Fig:interference}.

\begin{figure}[!t]
\centering
\includegraphics[scale=0.6]{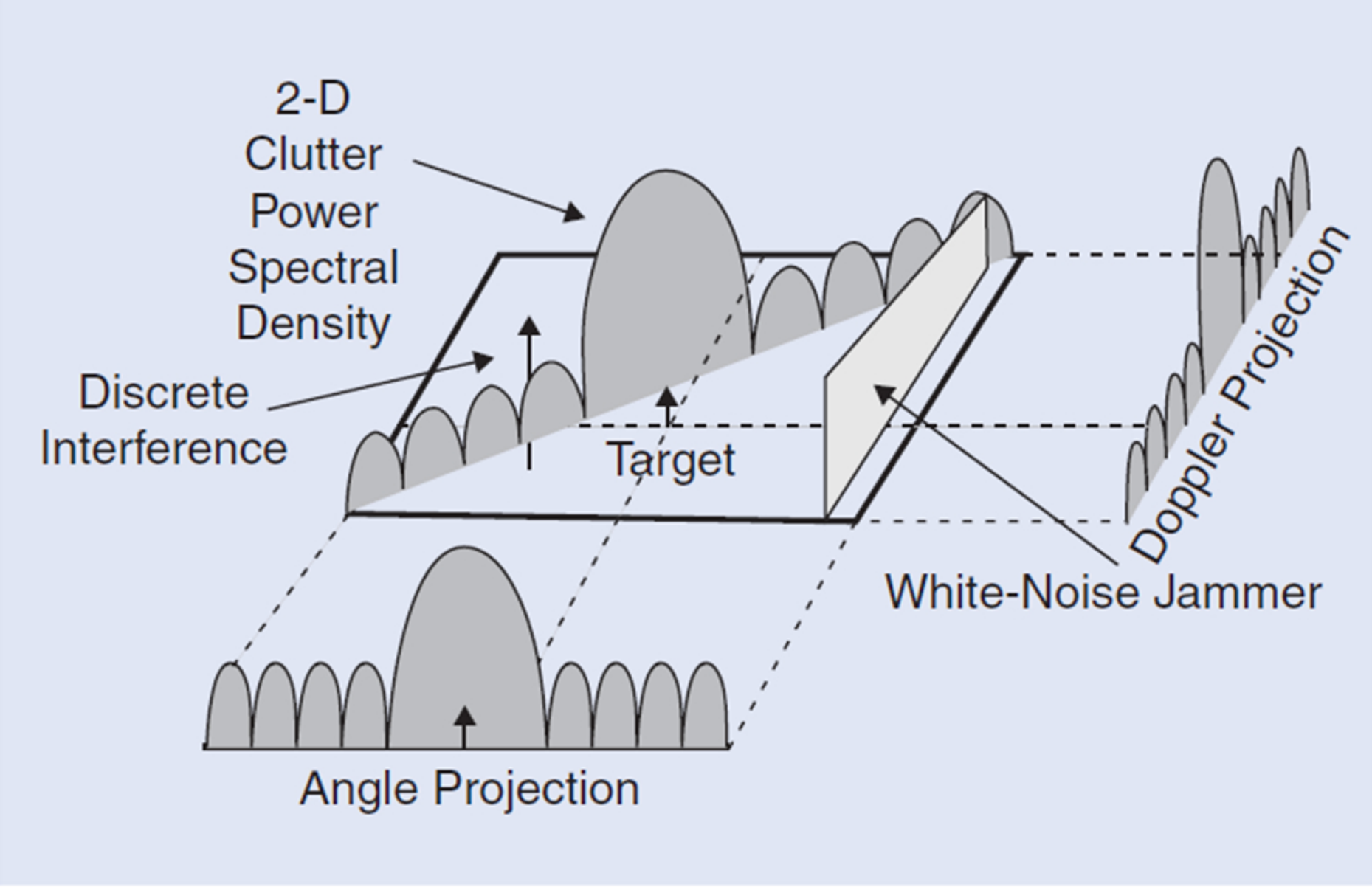}
\caption{The target and interference scenario in an airborne radar.}
\label{Fig:interference}
\end{figure}

The target detection problem can be cast in the framework of a statistical hypothesis test of the form
\bea
H_0 : \mb{x} & = & \mb{d} \; = \; \mb{c}+\mb{j}+\mb{n}\label{Eq:h0}\\
H_1 : \mb{x} & = & \alpha \mb{s}(\theta_t,f_t) + \mb{d} \; = \; \alpha \mb{s}(\theta_t,f_t) + \mb{c}+\mb{j}+\mb{n}\label{Eq:h1}
\eea
where $\mb{x} \in \mathds{C}^{JP \times 1}$ is a vector form of the received data under either hypothesis, $\mb{d}$ represents the overall disturbance which is the sum of $\mb{c}$, clutter, $\mb{j}$, jammers, and $\mb{n}$, the background white noise. The vector $\mb{s}$ is a known spatio-temporal steering vector that represents the signal returned from the target for a specific angle and Doppler and $\alpha$ is the unknown target complex amplitude. The steering vector $\mb{s}$ is defined in the case of an equally spaced linear array by
\bea
\mb{s} & = & \mb{s}_t \otimes \mb{s}_s \label{Eq:steeringvector1}\\
\mb{s}_t & = & \big[ 1 \; z_t \; z_t^2 \; \ldots \; z_t^{(N-1)} \big]^T, \label{Eq:steeringvector2}\\
\mb{s}_s & = & \big[ 1 \; z_s \; z_s^2 \; \ldots \; z_s^{(J-1)} \big]^T, \label{Eq:steeringvector3}\\
\mb{z}_s & = & e^{j2\pi f_s} = e^{(j2\pi \frac{d}{\lambda} \sin\phi_t)},\qquad z_t = e^{j2\pi f_t / f_R}, \label{Eq:steeringvector4}
\eea
where $\phi_t$ and $f_t$ are the angle and Doppler frequency respectively, and $\otimes$ denotes the Kronecker vector product, $f_R$ the pulse repetition frequency (PRF), and $\lambda$ the wavelength of operation. Finally, $\mb{s}_t$ and $\mb{s}_s$ represent the temporal and spatial steering vectors, respectively.

Now, a test statistic $\Lambda$ using a weight vector $\mb{w}$ is
\be
\Lambda = | \mb{w}^H\mb{x}|^2 \overset{H_1}{\underset{H_0}{\gtrless}} \lambda,
\ee
where $H$ denotes the Hermitian transpose and $\lambda$ a \emph{threshold} which is determined by the probability of false alarm. The whiten-and-match filter (MF) for detecting a rank-1 signal is the optimum processing method for Gaussian interference statistics. It is given by \cite{Robey92}
\be
\label{Eq:MF}
\mb{w} = \dfrac{\mb{R}_d^{-1}\mb{s}}{\sqrt{\mb{s}^H\mb{R}_d^{-1}\mb{s}}} \; \Rightarrow \Lambda_{MF} = \dfrac{|\mb{s}^H\mb{R}_d^{-1}\mb{x}|^2}{\mb{s}^H\mb{R}_d^{-1}\mb{s}} \overset{H_1}{\underset{H_0}{\gtrless}} \lambda_{MF}
\ee
where $\mb{R}_d$ is a known interference covariance matrix. Eq. (\ref{Eq:MF}) represents the matched filtering of the whitened data $\breve{\mb{x}} = \mb{R}_d^{-1/2}\mb{x}$ and whitened steering vector $\breve{\mb{s}} = \mb{R}_d^{-1/2}\mb{s}$. From Eq. (\ref{Eq:MF}), it turns out that the interference covariance matrix $\mb{R}_d$ is crucial in the detection statistic.

\section{Estimation of Covariance Matrices}
\label{Sec:CovarianceEstimation}

\subsection{Structured Covariance Estimation}
To overcome the practical issue of lack of generous training data, researchers have developed approaches using signal processing and statistical learning techniques, covariance matrix estimation techniques that enforce and exploit particular structure have been pursued. In this section, we discuss representative algorithms, the fast maximum likelihood (FML), shrinkage estimation methods, and eigencanceler.

Steiner and Gerlach proposed a maximum likelihood (ML) solution for a structured covariance matrix that has the form of the identity matrix plus an unknown positive semi-definite Hermitian (PSDH) matrix under the assumption that the input interference is Gaussian \cite{Steiner00}. In particular, the disturbance covariance matrix $\mb R$ represents the exhibits the following structure
\be
\mb R = \sigma^2 \mb I + \mb R_c
\ee
where $\mb R_c$ denotes the clutter matrix which has a low rank and is positive semi-definite and $\mb I$ is an identity matrix. This covariance matrix form is often valid in realistic interference scenarios for radar and communication systems.

The ML estimate for the covariance matrix is given by
\be
\hat{\mb R} = \arg \ds\min_{\mb R_c} [tr \{\mb{R}^{-1}\mb{S}\} + \log(|\mb{R}|)]
\ee
Consider an eigenvalue decomposition of the sample covariance matrix (SCM), $\mb S = \mb\Phi \bs\Lambda \mb\Phi^H$, where $\bs\Lambda$ is an $N \times N$ diagonal matrix with diagonal entries $\lambda_1 \geq \lambda_2 \geq \cdots \geq \lambda_N$ that are eigenvalues of $\mb S$ and $\mb\Phi$ is an $N \times N$ unitary eigenmatrix. Then the ML solution for $\mb R$ is given by
\be
\hat{\mb R} = \mb\Phi \bs\Lambda_0 \mb\Phi^H
\ee
where $\mb\Phi$ is the unitary eigenmatrix associated with the eigenvalue decomposition of the SCM,
\be
\bs\Lambda_0 \equiv \text{Diag} (\lambda_1,\lambda_2,\ldots,\lambda_M,\sigma^2,\sigma^2,\ldots,\sigma^2)
\ee
is an $N \times N$ diagonal matrix, $\lambda_1 \geq \lambda_2 \geq \ldots \geq \lambda_N$ are the eigenvalues of the SCM, and $M$ is the number of eigenvalues greater than $\sigma^2$.
Therefore, the FML technique ensures that the estimated covariance matrix has eigenvalues all greater than $\sigma^2$ by assuming that its value is known, which is sometimes unrealistic.

Shrinkage estimators are commonly used approach to the problem of covariance estimation for high dimensional data \cite{Cao08}. In general, shrinkage estimators shrink the sample covariance matrix to a target structure
\be
\hat{\mb R} = \alpha \mb D + (1-\alpha)\mb S
\ee
where $\mb D$ and $\mb S$ are a positive definite matrix and the sample covariance matrix, respectively.

There are several methods of shrinkage estimators according to how to define the positive definite matrix $\mb D$. For instance, $\mb D$ is considered as the identity matrix in some papers, that is,
\be
\hat{\mb R} = \alpha \mb I + (1-\alpha)\mb S
\ee
This approach is usually used in Ridge regression and Tikhonov regularization. Second, others refer $\mb D$ as an scaled identity matrix by the average of the eigenvalues,
\be
\hat{\mb R} = \alpha \Big(\frac{tr (\mb S)}{N}\Big)\mb I + (1-\alpha)\mb S
\ee
Finally, $\mb D$ can be set to the diagonal of $\mb S$,
\be
\hat{\mb R} = \alpha \text{diag}(\mb S) + (1-\alpha) \mb S
\ee
Here, the shrinkage intensity parameter $\alpha$ is chosen by the leave-one-out likelihood (LOOC), specifically, it is chosen so that the average likelihood of omitted samples is maximized as suggested in \cite{Hoffbeck96}.

The eigencanceler \cite{Haimovich96} provides simultaneous rejection of both clutter and directional interferences by adaptive processing in the spatial and Doppler domains. The eigencanceler uses eigendata to suppress clutter and directional interferences while minimizing noise contributions and maintaining specified beam pattern constraints. They first consider the space-time correlation matrix as the form of
\bea
\mb R & = & E\{\mb x \mb x^H\}\\
& = & \mb R_J + \mb R_c + \mb M_\nu
\eea
where $\mb R_J$, $\mb R_c$, $\mb R_\nu$ are the correlation matrices of the jammers, clutter, noise, respectively and $\mb x$ are the received signals. Each of the correlation matrices can be then written by
\be
\mb R_J = \sum_{\theta_i} \int_{B_\nu} S_{J,i} (\nu) \mb v(\theta_i,\nu) \mb v^H(\theta_i,\nu)d\nu
\ee
\be
\mb R_C = \int_\Theta \int_{B_\nu} S_C (\theta,\nu) \mb v(\theta,\nu) \mb v^H(\theta,\nu)d\nu d\theta
\ee
\be
\mb R_\nu = \sigma_\nu^2 \mb I
\ee
where $\mb v$ is the position vector on the angle $\theta_i$ and Doppler frequency $\nu$ and $S$ is the power spectral density.

The eigencanceler is based on the eigenanalysis which suggests a small number of eigenvalues contain all the information about interferences (jammers and clutter), and therefore, the span of the eigenvectors associated with these significant eigenvalues includes all the position vectors that comprise the interference signals. Under assumption that $r$ dominant eigenvalues and eigenvectors represent interferences, the covariance matrix can be expressed by
\be
\mb R = \sum_{i=1}^r p_i \mb v_i \mb v_i^H + \sigma^2 \mb I
\ee
where $p_i$ and $\mb v_i$ are the clutter power and the eigenvector corresponding to $r$ dominant eigenvalues, respectively.

Aubry \emph{et al.} proposed the method of a structured covariance matrix under a condition number upper-bound constraint \cite{Aubry12}. The initial non-convex optimization problem is
\be
\label{Eq:InitialProblemRCML}
\left\{ \begin{array}{cc}
\ds \max_{\mb R} & f(\mb Z) = \frac{1}{\pi^{NK}|\mb R|^K}\exp(-\tr\{\mb Z^H \mb R^{-1} \mb Z\})\\
s.t. & \mb{R} = \sigma^2 \mb{I} + \mb{R}_c\\
 & \frac{\lambda_{\max}(\mb R)}{\lambda_{\min} (\mb R)} \leq K_{\max}\\
 & \mb{R}_c \succeq \mb 0\\
 & \sigma^2 \geq c \end{array} \right.
\ee

The authors showed that the optimization problem falls within the class of MAXDET problems \cite{Vandenberghe98,DeMaio09} and developed an efficient procedure for its solution in closed form which is given by

\be
\label{Eq:CN1}
\mb{R}^\star = \mb{V}{\mb\Lambda^\star}^{-1}\mb{V}^H
\ee
where
\be
\mb\Lambda^\star = \diag\big(\bs\lambda^\star (\bar u)\big),
\ee
$\bs\lambda^\star (\bar u) = [\lambda_1^\star (\bar u),\ldots,\lambda_N^\star (\bar u)]$ with
\be
\lambda_i^\star (\bar u) = \min \bigg(\min (K_{\max} \bar u, 1), \max \Big(\bar u,\frac{1}{\bar d_i}\Big) \bigg),
\ee
$K_{\max}$ is a condition number constraint, and $\bar u$ is an optimal solution of the following optimization problem,
\be
\label{Eq:OptimizationU}
\left\{ \begin{array}{cc}
\ds\min_u & \sum_{i=1}^N G_i(u)\\
s.t. & 0 < u \leq 1 \end{array} \right.
\ee
where
\be
\label{Eq:Giu1}
G_i(u) = \left\{ \begin{array}{ll}
\log \Kmax - \log u + \Kmax \bar d_i u & \text{if} \quad 0 < u \leq \frac{1}{\Kmax}\\
\bar d_i & \text{if} \quad \frac{1}{\Kmax} \leq u \leq 1 \end{array} \right.
\ee
for $\bar d_i \leq 1$, and
\beaa
\label{Eq:Giu2}
\lefteqn{G_i(u)}\nonumber\\ & = & \left\{ \begin{array}{ll}
\log \Kmax - \log u + \Kmax \bar d_i u & \text{if} \quad 0 < u \leq \frac{1}{\Kmax \bar d_i}\\
\log \bar d_i + 1 & \text{if} \quad \frac{1}{\Kmax \bar d_i} < u \leq \frac{1}{\bar d_i}\\
\frac{1}{\bar d_i} + \bar d_i u & \text{if} \quad \frac{1}{\bar d_i} \leq u \leq 1 \end{array} \right.
\eeaa
for $\bar d_i > 1$. Similarly to the RCML estimator, the ML solution of the eigenvalue is a fucntion of $\bar d_i$'s and the condition number $K_{\max}$.

\subsection{Toeplitz Covariance Estimation}
Many Toeplitz covariance estimation methods have been published in the literature on structured covariance matrix estimation. An important technique is the maximum likelihood (ML) approach. However, it is well known that the ML estimation of a Hermitian Toeplitz covariance matrix has no closed-form solution \cite{Fuhrmann91}. Some previous works consider only the Toeplitz matrix constraint \cite{Li99,Burg82,Miller87}. Li \textit{et al.} proposed asymptotic maximum likelihood (AML) estimation and its closed form solution \cite{Li99}. However, they did not consider the rank constraint and the closed form solution is applied only when the number of sample data is large enough. On the other hand, Al-Homidan proposed an algorithm to find the nearest Toeplitz matrix with rank $r$ given a certain matrix by using LDL$^{\text{T}}$ decomposition \cite{AlHomidan02}. ITAM (Iterated Toeplitz Approximation Method) \cite{Wilkes88} was proposed to find the approximate rank $r$ Toeplitz matrix under structural constraint but it assumes the signal is harmonic with complex exponentials which we do not assume necessarily. We briefly discuss Toeplitz covariance estimation algorithms in this section.

Burg \emph{et al.} proposed a method for estimating a covariance matrix of specified structure from vector samples of the random process \cite{Burg82}. It appears to be the first to study the ML method in its full generality. They assumed that the random process is zero-mean multivariate Gaussian and found the maximum-likelihood covariance matrix that has the specified structure. They used the fact that the gradient must be orthogonal to variations in a feasible set. To derive the necessary conditions, they define the variation of $\mb{R}$ to be
\be
\partial\mb{R} =
\left(
  \begin{array}{cccc}
    \partial \mb{R}(1,1) & \partial \mb{R}(1,2) & \cdots & \partial \mb{R}(1,N) \\
    \partial \mb{R}(2,1) & \partial \mb{R}(2,2) & \cdots & \partial \mb{R}(2,N) \\
    \vdots & \vdots & \ddots & \vdots \\
    \partial \mb{R}(N,1) & \partial \mb{R}(2,N) & \cdots & \partial \mb{R}(N,N) \\
  \end{array}
\right)
\ee
Then, the variation of the determinant of $\mb{R}$ in terms of the variation of $\mb{R}$ will be
\be
\label{Eq:varofdetofR}
\partial|\mb{R}| = |\mb{R}|\text{tr}(\mb{R}^{-1}\partial\mb{R})
\ee
From Eq. (\ref{Eq:varofdetofR}),
\be
\label{Eq:varoflogofdetofR}
\partial \log |\mb{R}| = \text{tr}(\mb{R}^{-1}\partial\mb{R}).
\ee
We have another useful matrix theorem. From $\mb{R}\mb{R}^{-1} = \mb{I}$,
\be
\label{Eq:varofinvofR}
\partial \mb{R}^{-1} = -\mb{R}^{-1}\partial\mb{R}\mb{R}^{-1}
\ee
Now the variation of the cost function is derived by
\beaa
\partial \log (|\mb R|) - \partial \tr (\mb R^{-1} \mb S) & = & -\tr(\mb R^{-1} \partial \mb R) - \tr [\partial (\mb R^{-1}) \mb S]\\
& = & -\tr(\mb R^{-1}\partial \mb R - \mb R^{-1}\partial \mb R \mb R^{-1} \mb S)\\
& = & -\tr (\mb R^{-1}\mb S\mb R^{-1} \partial \mb R - \mb R^{-1}\partial\mb R)
\eeaa
The expression is thus neatly written as
\be
\label{Eq:varofcostfunction}
\text{tr}[(\mb{R}^{-1} - \mb{R}^{-1}\mb{S}\mb{R}^{-1})\partial\mb{R}]
\ee
The condition for minimization is that the variation of the cost function is zero for any feasible variation of $\mb{R}$. Thus the equation to be solved is
\be
\text{tr}[(\mb{R}^{-1} - \mb{R}^{-1}\mb{S}\mb{R}^{-1})\partial\mb{R}] = 0
\ee
Then they used the inverse iteration algorithm to find the estimate of $\mb{R}$. However, we have to find the basis in the feasible set which is the set of all of Hermitian Toeplitz matrices with rank $r$ in order to perform the iteration.

In the case of a signal comprised only of complex exponentials, the true $N \times N$ covariance matrix $\mb{R}_S$ will have the additional property that its rank is equal to $r$, the number of complex exponentials in the signal. If white noise having a variance of $\sigma^2$ is added to this signal, the covariance matrix becomes
\be
\mb{R} = \mb{R}_S + \mb{R}_N
\ee
where $\mb{R}_N = \sigma^2 \mb{I}$. ITAM is a heuristically appealing technique, that uses repeated application of the singular value decomposition (SVD) to produce a rank $r$ Hermitian Toeplitz estimate of $\mb{R}_S$.\\
For the harmonic retrieval problem, the observed signal is assumed to have the form of a sum of $r$ complex exponentials in white noise
\be
x(n) = \ds \sum_{i=1}^r A_i e^{jw_in} + w(n)
\ee
The noise $w(n)$ is assumed to be zero mean and have a variance of $\sigma^2$. The goal of Toeplitz approximation is to find the best model $\hat{\mb{R}}$ for $\mb{R}$ where the model is constrained to have the form
\be
\hat{\mb{R}} = \mb{R}_S + \sigma^2\mb{I}
\ee
where
\bea
\mb{R}_S & = & \ds\sum_{i=1}^r P_i \mb{q}_i \mb{q}_i^H\\
\mb{q} & = & [1 \; e^{j\hat{w}_i} \; e^{j2\hat{w}_i} \; \cdots \; e^{j(N-1)\hat{w}_i}]^T
\eea
First estimate $\hat{\mb{R}}_S$ from the observed covariance matrix $\mb{R}$.
\bea
\mb{R} & = & \mb{U}\mb{\Sigma}\mb{U}^H\\
\mb{U} & = & [\mb{U}_1 \; \mb{U}_2]\\
\mb{\Sigma} & = &
\left[
\begin{array}{cc}
\mb{\Sigma}_1 & \mb{0}\\
\mb{0} & \mb{\Sigma}_2
\end{array} \right]
\eea
\be
\hat{\mb{R}}_S = \mb{U}_1 (\mb{\Sigma}_1 - \lambda_{av}\mb{I})\mb{U}_1^H
\ee
where
\be
\lambda_{av} = \ds\dfrac{1}{M-r} \ds\sum_{i=r+1}^N \lambda_i
\ee
Now, since $\hat{\mb{R}}_S$ is not in general a Toeplitz matrix, the next step is to find the Toeplitz matrix, $\mb T$, that is closest to $\hat{\mb{R}}_S$. This is done by averaging along the diagonals of $\hat{\mb{R}}_S$ and using these average values to define the corresponding diagonals of the new Toeplitz matrix., $\hat{\mb T}$. The problem now is that $\hat{\mb T}$ does not have a rank of $p$, so the next iteration begins by finding the rank $p$ approximation to $\hat{\mb T}$. This establishes the ITAM algorithm \cite{Wilkes88} which is illustrated in Figure \ref{Fig:ITAM}.

\begin{figure}[ht]
\centering
\includegraphics[width=2.5in]{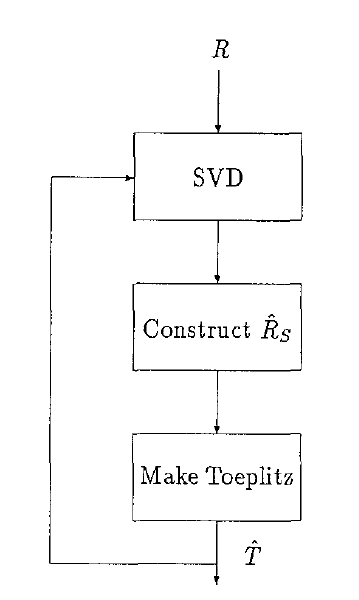}
\caption{The procedure of the ITAM algorithm.}
\label{Fig:ITAM}
\end{figure}

Li \emph{et al.} proposed a computationally efficient method that provides asymptotic (for large samples) maximum likelihood (AML) estimation for structured covariance matrices \cite{Li99}. They provided a closed-form formula for estimating Hermitian Toeplitz covariance matrices by invoking the extended invariance principle (EXIP) \cite{Stoica89} for parameter estimation. They denote Hermitian Toeplitz covariance matrix $\mb{R}(\bs{\phi})$ that is a known function of an unknown parameter vector $\bs{\phi} \in \mathbb{R}^{(2N-1) \times 1}$. Usually, a Hermitian Toeplitz matrix is parameterized such that $\bs{\phi}$ consists of the real and imaginary parts of the first column (or row) of $\mb{R}$.\\
The exact ML estimate $\hat{\bs{\phi}}$ of $\bs{\phi}$ is obtained by maximizing the likelihood function, which is equivalent to
\be
\hat{\bs{\phi}} = \ds\arg\min_{\bs{\phi} \in \mathbb{R}^{(2N-1)\times 1}} L_\phi (\bs{\phi}) = \ln |\mb{R}(\bs{\phi})| + tr \big[ \mb{R}^{-1}(\bs{\phi})\mb{S} \big]
\ee
Let $\mb{r} =  \text{vec} (\mb{R}) \in \mathbb{C}^{N^2 \times 1}$ and $\bs{\gamma} \in D_\gamma \subset \mathbb{R}^{N^2 \times 1}$ denote the vector made from the real and imaginary parts of the elements of $\mb{R}$ both above and on the main diagonal. Evidently, there is an $N^2 \times N^2$ matrix $\mb{F}$ such that
\be
\bs{\gamma} = \mb{F}\mb{r}
\ee
Note that $\bs\gamma$ represents a reparameterization of the covariance matrix $\mb{R}$ originally parameterized by $\bs\phi$. Hence there exists a mapping from $\bs\phi$ to $\bs\gamma$, i.e.,
\be
\bs\gamma = f(\bs\phi) \in D_\gamma, \qquad \forall \bs\phi \in D_\phi
\ee
such that
\be
L_\gamma ( f(\bs\phi)) = L_\phi(\bs\phi), \qquad \forall \bs\phi \in D_\phi
\ee
Let $\tilde{\bs\gamma} = \mb{F}\text{vec}(\mb{S})$. Then
\be
\label{Eq:approxprob1}
\tilde{\bs\phi} = \ds\arg\min_{\bs\phi \in D_\phi} \big[ \tilde{\bs\gamma} - f(\bs\phi)\big]^T\mb{\Gamma}^{-1}\big[ \tilde{\bs\gamma} - f(\bs\phi)\big]
\ee
is an asymptotically (in $K$) valid approximation of the ML estimate $\hat{\bs\phi}$, where $\mb\Gamma$ is a consistent (in $K$) estimate of the covariance matrix of $\tilde{\bs\gamma}$.\\
Let $\tilde{\mb{r}} = \text{vec}(\tilde{\mb{R}})$ where $\tilde{\mb{R}}$ is a sample covariance matrix. Then $\tilde{\bs{\gamma}} = \mb{F} \tilde{\mb{r}}$. Since $\mb\Gamma = \mb{F}\mb{C}\mb{F}^H$ where $\mb{C} = \text{cov}(\tilde{\mb{r}})$, Eq. (\ref{Eq:approxprob1}) is equivalent to
\be
\tilde{\bs\phi} = \ds\arg\min_{\bs\phi \in D_\phi} \big[ \tilde{\mb{r}} - \mb{r}(\bs\phi)]^H\mb{C}^{-1}\big[ \tilde{\mb{r}} - \mb{r}(\bs\phi)\big]
\ee
Finally, they obtain the final formula of AML for estimating a Hermitian Toeplitz covariance matrix.
\be
\hat{\bs\phi} = (\mb\Psi^H \tilde{\mb C}^{-1} \mb\Psi)^{-1}(\mb\Psi^H \tilde{\mb C}^{-1} \tilde{\mb r})
\ee

Al-Homidan proposed an algorithm that finds the nearest symmetric positive semi-definite Toeplitz matrix $\mb{T}$ to given a matrix $\mb{F}$ \cite{AlHomidan02}. The problem is formulated as a nonlinear minimization problem with positive semi-definite Toeplitz matrix as constraints. An algorithm with rapid convergence is obtained by $l_1$ Sequential Quadratic Programming (SQP) method.\\
Consider the following problem: Given a data matrix $\mb{F} \in \mathbb{R}^{N \times N}$, find the nearest symmetric positive semi-definite Toeplitz matrix $\mb{T}$ to $\mb{F}$ and $\text{rank}(\mb{T})=r$, that is,
\bea
\min & \phi = ||\mb{F} - \mb{T}||\\
\text{subject to} & \mb{T} \in T, \quad \rank(\mb{T}) = r\nonumber
\eea
Using partial $LDL^T$ factorization of $\mb{T}$, the partial factors $\mb{T} = \mb{L}\mb{D}\mb{L}^T$ can be calculated such that
\be
\mb{L} = \left[ \begin{array}{cc}
                  \mb{L}_{11} & \mb{0}\\
                  \mb{L}_{21} & \mb{I}_{N-r} \end{array} \right], \;
\mb{D} = \left[ \begin{array}{cc}
                  \mb{D}_{1} & \mb{0}\\
                  \mb{0} & \mb{D}_{2} \end{array} \right], \;
\mb{T} = \left[ \begin{array}{cc}
                  \mb{T}_{11} & \mb{T}_{21}^T\\
                  \mb{T}_{21} & \mb{T}_{22} \end{array} \right]
\ee
where $\mb{L}_{11}$, $\mb{D}_1$ and $\mb{T}_{11}$ are $r \times r$ matrices; $\mb{D}_2$ and $\mb{T}_{22}$ are $(n-r) \times (n-r)$ matrices; $\mb{L}_{21}$ and $\mb{T}_{21}$ are $(n-r) \times r$ matrices; $\mb{D}_1$ is diagonal and positive definite and $\mb{D}_2 = \mb{0}$. Since
\be
\mb{D}_2(\mb{T}) = \mb{T}_{22} - \mb{T}_{21}\mb{T}_{11}^{-1}\mb{T}_{21}^T
\ee
and if the structure of the matrix $\mb{T}$ is in a Toeplitz form, i.e.,
\be
\mb{T} = \left[ \begin{array}{cccc}
x_1 & x_2 & \cdots & x_N\\
x_2 & x_1 & \cdots & x_{N-1}\\
\vdots & \vdots & \ddots & \vdots\\
x_N & x_{N-1} & \cdots & x_1
\end{array} \right]
\ee
the constraint $\mb{T} \in T$ to be written in the form
\be
\mb{D}_2(\mb{T}(\mb{x})) = \mb{0}
\ee
Hence, the optimization problem can be expressed as
\bea
\min & \phi\\
\text{subject to} & \mb{D}_2(\mb{T}) = \mb{0} = \mb{Z}^T\mb{T}\mb{Z} \nonumber
\eea
where
\ben
\mb{Z} = \left[ \begin{array}{c}
                  -\mb{T}_{11}^{-1}\mb{T}_{21}^T\\
                  \mb{I} \end{array} \right]
\een
Now the cost function is
\beaa
\phi & = & \ds\sum_{i,j=1}^{N} (f_{ij} - t_{ij})^2\\
     & = & \ds\sum_{i,j=1}^{N} (f_{ij} - x_{|i-j+1|})^2
\eeaa
and the constraints $\mb{D}_2(\mb{T}) = \mb{0}$ can be written in the following form:
\be
d_{ij}(\mb{x}) = x_{|i-j+1|} - \ds\sum_{k,l=1}^{r}x_{i-k+1}[\mb{T}_{11}^{-1}]_{kl}x_{j-l+1} = 0
\ee
where $i,j=r+1,\cdots,N$ and $[\mb{T}_{11}^{-1}]_{kl}$ denotes the element of $\mb{T}_{11}^{-1}$ in $kl$-position.
Thus
\bea
\min & \phi = \ds\sum_{i,j=1}^{N} (f_ij - x_{|i-j+1|})^2\\
\text{subject to} & d_{ij}(\mb{x}) = 0\nonumber
\eea
the SQP method applied to above problem requires the solution of the QP subproblem
\bea
\ds\min_{\bs{\delta}} & \phi + \nabla\phi^T\bs{\delta} + \dfrac{1}{2}\bs{\delta}^TW\bs{\delta} \qquad \bs{\delta} \in \mathbb{R}^r\\
\text{subject to} & d_{ij} + \nabla d_{ij}^T \bs{\delta} = 0 \qquad i,j = r+1,\ldots,N \nonumber
\eea
giving a correction vector $\bs{\delta}^{(k)}$, so that $\mb{x}^{(k+1)} = \mb{x}^{(k)} + \bs{\delta}^{(k)}$.

\section{Expected Likelihood Approach}
\label{Sec:EL_Background}

Abramovich \emph{et al.} \cite{Abramovich07} proposed the expected likelihood estimation criterion which inherently justifies the appropriate selection of parameters such as loading factor based on direct likelihood matching. Expected likelihood approach is motivated by invariance properties of the likelihood ratio (LR) value which is given by
\bea
\lr(\mb R, \mb Z) & \equiv & \Big[\dfrac{f(\mb Z | \mb R)}{f(\mb Z | \mb S)}\Big]^{1/K}\\
& = & \frac{|\mb R^{-1} \mb S| \exp N}{\exp[\tr(\mb R^{-1} \mb S)]}
\eea
under a Gaussian assumption on the observations, $\mb z_i$'s. Furthermore, the unconstrained ML solution $\mb S$ has the LR value of 1. That is,
\be
\ds\max_{\mb R} \lr (\mb R, \mb Z) = \lr (\mb S, \mb Z) = 1
\ee

However, as shown in \cite{Abramovich07} the LR values of the true covariance matrix $\mb R_0$ are much lower than that of the ML solution $\mb S$. Therefore, it seems natural to replace the ML estimate by one that generates LR values consistent with what is expected for the true covariance matrix. More importantly, Abramovich \emph{et al.} showed \cite{Abramovich07} that the pdf of the LR for the true covariance matrix, which is given by
\bea
\lr (\mb R_0, \mb Z) & = & \frac{|\mb R_0^{-1} \mb S| \exp N}{\exp[\tr(\mb R_0^{-1} \mb S)]}\\
& = & \frac{|\mb R_0^{-1/2} \mb S \mb R_0^{-1/2}| \exp N}{\exp[\tr(\mb R_0^{-1/2} \mb S \mb R_0^{-1/2})]}
\eea
does not depend on the true covariance itself since
\be
\hat{\mb C} \equiv N \mb R_0^{-1/2} \mb S \mb R_0^{-1/2} \sim \mathcal{CW}(K,N,\mb I)
\ee
where $\mathcal{CW}$ represents complex Wishart distribution which is determined entirely by $K$ and $N$ and does not need $\mb R_0$. Therefore, the pdf of LR values for the true covariance matrix can be precalculated for given $K$ and $N$ and indeed the moments of distribution of the LR values were derived by Abramovich \emph{et al.} in their paper \cite{Abramovich07}.

Based on the invariance of the pdf of LR values, the EL approach can be used to determine values of parameters in estimation problems. For instance, the EL estimator for a diagonally loaded SMI technique under homogeneous interference training conditions and fluctuating target with known power is given by \cite{Abramovich07}
\be
\hat{\mb R}_{\text{LSMI}} = \hat\beta \mb I + \mb S
\ee
where
\be
\label{Eq:OptimalBeta}
\hat\beta \equiv \ds\arg_\beta \Bigg\{ \dfrac{|(\beta \mb I + \mb S)^{-1} \mb S| \exp N}{\exp \big( \tr[(\beta \mb I + \mb S)^{-1} \mb S] \big)} \equiv \lr_0 \Bigg\}
\ee
and $\lr_0$ is the reference median statistic, which can be precalculated from the pdf of the LR values
\be
\int_0^{\lr_0} f\big[\lr(\mb R_0, \mb Z)\big] d\lr = 0.5
\ee
where $f\big[\lr(\mb R_0, \mb Z)\big]$ is the invariant pdf of the LR values. 
\chapter{Rank Constrained Maximum Likelihood Estimation}
\label{Ch:RCML}

\section{Introduction}
\label{Sec:RCML_Intro}

%
%
%
%

Space time adaptive processing (STAP), i.e. joint adaptive processing in the spatial and temporal domains \cite{Monzingo04,Guerci03,Klemm02} is the cornerstone of radar signal processing and creates the ability to suppress interfering signals while simultaneously preserving gain on the desired signal. For STAP to be successful though, interference statistics must be estimated from training samples. Training, therefore plays a pivotal role in adaptive radar systems.


The radar STAP filter rejects clutter and noise by using an adaptive multidimensional finite impulse response (FIR) filter structure which consists of $P$ time taps and $J$ spatial taps.  The optimal STAP complex weight vector which maximizes the return signal from a desired target requires formation and inversion of the disturbance covariance matrix. Further, this covariance matrix is central to test statistics involved in target detection, such as the normalized matched filter (NMF) \cite{Rangaswamy04Sep} and the generalized likelihood ratio test (GLRT) \cite{Robey92}.


The disturbance covariance matrix must in practice be estimated. In the absence of any prior knowledge about the interference environment, a large number of homogeneous (target free) disturbance training samples are required to obtain accurate estimates. The quality and quantity of training data are governed by the scale with which the disturbance statistics change with respect to space and time as well as by systems considerations such as bandwidth. A compelling challenge for radar STAP emerges since the availability of generous homogeneous training data is often unrealistic \cite{Himed97}. This problem is exacerbated because the estimation process must be repeated for each Doppler and range bin of interest. Therefore, much recent research in radar STAP has focused on circumventing the lack of generous homogeneous training. One approach to this problem includes the use of {\emph {a priori}} information about the radar environment and is widely referred to in the literature as knowledge-based processing \cite{Capraro06,Haykin06,Guerci06,Wicks06,Miranda06,DeMaio09,DeMaio10}.
A subset of these techniques concern themselves with intelligent training selection \cite{Guerci06}. Other methods try to reduce the spatio-temporal DOF to reduce both the number of required training samples and computational cost \cite{Wang91,Wicks06,Gini08}. Enforcing structure on covariance matrices (such as Toeplitz) and shrinkage estimation techniques have also been considered \cite{Li99,Chen10}. A more thorough review of these techniques is provided in Section \ref{motivation}.

Our work introduces a new advance in robust covariance estimation for STAP, known as the rank-constrained maximum likelihood (RCML) estimator of structured covariance matrices. The disturbance covariance matrix exhibits a structure that comprises the sum of noise (white) and clutter covariances with the clutter component being positive semi-definite and rank-deficient. What is of particular interest is that in airborne radar scenarios involving land clutter, this rank can in fact be determined using the Brennan rule \cite{Ward94} under nominal conditions. Incorporation of this rank information in our research is a logical evolution of the pioneering work by Steiner and Gerlach \cite{Steiner00} wherein they demonstrated that the fast maximum likelihood estimator (FML) which is cognizant of the eigenstructure of the disturbance covariance can in fact outperform competing approaches in the literature.

\noindent \textbf{Contributions:} From an analytic viewpoint, the contributions of this work are in formulating and solving the rank-constrained ML estimation problem for structured covariance matrices. We note that the value of the rank has been identified in statistics \cite{Anderson63} and in radar signal processing via eigencancelation approaches \cite{Haimovich96}, \cite{Guerci03}. Our work overlaps with these techniques in that we use eigenvectors identical to those of the sample covariance matrix for the estimated covariance matrix whose optimal choice can formally be shown to agree with those obtained from an SVD of the data matrix of training data. We demonstrate additionally that despite the presence of the challenging (non-convex) rank-constraint, our estimation problem can in fact be reduced to a convex optimization problem over the eigenvalues and further, closed form expressions are derived for the estimator. Our central analytical result is the RCML estimator, which like FML and eigencancelation approaches above exploits the knowledge of the radar noise floor. The noise power is typically determined by placing the radar in ``receive only" mode prior to going active \cite{Skolnik90}. For mathematical completeness, we also derive another estimator called RCML$_{\text{LB}}$ for the case when the noise floor is assumed unknown and only a lower bound (LB) is available. We note our RCML$_{\text{LB}}$ estimator is in fact a generalization of the result reported by Wax and Kailath \cite{Wax85}, who quote Anderson \cite{Anderson63} for the result.

Our experimental contributions include an extensive performance analysis of the RCML method and performance comparison with a host of competing methods with demonstrated success. Our experimental data was obtained from the DARPA KASSPER data set where ground truth covariance is available which helps in evaluation via well known figures of merit such as the normalized signal to interference and noise ratio (SINR). The merits of RCML are most pronounced in the low training regime, particularly the case of $K < N$ training samples which is a stiff practical challenge for radar STAP. Additionally, the proposed RCML estimator is robust to perturbations against the true knowledge of the rank making it even more appealing from a practical standpoint.

The rest of the chapter is organized as follows. Section \ref{motivation} reviews existing covariance estimation literature and further motivates our contribution. Section \ref{contribution} formulates the RCML estimation problem and subsequent derivation is provided in Section \ref{sec:MLestimation}. The solutions of the estimation/optimization problem are derived for the two cases of both known and unknown (known lower bound) noise levels. Experimental validation of our methods is provided in Section \ref{results} wherein we report the performance of the proposed estimator and compare it against well-known existing methods in terms of SINR. In addition, rank sensitivity of the proposed RCML estimator is also evaluated. Finally, concluding remarks with directions for future work are presented in Section \ref{conclusion}.

\section{Motivation and Review}
\label{motivation}

Because the covariance matrix plays a crucial role in the detection statistic (see Eq. (\ref{Eq:MF})), it is very important to estimate it reliably. Widrow et al. and Applebaum proposed least-squares method \cite{Widrow67} and maximum signal-to-noise-ratio criterion \cite{Applebaum66}, respectively, using feedback loops. However, these methods were slow to converge to the steady-state solution. Reed, Mallet, and Brennan \cite{Reed74} verified that the sample matrix inverse (SMI) method demonstrated considerably better convergence. In the sample matrix inverse method, the disturbance covariance matrix can be estimated using $K$ data ranges for training
\be
\label{Eq:SCM}
\hat{\mb{R}}_d = \dfrac{1}{K}\ds\sum_{k=1}^K \mb{x}_k \mb{x}_k^H = \dfrac{1}{K}\mb{X}\mb{X}^H
\ee
where $K$ is the number of training data we received, $\mb{x}_k \in \mathds{C}^{N}, N = JP$ is the $k$th vector of training data, and $\mb{X} = [ \mb{x}_1 \; \mb{x}_2 \; \ldots \; \mb{x}_K ] \in \mathds{C}^{N \times K}$. It is well known that the sample covariance is the {\emph{unconstrained}} maximum likelihood estimator when $K \geq N$. Despite this virtue, there remain fundamental problems with the SMI approach. First, typically $K >N$ training samples are needed to guarantee the non-singularity of the estimated covariance matrix. In fact, when $K < N$ the estimate is singular and therefore precludes implementation of the STAP processor. As much past research has shown \cite{Gini08}, the estimate also does quite poorly in the vicinity of $K =N$ training samples.

As observed in Section \ref{Sec:RCML_Intro}, large number of homogeneous training samples are generally not available \cite{Himed97}. To overcome the practical issue of limited training, covariance matrix estimation techniques that enforce and exploit particular structure have been pursued. Examples of structure include persymmetry \cite{Nitzberg80}, the Toeplitz property \cite{Fuhrmann91,Li99,Abramovich98}, circulant structure \cite{Conte98}, multichannel autoregressive models \cite{Roman00,Wang09} and physical constraints \cite{Kraay07}. The FML method \cite{Steiner00} which enforces special eigenstructure also falls in this category and in fact is shown to be the most competitive technique experimentally \cite{Rangaswamy04Sep,Gini08}. In particular, the disturbance covariance matrix $\mb{R}$ represents the exhibits the following structure
\be
\label{Eq:structuralconstraint}
\mb{R} = \sigma^2 \mb{I} + \mb{R}_c
\ee
where $\mb{R}_c$ denotes the clutter matrix which has a low rank and is positive semi-definite and $\mb{I}$ is an identity matrix. Steiner and Gerlach's FML technique ensures that the estimated covariance matrix has eigenvalues all greater than $\sigma^2$ . Recently, the work by Aubry \emph{et al.} \cite{Aubry12} has also improved upon FML by the introduction of a condition number constraint. Other approaches include Bayesian covariance matrix estimators \cite{DeMaio06,DeMaio07,Besson07,Guerci06,Wang10} and the use of knowledge-based covariance models \cite{Gurram06,Melvin06,DeMaio09,DeMaio10}. Finally, shrinkage estimation methods have been also considered \cite{Chen10,Ledoit03,Stoica08,Won09}.

\section{Rank Constrained ML Estimation of Structured Covariance Matrices}

\label{main}

\subsection{Overview of contribution}
\label{contribution}

The principal contribution of our work us to incorporate the rank of the clutter covariance matrix, $\mb{R}_c$, explicitly into ML estimation of the disturbance covariance matrix\footnote{Preliminary version of the work has appeared at the 2012 IEEE Radar conference \cite{Monga12}.}. Under ideal conditions (no mutual coupling between array elements and no internal clutter motion), Brennan rule \cite{Ward94} states that the rank of $\mb{R}_c$ in the airborne linear phased array radar problem is given by
\be
\label{Eq:Brennanrule}
\text{rank}(\mb{R}_c) = J + \gamma(P-1)
\ee
where $\gamma = 2 v_p T/d$ is the slope of the clutter ridge, with $v_p$ denoting the platform velocity, $T$ denoting the pulse repetition interval, and $d$ denoting the inter-element spacing. Even if there is mutual coupling in practice, $\mb{R}_c$ has rank $r$ which is much less than the spatio-temporal product $N = JP$ in many practical airborne radar applications. In addition, powerful techniques have been developed \cite{Rangaswamy04Sep} to determine the  rank fairly accurately.

We first set up the optimization problem to estimate the disturbance covariance matrix with a structural constraint on $\mb{R}$ and the rank constraint on $\mb{R}_c$. The estimation problem when seen as an optimization over $\mb{R}$ is unfortunately not a convex problem, since neither the cost function nor the constraints (rank) are convex (elaborated upon in Section \ref{sec:MLestimation}). We will however show that using a transformation of variables, reduction to a convex form is possible and further by invoking KKT conditions \cite{Boyd04} for the resulting convex problems, it is in fact possible to derive a closed form solution. Akin to FML, we initially assume (Section \ref{Sec:Known}) that the noise power $\sigma^2$ is known while setting up and solving the problem. Then we extend our results (Section \ref{Sec:Unknown}) to the case of unknown noise variance. In that case, we assume that only a lower bound on the noise power is available. That is, we know $\hat{c}$ where the covariance matrix is expressed by
\be
\label{Eq:lowerboundconstraint}
\mb{R} = c\mb{I} + \mb{R}_c
\ee
where $c$ is the noise variance (to be estimated/optimized) such that $c > \hat{c}$.

\subsection{ML Estimation}
\label{sec:MLestimation}

Let $\mb{z}_i \in \mathds{C}^N$ be the $i$th realization of the target-free (stochastic) disturbance vector and $K$ be the number of training samples. That is, $i=1,2,\ldots,K$ and $N=JP$. Therefore, under each training sample, $\mb{z}_i$, under assumption of zero mean, obeys
\be
f(\mb{z}_i) = \dfrac{1}{\pi^N |\mb{R}|} \exp(-\mb{z}_i^H \mb{R}^{-1}\mb{z}_i)
\ee
which comes from a zero-mean complex circular Gaussian distribution and $\mb{R}$ is the $N \times N$ disturbance covariance matrix. Further, $|\mb{R}|$ denotes the determinant of $\mb{R}$, $\mb{z}_i^H$ is the Hermitian (conjugate transpose) of $\mb{z}_i$. Let $\mb{Z}$ be the $N \times K$ complex matrix whose $i$-th column is the observed vector $\mb{z}_i$. Since each observations $\mb{z}_i$ are i.i.d, the likelihood of observing $\mb{Z}$ given $\mb{R}$ is given by
\bea
f(\mb R; \mb{Z}) & = & \frac{1}{\pi^{NK}}|\mb{R}|^{-K}\exp\big(-tr \{\mb{Z}^H\mb{R}^{-1}\mb{Z}\}\big)\\
& = & \frac{1}{\pi^{NK}}|\mb{R}|^{-K}\exp\big(-tr \{\mb{R}^{-1}\mb{Z}\mb{Z}^H\}\big)\\
& = & \frac{1}{\pi^{NK}}|\mb{R}|^{-K}\exp\big(-K \cdot tr \{\mb{R}^{-1}\mb{S}\}\big)
\eea
where $\mb{S} = \frac{1}{K}\mb{Z}\mb{Z}^H$ is the well-known sample covariance matrix. Our goal is to find the positive definite matrix $\mb{R}$ that maximizes the likelihood function $f_{(\mb{R})}(\mb{Z})$. The logarithm of the likelihood term is
\be
\log f(\mb R; \mb{Z}) = -K \cdot tr \{\mb{R}^{-1}\mb{S}\} - K \log(|\mb{R}|) - NK \log(\pi).
\ee
Maximizing the log-likelihood as a function of $\mb{R}$ is equivalent to minimizing the function given by
\be
\label{Eq:costfunctionR}
\tr \{\mb{R}^{-1}\mb{S}\} + \log(|\mb{R}|).
\ee
Therefore, the optimization problem for obtaining the maximum likelihood solution under the rank constraint is given by
\be
\left\{ \begin{array}{cc}
\ds \min_{\mb{R}} & \tr \{\mb{R}^{-1}\mb{S}\} + \log(|\mb{R}|)\\
s.t. & \mb{R} = \sigma^2 \mb{I} + \mb{R}_c\\
 & \rank(\mb{R}_c) = r\\
 & \mb{R}_c \succeq \mb{0} \end{array} \right.
\ee
 Since the cost function is not a convex function in $\mb{R}$, we apply a transformation variables i.e., let $\mb{X} = \sigma^2 \mb{R}^{-1}$ and $\bar{\mb{S}} = \dfrac{1}{\sigma^2}\mb{S}$. Then, the revised cost function in the optimization variable $\mb{X}$ becomes
\bea
\lefteqn{tr \{\mb{R}^{-1}\mb{S}\} + \log(|\mb{R}|)}\nonumber\\
 & = & tr \{\bar{\mb{S}}\mb{X}\} - \log(|\dfrac{1}{\sigma^2}\mb{X}|)\\
 & = & tr \{\bar{\mb{S}}\mb{X}\} - \log(|\mb{X}|) + \log \sigma^{2N}. \label{Eq:costfunctionX}
\eea
Since $\log \sigma^{2N}$ in Eq. (\ref{Eq:costfunctionX}) is a constant, the final cost function to be minimized is
\be
\label{Eq:finalcostfunction}
tr \{\bar{\mb{S}}\mb{X}\} - \log(|\mb{X}|).
\ee
Note that $tr \{\bar{\mb{S}}\mb{X}\} = \ds \sum_{i=1}^N \ds \sum_{j=1}^N \bar s_{ji}x_{ij}$ is affine and $\log (| \mb{X} |)$ is concave, which implies $- \log (| \mb{X} |)$ is convex. Therefore, the final cost function, Eq. (\ref{Eq:finalcostfunction}) is convex in the variable $\mb{X}$.

We now express $\mb{X}$ and $\bar{\mb{S}}$ in terms of their eigenvalue decomposition, i.e., $\mb{X} = \mb{\Phi}\mb{\Lambda}\mb{\Phi}^H$ and let $\bar{\mb{S}}$ be decomposed as $\bar{\mb{S}} = \mb{V}\mb{D}\mb{V}^H$ where $\mb{\Phi}$ and $\mb{V}$ are orthonormal eigenvector matrices of $\mb{X}$ and $\bar{\mb{S}}$, $\mb\Lambda$ and $\mb{D}$ are diagonal matrices with diagonal entries which are eigenvalues of $\mb{X}$ arranged in ascending order and $\bar{\mb{S}}$ in descending order, respectively. Using the eigendecompositions, the cost function can be simplified as
\bea
\lefteqn{ \tr \{\bar{\mb{S}}\mb{X}\} - \log (| \mb{X} |)}\nonumber\\
& = & \tr \{\mb{V}\mb{D}\mb{V}^H\mb{\Phi}\mb{\Lambda}\mb{\Phi}^H\} - \log (| \mb{\Phi}\mb{\Lambda}\mb{\Phi}^H |)\\
& = & \tr \{\mb{D}\mb{V}^H\mb{\Phi}\mb{\Lambda}\mb{\Phi}^H\mb{V}\} - \log (| \mb{\Lambda} |)
\eea
Therefore, the optimization problem using the eigenvalues and the eigenvectors is given by
\be
\left\{ \begin{array}{cc}
\ds \min_{\mb{\Lambda},\mb\Phi} & \tr \{\mb{D}\mb{V}^H\mb{\Phi}\mb{\Lambda}\mb{\Phi}^H\mb{V}\} - \log (| \mb{\Lambda} |)\\
s.t. & \mb\Lambda \mb 1 \preceq \sigma^2 \mb 1\\
 & \mb U \mb\Lambda \mb 1 \preceq \mb 0\\
 & \mb\Lambda \mb e = \mb e\\
 & \mb\Phi^H \mb\Phi = \mb I \end{array} \right.
\ee
where $\mb 1$ and $\mb 0$ represent $N \times 1$ vectors with constant elements of 1 and 0, respectively, $\mb e = [0,\ldots,0_r,1,\ldots,1]^T$, and a matrix $\mb U$ is given in Eq. \eqref{Eq:U}. Since the first three constraints are only for the eigenvalues of $\mb X$, that is, $\mb\Lambda$, we can obtain the optimal eigenvector matrix $\mb\Phi$ by solving the following optimization problem
\be
\label{Eq:OptimizationPhi}
\left\{ \begin{array}{cc}
\ds \min_{\mb\Phi} & \tr \{\mb{D}\mb{V}^H\mb{\Phi}\mb{\Lambda}\mb{\Phi}^H\mb{V}\} - \log (| \mb{\Lambda} |)\\
s.t. & \mb\Phi^H \mb\Phi = \mb I \end{array} \right.
\ee
and the solution of the optimization problem \eqref{Eq:OptimizationPhi} is in fact fairly well known from standard unconstrained ML estimation. That is, over the space of unitary matrices the optimal $\mb{\Phi}$ that maximizes the likelihood is the one that matches with the eigenvector matrix of the sample covariance $\mb V$ \cite{Cao08}. The cost function can be more simplified by substituting $\mb\Phi$ with $\mb V$,
\bea
\tr \{\mb{D}\mb{V}^H\mb{\Phi}\mb{\Lambda}\mb{\Phi}^H\mb{V}\} - \log (| \mb{\Lambda} |) & = & \tr \{\mb{D}\mb{\Lambda}\} - log (| \mb{\Lambda} |) \label{Eq:inequality}\\
& = & \mb{d}^T \bs{\lambda} - \mb{1}^T\log \bs{\lambda} \label{Eq:vectorcostfunction}
\eea
where $\mb{d}$ and $\bs{\lambda}$ are vectors with entries of eigenvalues of $\bar{\mb{S}}$ and $\mb{X}$ respectively and $\log \bs{\lambda} = [\log \lambda_1, \log \lambda_2, \cdots, \log \lambda_N]^T$. Hence, the optimization may be focused on the vector of eigenvalues $\bs{\lambda}$. Many previous algorithms such as FML \cite{Steiner00}, eigencanceler \cite{Haimovich96}, and other eigen-based techniques \cite{Guerci03} have utilized the same eigenvectors without eigenvalue optimization.

\subsubsection{Known Noise Level Case}
\label{Sec:Known}

We first assume the noise power is known. Then the constraints of the optimization problem are
\be
\left\{ \begin{array}{c}
\mb{R} = \sigma^2 \mb{I} + \mb{R}_c\\
\rank(\mb{R}_c) = r\\
\mb{R}_c \succeq \mb{0}\\
\mb{R} \succeq \sigma^2 \mb{I} \end{array} \right. .
\ee
Since $rank(\mb{R}_c) = r$, $\mb{R}_c$ has $r$ non-negative eigenvalues and the rest eigenvalues are all zero. Hence, from Eq. (\ref{Eq:structuralconstraint}), $\mb{R}$ has $r$ eigenvalues which are greater than or equal to $\sigma^2$ and the rest eigenvalues equal to $\sigma^2$. That is, the eigenvalue matrix of $\mb{R}$ has structure of the form
\be
\label{Eq:formR}
\left(
  \begin{array}{ccccccc}
    \bar{\lambda}_1 & 0               & \cdots & 0               & 0        & \cdots & 0\\
    0               & \bar{\lambda}_2 & \cdots & 0               & 0        & \cdots & 0\\
    0               & 0               & \ddots & 0               & 0        & \cdots & 0\\
    0               & 0               & 0      & \bar{\lambda}_r & 0        & \cdots & 0\\
    0               & 0               & \cdots & 0               & \sigma^2 & \cdots & 0\\
    0               & 0               & \cdots & 0               & 0        & \ddots & 0\\
    0               & 0               & \cdots & 0               & 0        & \cdots & \sigma^2\\
  \end{array}
\right)
\ee
where $\bar{\lambda}_1 \geq \bar{\lambda}_2 \geq \cdots \geq \bar{\lambda}_r \geq \sigma^2$. Hence, $\mb\Lambda$, the eigenvalue matrix of $\mb{X} = \sigma^2\mb{R}^{-1}$, has the following structure
\be
\mb\Lambda = \left(
  \begin{array}{ccccccc}
    \lambda_1 & 0         & \cdots & 0         & 0 & \cdots & 0\\
    0         & \lambda_2 & \cdots & 0         & 0 & \cdots & 0\\
    0         & 0         & \ddots & 0         & 0 & \cdots & 0\\
    0         & 0         & 0      & \lambda_r & 0 & \cdots & 0\\
    0         & 0         & \cdots & 0         & 1 & \cdots & 0\\
    0         & 0         & \cdots & 0         & 0 & \ddots & 0\\
    0         & 0         & \cdots & 0         & 0 & \cdots & 1\\
  \end{array}
\right)
\ee
where $\lambda_1 \leq \lambda_2 \leq \cdots \leq \lambda_r \leq 1$. Now the constraints can be expressed in vector and matrix forms. The first constraint is $\lambda_1 \leq \lambda_2 \leq \cdots \leq \lambda_N$, that is, $\mb{U}\bs{\lambda} \preceq \mb{0}$ where\\
\be
\label{Eq:U}
\mb{U} =
\left(
  \begin{array}{ccccc}
    1 & -1 & 0 & \cdots & 0 \\
    0 & 1 & -1 & \cdots & \vdots \\
    0 & 0 & \ddots & \ddots & 0\\
    0 & 0 & \cdots & 1 & -1 \\
    0 & 0 & \cdots & 0 & -1 \\
  \end{array}
\right) \in \mathbb{R}^{N\times N}
\ee
The second constraint is $0 < \lambda_i \leq 1$ which can be expressed by
\be
\bs{\varepsilon} \preceq \bs{\lambda} \preceq \mb{1}
\ee
where $\bs{\varepsilon}$ is a vector with all entries equal to the same constant $\epsilon$ such that $\epsilon$ is picked close to zero. Note, that
this is done to avoid solutions where any of the $\lambda_{i}$  exactly equals zero since that would lead to a singular $\mb{X}$.
The final constraint is $\lambda_{r+1} = \lambda_{r+2} = \cdots = \lambda_N = 1$, and is expressed as
\be
\label{Eq:Equalityconstraint}
\mb{E} \bs{\lambda} = \mb{h}
\ee
where
\be
\mb{E} =
\left(
  \begin{array}{cc}
    \mb{0}_{r\times r} & \mb{0}_{r \times (N-r)}\\
    \mb{0}_{(N-r)\times r} & \mb{I}_{N-r}\\
  \end{array}
\right) \in \mathds{R}^{N\times N}
\ee
and $\mb{h} = [0, 0, \cdots, 0_r, 1, 1, \cdots, 1]^T$.\\
We therefore have the following optimization problem.
\be
\left\{ \begin{array}{cc}
\ds \min_{\bs{\lambda}} & \mb{d}^T \bs{\lambda} - \mb{1}^T\log \bs{\lambda}\\
s.t. & \mb{U}\bs{\lambda} \preceq \mb{0}\\
 & -\bs{\lambda} \preceq -\bs{\varepsilon}\\
 & \bs{\lambda} \preceq \mb{1}\\
 & \mb{E} \bs{\lambda} = \mb{h} \end{array} \right.
\ee

\emph{Remark}: Note that the constraint $\bs{\lambda} \preceq \mb{1}$ in fact forces the maximum rank of $\mb{R_c}$ to be $r$. That is, we use a relaxation of the strict inequality which forces an exact rank constraint. As will be shown shortly, this relaxation allows us to obtain a closed form solution while the optimal solutions to the two problems (original vs. relaxed) are hardly distinguishable \cite{Kang12}.

Combining the inequality constraints into one, we have
\be
\label{Eq:optimizationproblem1}
\left\{ \begin{array}{cc}
\ds \min_{\bs{\lambda}} & \mb{d}^T \bs{\lambda} - \mb{1}^T\log \bs{\lambda}\\
s.t. & \mb{F}\bs{\lambda} \preceq \mb{g}\\
 & \mb{E} \bs{\lambda} = \mb{h} \end{array} \right.
\ee
where $\mb{F} = \left[
  \begin{array}{c}
    \mb{U}\\
    -\mb{I}\\
    \mb{I}\\
  \end{array}\right]$,
  $\mb{g} = \left[
  \begin{array}{c}
    \mb{0}\\
    -\bs{\varepsilon}\\
    \mb{1}\\
  \end{array}\right]$,
  $\mb{E} =
\left[
  \begin{array}{cc}
    \mb{0}_{r\times r} & \mb{0}_{r \times (N-r)}\\
    \mb{0}_{(N-r)\times r} & \mb{I}_{N-r}\\
  \end{array}
\right]$, and $\mb{h} = [0, 0, \cdots, 0_r, 1, 1, \cdots, 1]^T$. Here, $\bs{\lambda}$, $\mb{d}$, $\mb{h} \in \mathds{R}^N$, $\mb{g} \in \mathds{R}^{3N}$, $\mb{U}$, $\mb{E} \in \mathds{R}^{N \times N}$, and $\mb{F} \in \mathds{R}^{3N \times N}$. The optimization problem (\ref{Eq:optimizationproblem1}) is obviously a convex optimization problem because the cost function is a convex function and feasible constraint sets are convex as well.






\begin{lem}
\label{Lemma3.1}
A closed form solution of (\ref{Eq:optimizationproblem1}) can be obtained by KKT conditions and is given by
\be
\lambda_i^\star = \left\{ \begin{array}{cc}
\min (1,\dfrac{1}{d_i}) & \text{for} \; i=1,2,\ldots,r\\
1 & \text{for} \; i=r+1,r+2,\ldots,N \end{array} \right.
\label{eqn:RCML}
\ee
\end{lem}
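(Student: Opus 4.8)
The plan is to write down the Lagrangian of the convex program~\eqref{Eq:optimizationproblem1} and extract the solution from the KKT stationarity conditions, using complementary slackness to decide which inequality constraints are active. Since the problem is convex with affine constraints (so Slater's condition is trivially met), the KKT conditions are both necessary and sufficient, and any point we exhibit that satisfies them is \emph{the} global minimizer. Writing $\bs{\nu}$ for the multiplier of the equality constraint $\mb{E}\bs{\lambda}=\mb{h}$ and $\bs{\mu}\succeq\mb{0}$ (partitioned conformally with the three block-rows of $\mb{F}$) for the inequality constraints $\mb{F}\bs{\lambda}\preceq\mb{g}$, the Lagrangian is $L(\bs{\lambda},\bs{\mu},\bs{\nu}) = \mb{d}^T\bs{\lambda} - \mb{1}^T\log\bs{\lambda} + \bs{\mu}^T(\mb{F}\bs{\lambda}-\mb{g}) + \bs{\nu}^T(\mb{E}\bs{\lambda}-\mb{h})$, and stationarity $\nabla_{\bs{\lambda}}L=\mb{0}$ gives, componentwise, $d_i - 1/\lambda_i + (\text{terms from }\mb{U},-\mb{I},\mb{I}) + (\text{term from }\mb{E}) = 0$.

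First I would treat the two index ranges separately. For $i=r+1,\ldots,N$ the equality constraint already pins $\lambda_i=1$, so nothing needs to be proved there beyond noting feasibility; the corresponding $\nu_i$ simply absorbs whatever slack remains in stationarity. For $i=1,\ldots,r$, the equality block of $\mb{E}$ is zero, so those coordinates are governed purely by the cost plus the inequality multipliers. Here I would argue that the ordering constraints $\mb{U}\bs{\lambda}\preceq\mb{0}$ and the lower bound $\bs{\lambda}\succeq\bs{\varepsilon}$ are inactive at the optimum — the former because $1/d_i$ inherits the (descending-$d_i$, hence ascending-$1/d_i$) ordering automatically, the latter because $\epsilon$ is chosen small and $\min(1,1/d_i)>0$ — so the only inequality that can bind among the first $r$ coordinates is the upper bound $\lambda_i\le 1$. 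Complementary slackness then splits into two cases: if $\lambda_i^\star<1$ the upper-bound multiplier vanishes and stationarity reduces to $d_i - 1/\lambda_i = 0$, i.e. $\lambda_i^\star = 1/d_i$, which is consistent exactly when $d_i\ge 1$; if instead $d_i<1$ the unconstrained stationary point $1/d_i$ exceeds $1$, the upper bound must be active, $\lambda_i^\star=1$, and one checks the required multiplier $\mu_i = 1/\lambda_i^\star - d_i = 1-d_i\ge 0$ is nonnegative. Combining the two cases yields $\lambda_i^\star = \min(1,1/d_i)$.

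The main obstacle I expect is the bookkeeping around the ordering constraint: one must verify rigorously that $\mb{U}\bs{\lambda}^\star\preceq\mb{0}$ holds at the candidate point and that its multiplier can consistently be taken to be zero. This rests on the fact that $\mb{d}$ is arranged in descending order (stated in the text), so $\{1/d_i\}$ and hence $\{\min(1,1/d_i)\}$ is nondecreasing over $i=1,\ldots,r$, and all these values are $\le 1 = \lambda_{r+1}^\star$; so the whole vector $\bs{\lambda}^\star$ is nondecreasing and strictly positive, giving primal feasibility while letting every ordering- and lower-bound multiplier be zero. Once feasibility and the nonnegativity of the active upper-bound multipliers are checked, convexity closes the argument: the exhibited $\bs{\lambda}^\star$ satisfies all KKT conditions and is therefore the unique global optimum, which is exactly~\eqref{eqn:RCML}.
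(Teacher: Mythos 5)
Your proposal is correct and follows essentially the same route as the paper: both reduce to the $r$-variable convex program, apply the KKT conditions, and conclude that the ordering and lower-bound multipliers vanish (because the $d_i$ are in descending order, so $\min(1,1/d_i)$ is automatically nondecreasing) while only the upper bound $\lambda_i \leq 1$ can be active, with multiplier $1-d_i \geq 0$ when it is. The only difference is direction of argument: the paper solves the KKT system by exhaustive case analysis on complementary slackness, whereas you exhibit the candidate and verify the KKT conditions, invoking convexity and Slater's condition for sufficiency --- a cleaner execution of the same idea.
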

\begin{proof}
Before solving the problem using the KKT conditions, we can simplify the problem further. Note that the constraint (\ref{Eq:Equalityconstraint}) tells us we already have $\lambda_{r+1} = \lambda_{r+2} = \cdots = \lambda_{N} = 1$. This means we do not need to optimize $\lambda_{r+1}, \lambda_{r+2}, \ldots = \lambda_{N}$. Therefore, we can arrive at an $r$ variable-optimization problem given by
\be
\label{Eq:simplifiedoptimizationproblem}
\left\{ \begin{array}{cc}
\ds \min_{\bs{\lambda}} & \mb{d}^T \bs{\lambda} - \mb{1}^T\log \bs{\lambda}\\
s.t. & \mb{F}\bs{\lambda} \preceq \mb{g} \end{array} \right.
\ee
where $\mb{F} = \left[\begin{array}{ccc} \mb{U}^T & -\mb{I} & \mb{I}\end{array} \right]^T$, $\mb{g} = \left[\begin{array}{ccc} \mb{0}^T & -\bs{\varepsilon}^T & \mb{1}^T \end{array}\right]^T$. Here, $\bs{\lambda}$, $\mb{d} \in \mathds{R}^r$, $\mb{g} \in \mathds{R}^{3r}$, $\mb{U} \in \mathds{R}^{r \times r}$, and $\mb{F} \in \mathds{R}^{3r \times r}$. We can rewrite the Lagrangian and KKT conditions to solve the problem.\\

The \emph{Lagrangian} $L$ : $\mathds{R}^r \times \mathds{R}^{3r} \times \mathds{R}^r \mapsto \mathds{R}$ associated with the problem is:
\be
L(\bs{\lambda}, \bs{\mu}) = \mb{d}^T \bs{\lambda} - \mb{1}^T \log \bs{\lambda} + \bs{\mu}^T(\mb{F}\bs{\lambda}-\mb{g})
\ee
where $\bs{\mu} \in \mathds{R}^{3r}$. The KKT conditions for any $\bs\lambda$ to be a minimizer are:

\textbf{Primal inequality constraints:}
\be
\label{Eq:simplifiedprimal}
\mb{F}\bs{\lambda} \preceq \mb{g}
\ee

\textbf{Dual inequality constraints:}
\be
\label{Eq:simplifieddual}
\bs{\mu} \succeq \mb{0}, \; \bs{\nu} \succeq \mb{0}, \; \bs{\upsilon} \succeq \mb{0}
\ee

\textbf{Complementary slackness:}
\label{Eq:simplifiedslackness}
\beaa
\mu_1(\lambda_1 - \lambda_2) & = & 0\\
& \vdots & \\
\mu_{r-1}(\lambda_{r-1} - \lambda_r) & = & 0\\
\mu_{r}\lambda_{r} & = & 0\\
\nu_{1}(\lambda_1 - \varepsilon) & = & 0\\
& \vdots & \\
\nu_{r}(\lambda_r - \varepsilon) & = & 0\\
\upsilon_{1}(\lambda_1 - 1) & = & 0\\
& \vdots &\\
\upsilon_{r}(\lambda_r - 1) & = & 0
\eeaa

\textbf{Stationarity}:
\be
\label{Eq:simplifiedstationary}
\nabla_{\bs\lambda}L(\bs{\lambda}, \bs{\mu}) = \mb{d} - \frac{1}{\bs\lambda} + \mb{F}^T\bs{\mu} = \mb{0}
\ee
where $\dfrac{1}{\bs\lambda} = [\dfrac{1}{\lambda_1}, \dfrac{1}{\lambda_2}, \cdots, \dfrac{1}{\lambda_r}]^T$. To make the notation clear, we let $\bs{\mu} = [\mu_1, \mu_2, \ldots, \mu_r]^T$, $\bs{\nu} = [\mu_{r+1}, \mu_{r+2}, \ldots, \mu_{2r}]^T$, and $\bs{\upsilon} = [\mu_{2r+1}, \mu_{2r+2}, \ldots, \mu_{3r}]^T$. Then the primal inequality constraints mean
\be
\left\{\begin{array}{cc}
\lambda_1 \leq \lambda_2 \leq \cdots \leq \lambda_r & \\
\varepsilon \leq \lambda_i \leq 1 & \qquad \text{for} \; i=1,2,\ldots,r \end{array} \right. .
\ee
First, from complementary slackness, we get $\bs\nu = \mb{0}$ since we let $\varepsilon$ be a very small number which is enough close to zero such that any $\lambda_i$ cannot equal to $\varepsilon$. In addition, the slackness means
\be
\left\{\begin{array}{cc}
\mu_i = 0 & \qquad \text{if} \; \lambda_i \neq \lambda_{i+1}\\
\upsilon_i = 0 & \qquad \text{if} \; \lambda_i \neq 1 \end{array} \right. .
\ee
Next we analyze the stationarity condition
\beaa
d_1 - \dfrac{1}{\lambda_1} + \mu_1 - \upsilon_1 = 0\\
d_2 - \dfrac{1}{\lambda_2} - \mu_1 + \mu_2 - \upsilon_2 = 0\\
\vdots\\
d_r - \dfrac{1}{\lambda_r} - \mu_{r-1} - \mu_r - \upsilon_r = 0.\\
\eeaa
We first assume that $\lambda_1 \neq 1$ which leads to $\upsilon_1 = 0$. We now have two cases which are $\lambda_1 \neq \lambda_2$ and $\lambda_1 = \lambda_2$. In the former case, we also know $\mu_1 = 0$. Therefore,
\be
d_1 - \dfrac{1}{\lambda_1} = 0.
\ee
This leads to $\lambda_1 = \dfrac{1}{d_1}$ and by the primal inequality constraint, $\lambda_1 \leq 1$, finally $\lambda_1 = \min (\dfrac{1}{d_1}, 1)$. In the latter case, we can also consider two cases here which are $\lambda_1 = \lambda_2 \neq \lambda_3$ and $\lambda_1 = \lambda_2 = \lambda_3$. In the first case, $\mu_2 = \upsilon_1 = \upsilon_2 = 0$. Therefore, the first two equations in the stationarity condition become
\be
\left\{\begin{array}{c}
d_1 - \dfrac{1}{\lambda_1} + \mu_1 = 0\\
d_2 - \dfrac{1}{\lambda_1} - \mu_1 = 0 \end{array} \right. .
\ee
From these two equations, we get $\mu_1 = (d_2 - d_1)/2$. Because $d_1 \geq d_2$, this implies $\mu_1 \leq 0$. However, $\mu_1 \geq 0$ from the dual inequality constraints. Therefore, $\mu_1 = 0$ and $\lambda_1 = \min (\dfrac{1}{d_1}, 1)$. Next, the case of $\lambda_1 = \lambda_2 = \lambda_3$ can be split into two which are $\lambda_1 = \lambda_2 = \lambda_3 \neq \lambda_4$ and $\lambda_1 = \lambda_2 = \lambda_3 = \lambda_4$. By following an approach similar to solving simultaneous equations, we conclude that $\bs{\mu} = \mb{0}$ and therefore $\lambda_i = \min (\dfrac{1}{d_i}, 1)$.
\end{proof}

From Lemma \ref{Lemma3.1}, the optimal solution $\mb{X}^\star$ is
\be
\mb{X}^\star = \mb{V}\mb\Lambda^\star\mb{V}^H
\ee
and the optimal covariance matrix estimator $\mb{R}^\star$ is
\be
\label{Eq:RCMLsolution2}
\mb{R}^\star = \sigma^2 {\mb{X}^\star}^{-1} = \sigma^2 \mb{V}{\mb\Lambda^\star}^{-1}\mb{V}^H
\ee
where $\mb{V}$ is the eigenvector matrix of the sample covariance matrix $\mb{S}$ and $\mb\Lambda^\star$ is a diagonal matrix with diagonal entries $\lambda_i^\star$. It should be noted that this is a generalization of the FML solution in \cite{Steiner00} with the rank-information additionally incorporated.
\subsubsection{Unknown Noise Level Case}
\label{Sec:Unknown}

For mathematical completeness, we also derive the case when $\sigma^{2}$ is also estimated as a part of the estimation process. We however assume that a lower bound (LB) is known and call this estimator $\text{RCML}_{\text{LB}}$.

We proceed by defining $\mb{X} = \mb{R}^{-1}$ (instead of $\mb{X} = \sigma^2 \mb{R}^{-1}$ as in Section \ref{Sec:Known}) because we assume that the exact noise power is unknown. Using simplifications similar to the ones in Section \ref{Sec:Known}, we have the cost function to be minimized as
\be
\label{Eq:costfunctionX2}
tr \{\mb{R}^{-1}\mb{S}\} + \log(|\mb{R}|) = tr \{\mb{S}\mb{X}\} - \log(|\mb{X}|).
\ee
The righthand side of Eq. (\ref{Eq:costfunctionX2}) is also a convex function in $\mathbf X$ via the same reasoning as in Section \ref{Sec:Known}. Note that we do not use $\mb{S}'$ any more in this case. Now the constraints are written as
\be
\left\{ \begin{array}{c}
\mb{R} = c \mb{I} + \mb{R}_c\\
rank(\mb{R}_c) \leq r\\
\mb{R}_c \succeq \mb{0}\\
\mb{R} \succeq c \mb{I}\\
c \geq \hat{c} \end{array} \right. .
\label{eqn:OptProblemLB}
\ee
The eigenvalue matrix of $\mb{R}$ has structure similar to that in Eq. (\ref{Eq:formR})
\be
\label{Eq:formR2}
\left(
  \begin{array}{ccccccc}
    \bar{\lambda_1} & 0               & \cdots & 0               & 0 & \cdots & 0\\
    0               & \bar{\lambda_2} & \cdots & 0               & 0 & \cdots & 0\\
    0               & 0               & \ddots & 0               & 0 & \cdots & 0\\
    0               & 0               & 0      & \bar{\lambda}_r & 0 & \cdots & 0\\
    0               & 0               & \cdots & 0               & c & \cdots & 0\\
    0               & 0               & \cdots & 0               & 0 & \ddots & 0\\
    0               & 0               & \cdots & 0               & 0 & \cdots & c\\
  \end{array}
\right)
\ee
where $\bar{\lambda}_1 \geq \bar{\lambda}_2 \geq \cdots \geq \bar{\lambda}_r \geq c$, $\mb\Lambda$, and hence the eigenvalue matrix of $\mb{X} = \mb{R}^{-1}$, has the following structure
\be
\mb\Lambda = \left(
  \begin{array}{ccccccc}
    \lambda_1 & 0         & \cdots & 0         & 0            & \cdots & 0\\
    0         & \lambda_2 & \cdots & 0         & 0            & \cdots & 0\\
    0         & 0         & \ddots & 0         & 0            & \cdots & 0\\
    0         & 0         & 0      & \lambda_r & 0            & \cdots & 0\\
    0         & 0         & \cdots & 0         & \dfrac{1}{c} & \cdots & 0\\
    0         & 0         & \cdots & 0         & 0            & \ddots & 0\\
    0         & 0         & \cdots & 0         & 0            & \cdots & \dfrac{1}{c}\\
  \end{array}
\right)
\ee
Using reductions similar to the ones in Section \ref{Sec:Known}, we can formulate the following optimization problem
\be
\left\{ \begin{array}{cc}
\ds \min_{\bs{\lambda},c} & \mb{d}^T \bs{\lambda} - \mb{1}^T\log \bs{\lambda}\\
s.t. & \mb{U}\bs{\lambda} \preceq \mb{0}\\
 & -\bs{\lambda} \preceq -\bs{\varepsilon}\\
 & \bs{\lambda} \preceq \dfrac{1}{c}\mb{1}\\
 & \mb{E} \bs{\lambda} = \mb{h}\\
 & c \geq \hat{c} \end{array} \right.
\ee
Note, $\mathbf U$ and $\mathbf E$ are same as in Section \ref{Sec:Known}, but $\mb{h} = [0, 0, \cdots, 0_r, \dfrac{1}{c}, \dfrac{1}{c}, \cdots, \dfrac{1}{c}]^T$.

Again, inequality constraints may be combined into one
\be
\label{Eq:optimizationproblem2}
\left\{ \begin{array}{cc}
\ds \min_{\bs{\lambda},c} & \mb{d}^T \bs{\lambda} - \mb{1}^T\log \bs{\lambda}\\
s.t. & \mb{F}\bs{\lambda} \preceq \mb{g}\\
 & \mb{E} \bs{\lambda} = \mb{h}\\
 & c \geq \hat{c} \end{array} \right.
\ee
where $\mb{F} = \left[
  \begin{array}{c}
    \mb{U}\\
    -\mb{I}\\
    \mb{I}\\
  \end{array}\right]$,
  $\mb{g} = \left[
  \begin{array}{c}
    \mb{0}\\
    -\bs{\varepsilon}\\
    \dfrac{1}{c}\mb{1}\\
  \end{array}\right]$,
  $\mb{E} =
\left[
  \begin{array}{cc}
    \mb{0}_{r\times r} & \mb{0}_{r \times (N-r)}\\
    \mb{0}_{(N-r)\times r} & \mb{I}_{N-r}\\
  \end{array}
\right]$, and $\mb{h} = [0, 0, \cdots, 0_r, \dfrac{1}{c}, \dfrac{1}{c}, \cdots, \dfrac{1}{c}]^T$. Here, $\bs{\lambda}$, $\mb{a}$, $\mb{h} \in \mathbb{R}^N$, $\mb{g} \in \mathbb{R}^{3N}$, $\mb{U}$, $\mb{E} \in \mathbb{R}^{N \times N}$, and $\mb{F} \in \mathbb{R}^{3N \times N}$.\\
\indent

To solve (\ref{Eq:optimizationproblem2}), we first fix the noise level $c$ to $\bar{c} \geq \hat{c}$ and solve the optimization problem with respect to $\bs\lambda$ to obtain the optimal solution $\bs\lambda^\star(\bar{c})$ which is a function of $\bar{c}$. By substituting the optimal $\bs\lambda^\star(\bar{c})$ back into the cost function, we can get an auxiliary optimization over the scalar variable $c$.

Once we fix $c = \bar{c}$, the optimization problem (\ref{Eq:optimizationproblem2}) can be reduced to a problem in just $\bs\lambda$ which is nearly the same problem as (\ref{Eq:optimizationproblem1})
\be
\label{Eq:optimizationproblem3}
\left\{ \begin{array}{cc}
\ds \min_{\bs{\lambda}} & \mb{d}^T \bs{\lambda} - \mb{1}^T\log \bs{\lambda}\\
s.t. & \mb{F}\bs{\lambda} \preceq \mb{g}\\
 & \mb{E} \bs{\lambda} = \mb{h} \end{array} \right.
\ee
where $\mb{F} = \left[
  \begin{array}{c}
    \mb{U}\\
    -\mb{I}\\
    \mb{I}\\
  \end{array}\right]$,
  $\mb{g} = \left[
  \begin{array}{c}
    \mb{0}\\
    -\bs{\varepsilon}\\
    \dfrac{1}{\bar{c}}\mb{1}\\
  \end{array}\right]$,
  $\mb{E} =
\left[
  \begin{array}{cc}
    \mb{0}_{r\times r} & \mb{0}_{r \times (N-r)}\\
    \mb{0}_{(N-r)\times r} & \mb{I}_{N-r}\\
  \end{array}
\right]$, and $\mb{h} = [0, 0, \cdots, 0_r, \dfrac{1}{\bar{c}}, \dfrac{1}{\bar{c}}, \cdots, \dfrac{1}{\bar{c}}]^T$. Here, $\bs{\lambda}$, $\mb{a}$, $\mb{h} \in \mathbb{R}^N$, $\mb{g} \in \mathbb{R}^{3N}$, $\mb{U}$, $\mb{E} \in \mathbb{R}^{N \times N}$, and $\mb{F} \in \mathbb{R}^{3N \times N}$. Note that (\ref{Eq:optimizationproblem1}) and (\ref{Eq:optimizationproblem3}) are the same except that (\ref{Eq:optimizationproblem3}) uses $\dfrac{1}{\bar{c}}\mb{1}$ and $\dfrac{1}{\bar{c}}$ instead of $\mb{1}$ and $1$ in $\mb{g}$ and $\mb{h}$. Using a derivation similar to the one in Lemma \ref{Lemma3.1}, the optimal solution $\bs\lambda^\star$ of (\ref{Eq:optimizationproblem3}) is
\be
\label{eqn:optSolutionLB}
\lambda_i^\star = \left\{ \begin{array}{cc}
\min (\dfrac{1}{\bar{c}},\dfrac{1}{d_i}) & \text{for} \; i=1,2,\ldots,r\\
\dfrac{1}{\bar{c}} & \text{for} \; i=r+1,r+2,\ldots,N \end{array} \right.
\ee

If we substitute the optimal $\bs{\lambda^{*}}$ as in Eq. (\ref{eqn:optSolutionLB}) back into (\ref{Eq:optimizationproblem2}), we get the following problem to solve in the unknown noise level variable $c$
\be
\label{Eq:OptimizationProblemC}
\left\{ \begin{array}{cc}
\ds\min_c & \ds\sum_{i=1}^N G_i(c)\\
s.t. & c \geq \hat{c} \end{array} \right.
\ee
where
\be
G_i(c) = \left\{ \begin{array}{cc}
1 + \log d_i & \text{if} \; c \leq d_i\\
\dfrac{d_i}{c} + \log c & \text{if} \; c>d_i \end{array} \right.
\ee
for $i=1,2,\ldots,r$ and
\be
G_i(c) = \dfrac{d_i}{c} + \log c
\ee
for $i=r+1,r+2,\ldots,N$.
This problem too admits a closed form which is derived in Lemma \ref{Lemma3.2}.
\begin{lem}
\label{Lemma3.2}
The optimal solution of $c$ in \eqref{Eq:OptimizationProblemC} is given by
\be
c^\star = \left\{ \begin{array}{cc}
\hat{c} & \text{if} \; \hat{c} > \dfrac{\ds \sum_{i=r+1}^N d_i}{N-r}\\
\dfrac{\ds \sum_{i=r+1}^N d_i}{N-r} & \text{if} \; \hat{c} \leq \dfrac{\ds \sum_{i=r+1}^N d_i}{N-r} \end{array} \right.
\ee
\end{lem}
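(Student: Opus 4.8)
The plan is to show that the objective $F(c):=\sum_{i=1}^N G_i(c)$ of \eqref{Eq:OptimizationProblemC} is \emph{unimodal} on $(0,\infty)$, namely strictly decreasing on $(0,\bar d)$ and nondecreasing on $[\bar d,\infty)$ with $\bar d:=\frac{1}{N-r}\sum_{i=r+1}^N d_i$, and then to intersect this behaviour with the feasible half-line $\{c\ge\hat c\}$. First I would record the shape of each summand. For $i>r$, $G_i(c)=d_i/c+\log c$ is $C^1$ with $G_i'(c)=(c-d_i)/c^2$, so it decreases on $(0,d_i)$ and increases on $(d_i,\infty)$. For $i\le r$, $G_i$ equals the constant $1+\log d_i$ on $(0,d_i]$ and $d_i/c+\log c$ on $[d_i,\infty)$, the two branches agreeing at $c=d_i$, so $G_i$ is $C^1$ with $G_i'(c)=0$ on $(0,d_i)$ and $G_i'(c)=(c-d_i)/c^2>0$ on $(d_i,\infty)$; in particular $G_i$ is nondecreasing. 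I would also use the ordering $d_1\ge d_2\ge\cdots\ge d_N\ge0$ of the eigenvalues of $\mb S$, which gives $\bar d\le d_{r+1}\le d_r$, hence $\bar d\le d_i$ for every $i\le r$.

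The key computation is $F'$. On $(0,d_r]$ each index $i\le r$ sits in its constant branch, so only the last $N-r$ summands contribute and
\be
F'(c)=\sum_{i=r+1}^N\frac{c-d_i}{c^2}=\frac{N-r}{c^2}\,(c-\bar d),\qquad 0<c\le d_r ,
\ee
which is negative on $(0,\bar d)$ and nonnegative on $[\bar d,d_r]$. For $c\ge d_r$ I would bound the derivative crudely: writing $c^2F'(c)=\sum_{i=r+1}^N(c-d_i)+\sum_{i\le r,\,d_i<c}(c-d_i)=(N-r)(c-\bar d)+\sum_{i\le r,\,d_i<c}(c-d_i)$, the first term is nonnegative since $c\ge d_r\ge\bar d$ and every term in the second sum is nonnegative, so $F'(c)\ge0$. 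These two facts together make $F$ strictly decreasing on $(0,\bar d)$ and nondecreasing on $[\bar d,\infty)$, i.e. unimodal, with $\bar d$ as its unique unconstrained minimizer. The degenerate case $\bar d=d_r$ (which forces $d_r=d_{r+1}=\cdots=d_N$) is harmless, since then $(0,\bar d)=(0,d_r)$ and the displayed derivative is already strictly negative there.

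Finally I would read off $c^\star$. If $\hat c\le\bar d$, then $\bar d$ is feasible and is the unconstrained minimizer of $F$, so $c^\star=\bar d=\frac{1}{N-r}\sum_{i=r+1}^N d_i$. If $\hat c>\bar d$, then $[\hat c,\infty)$ lies inside the region where $F$ is nondecreasing, so the constrained minimum is attained at the left endpoint, $c^\star=\hat c$. This is exactly Lemma \ref{Lemma3.2}. I expect the only mildly delicate point to be the regime $c>d_r$, where the number of ``active'' summands among $i\le r$ grows with $c$ and an explicit piecewise analysis would be cumbersome; the bound above sidesteps it by observing that the $i>r$ terms alone already force $F'\ge0$ once $c\ge d_r\ge\bar d$, with the remaining terms only reinforcing this.
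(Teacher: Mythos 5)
Your proof is correct and follows essentially the same route as the paper: both arguments establish that $\sum_i G_i(c)$ is unimodal with its unconstrained minimum at $\bar d=\frac{1}{N-r}\sum_{i=r+1}^N d_i$, using the key observation $\bar d\le d_r$, and then clip against the feasible set $c\ge\hat c$. The paper phrases this by splitting the sum into the first $r$ terms (constant then increasing) and the last $N-r$ terms (decreasing then increasing), whereas you compute $F'$ directly, but the underlying argument is the same.
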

\begin{proof}
Let $c^\star$ be an optimal solution to the following optimization problem
\be
\left\{ \begin{array}{cc}
\ds\min_c & \ds\sum_{i=1}^N G_i(c)\\
s.t. & c \geq \sigma^2 \end{array} \right.
\ee
where, $G_i(c) = d_i\lambda_i^\star - \log \lambda_i$, namely
\be
G_i(c) = \left\{ \begin{array}{cc}
1 + \log d_i & \text{if} \; c \leq d_i\\
\dfrac{d_i}{c} + \log c & \text{if} \; c>d_i \end{array} \right.
\ee
for $i=1,2,\ldots,r$ and
\be
G_i(c) = \dfrac{d_i}{c} + \log c
\ee
for $i=r+1,r+2,\ldots,N$.\\
\indent
First, note that $G_i(c)$ is a constant for $c<d_i$ and monotonically increasing in $c$ for $c \geq d_i$ when $i = 1,2,\ldots,r$ because the first derivative of the function $\dfrac{d_i}{c} + \log c$, $-\dfrac{d_i}{c^2} + \dfrac{1}{c} > 0$ for $c>d_i$. Then, for $i=1,\ldots,r$,
\be
G_i(c) = \left\{ \begin{array}{ccc}
1 + \log d_i & \text{if} \; c \leq d_i & : \; \text{constant}\\
\dfrac{d_i}{c} + \log c & \text{if} \; c>d_i & : \; \text{increasing} \end{array} \right.
\ee
Because we assumed $d_1 \geq d_2 \geq \cdots \geq d_r$, we have
\bea
\label{Eq:sumGi}
\lefteqn{\ds \sum_{i=1}^r G_i(c)}\nonumber\\
& = & \left\{ \begin{array}{ccc}
\ds \sum_{i=1}^r ( 1 + \log d_i ) & \text{if} \; c \leq d_r\\
\ds \sum_{i=1}^{r-1} ( 1 + \log d_i ) + \dfrac{d_r}{c} + \log c & \text{if} \; d_r \leq c \leq d_{r-1}\\
\vdots & & \vdots\\
\ds \sum_{i=1}^r ( \dfrac{d_i}{c} + \log c ) & \text{if} \; c>d_1 \end{array} \right. .
\eea
In Eq. (\ref{Eq:sumGi}), $\sum_{i=1}^r G_i(c)$ is a constant when $c \leq d_r$ and an increasing function otherwise.
Finally, since $\sum_{i=1}^r G_i(c)$ is continuous at all $d_i$, we can conclude that
\be
\ds \sum_{i=1}^r G_i(c) = \left\{ \begin{array}{cc}
\text{constant} & \text{if} \; c \leq d_r\\
\text{increasing} & \text{if} \; c > d_r \end{array} \right.
\label{condb4r}
\ee
Now, $G_i(c) = \dfrac{d_i}{c} + \log c$ for $i=r+1,r+2,\ldots,N$. Hence,
\bea
\ds \sum_{i=r+1}^N G_i(c) & = & \ds \sum_{i=r+1}^N (\dfrac{d_i}{c} + \log c )\\
& = & \dfrac{1}{c} \ds \sum_{i=r+1}^N d_i + (N-r) \log c
\eea
We can easily see that
\be
\ds \sum_{i=r+1}^N G_i(c) = \left\{ \begin{array}{cc}
\text{decreasing} & \text{if} \; c \leq \hat d\\
\text{increasing} & \text{if} \; c > \hat d \end{array} \right.
\label{condafterr}
\ee
where $\hat d = \dfrac{\sum_{i=r+1}^N d_i}{N-r}$. In addition, it is obvious $\hat d < d_r$ because $d_r > d_i$ for all $i=r+1,r+2,\ldots,N$ and $\hat d$ which is the mean value of $d_{r+1},d_{r+2},\ldots,d_N$. From Eq. \eqref{condb4r} and Eq. \eqref{condafterr},
\bea
G(c) & = & \ds \sum_{i=1}^N G_i(c) = \ds \sum_{i=1}^r G_i(c) + \ds \sum_{i=r+1}^N G_i(c)\nonumber\\
& = & \left\{ \begin{array}{cc}
\text{decreasing} & \text{if} \; c \leq \hat d\\
\text{increasing} & \text{if} \; c > \hat d \end{array} \right.
\eea
Consequently, the optimal solution $c^\star$ which minimizes the cost function $G(c)$ is determined as
\be
c^\star = \left\{ \begin{array}{cc}
\hat{c} & \text{if} \; \hat{c} > \hat d\\
\hat d & \text{if} \; \hat{c} \leq \hat d \end{array} \right.
\ee

\end{proof}

Therefore, the optimal solution $\bs\lambda^\star$ is
\be
\label{eqn:optimalLabmdainoptimalc}
\lambda_i^\star = \left\{ \begin{array}{cc}
\min (\dfrac{1}{c^\star},\dfrac{1}{d_i}) & \text{for} \; i=1,2,\ldots,r\\
\dfrac{1}{c^\star} & \text{for} \; i=r+1,r+2,\ldots,N \end{array} \right. .
\ee

Combining the optimal estimated eigenvalue set in (\ref{eqn:optimalLabmdainoptimalc}) (expressed in terms of the optimal noise level variable $c^\star$) with the eigenvectors of the sample covariance leads to the optimal estimate of the (inverse of) covariance estimate of $\mb R$ under the rank constraint. This result is generalization of the result in Wax and Kailath \cite{Wax85} by including and exploiting a lower bound on the noise floor when available. We discuss these two algorithms in Section \ref{Sec:Wax} in more detail.


\section{Experimental Results}
\label{results}

\subsection{Experimental Setup and Methods Compared}

\noindent Data from the L-band data set of the Knowledge Aided Sensor Signal Processing and Expert Reasoning (KASSPER) program \cite{Bergin02} is used for the performance analysis discussed in this section. The KASSPER data is the result of a significant effort by DARPA to provide a publicly available resource for the evaluation and benchmarking of radar STAP algorithms. As elaborated in \cite{Guerci06}, the KASSPER data set was carefully captured to represent real-world ground clutter and captures variations in underlying terrain, foliage and urban/manmade structures. Further, the KASSPER data set exhibits two very desirable characteristics from the viewpoint of evaluating covariance estimation techniques: 1.) the low-rank structure of clutter in KASSPER has been verified by researchers before \cite{Rangaswamy04Sep,Guerci06}, and 2.) the true covariance matrices for each range bin have been made available - this facilitates comparisons via powerful figures of merit where the theoretical upper/lower bounds are known.

The L-band data set consists of a data cube of 1000 range bins corresponding to the returns from a single coherent processing interval from 11($=J$) channels and 32($=P$) pulses. Therefore, the dimension of observations (or the spatio-temporal product) $N$ is $11 \times 32 = 352$. Other key parameters are detailed in Table \ref{Tb:parameters}. Finally, a clutter rank\footnote{We set clutter ridge parameters so that $\gamma = 1$.   } of $r = J + P - 1 = 42$ was used by our RCML estimator in all the results to follow, unless explicitly stated otherwise.

\begin{table}[!t]
\begin{center}
\caption{KASSPER Dataset-1 parameters}
\label{Tb:parameters}
\begin{tabular}{l l}
  \hline
  Parameter & Value\\
  \hline
  Carrier Frequency & 1240 MHz \\
  Bandwidth (BW) & 10 MHz \\
  Number of Antenna Elements & 11 \\
  Number of Pulses & 32 \\
  Pulse Repetition Frequency & 1984 Hz \\
  1000 Range Bins & 35 km to 50 km \\
  91 Azimuth Angles & $87^{\circ}$, $89^{\circ}$, $\ldots$ $267^{\circ}$ \\
  128 Doppler Frequencies & -992 Hz, -976.38 Hz, $\ldots$, 992 Hz \\
  Clutter Power & 40 dB \\
  Number of Targets & 226 (~200 detectable targets) \\
  Range of Target Dop. Freq. & -99.2 Hz to 372 Hz\\
  \hline
\end{tabular}
\end{center}
\end{table}

We evaluate and compare four different covariance estimation techniques:

\begin{itemize}
\item \textbf{Sample Covariance Matrix:} The sample covariance matrix is given in Eq. (\ref{Eq:SCM}). It is well known that the sample covariance is the unconstrained maximum likelihood estimator under Gaussian disturbance statistics. Consistent with radar literature \cite{Reed74}, we'll refer to the use of this technique as SMI.

\item \textbf{Fast Maximum Likelihood:} The fast maximum likelihood (FML) \cite{Steiner00} uses the structural constraint of the covariance matrix which is given in Eq. (\ref{Eq:structuralconstraint}). The FML method just involves calculating the eigenvalue decomposition of the sample covariance and perturbing eigenvalues to conform to the structure in Eq. (\ref{Eq:structuralconstraint}). The noise variance $\sigma^2$ is assumed known or pre-estimated. FML's success in radar STAP is widely known \cite{Rangaswamy04Sep,Wicks06,Gini08}.

\item \textbf{Leave-one-out shrinkage estimator:} Shrinkage estimators are powerful estimators of covariance for high dimensional data that are known to also perturb the eigenstructure of the sample covariance matrix\footnote{Via this definition, the FML and RCML can also been seen as a special class of shrinkage estimators.} \cite{Chen10} - often to ensure non-singularity of the estimated covariance. While a variety of shrinkage techniques are known \cite{Chen10,Ledoit03,Stoica08,Won09}, we choose the leave-one-out covariance matrix estimate (LOOC) shrinkage estimator \cite{Hoffbeck96},
    \be
    \mb{R} = \beta diag(\mb{S}) + (1-\beta) \mb{S}
    \ee
The value of $\beta$ is determined via a cross-validation technique so that the average likelihood of omitted samples is maximized. We pick this estimator because it is most suited to the problem at hand and has demonstrated success in the $K \leq N$ training regime \cite{Hoffbeck96}.

\item \textbf{Eigencanceler:} The eigencanceler (EigC) is based on the eigenanalysis which suggests a small number of eigenvalues contain all the information about interferences (jammers and clutter), and therefore, the span of the eigenvectors associated with these significant eigenvalues includes all the position vectors that comprise the interference signals \cite{Haimovich96}. Since we assume that the rank is known a priori, the eigencanceler can be compared with our estimator as we use $r$ dominant eigenvectors as interference eigenvectors. The covariance matrix can be expressed by
    \be
    \mb{R} = \ds\sum_{i=1}^r p_i\mb{v}_i\mb{v}_i^H + \sigma^2 \mb{I}
    \ee
    where $p_i$ and $\mb{v}_i$ are the clutter power and the eigenvector corresponding to $r$ dominant eigenvalues, respectively. For $p_i \gg \sigma^2$, it follows from \cite{Kirsteins94,Rangaswamy04Sep} that the estimated inverse covariance matrix can be approximated as $\hat{\mb{R}}^{-1} \approx \dfrac{1}{\sigma^2}(\mb{I} - \mb{P})$ where $\mb{P} = \ds\sum_{i=1}^{r} \mb{v}_i \mb{v}_i^H$. We apply this inverse covariance matrix in computing the SINR.

\item \textbf{Rank Constrained Maximum Likelihood:} Our proposed estimator (abbreviated to RCML) incorporates the structural constraint and for the first time the information of the rank of the clutter component.

\end{itemize}

\subsection{Experimental Evaluation}

The normalized signal to interference and noise ratio (SINR) is used for evaluation the aforementioned covariance estimation techniques. The SINR is desired to be as high as possible. This figure of merit is plotted against azimuthal angle as well as Doppler frequency for distinct training regimes, i.e. low, representative and generous training. We also show the plot of SINR performance versus the number of training samples. Finally, we also evaluate the robustness of our RCML estimator against perturbations in the knowledge of the true rank.

\begin{figure}
\begin{center}
\subfigure[$K=352$]{\includegraphics[width=2.5in]{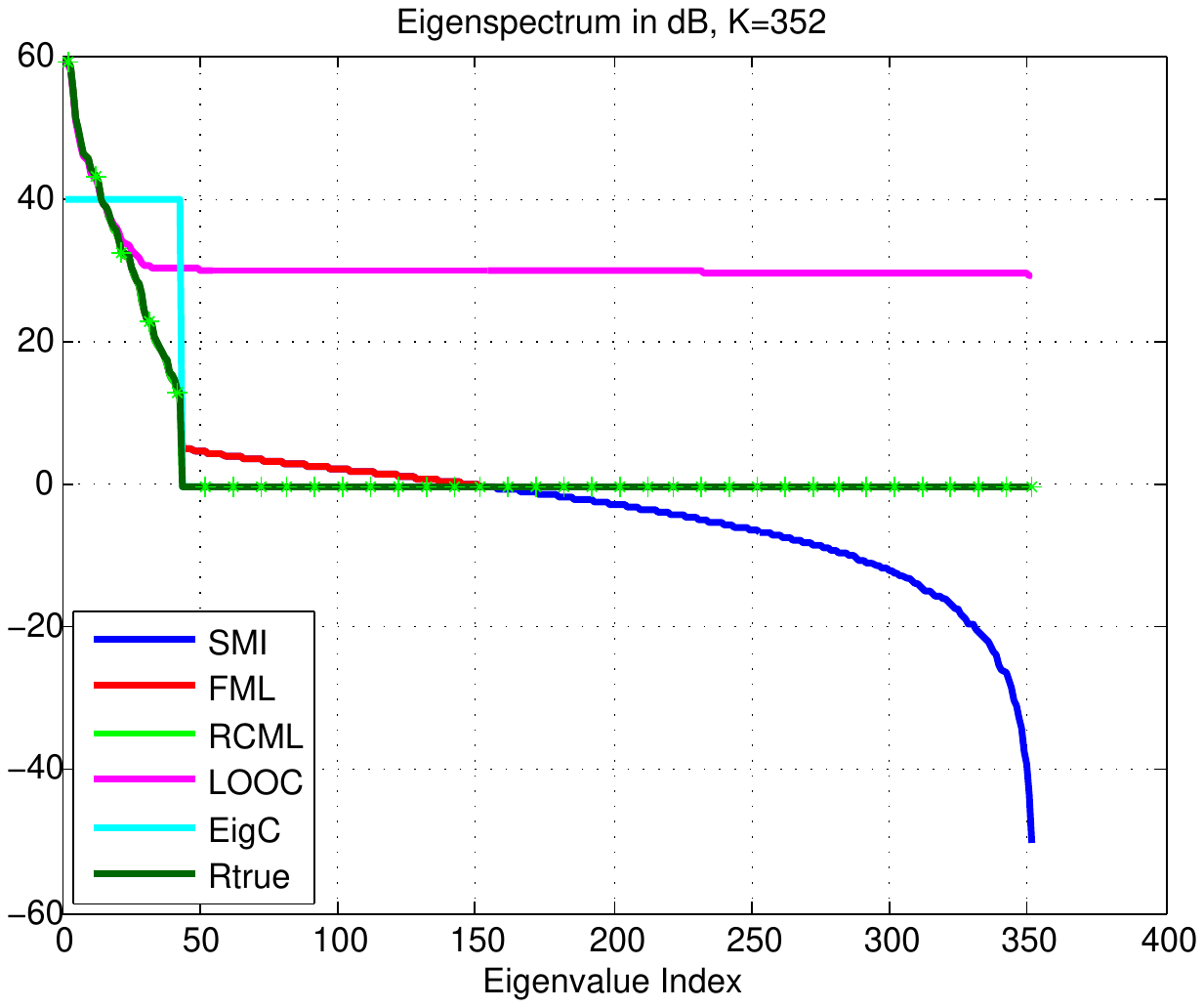}\label{Fig:Eigenspectrum352}}
\hfil
\subfigure[$K=750$]{\includegraphics[width=2.5in]{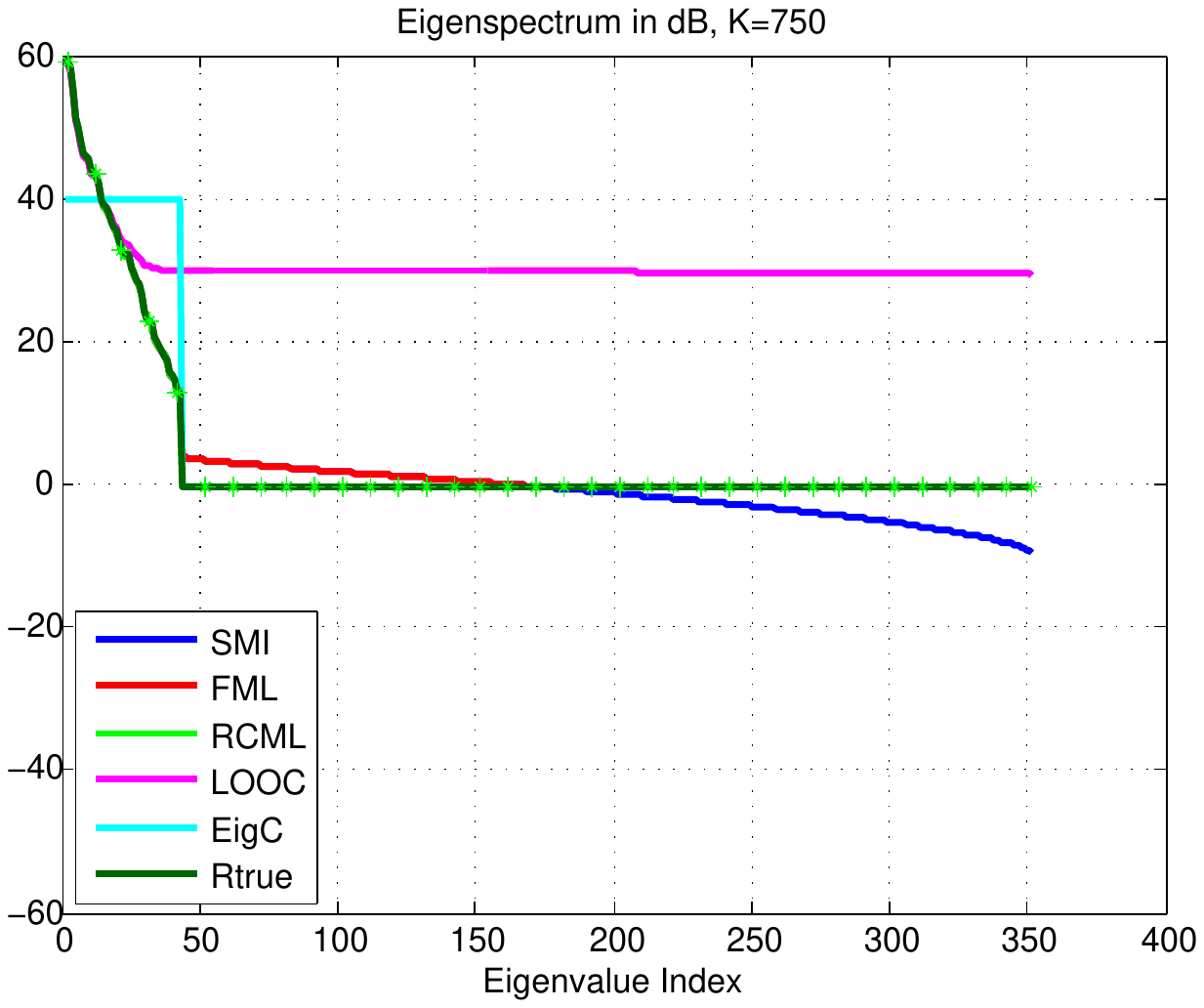}\label{Fig:Eigenspectrum750}}\\
\end{center}
\caption{Eigenspectra of the estimated covariance matrices.}
\label{Fig:Eigenspectrum}
\end{figure}

\subsubsection{Eigenspectrum}
\label{sec:Eigenspectrum}

First, we provide the eigenspectra plots, i.e. a plot of the eigenvalues of the estimated covariance matrices vs. those of the ground truth covariance. Figure \ref{Fig:Eigenspectrum} shows the eigenspectra of various estimators for $K=N=352$ and $K=750$ in dB. We can see the decay of eigenvalues of the true covariance matrix is readily apparent. The first few dominant eigenvalues are well predicted by every method but we notice that after the index 20 or so RCML shows the tightest overlap with the eigenvalues of the true covariance matrix with FML and EigC approaches not far behind. In particular, while FML shows slightly better agreement in capturing the decay, EigC like RCML is more accurate in capturing the rank. The SMI, FML and LOOC approaches are in fact very significantly off in their estimate of the clutter rank which is determined as an outcome for these methods. In the next Section, we will demonstrate the benefits of incorporating rank and structural information about the disturbance covariance for widely used figures of merit in the radar literature.
%

\subsubsection{Normalized SINR vs. angle and Doppler}
\label{sec:SINR}

The normalized SINR measure \cite{Monzingo04} is commonly used in the radar literature and is given by
\be
\label{Eq:SINR}
\eta = \dfrac{|\mb{s}^H\hat{\mb{R}}^{-1}\mb{s}|^2}{|\mb{s}^H\hat{\mb{R}}^{-1}\mb{R}\hat{\mb{R}}^{-1}\mb{s}||\mb{s}^H\mb{R}^{-1}\mb{s}|}
\ee
where $\mb{s}$ is the spatio-temporal steering vector, $\hat{\mb{R}}$ is an estimated covariance matrix, and $\mb{R}$ is the corresponding true covariance matrix. It is easily seen that $0 < \eta \leq 1$ and $\eta = 1$ if and only if $\hat{\mb{R}} = \mb{R}$. Since the steering vector is a function of both azimuthal angle and Doppler frequency, we evaluate the normalized SINR in both angle and Doppler domain. This would lead to a SINR surface as a function of azimuthal angle and Doppler and comparing surface plots across different covariance estimation techniques is cumbersome. We therefore obtain plots as a function of one variable (i.e. just angle/Doppler) by marginalizing (averaging) over the other variable. The SINR is plotted in dB. in all figures in this dissertation, that is, $\text{SINR}_{\text{dB}} = 10 \log_{10} \eta$. Therefore, $\text{SINR}_{\text{dB}} \leq 0$.

Figures \ref{Fig:SINR_RCML1} and \ref{Fig:SINR_RCML2} plot the variation of normalized SINR as a function of the azimuthal angle and the Doppler frequency for varying number of training samples, $K$. Specifically, Figures \ref{Fig:SINR300angle} and \ref{Fig:SINR300dop} are corresponding to $K =300 < N=352$, Figures \ref{Fig:SINR352angle} and \ref{Fig:SINR352dop} plot results for $K = 352 = N$, likewise Figures \ref{Fig:SINR750angle}, \ref{Fig:SINR750dop} and \ref{Fig:SINR3000angle}, \ref{Fig:SINR3000dop} are corresponding to  $K=750\approx 2N$ and $K=3000 \gg N$ respectively.

Figures \ref{Fig:SINR_RCML1} reports results for the challenging regime of $K \leq N$. When $K < N$ the sample covariance matrix is not invertible, hence for the results in Figures \ref{Fig:SINR_RCML1} and \ref{Fig:SINR_RCML2} we used its pseudo-inverse as a substitute. Unsurprisingly, the sample covariance technique suffers tremendously when $K \leq N$ as is evident from Figures \ref{Fig:SINR_RCML1}. LOOC shrinkage does considerably better than SMI because it forces a reasonably good eigenstructure. The informed estimators, i.e.\ FML, EigC, and RCML perform appreciably well with RCML affording the best overall performance. It is useful to note that RCML in fact offers about 1 dB improvement over FML.

Even for representative training in Figures \ref{Fig:SINR750angle} and \ref{Fig:SINR750dop}, the vastly superior performance of the FML, EigC, and RCML techniques is apparent. Again, by virtue of incorporating the rank information, the proposed RCML estimator outperforms the competing methods.  Finally, Figures \ref{Fig:SINR3000angle} and \ref{Fig:SINR3000dop} confirm the intuition that as training becomes close to asymptotic, the gap between the various methods begins to decrease - of course, such generous training is typically impossible to obtain in practice. This is due to the fact that all the covariance matrix estimates considered converge to the true covariance matrix in the limit of large training data.

\begin{figure}
\begin{center}
\subfigure[$K=300$]{\includegraphics[width=2.5in]{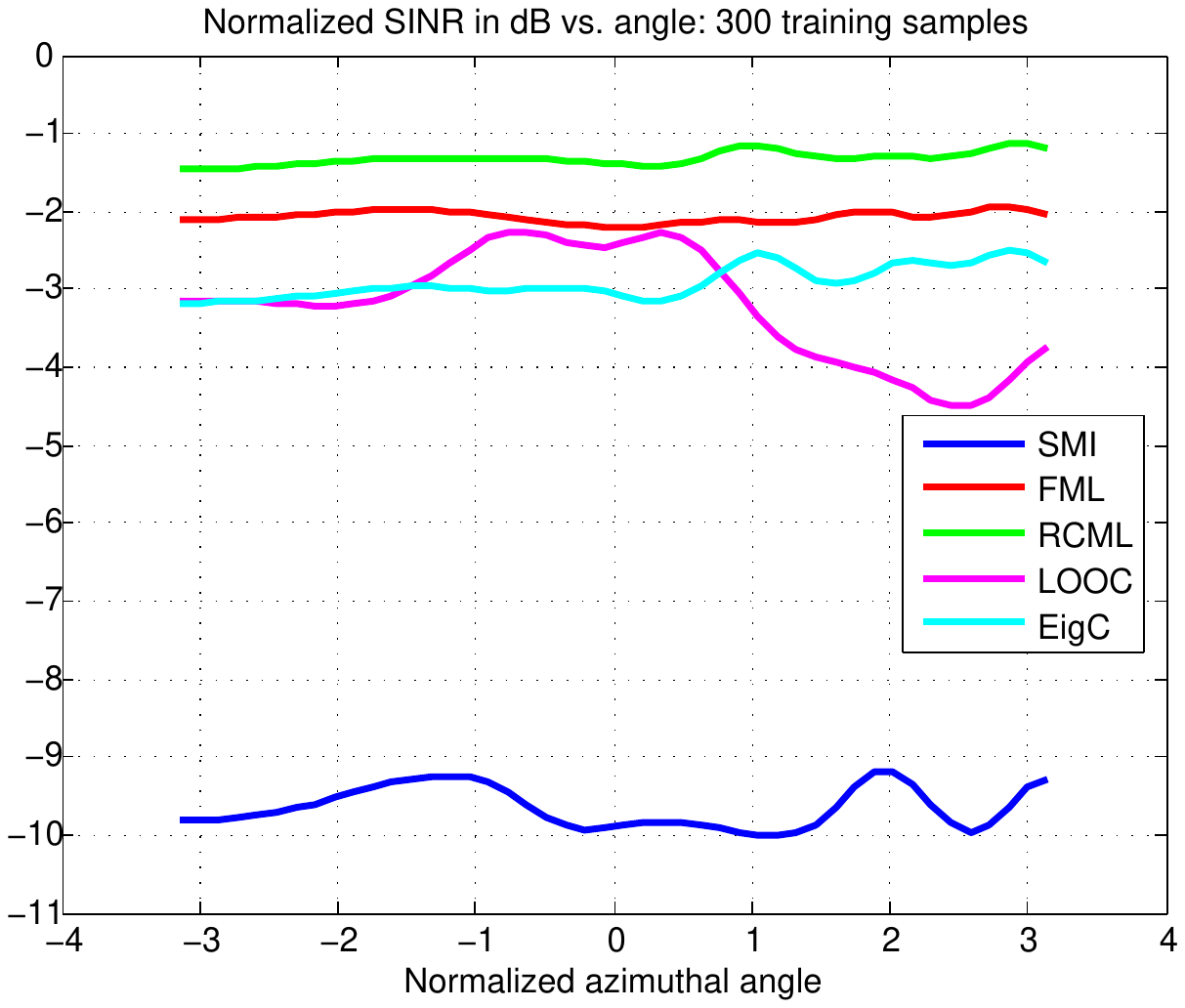}\label{Fig:SINR300angle}}
\hfil
\subfigure[$K=300$]{\includegraphics[width=2.5in]{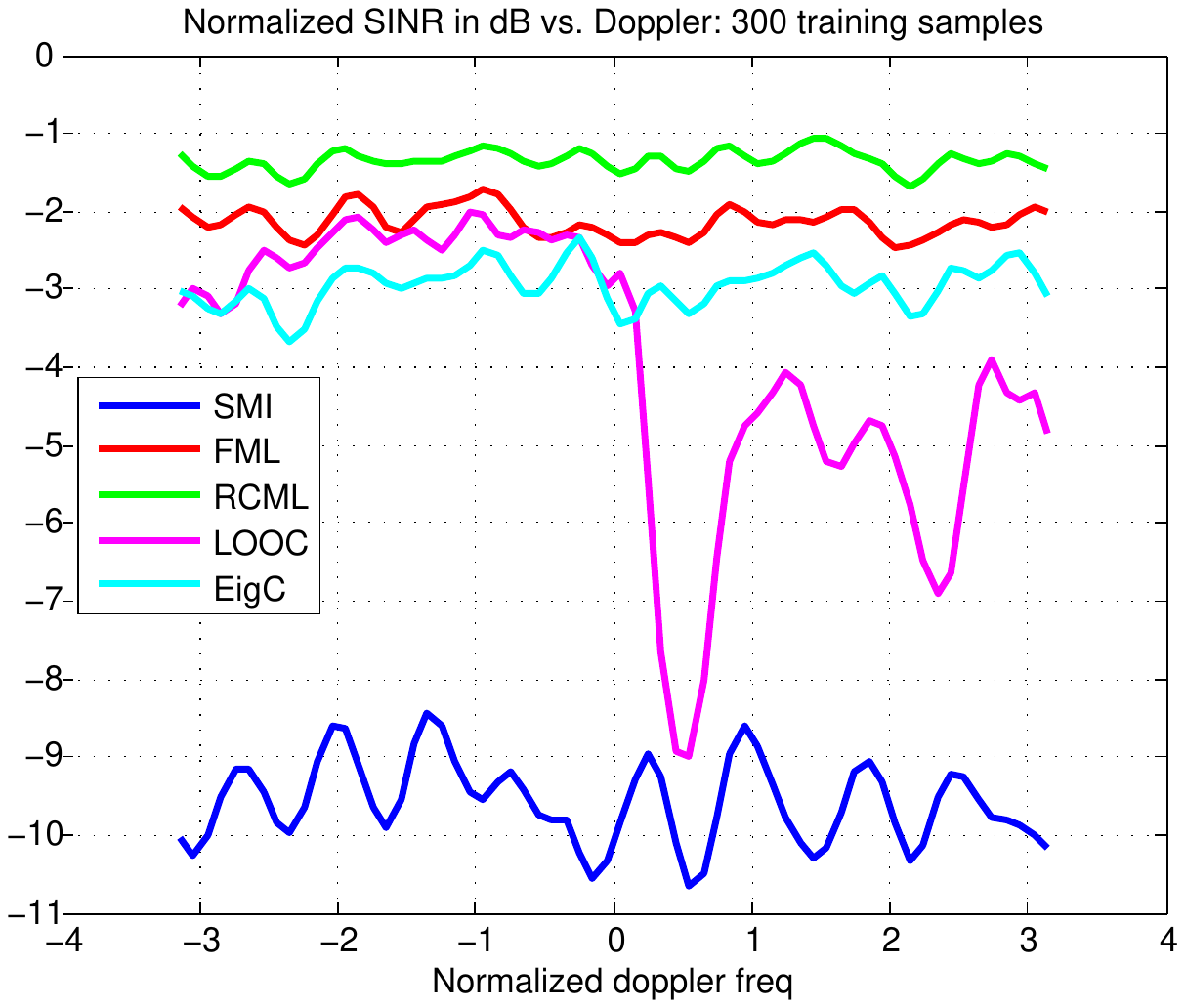}\label{Fig:SINR300dop}}\\
\subfigure[$K=352$]{\includegraphics[width=2.5in]{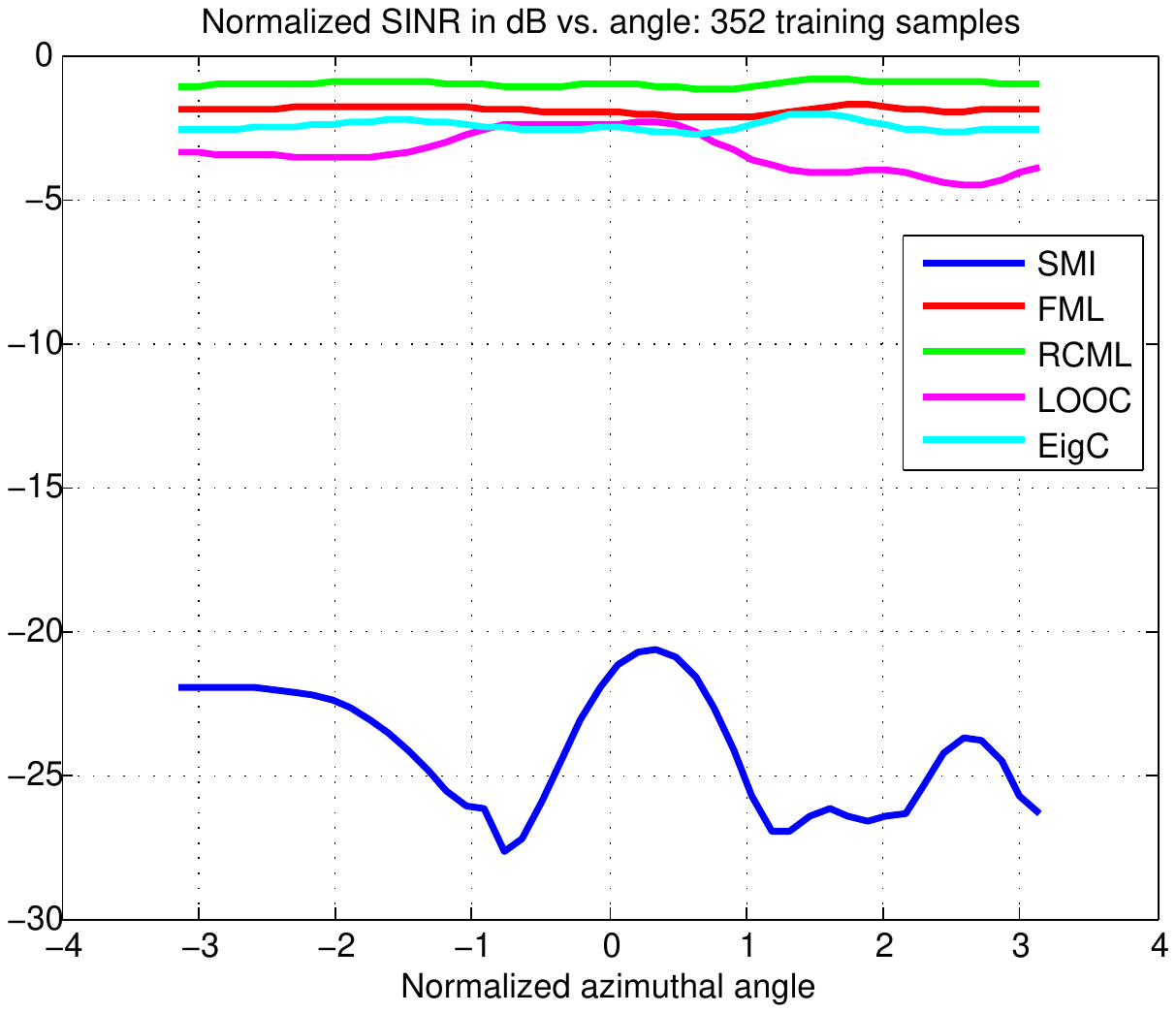}\label{Fig:SINR352angle}}
\hfil
\subfigure[$K=352$]{\includegraphics[width=2.5in]{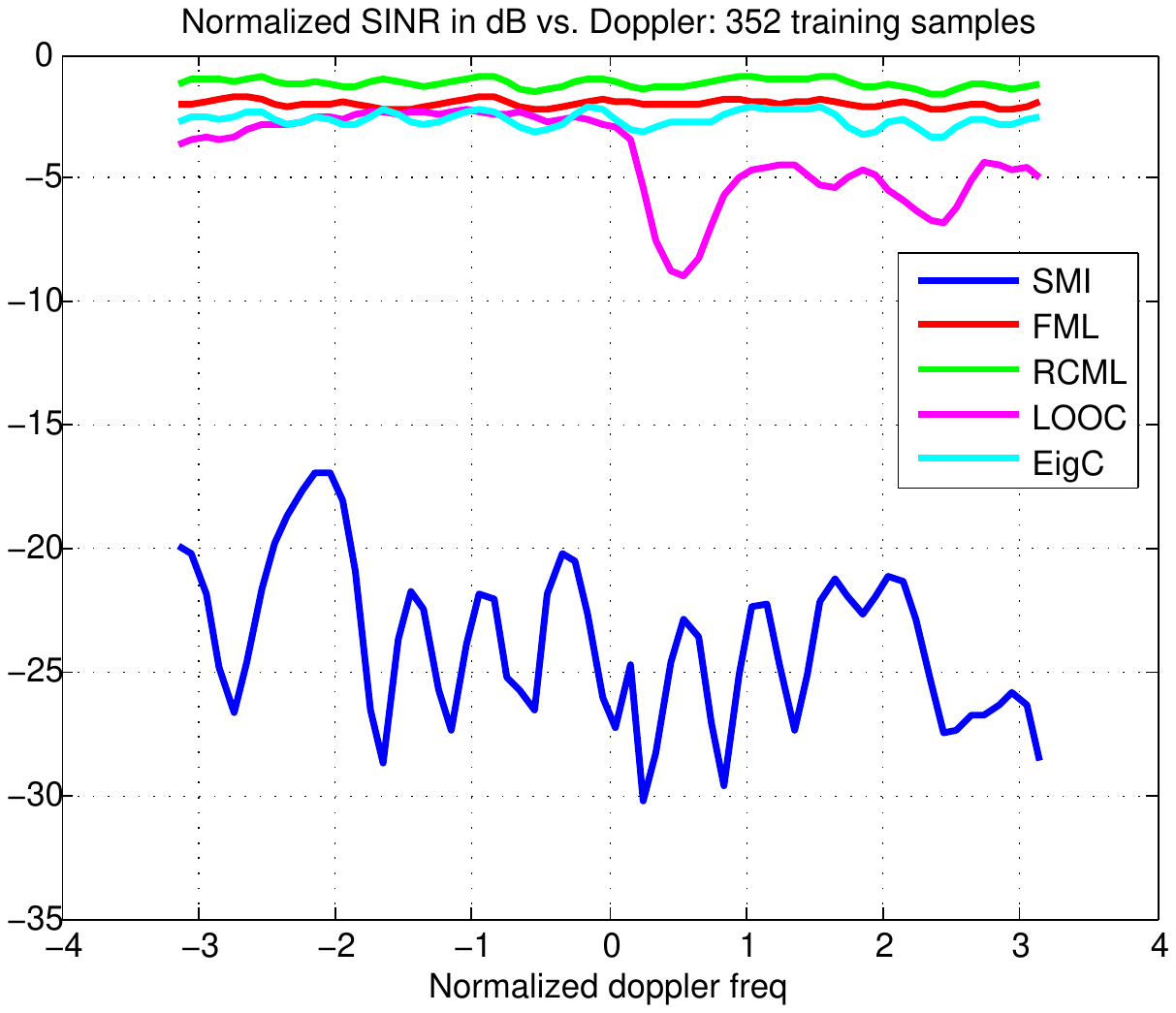}\label{Fig:SINR352dop}}\\
\end{center}
\caption{Normalized SINR vs. normalized azimuthal angle and doppler frequency.}
\label{Fig:SINR_RCML1}
\end{figure}

\begin{figure}
\begin{center}
\subfigure[$K=750$]{\includegraphics[width=2.5in]{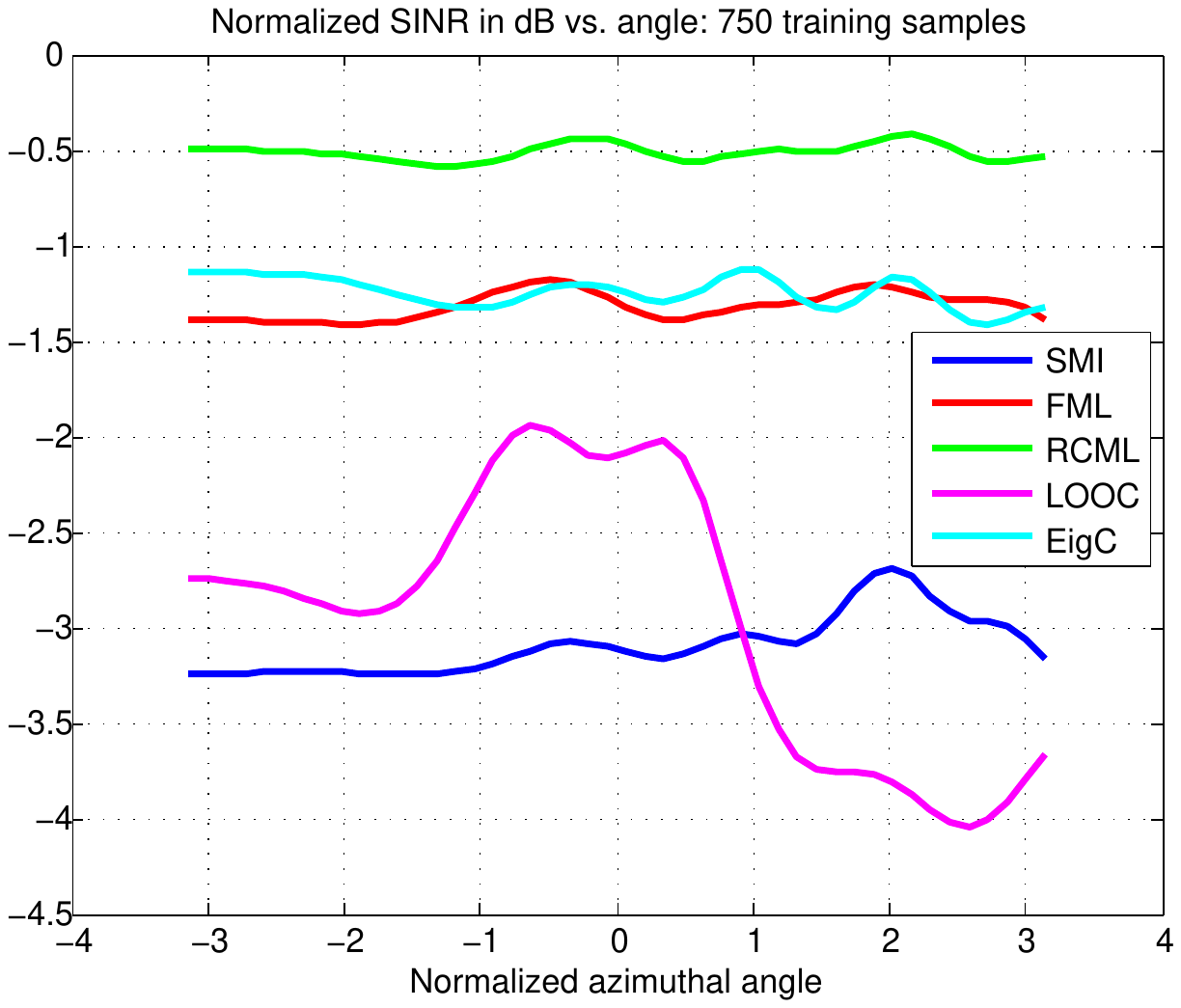}\label{Fig:SINR750angle}}
\hfil
\subfigure[$K=750$]{\includegraphics[width=2.5in]{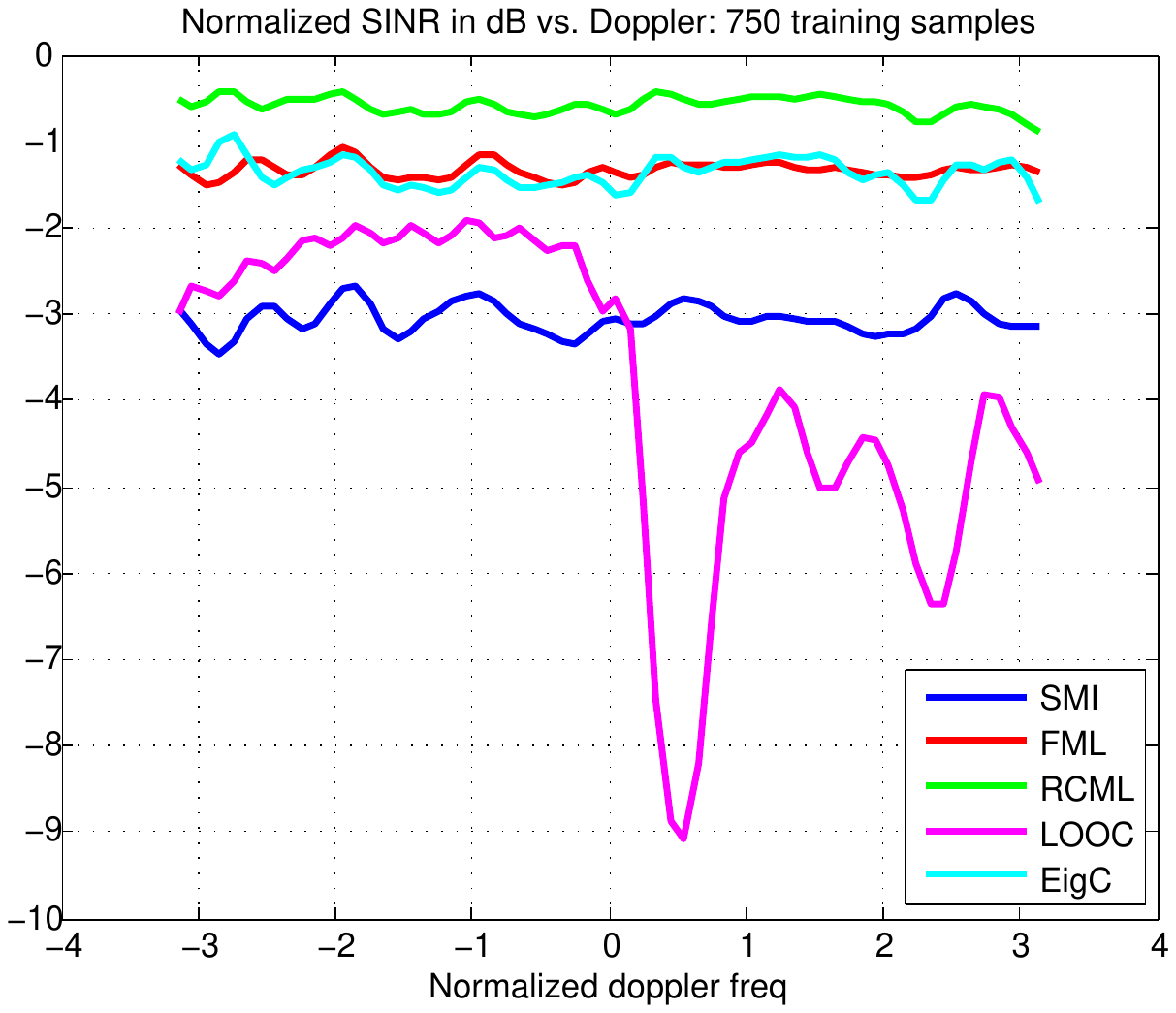}\label{Fig:SINR750dop}}\\
\subfigure[$K=3000$]{\includegraphics[width=2.5in]{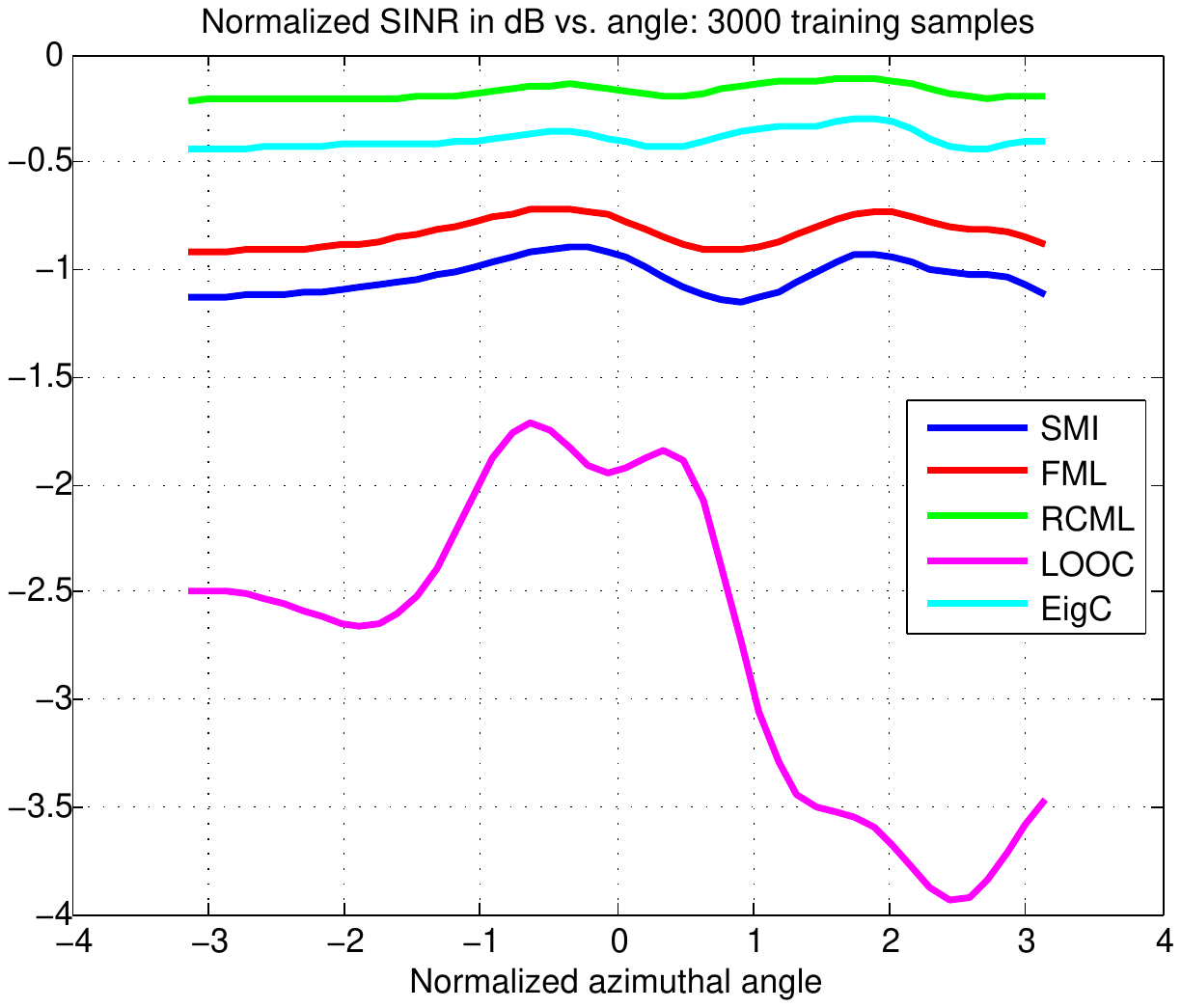}\label{Fig:SINR3000angle}}
\hfil
\subfigure[$K=3000$]{\includegraphics[width=2.5in]{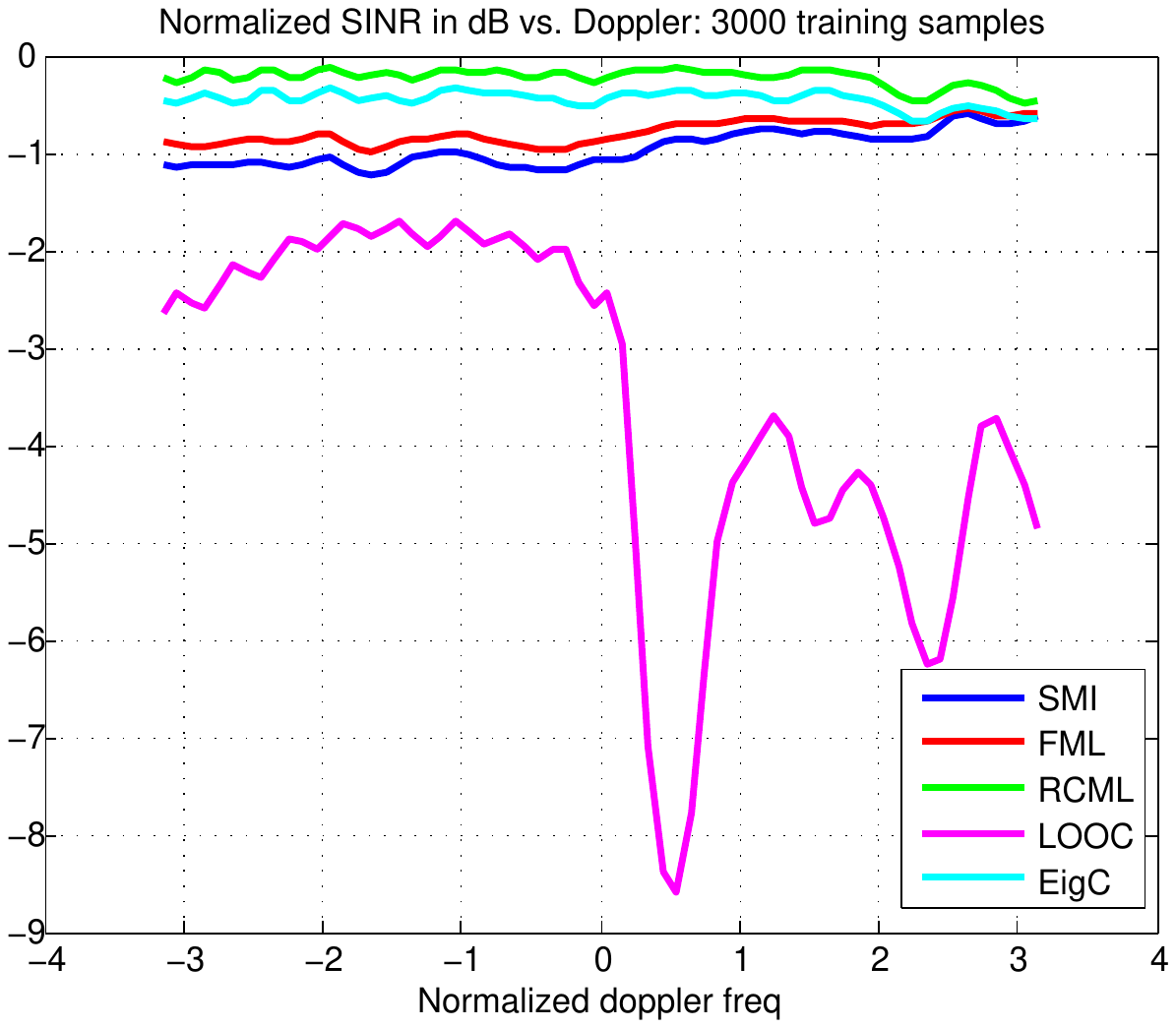}\label{Fig:SINR3000dop}}\\
\end{center}
\caption{Normalized SINR vs. normalized azimuthal angle and doppler frequency.}
\label{Fig:SINR_RCML2}
\end{figure}

\subsubsection{Performance vs. number of training samples}

While the results in Sections \ref{sec:SINR} do explore performance against training to some extent - here we present bar graphs to explore this issue with a finer granularity. To obtain a single scalar performance measure as a function of training, averaging was carried out over both the angle and Doppler variables.

Figure \ref{Fig:Trainingsamples} presents bar graphs that quantify the SINR (in dB) as a function of training samples $K$, where $K$ is varied from as low as $60$ to as high as $3000$.  Two trends are evident from Figure \ref{Fig:Trainingsamples}: 1.) as intuitively expected, the SINR values increases monotonically with an increase in the number of training samples for all methods (except for the sample covariance technique in the $K \leq N$ regime which is a well-known phenomena observed in past work as well \cite{Steiner00}) and 2.) the RCML estimator exhibits remarkably good performance in all training regimes.

\begin{figure}[!t]
\centering
\includegraphics[width=2.5in]{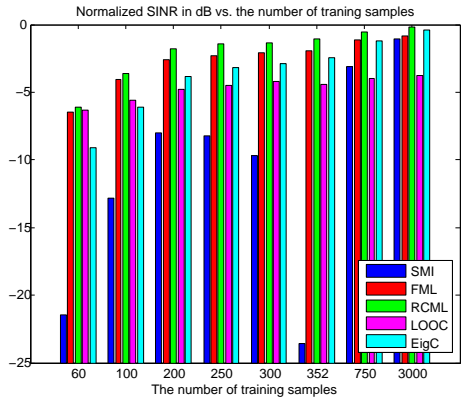}
\caption{Normalized SINR vs. the number of training samples.}
\label{Fig:Trainingsamples}
\end{figure}

\subsubsection{Rank Sensitivity}
\label{Sec:RankSensitivity}

The KASSPER data, the clutter rank conforms to Eq. (\ref{Eq:Brennanrule}) - the Brennan rule. For the parameters used in our experiments, this would lead to a predicted ideal rank of $r = J + P -1 = 42$. In a practical situation, departures from the ideal behavior are expected and hence we explore the performance our proposed RCML estimator even as \emph{incorrect} rank information is used.

The results in Figure \ref{Fig:RankSensitivity} demonstrate the robustness of RCML to perturbations in the clutter rank. Figure \ref{Fig:RankSensitivity} presents bar graphs that show averaged SINR results for $K = 352$ and $K = 750$ training samples. We determined numerically that the ``true'' rank of the clutter covariance for the range bin of choice was in fact 43 which is a mild departure from the 42 predicted by the Brennan rule. Comparisons are made between FML and RCML with the difference that seven variants of RCML are presented - with rank from $34$ to $45$. As Figure \ref{Fig:RankSensitivity} reveals, using the true rank of $43$ indeed yields the best covariance matrix estimator but the penalty of the small departure, i.e.\ using a rank of $40$ to $45$ which are close to the true rank $43$ leads to a very small performance loss. On the other hand, Figs.\ \ref{Fig:RankSensitivity} also shows variants of the RCML result with a somewhat bigger departure, i.e.\ a rank of $34$. In this case, the performance of RCML with rank $34$ is appreciably lower against using rank values around the true rank $43$. Remarkably, RCML with rank $34$ is still competitive with FML.  Overall Figure \ref{Fig:RankSensitivity} therefore provides two valuable insights: 1.) since rank information is predicted using the Brennan rule - small departures in practice are possible and our estimator exhibits desirable robustness against such small perturbations to rank, and 2.) the value of using the rank information is simultaneously revealed - because RCML with rank $34$ is competitive with FML which does not exploit the rank information.

\begin{figure}[!t]
\centering
\includegraphics[width=2.5in]{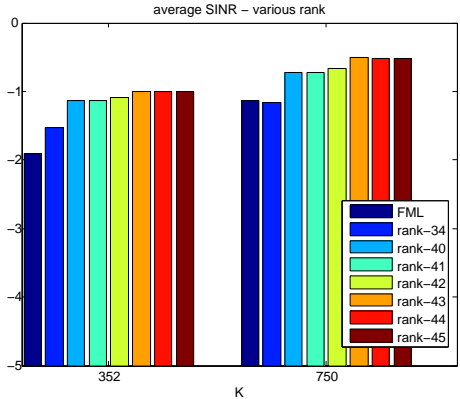}
\caption{Normalized SINR of rank constrained maximum likelihood (RCML) for various rank information}
\label{Fig:RankSensitivity}
\end{figure}

The experiments in Section \ref{sec:Eigenspectrum} to Section \ref{Sec:RankSensitivity}, however, assume that we have access to homogeneous training samples, which is often not available in practice. This section provides more realistic and challenging practical evaluation by means of two new flavors of experimental results: 1.) plots of probability of detection versus SNR for a variety of detection statistics, and 2.) normalized SINR performance in the presence of {\em heterogeneous} training samples which are corrupted by the target information. Here, we consider an experimental environment which reflects real-world scenarios by considering non-homogenous training. We perform two experimental investigations. First, we examine if incorporating rank-information really leads to better target detection. Second, robustness to target contamination in training samples is investigated - while outlier removal techniques have been proposed \cite{Rangaswamy05,Blunt04}, in practice target contamination of training data cannot be entirely ruled out. We evaluate and compare three different covariance estimation techniques, SMI, FML and our proposed RCML. We show that the RCML estimator can still outperform alternatives in that detection probability is second only to the theoretic upper bound when the true covariance is known, and the rank information is invaluable in yielding meaningful estimates even as almost all available training samples are corrupted.

\subsubsection{Probability of Detection Vs. SNR}

\begin{figure}[!t]
\begin{center}
\subfigure[AMF for $K=352$]{\includegraphics[width=2.4in]{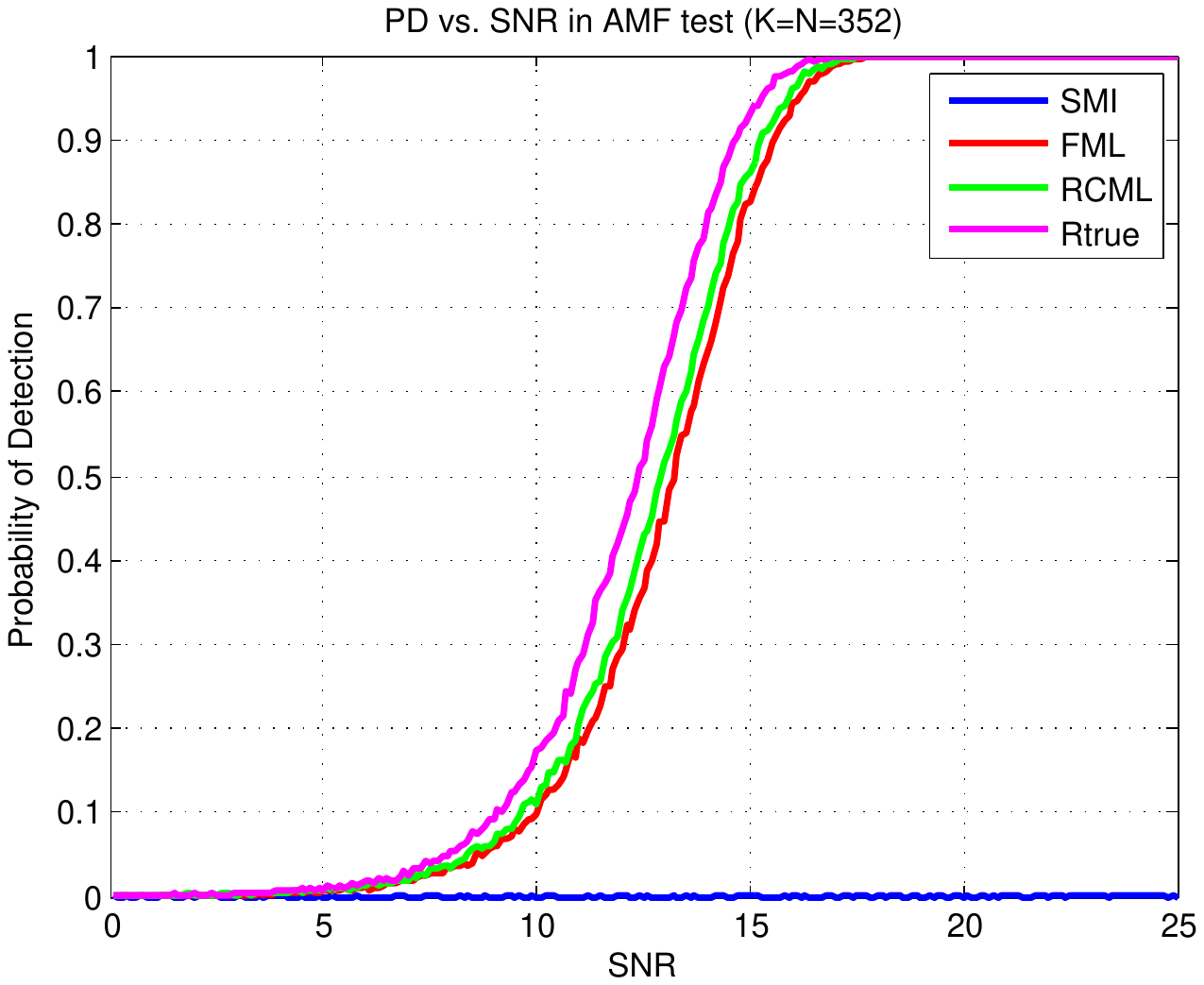}\label{Fig:AMF352}}
\hfil
\subfigure[AMF for $K=704$]{\includegraphics[width=2.4in]{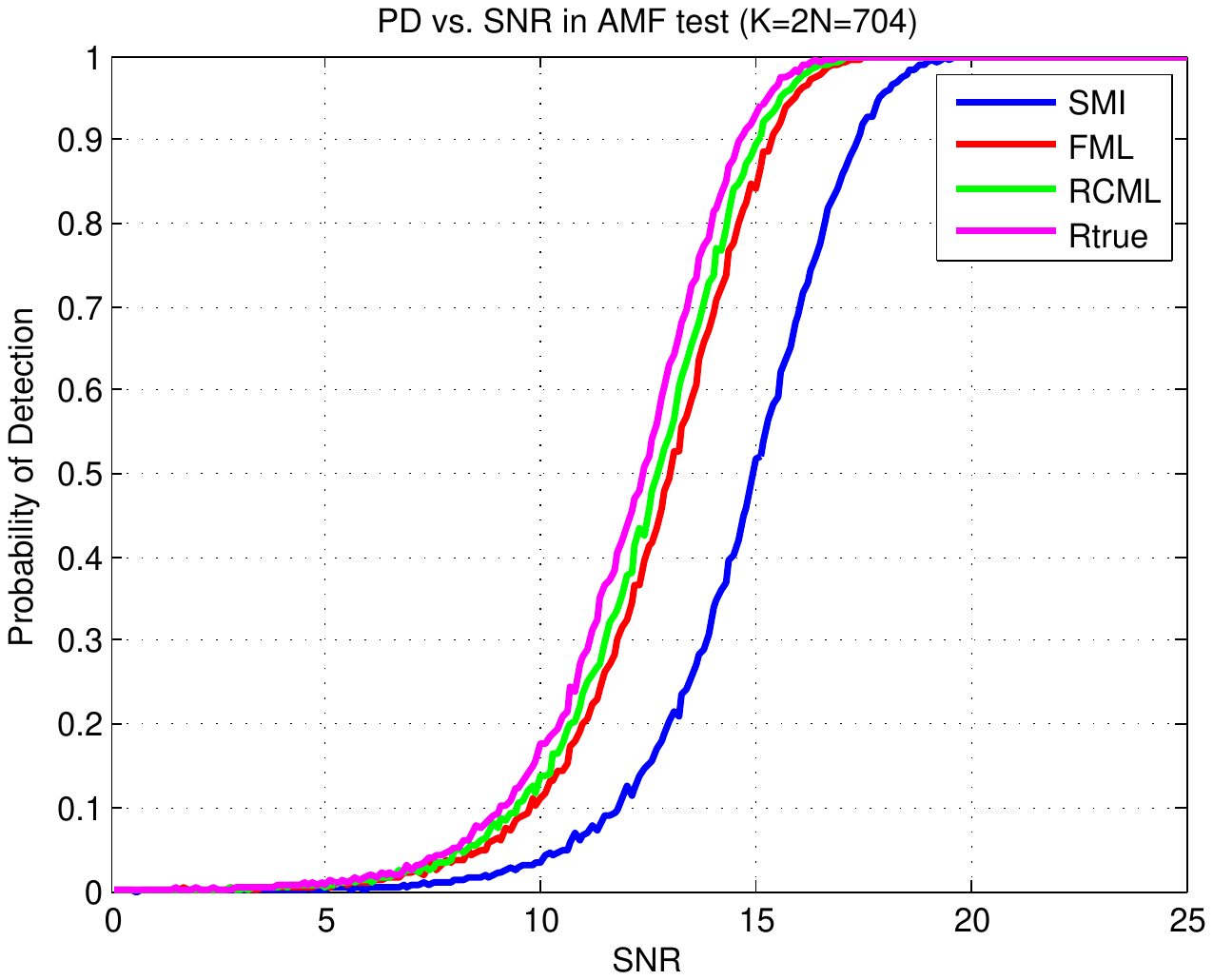}\label{Fig:AMF704}}\\
\subfigure[NMF for $K=352$]{\includegraphics[width=2.4in]{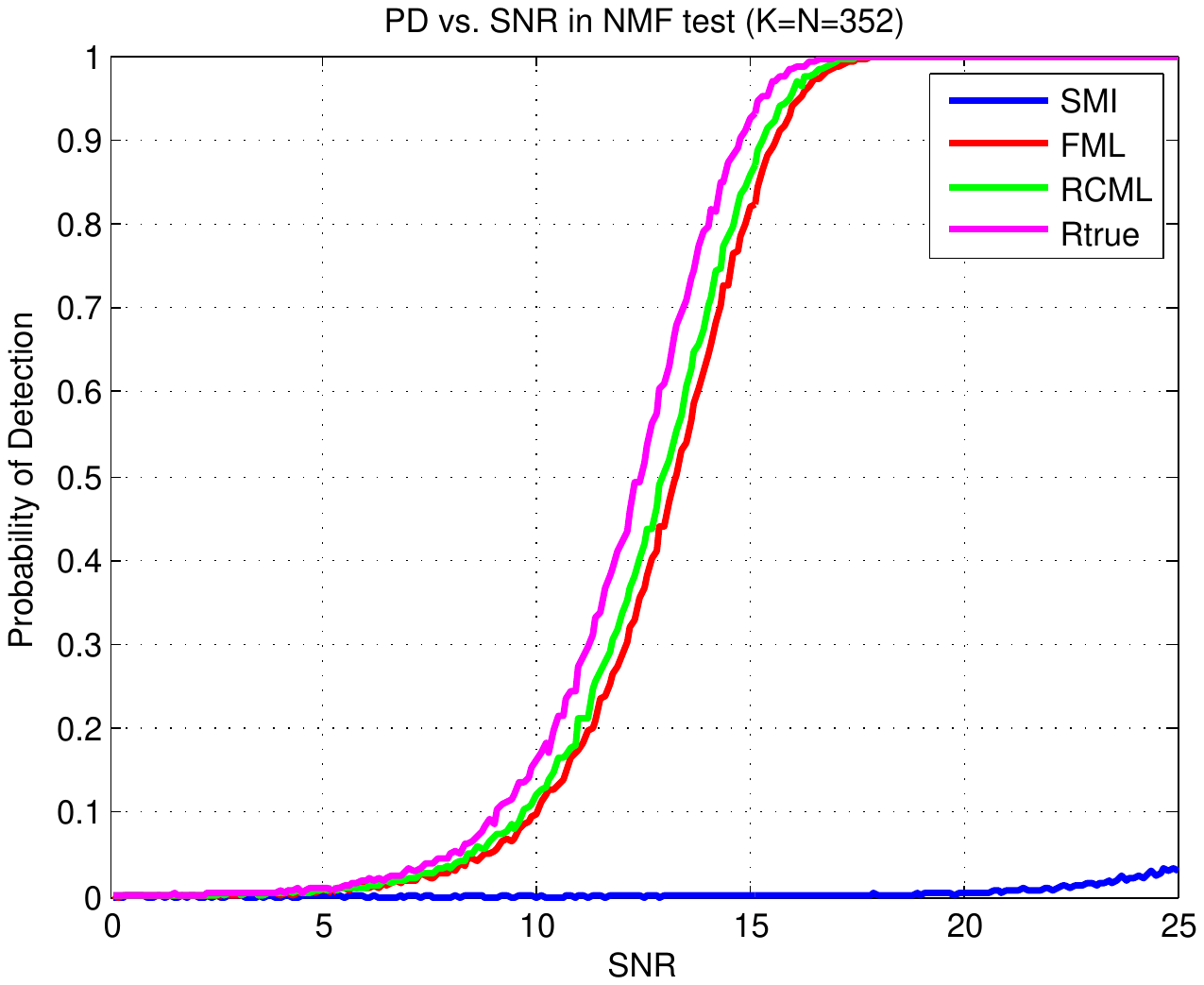}\label{Fig:NMF352}}
\hfil
\subfigure[NMF for $K=704$]{\includegraphics[width=2.4in]{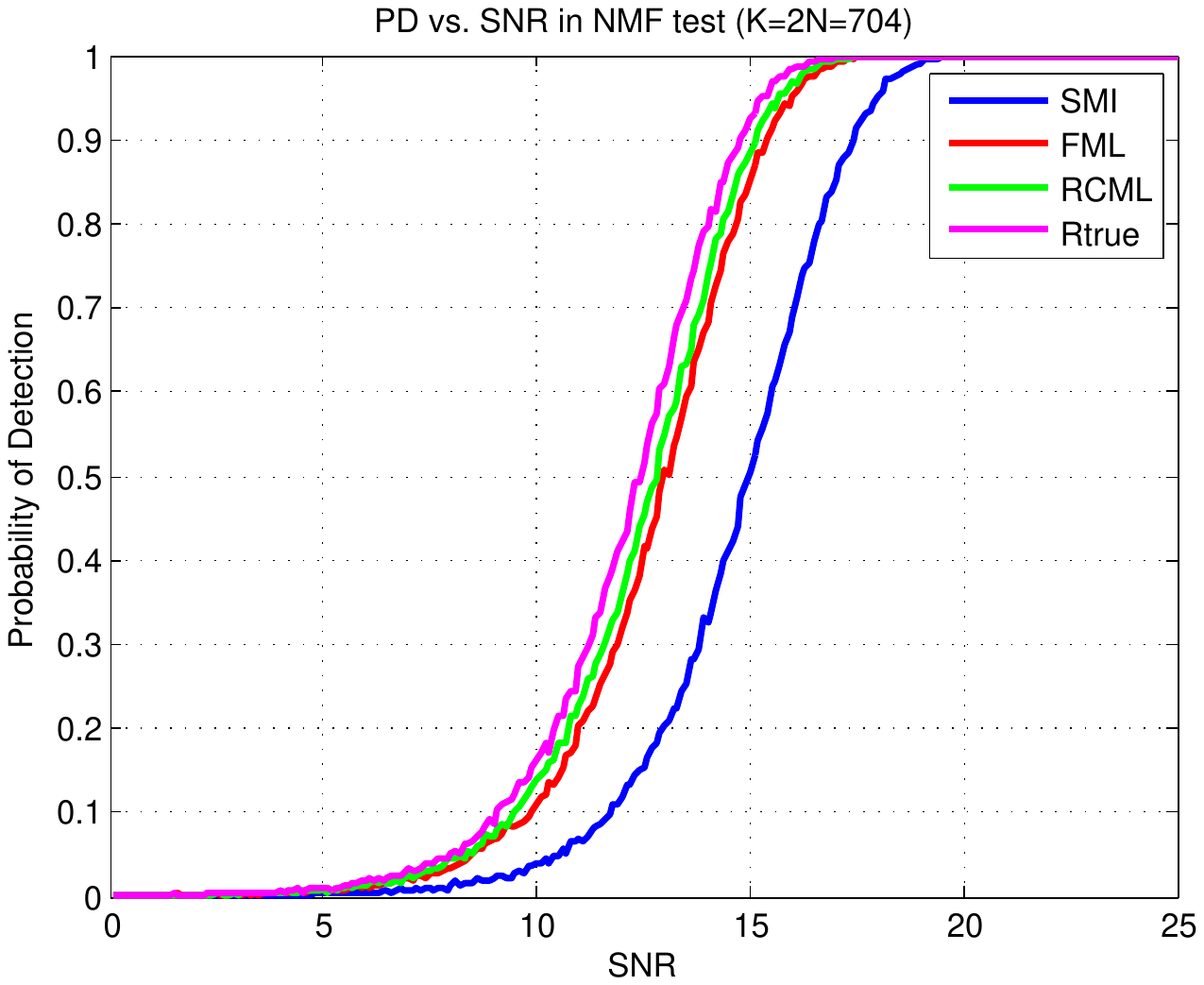}\label{Fig:NMF704}}\\
\subfigure[GLRT for $K=352$]{\includegraphics[width=2.4in]{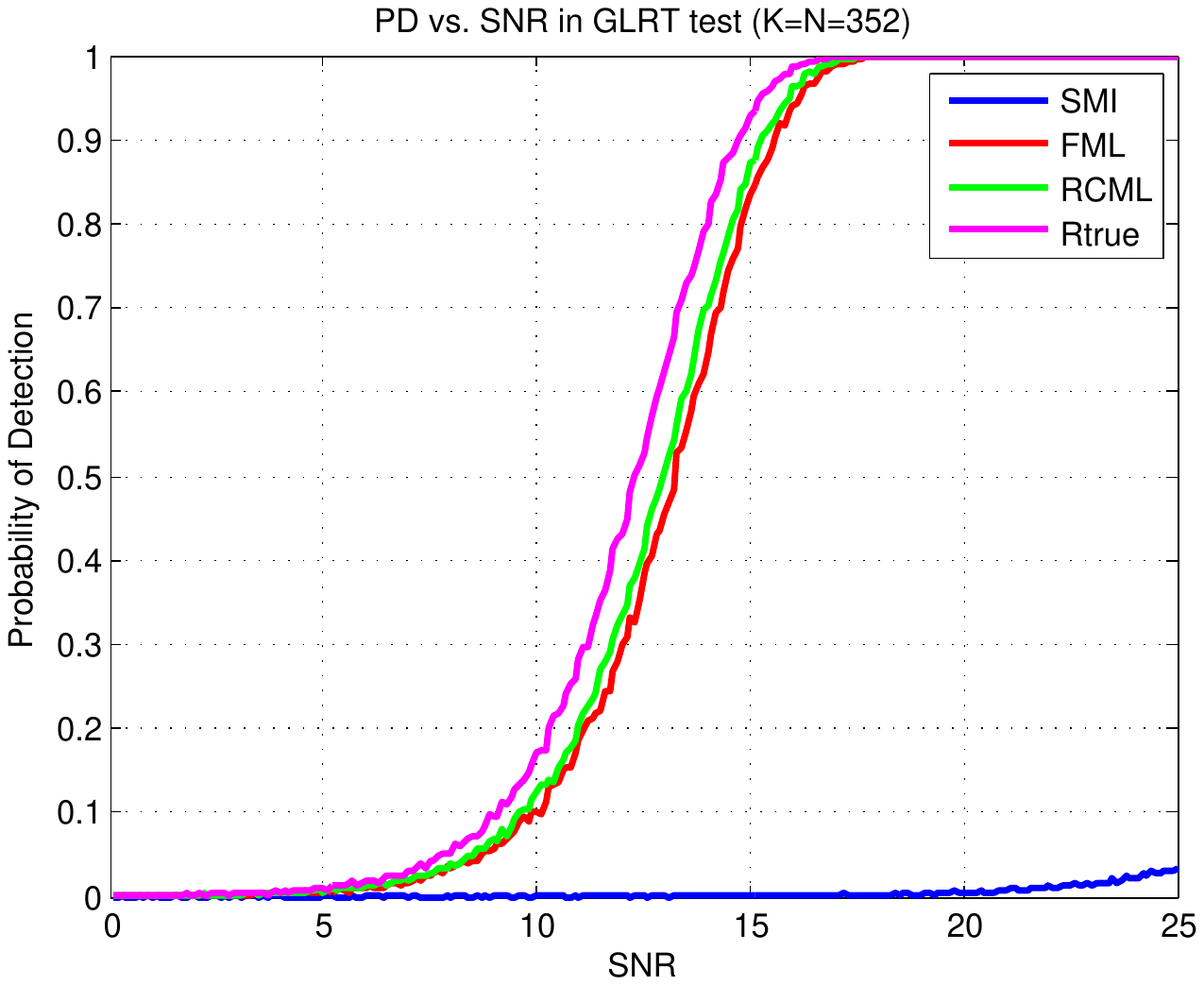}\label{Fig:GLRT352}}
\hfil
\subfigure[GLRT for $K=704$]{\includegraphics[width=2.4in]{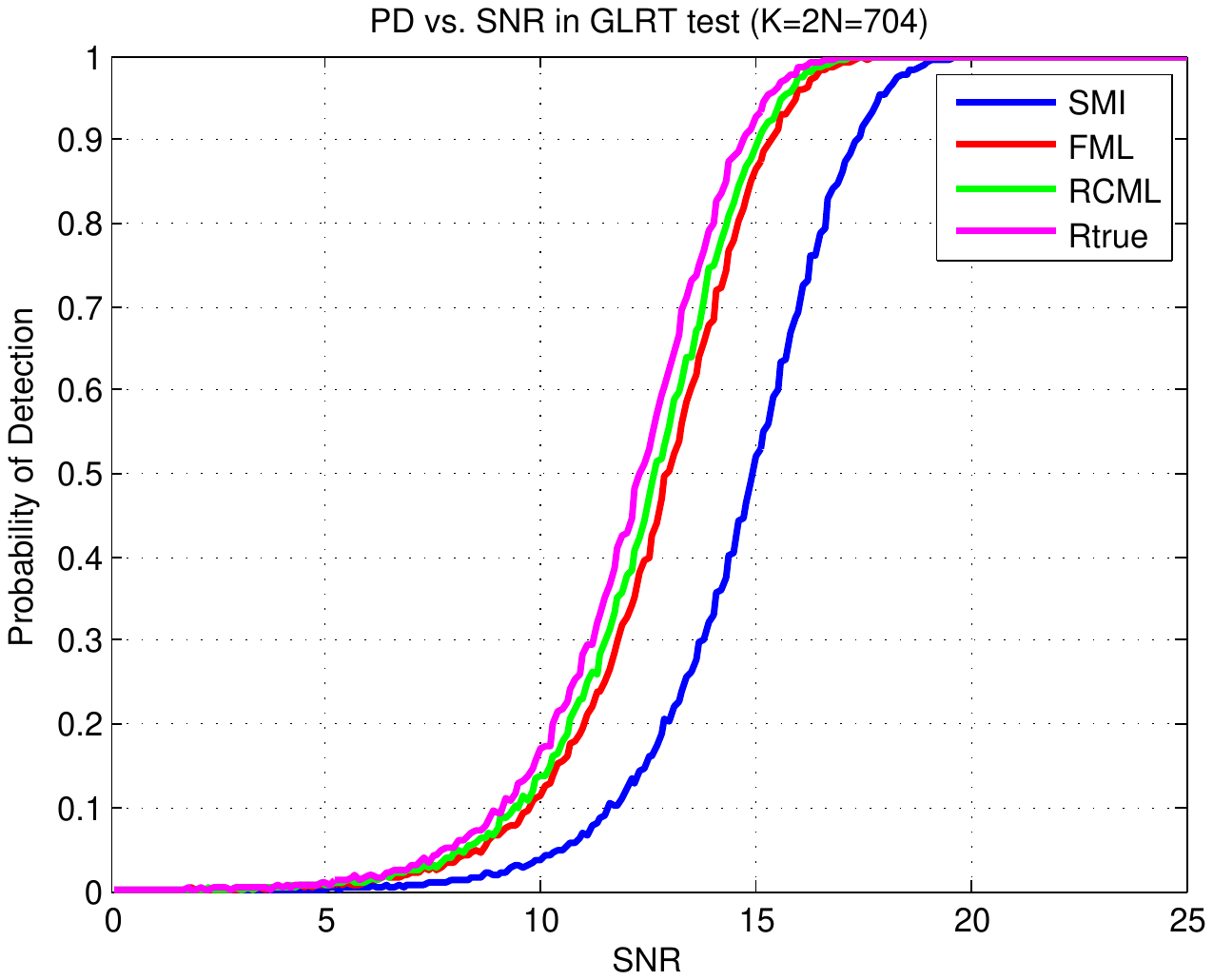}\label{Fig:GLRT704}}\\
\end{center}
\caption{Probability of detection vs. SNR.}
\label{Fig:PDvsSNR}
\end{figure}

We apply three test statistics, the normalized matched filter (NMF), the adaptive matched filter (AMF) \cite{Robey92}, and the generalized likelihood ratio test (GLRT) \cite{Kelly86}. The test statistics are given by
\bea
\text{NMF:} & \dfrac{|\mb{s}^H \hat{\mb{R}}^{-1}\mb{e}|^2}{(\mb{s}^H \hat{\mb{R}}^{-1}\mb{s})(\mb{e}^H \hat{\mb{R}}^{-1}\mb{e})} \overset{H_1}{\underset{H_0}{\gtrless}} \lambda_{\text{NMF}}\label{Eq:NMF}\\
\text{AMF:} & \dfrac{|\mb{s}^H \hat{\mb{R}}^{-1}\mb{e}|^2}{\mb{s}^H \hat{\mb{R}}^{-1}\mb{s}} \overset{H_1}{\underset{H_0}{\gtrless}} \lambda_{\text{AMF}}\label{Eq:AMF}\\
\text{GLRT:} & \dfrac{|\mb{s}^H \hat{\mb{R}}^{-1}\mb{e}|^2}{\mb{s}^H \hat{\mb{R}}^{-1}\mb{s}\Big(1+\frac{1}{K}\mb{e}^H \hat{\mb{R}}^{-1}\mb{e}\Big)} \overset{H_1}{\underset{H_0}{\gtrless}} K \lambda_{\text{GLRT}}\label{Eq:GLRT} 
\eea
where $\mb{s}$, $\hat{\mb{R}}$, $\mb{e}$, and $K$ are the steering vector, the estimated covariance matrix, the observation vector, and the number of training samples, respectively. The detection probability $P_{d}$ is defined as the probability that the value of test statistic is greater than a threshold conditioned on the hypothesis that the received data includes target information. Therefore, it depends on signal to noise ratio (SNR, by virtue of $\mb{s}$,) and the estimated covariance matrix.  Since $P_d$ does not typically admit a closed form, we first generate a number of samples from the L-band data set of KASSPER program to determine $\lambda$ corresponding to the fixed false alarm rate and then employ Monte Carlo simulations to evaluate $P_d$ corresponding to each estimator for each of the test statistics. We set a constant false alarm rate to $10^{-4}$.

Figure \ref{Fig:PDvsSNR} shows the detection probability $P_{d}$ plotted as a function of SNR  for different estimators and detection statistics. Figures \ref{Fig:AMF352} and \ref{Fig:AMF704} plot $P_{d}$ for AMF test, Figures \ref{Fig:NMF352} and \ref{Fig:NMF704} are corresponding to the NMF test, and Figures \ref{Fig:NMF352} and \ref{Fig:NMF704} plot results for the GLRT. We use $K=N=352$ and $K=2N=704$ training samples to estimate the covariance matrix for each of the test statistics. Figures \ref{Fig:AMF352}, \ref{Fig:NMF352}, and \ref{Fig:GLRT352} are for $K=352$ and Figures \ref{Fig:AMF704}, \ref{Fig:NMF704}, and \ref{Fig:GLRT704} are for $K=704$. It is well-known that $K=2N$ training samples are needed to keep the performance within 3dB. Indeed, we can see that the sample covariance matrix has about 3dB loss vs. the true covariance matrix in all of test statistics. The proposed RCML estimator is the closest to the $P_d$ achieved by using the true covariance matrix (upper bound) and FML follows RCML. As expected, each estimator shows higher detection probability when $K=2N$ vs. $K=N$, i.e. an increase in training. Note finally that the RCML estimator performs the best no matter which test statistic is applied and in every regime of training.

\begin{figure}
\begin{center}
\subfigure[$K=352$]{\includegraphics[width=2.4in]{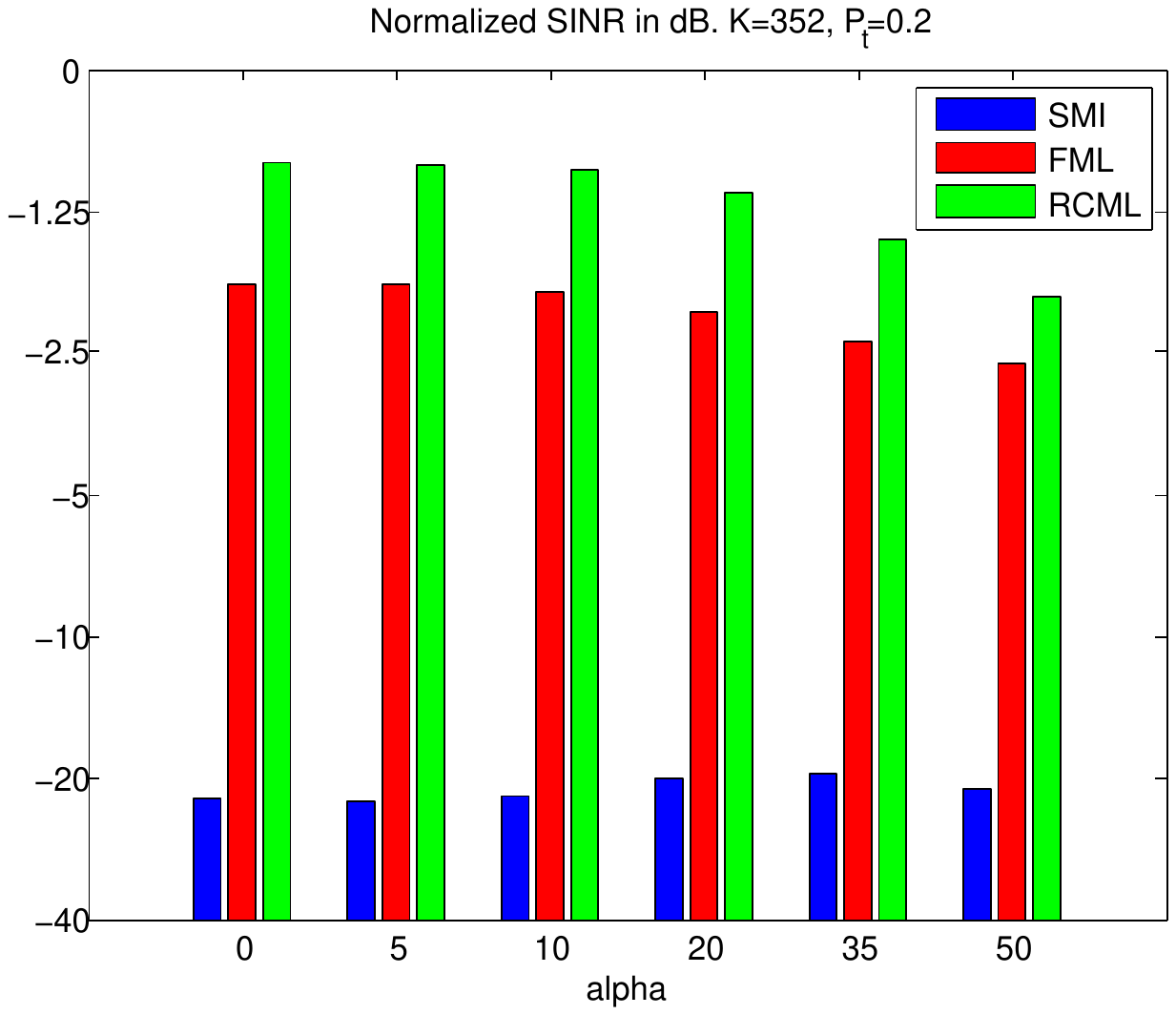}\label{Fig:SINR70of352}}
\hfil
\subfigure[$K=352$]{\includegraphics[width=2.4in]{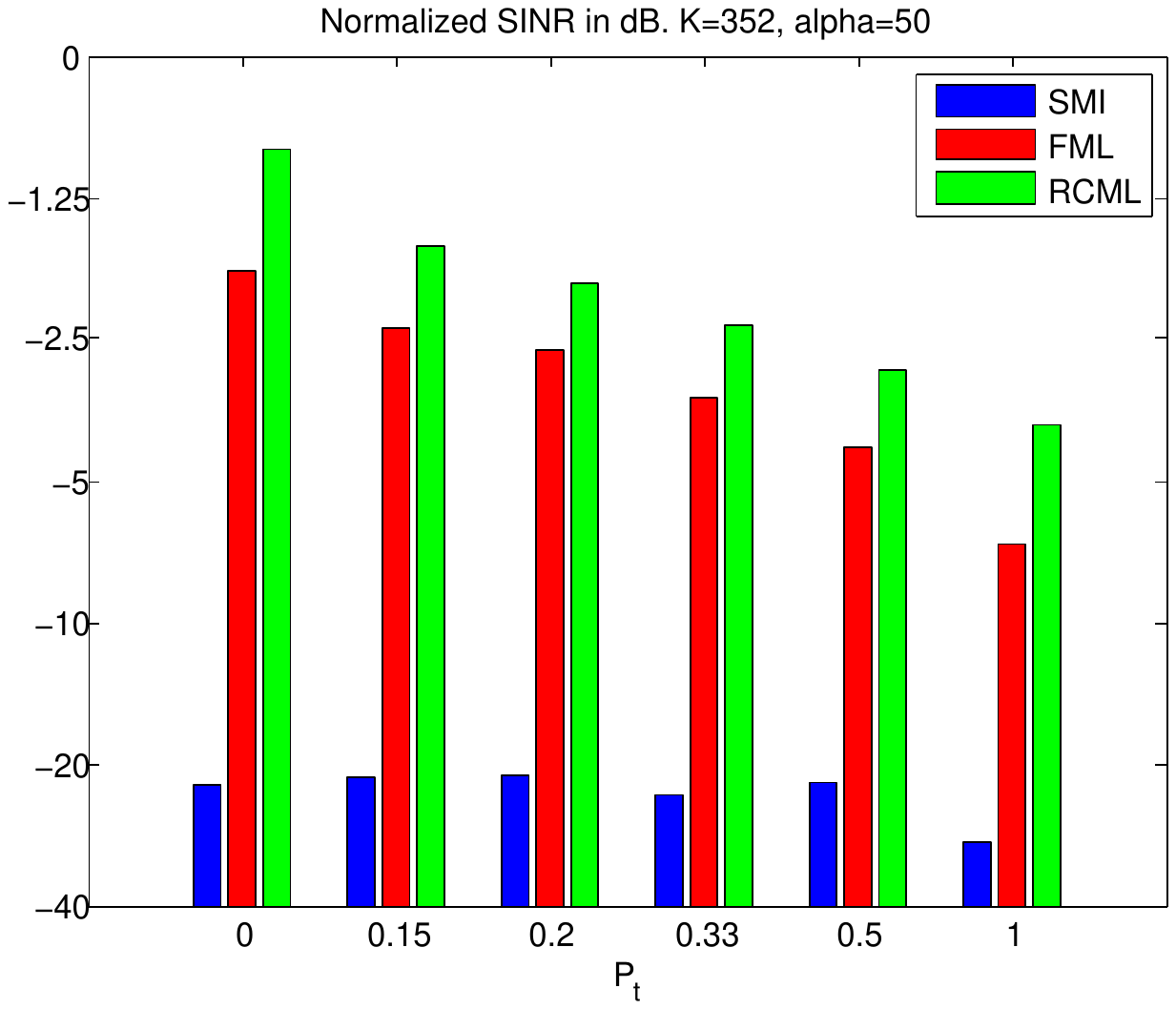}\label{Fig:SINR352w50}}\\
\subfigure[$K=704$]{\includegraphics[width=2.4in]{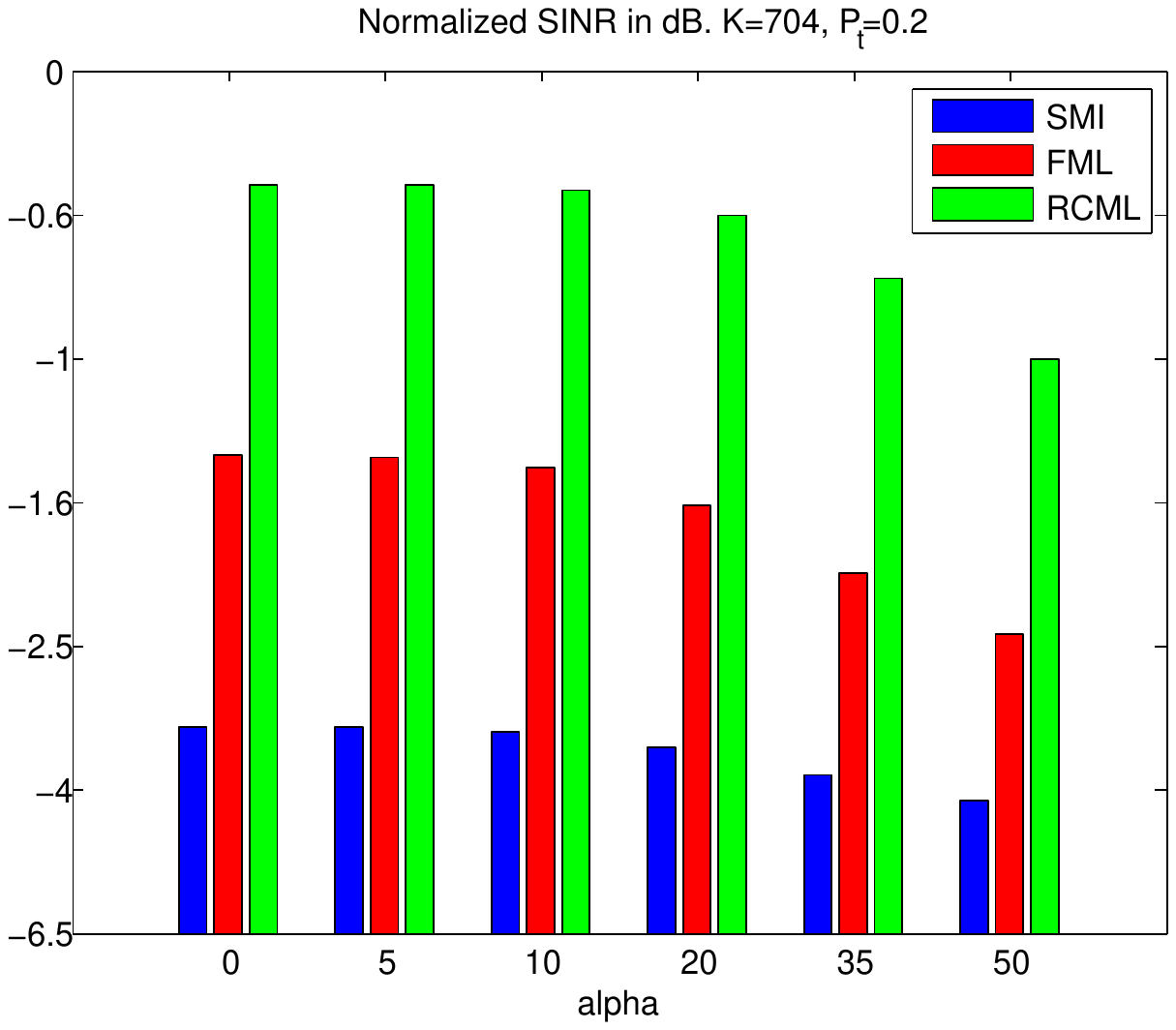}\label{Fig:SINR141of704}}
\hfil
\subfigure[$K=704$]{\includegraphics[width=2.4in]{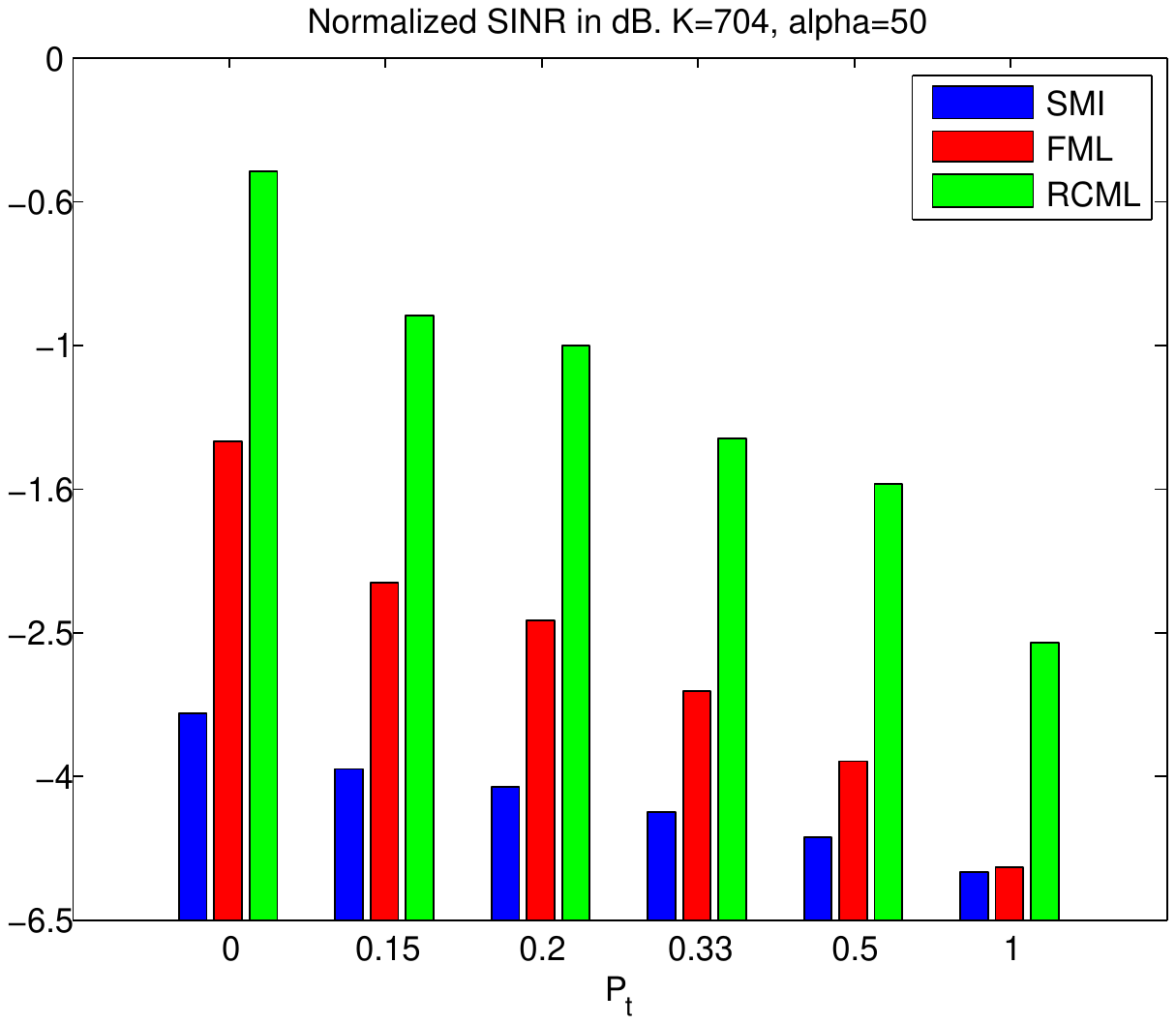}\label{Fig:SINR704w50}}\\
\end{center}
\caption{Normalized SINR vs. target intensity $\alpha$ and percentage corruption $P_t$.}
\label{Fig:SINRnonhomogeneous}
\end{figure}

\subsubsection{Complexity Comparison}
\begin{table}[!t]
\begin{center}
\caption{Running time (sec) for various estimators for KASSPER dataset. FML, RCML, and EigC involve eigenvalue decomposition which takes 0.0497 sec.}
\label{Tb:ComplexityRCML}
\begin{tabular}{|c|c|c|c|}
  \hline
  $K$ & 352 & 528 & 704\\
  \hline
  SMI & 0.0153 & 0.0241 & 0.0294\\
  \hline
  FML & 0.2877 & 0.3121 & 0.3292\\
  \hline
  RCML & 0.1054 & 0.1216 & 0.1311\\
  \hline
  EigC & 0.2853 & 0.3178 & 0.3319\\
  \hline
  LOOC & 12.0666 & 13.0476 & 14.5000\\
  \hline
\end{tabular}
\end{center}
\end{table}

We compare computational complexity of the compared methods. Table \ref{Tb:ComplexityRCML} shows running times in second for SMI, FML, RCML, eigencanceler, and LOOC. We take average values of results of 100 trials for each estimator and the experiments are performed on the desktop with Intel Core i7-2600 CPU 3.40 GHz and 8.00 GB RAM. As shown in the table, running times increase for all methods as the number of training samples increase. SMI is fastest is all training regimes as expected and RCML is the second best in terms of running time. FML and eigencanceler show running time close to each other. Please note that FML, RCML, and eigencanceler involve eigenvalue decomposition and running time of eigenvalue decomposition in this experiment is 0.0492 second. This result confirms that the RCML estimator not only outperforms the other estimators in the sense of the normalized SINR but also is even computationally cheaper than alternatives.

\subsubsection{Robustness to Nonhomogeneous Training Samples}
We investigate two different scenarios to evaluate robustness to nonhomogeneous training samples. First, we fix the ratio of the number of corrupted samples including target information to target-free samples. This ratio is given by
\be
P_t = \dfrac{\text{the number of corrupted samples by target information}}{K \text{(= the number of total training samples)}}
\ee
and the intensity of target signal by $\alpha$, that is, the received data $\mb{z}$ can be expressed by
\be
\mb{z} = \alpha\mb{s}(\theta_t,f_t) + \mb{d}
\ee
where $\mb{d} = \mb{c} + \mb{j} + \mb{n}$ represents the overall disturbance which is the sum of $\mb{c}$, clutter, $\mb{j}$, jammers, $\mb{n}$, the background white noise, and comes from a zero-mean complex circular Gaussian distribution. $\mb{s}$ is a known spatio-temporal steering vector \cite{Wicks06} which is drawn from a distribution independent of $\mb{d}$. In particular, we examine performance as the percentage of corrupted samples, i.e., $P_t$ is varied while keeping a fixed intensity of the target signal, $\alpha$. Our second investigation involves varying $\alpha$ for a fixed $P_t$.

We use two evaluation measures: the normalized SINR and a trace deviation measure, TRD($\hat{\mb{R}}$). Figure \ref{Fig:SINRnonhomogeneous} presents bar graphs that show averaged $SINR_{dB}$ results for $K=352$ and $K=704$ training samples. Because the steering vector is a function of both azimuthal angle and Doppler frequency, we evaluate the normalized SINR in both angle and Doppler domain and average over both domains to get the normalized SINR value represented by each bar. Figures \ref{Fig:SINR70of352} and \ref{Fig:SINR352w50} are corresponding to $K=N=352$ and Figures \ref{Fig:SINR141of704} and \ref{Fig:SINR704w50} plot results for $K=2N=704$. In particular, Figures \ref{Fig:SINR70of352} and \ref{Fig:SINR141of704} plot the variation of the normalized SINR for varying intensity of the steering vector $\alpha$, where $\alpha$ is varied from as low as $0$ to as high as $50$. We fixed $P_t = 0.2$ in these plots. Two trends are evident from Figures \ref{Fig:SINR70of352} and \ref{Fig:SINR141of704}: 1.) as intuitively expected, the SINR values decreases monotonically with an increase in $\alpha$ for all methods (except for the sample covariance technique in the $K=N$ regime) and 2.) the RCML estimator exhibits appreciably good performance in all training regimes. Figures \ref{Fig:SINR352w50} and \ref{Fig:SINR704w50} plot the SINR performance for varying $P_t$ where $\alpha$ remains a constant, $\alpha = 50$. The range of $P_t$ is from $0$ (no target corruption) to $1$ (all the samples are corrupted by target information). Similar trends are observed as well in Figures \ref{Fig:SINR352w50} and \ref{Fig:SINR704w50}. Again, the RCML estimator consistently outperforms the other methods. An interesting observation is that $SINR_{dB}$ drops more rapidly as a function of increasing $P_t$ vs. increasing $\alpha$, which reveals that $P_t$ is a more critical factor than $\alpha$ in influencing estimation with heterogeneous training.

 We define a trace deviation measure, TRD($\hat{\mb{R}}$) = $|tr\{\mb{R}\hat{\mb{R}}^{-1}\}/N - 1|$ that is an alternate way of evaluating the performance of covariance matrix estimators. Intuitively, we can see $tr\{\mb{R}\hat{\mb{R}}^{-1}\}/N = 1$ when $\hat{\mb{R}} = \mb{R}$. Therefore, we can say the goal of estimation is to keep TRD($\hat{\mb{R}}$) as small as possible, ideally close to $0$. Figure \ref{Fig:TraceNonhomogeneous} shows plots bar graphs in the same training regime as Figure \ref{Fig:SINRnonhomogeneous}. We plots values of TRD($\hat{\mb{R}}$) for varying $\alpha$ and $P_t$ and the number of training samples are $K=352$ and $K=704$.

We can observe trends similar to those in Figure \ref{Fig:SINRnonhomogeneous}. The TRD values monotonically increase as $\alpha$ and $P_t$ increase for all methods. The proposed RCML estimator consistently outperforms other techniques considered in all experiments. Additionally, the merits of RCML in robust estimation are brought out. The TRD values corresponding to both the sample covariance matrix and the FML estimator increase quite dramatically with an increase in $\alpha$ and, especially, $P_t$. However, in the case of the RCML estimator this increase is more gradual. The TRD values corresponding to RCML in Figure \ref{Fig:Trace704w50} are in fact still close to  $0$ even under severe target corruption, i.e.\ $P_t =1$.

\begin{figure}
\begin{center}
\subfigure[ $K=352$]{\includegraphics[width=2.4in]{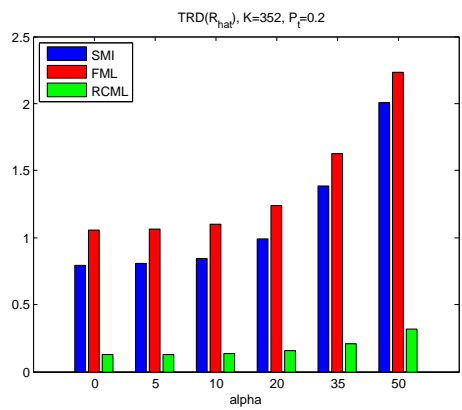}\label{Fig:Trace70of352}}
\hfil
\subfigure[ $K=352$]{\includegraphics[width=2.4in]{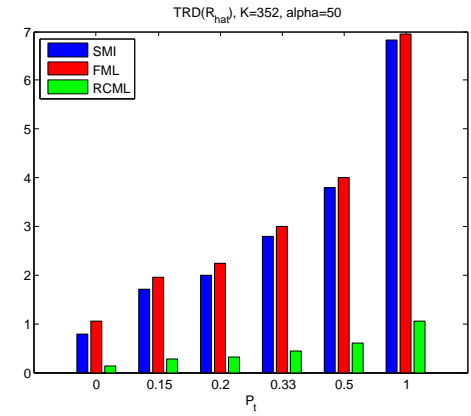}\label{Fig:Trace352w50}}\\
\subfigure[$K=704$]{\includegraphics[width=2.4in]{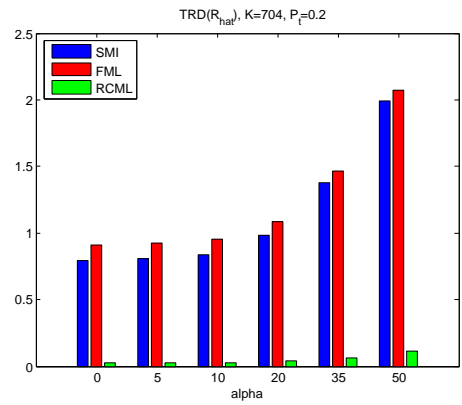}\label{Fig:Trace141of704}}
\hfil
\subfigure[$K=704$]{\includegraphics[width=2.4in]{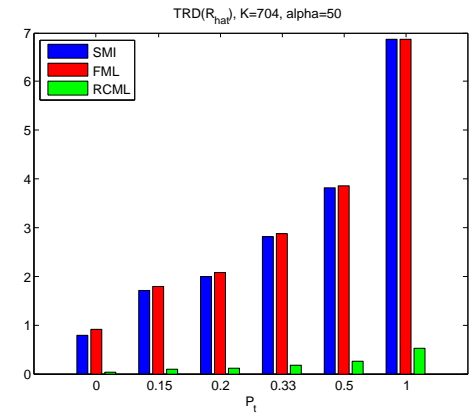}\label{Fig:Trace704w50}}\\
\end{center}
\caption{TRD($\hat{\mb{R}}$) vs. target intensity $\alpha$ and percentage corruption $P_t$.}
\label{Fig:TraceNonhomogeneous}
\end{figure}

\subsection{RCML vs. RCML$_{\text{LB}}$ and Wax and Kailath Estimator}
\label{Sec:Wax}
\begin{figure}
\begin{center}
\subfigure[$K=30$]{\includegraphics[width=2.5in]{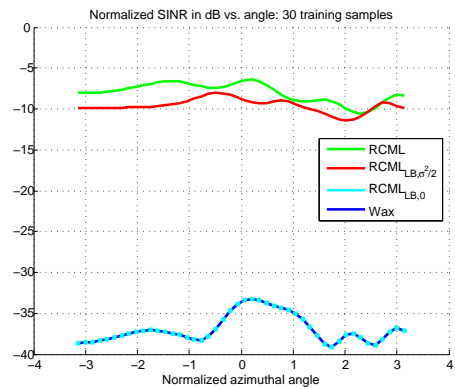}\label{Fig:SINR_Wax_30}}
\hfil
\subfigure[$K=300$]{\includegraphics[width=2.5in]{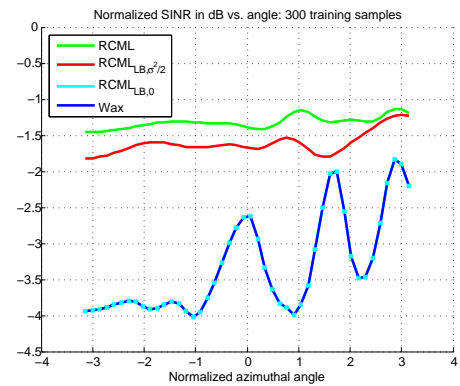}\label{Fig:SINR_Wax_300}}\\
\end{center}
\caption{Normalized SINR vs. normalized azimuthal angle.}
\label{Fig:SINR_Wax}
\end{figure}

Following Anderson's result \cite{Anderson63} in statistics, Wax and Kailath \cite{Wax85} reported an ML estimator of a structured covariance estimator also under the rank-constraint as follows:
\bea
\hat{\lambda}_i  & = & \bar{\lambda}_i\label{Eq:WaxLambda}\\
\hat\sigma^2 & = & \frac{1}{N-r}\sum_{i=r+1}^N \bar{\lambda}_i\label{Eq:WaxNoise}\\
\hat{\mb\Phi}_i & = & \mb V_i\label{Eq:WaxEigenvector}
\eea

where $ \bar{\lambda}_{1} \geq \bar{\lambda}_{2} \geq \cdots \geq \bar{\lambda}_{N}$ denote the eigenvalues of the sample covariance matrix. It is easy to see that the RCML$_{\text{LB}}$ estimator in Eq. (\ref{eqn:optimalLabmdainoptimalc}) is a generalization of this result by employing a lower bound on the noise floor.

To further emphasize the value of RCML in Eq. (\ref{eqn:RCML}) is the estimator of choice for practical radar STAP, we now perform an experimental comparison of RCML, RCML$_{\text{LB}}$ and Wax and Kailath \cite{Wax85} estimator in the challenging low training regime.
Figure \ref{Fig:SINR_Wax} shows the performance of RCML, $\text{RCML}_{\text{LB}}$, and Wax and Kailath estimators for $K = 30$ and $K=300$ training samples, respectively. Two versions of RCML$_{\text{LB}}$ are reported with the lower bound $\hat c$ set to $\sigma^2/2$ and $0$ and the estimators labeled as $\text{RCML}_{\text{LB},\sigma^2/2}$ and $\text{RCML}_{\text{LB},0}$ in the plots. Figures \ref{Fig:SINR_Wax_30} and \ref{Fig:SINR_Wax_300} clearly reveal that: 1.) RCML is clearly the best estimator, while the Wax and Kailath estimator is about $30$ dB worse for $K=30$ and about $3$ dB below for $K=300$ training samples. 2.) the knowledge of a lower bound helps RCML$_{\text{LB}}$ in that $\text{RCML}_{\text{LB},\sigma^2/2}$ is a better estimator than $\text{RCML}_{\text{LB},0}$, which in fact overlaps with the the Wax and Kailath estimator.

\section{Conclusion}
\label{conclusion}

We developed a new estimator of structured covariance matrices (identity plus a positive semi-definite component) which employs rank of the positive semi-definite matrix as an explicit constraint in ML estimation. In radar applications, the rank-deficient component corresponds to the clutter and its rank can be determined using the Brennan rule for airborne radar interacting with land clutter. We demonstrated that despite the presence of the challenging rank-constraint, the estimation problem can in fact be reduced to a convex optimization problem and admits a closed form solution. Experimentally, rank information plays a vital role and rigorous evaluation over the KASSPER data set establishes merits of the proposed estimator when evaluated via widely used figures of merits such as normalized SINR. Future work could consider the incorporation of more constraints on the clutter/disturbance matrix such as Toeplitz structure as well as the use of physically inspired probabilistic  priors in a Bayesian setting. 
\chapter{Computationally Efficient Toeplitz Approximation under a Rank Constraint}
\label{Ch:EASTR}

\section{Introduction}
\label{Sec:Introduction}
Radar systems using multiple antenna elements that coherently process multiple pulses offer significant benefits in many applications. The directivity and resolution limits of a single sensor can be overcome by using an adaptive array of spatially distributed sensors makes multiple temporal snapshots processing possible. Specifically, joint adaptive processing in the spatial and temporal domains \cite{Monzingo04,Guerci03,Klemm02} called space time adaptive processing (STAP) creates an ability to suppress interference signals while simultaneously preserving gain on the desired signal. For STAP to be successful though, interference statistics, in particular the covariance matrix of the disturbance or interference must be estimated from target free training data, and therefore training plays a pivotal role in adaptive radar systems.

To obtain accurate estimates of the disturbance covariance matrix, a large number of homogeneous (target free) disturbance training samples are required in the absence of any prior knowledge about the interference environment. A compelling challenge for radar STAP emerges since generous homogeneous training is often not available in practice \cite{Himed97}. This problem is exacerbated because the estimation process must be repeated for each range bin of interest. Much recent research in radar STAP has been proposed to overcome the lack of generous homogeneous training. One approach to this problem uses \emph{a priori} information about the radar environment and is widely referred to in the literature as knowledge-based processing \cite{Guerci06,Capraro06,Haykin06,Wicks06,Miranda06,DeMaio09,DeMaio10,Aubry13SP}. A subset of this technique deals with intelligent training selection for reducing both the number of required training samples and computational cost \cite{Wicks06,Wang91,Gini08}. Another approach to improve the target detection performance is data selection screening among the training data to excise potential outliers \cite{Chen99,Aubry13IET}.

Covariance matrix estimation techniques that enforce and exploit specific structure inherent to the disturbance phenomenon have merit in the regime of extremely limited training data. Examples of structure include persymmetry \cite{Nitzberg80}, eigenstructure \cite{Steiner00,Haimovich96}, circulant structure \cite{Conte98}, rank constraint \cite{Monga12,Kang13}, multichannel autoregressive models \cite{Roman00,Wang09}, physical constraints \cite{Kraay07} and so on. In particular, since the covariance matrix from a stationary stochastic signal is Hermitian and Toeplitz, estimating Toeplitz covariance benefits many applications such as array processing and time series analysis. Such a Hermitian Toeplitz matrix models the covariance of a random vector obtained by sampling a wide sense stationary noise field with a uniform linear array and uncorrelated narrow-band interferers \cite{Fuhrmann91}. The seminal work by Burg \emph{et al.} \cite{Burg82} proposed an {\em iterative} method for estimation of structured covariance matrices using the ML method in its full generality . Li \emph{et al.} developed the asymptotic maximum likelihood (AML) estimation for structured covariance matrices \cite{Li99} using the extended invariance principle (EXIP) \cite{Stoica89}. Approximation of arbitrary matrices by a (Hermitian) Toeplitz matrix using matrix decompositions and outer approximations has separately been pursued in applied mathematics \cite{AlHomidan02,Davis98,Shaw98,Suffridge93}. While the techniques in \cite{AlHomidan02,Davis98,Shaw98,Suffridge93} were not conceived for signal processing or radar STAP, they can potentially be used in conjunction with classical covariance estimation. Of particular interest is Al-Homidan's $l_1$ sequential quadratic programming (SQP) method to find the nearest symmetric positive semi-definite Toeplitz matrix to given a matrix \cite{AlHomidan02}.

\subsection{Motivation and Challenges}
Various estimation and approximation techniques of Toeplitz covariance matrices have been proposed \cite{Miller87,Little02,Forster89,Jansson00}. It is well known \cite{Fuhrmann91} though that there is no closed-form solution for the ML estimation of a Hermitian Toeplitz covariance matrix.
Many Toeplitz covariance estimation techniques need the assumption of large sample size (i.e.\ observed training) for computational tractability \cite{Li99},\cite{Jansson00}. In the regime of realistic training, methods rely on numerical optimization (often non-convex), are computationally involved and hence unsuitable for real-time/practical deployment.

Previous works, notably in statistics \cite{Tipping99,Stoica09,Anderson63,Wax85} (and references therein) have also shown that the rank of the structured interference can be exploited in a tractable manner. Rank is a powerful constraint in covariance estimation and can often be determined via underlying radar physics. Under nominal assumptions, the Brennan rule \cite{Ward94} may be used to determine the rank of the structured interference. Related work also addresses the problem of determining rank in non-ideal scenarios \cite{Goodman07}. Recently, Kang \emph{et al.} proposed the rank constrained ML (RCML) estimation of structured covariance matrices \cite{Kang14} which exploits the knowledge of the radar noise floor. Kang {\em et al.} \cite{Kang14} also report another estimator called RCML$_{\text{LB}}$ for the case when the noise floor is assumed unknown and only a lower bound (LB) is available. The RCML$_{\text{LB}}$ estimator generalizes the well-known result in statistics \cite{Anderson63}. In the radar context though, the noise variance is assumed known since it can be determined by placing the radar in receive only mode \cite{Skolnik08}. Notable contributions which deal with both the rank information and Toeplitz structure of the covariance matrix jointly include the iterated Toeplitz approximation method (ITAM) \cite{Wilkes88} proposed by Wilkes and Hayes and the iterative approach by Forster {\em et al.} \cite{Forster89}. Both approaches are based on a computationally expensive iterative procedure. The ITAM estimator in particular has been shown to be effective under very low training because of its ability to exploit structure but does not yield scalable performance improvements as realistic or generous training is made available.

\subsection{Our contributions}

It may be inferred that for adequate performance under limited training, computationally involved estimators such as ITAM \cite{Wilkes88} are needed but online covariance estimation is often needed in near real-time. While fast, closed form estimators such as AML \cite{Li99} can be used, they do not excel under low or realistic training. Our contribution aims to break this classical trade-off.  We develop a computationally efficient approximation of structured Toeplitz covariance under a rank (EASTR) constraint\footnote{Preliminary version of the work has appeared at IEEE Asilomar Conference on Signals, Systems, and Computers, November 2013 \cite{Kang13Asilomar} and Computational Advances in Multi-Sensor Adaptive Processing (CAMSAP), December 2013 \cite{Kang13CAMSAP}.}. Specifically, our key contributions are listed next.

\begin{itemize}

\item \textbf{Analytically tractable framework for exploiting both Toeplitz structure and the rank of the structured interference.} Our proposed estimator, i.e.\ EASTR, satisfies both Toeplitz structure property (at least approximately) and the rank information of the structured interference at the same time. Decades of research has shown that enforcing even each constraint individually can be quite onerous (e.g.\ rank is a non-convex constraint and no known closed form exists under the Toeplitz constraint for all training). The rank constrained ML estimation problem can be made convex as shown in \cite{Kang14} via a transformation of variables. However, this does not apply when the Toeplitz constraint is added. We propose to decouple the rank and Toeplitz constraints, which lends analytical tractability. Crucially, the EASTR solution does not need iterative steps like ITAM and as will be established in Section \ref{Sec:Experiment}, Furthermore, our results demonstrate that EASTR consistently outperforms ITAM.

\item \textbf{Computationally efficient and fast estimation and approximation.} Our proposed method, EASTR, essentially involves a cascade of two steps where a closed form solution is available in each step. First a closed form solution using maximum likelihood employing the rank constraint is obtained from the RCML \cite{Anderson63,Kang14} estimator. Next, we propose a new method to perturb the eigenvalues of the RCML estimator in a rank preserving manner so as to impose the Toeplitz structure. We formulate a new quadratic programming (QP) optimization problem that solves for the eigenvalues while incorporating Toeplitz constraints and demonstrate that this problem also admits a closed form solution.

\item \textbf{Experimental insights and improved performance in low training regimes.} The merits of EASTR are also verified experimentally over both simulated data and realistic data sets such as Knowledge Aided Sensor Signal Processing and Expert Reasoning (KASSPER). ITAM works well particularly in low training regimes but is numerically expensive. The asymptotic ML estimation gives us a fast closed form solution but shows good performances only in high training regimes. EASTR excels across all training regimes while still permitting closed form solutions attractive for practical deployment.
\end{itemize}

We consider two cases in this chapter: 1.) when the Toeplitz constraint is satisfied exactly, we obtain the exact Toeplitz estimate satisfying the rank constraint and Toeplitz property and 2.) when the Toeplitz constraint is not exactly satisfied, we make a slight modification to the Toeplitz constraint and derive a modified optimization problem to obtain approximately Toeplitz estimate. In practice, the available data dictates which of the two cases is invoked. Experimental investigation shows that EASTR can outperform alternatives in the sense of 1.) normalized SINR and 2.) the probability of detection.

The remainder of this chapter is organized as follows. Section \ref{Sec:EASTR} develops our proposed estimator, i.e. computationally efficient approximation of structured Toeplitz covariance under a rank constraint (EASTR). Experimental validation of the proposed method is provided in Section \ref{Sec:Experiment} wherein we report the performance of the proposed method and compare it against widely used existing radar STAP covariance estimators in terms of normalized SINR and the probability of detection. Validation is performed using a popular disturbance covariance simulation model as well as real-world data from the benchmark KASSPER dataset.

\section{Efficient Approximation of Structured Covariance}
\label{Sec:EASTR}

The maximum likelihood covariance estimate $\mb R$ is one which maximizes the likelihood function based on a zero-mean complex circular Gaussian distribution:
\be
\label{Eq:Likelihood}
f(\mb R ; \mb{Z}) = \frac{1}{\pi^{NK}}|\mb{R}|^{-K}\exp\big(-tr \{\mb{Z}^H\mb{R}^{-1}\mb{Z}\}\big)
\ee
under both Toeplitz and rank constraints. In (\ref{Eq:Likelihood}), $K$ is the number of training samples, $N$ is the dimension of observations, and $\mb Z$ is an $N \times K$ matrix whose each column is an i.i.d. observation vector. With some algebraic manipulations, the final optimization problem may be written as
\be
\label{Eq:InitialProblem}
\left\{ \begin{array}{cc}
\ds\min_{\mb{R}} & tr \{\mb{R}^{-1}\mb{S}\} + \log(|\mb{R}|)\\
s.t. & \mb{R} = \sigma^2 \mb{I} + \mb{R}_c\\
 & \rank(\mb{R}_c) = r\\
 & \mb{R}_c \in T \end{array} \right.
\ee
where $\mb S = \frac{1}{K}\mb Z \mb Z^H$ is the sample covariance matrix, $\mb R_c$ denotes the interference covariance matrix, $\mb I$ is an $N \times N$ identity matrix, and $\sigma^2$ is the radar noise floor which can be readily determined using standard techniques \cite{Skolnik08}, and lastly $T$ is the set of all $N \times N$ Hermitian positive semi-definite Toeplitz matrices,
\be
T = \{\mb{T}:\mb{T} \in \mathbb{C}^{N \times N}, \mb{T}^H = \mb{T}, \mb{T} \succeq \mb{0} \; \text{and} \; \mb{T} \in \mathcal{T}\}
\ee
where $\mathcal{T}$ is the set of all Toeplitz matrices. The optimization problem (\ref{Eq:InitialProblem}) is particularly hard to solve because 1.) the problem is not convex hence a global minimum would be difficult to find, 2.) from a numerical standpoint, solutions are known to be computationally burdensome under the Toeplitz constraint alone \cite{Burg82}, \cite{Wilkes88}. Adding the rank constraint only exacerbates the problem.

In view of the aforementioned challenges, we  focus on covariance matrix estimation that: 1.) is fast and based on analytical closed forms so as to facilitate practical deployment, and 2.) exploits previously known insights in radar STAP so that performance in the sense of high SINR and $P_{d}$ can be obtained across {\em all} training regimes.

Our proposed solution decouples the rank and Toeplitz constraints, and develops a cascade of two closed forms as the final estimator. The first closed form is obtained by employing the rank constrained ML (RCML) estimator of structured covariance \cite{Kang14,Anderson63}. The final RCML solution is given by
\be
\mb R^\star = \sigma^2 {\mb X^\star}^{-1} = \sigma^2 \mb \Phi {\mb \Lambda^\star}^{-1} \mb \Phi^H
\ee
where $\mb \Phi$ is the eigenvector matrix of the sample covariance matrix $\mb S$ and $\mb\Lambda^\star$ is a diagonal matrix with optimal diagonal entries $\lambda_i^\star$ which is given by
\be
\label{Eq:RCMLsolution}
\lambda_i^\star = \left\{ \begin{array}{cc}
\min (1,\dfrac{1}{d_i}) & \text{for} \; i=1,2,\ldots,r\\
1 & \text{for} \; i=r+1,r+2,\ldots,N \end{array} \right.
\ee
where $d_i$ is the $i$th eigenvalue of the sample covariance matrix normalized by $\sigma^2$, $\mb S' = \frac{1}{\sigma^2}\mb S$.

\subsection{Conditions for Eigenvalues of Toepiltz Covariance}
Our approach now involves enforcing the Toeplitz structure on top of the RCML estimator in \eqref{Eq:RCMLsolution}.
Let the eigenvector matrix of $\mb{S}$ be $\mb\Phi$ and the eigenvalues of $\mb{R}_c$ be $\lambda_1, \lambda_2, \ldots, \lambda_r, \ldots, \lambda_N$. Since we want to preserve the rank constraint of the structured interference $rank(\mb{R}_c) = r$, $\mb R_c$ should have only $r$ positive eigenvalues and the rest of them should be zero, that is
\be
\lambda_1 \geq \lambda_2 \geq \cdots \geq \lambda_r > \lambda_{r+1} = \lambda_{r+2} = \cdots = \lambda_N = 0
\ee
Therefore, $\mb{R}_c$ can be expressed as
\be
\mb{R}_c = \mb\Phi \mb\Lambda \mb\Phi^H
\ee
where
\be
\mb\Lambda = \left[
 \begin{array}{cc}
    \mb\Lambda_r & \mb 0_{r \times (N-r)}\\
    \mb 0_{(N-r) \times r}  & \mb 0_{N-r}\\
  \end{array}
\right]
\ee
and $\mb\Lambda_r$ is an $r \times r$ diagonal matrix with diagonal entries $\lambda_1, \ldots, \lambda_r$. Therefore, we know that $ij$th component of $\mb R_c$ is given by
\be
(\mb{R}_c)_{ij} = \ds\sum_{k=1}^r \lambda_k\phi_{ik}\phi_{jk}^*
\ee
where $\phi_{ij}$ is the $(i,j)$ element of $\mb\Phi$. Note that $\mb{R}_c$ is already Hermitian, that is, $(\mb R_c)_{ij} = (\mb R_c)_{ji}^*$. Now in order for $\mb{R}_c$ to be Toeplitz matrix, all entries on each diagonal in the lower triangular part in $\mb R_c$ must have same values, i.e., following equations must hold.
\be
\label{Eq:equations}
\left\{ \begin{array}{ccccccc}
(\mb{R}_c)_{11} & = & (\mb{R}_c)_{22} & = & \cdots & = & (\mb{R}_c)_{NN}\\
(\mb{R}_c)_{21} & = & (\mb{R}_c)_{32} & = & \cdots & = & (\mb{R}_c)_{N,N-1}\\
                &   &                 & \vdots &   &   & \\
                &   & (\mb{R}_c)_{N-1,1} & = & (\mb{R}_c)_{N2} & & \end{array} \right.
\ee
Let us examine the first condition in (\ref{Eq:equations}), $(\mb{R}_c)_{11} = (\mb{R}_c)_{22}$,
\be
\ds\sum_{k=1}^r \lambda_k\phi_{1k}\phi_{1k}^* = \ds\sum_{k=1}^r \lambda_k\phi_{2k}\phi_{2k}^*
\ee
It can be also expressed as
\be
\ds\sum_{k=1}^r \lambda_k (\phi_{1k}\phi_{1k}^* - \phi_{2k}\phi_{2k}^*) = 0
\ee
In vector form, the first equation is given by
\be
\label{Eq:VectorForm}
\left[ \begin{array}{ccc}
\phi_{11}\phi_{11}^*-\phi_{21}\phi_{21}^* & \cdots & \phi_{1r}\phi_{1r}^*-\phi_{2r}\phi_{2r}^* \end{array} \right]
\left[ \begin{array}{c}
\lambda_1\\
\vdots\\
\lambda_r \end{array} \right] = 0
\ee
Since the elements $\phi_{ij}$ of the eigenvalue matrix $\mb\Phi$ are  known ($\mb\Phi$ is the eigenvector matrix of the sample covariance matrix),  we now have the first constraint for Toeplitz covariance matrix as a linear combination of the eigenvalues.
Other equations in Eqs. (\ref{Eq:equations}) also can be expressed in a vector form as in (\ref{Eq:VectorForm}). Consequentially, we have a total of $N(N-1)/2$ equations and finally get the following equation which is the equality constraint of our optimization problem.
\be
\label{Eq:Constraint1}
\mb\Psi \bs\lambda = \mb{0}
\ee
where each row of $\mb\Psi \in \mathds{C}^{N(N-1)/2 \times r}$ denotes coefficients of $\lambda_i$ which come from each of equations in Eqs. (\ref{Eq:equations}) and $\bs\lambda = \left[ \begin{array}{cccc} \lambda_1 & \lambda_2 & \cdots & \lambda_r \end{array} \right]^T$.

Since $\mb\Psi$ Eq. (\ref{Eq:Constraint1}) is a tall matrix, (\ref{Eq:Constraint1}) in general is a overdetermined linear system, that is, we have more equations than unknowns. The solution set therefore depends on the rank of $\mb\Psi$. The first case is that we have an infinite set of solutions when the column rank of $\mb\Psi$ is less than $r$. On the other hand, when $\mb\Psi$ has a full column rank, we have the trivial solution, $\bs\lambda = \mb{0}$. That is, the covariance matrix can only be made approximately (and not exactly) Toeplitz in this case - a remedy for this case is discussed in Section \ref{Sec:ToeplitzApproximation}.

\subsection{Exact Toeplitz Solution}
\label{Sec:ExactToeplitz}
When the column rank of $\mb\Psi$ is less than $r$, Eq. (\ref{Eq:Constraint1}) has an infinite number of solutions. In this case, we can obtain the exact Toeplitz solution. First, let $\bs\lambda_\text{RCML}$ be the eigenvalues obtained from the RCML estimation, which is given by Eq. (\ref{Eq:RCMLsolution}). We already know the eigenvalues $\bs\lambda$ from the RCML estimate are the optimal ML estimate of the true structured covariance matrix under only the rank constraint. Therefore, we want the eigenvalues of the interference covariance matrix to satisfy Eq. (\ref{Eq:Constraint1}) and to be as close to the RCML solution as possible. Since Eq. (\ref{Eq:Constraint1}) has an infinite number of solutions, we can find the closest vector of the eigenvalues to $\bs\lambda_\text{RCML}$ by solving the following convex optimization problem.
\be
\label{Eq:OptimizationProblem1}
\begin{array}{cc}\ds\min_{\bs{\lambda}} & || \bs\lambda_{\text{RCML}} - \bs\lambda ||^2\\
\text{subject to :} & \mb\Psi \bs\lambda = \mb{0} \end{array}
\ee
The optimization problem (\ref{Eq:OptimizationProblem1}) is a well known quadratic programming (QP) optimization problem with an equality constraint and therefore the closed form solution is available using KKT condition \cite{Boyd04} and it is given by solving the following equation.
\be
\label{Eq:ClosedForm1}
\left[
  \begin{array}{cc}
    2\mb{I} & \mb\Psi^T \\
    \mb\Psi & \mb{0} \\
  \end{array}
\right] \left[ \begin{array}{c} \bs\lambda^\star \\ \bs\nu^\star
\end{array}\right] = \left[ \begin{array}{c} 2\bs\lambda_{\text{RCML}} \\ \mb{0}
\end{array}\right]
\ee
where $\bs\nu^\star$ is a vector of Lagrange multipliers.

However, the matrix on the left-hand side of Eq. (\ref{Eq:ClosedForm1}) is actually singular because $\mb\Psi$ has not full column rank. So we introduce a new matrix $\check{\mb\Psi}$ instead of $\mb\Psi$ to make the left matrix invertible when we solve it. That is,
\be
\label{Eq:ClosedForm2}
\left[
  \begin{array}{cc}
    2\mb{I} & \check{\mb\Psi}^T \\
    \check{\mb\Psi} & \mb{0} \\
  \end{array}
\right] \left[ \begin{array}{c} \bs\lambda^\star \\ \bs\nu^\star
\end{array}\right] = \left[ \begin{array}{c} 2\bs\lambda_{\text{RCML}} \\ \mb{0}
\end{array}\right]
\ee
where $\check{\mb\Psi}$ is a matrix consists of $\rank(\mb\Psi)$ linearly independent rows of $\mb\Psi$. Obviously, Eq. (\ref{Eq:ClosedForm1}) and Eq. (\ref{Eq:ClosedForm2}) have the same solution because linearly independent $\rank(\mb\Psi)$ rows of $\mb\Psi$ determine the set of solutions of the equation and removing redundant rows does not make any changes to the solution. It follows that the final closed form solution using blockwise inversion property is given by
\be
\bs\lambda^\star = \big(\mb I - \check{\mb\Psi}^T(\check{\mb\Psi}\check{\mb\Psi}^T)^{-1}\check{\mb\Psi}\big)\bs\lambda_{\text{RCML}}
\ee
and the final covariance matrix can be obtained by
\be
\mb R^\star = \sigma^2 \mb I + \mb\Phi diag(\bs \lambda^\star) \mb\Phi^H
\ee

\subsection{Toeplitz Approximation}
\label{Sec:ToeplitzApproximation}

In the case that $\mb\Psi$ has a full column rank, Eq. (\ref{Eq:Constraint1}) has the only one solution, $\bs\lambda = \mb 0$, which does not yield a meaningful covariance matrix. In this case, the optimization problem to enforce the Toeplitz structure must be modified. One possibility is to explicitly incorporate the eigenvector matrix into the optimization. This however, will lead to a computationally expensive problem because the optimization must constrain the eigenvector matrix to be unitary. Further, using an eigenvector matrix to agree with $\mb\Phi$, i.e.\ the one obtained from sample covariance has been known to be very successful in radar STAP \cite{Guerci03,Aubry12,Kang14}.

We therefore take the approach of building an {\em approximately} as opposed to exactly Toeplitz matrix.
This can be done by computing the closest rank deficient matrix $\tilde{\mb\Psi}$ to $\mb\Psi$. Consider the singular value decomposition of $\mb\Psi$,
\be
\mb\Psi = \mb U \mb\Sigma \mb V^H
\ee
The well-known theorem, Eckart-Young theorem \cite{Eckart36}, says that a matrix $\bar{\mb\Psi}$ with the column rank less than $r$ that minimizes $|| \mb\Psi - \tilde{\mb\Psi} ||_F$ is given by
\be
\tilde{\mb\Psi} = \mb U \tilde{\mb\Sigma} \mb {V}^H
\ee
where $\tilde{\mb\Sigma}$ is the diagonal matrix obtained from $\mb\Sigma$ by replacing the $r$-th diagonal element which is the smallest diagonal element by zero. By substituting $\mb\Psi$ with $\tilde{\mb\Psi}$ in Eq. (\ref{Eq:Constraint1}), we obtain the infinite number of solutions for $\bs\lambda$. Now, the optimization problem becomes
\be
\label{Eq:OptimizationProblem3}
\begin{array}{cc}\ds\min_{\bs{\lambda}} & || \bs\lambda_{\text{RCML}} - \bs\lambda ||^2\\
\text{subject to :} & \tilde{\mb\Psi} \bs\lambda = \mb{0} \end{array}
\ee
Finally, a Toeplitz matrix is obtained by solving the above optimization problem in the same way done in the case of exact Toeplitz solution, that is,
\be
\bs\lambda^\star = \big(\mb I - \breve{\mb\Psi}^T(\breve{\mb\Psi}\breve{\mb\Psi}^T)^{-1}\breve{\mb\Psi}\big)\bs\lambda_{\text{RCML}}
\ee
where $\breve{\mb\Psi}$ is a matrix consists of $r-1$ linearly independent rows of $\tilde{\mb\Psi}$.

\emph{Remark}: It should be noted that the actual rank of $\mb\Psi$ which is derived from $\mb\Phi$ depends on the training data. If the true covariance is indeed Toeplitz, we expect training samples to reflect that particularly in the regime of $ K >> N$ training samples (asymptotic regime), this is indeed what we observe in practice.

\section{Experimental Investigation}
\label{Sec:Experiment}

\subsection{Experimental Setup and Methods Compared}

In this section, we compare the performance of proposed estimator against state of the art Toeplitz STAP estimators. Two data sets are used: 1.) A radar covariance simulation model and 2.) the  well known KASSPER \cite{Bergin02} data set.

First, we model a radar system with an $N$-element uniform linear array. The overall disturbance is composed of jammer and white interference. Therefore, the external wideband noise environment via its input covariance matrix can been modeled by
\be
\mb{R}(n,m) = \sum_{i=1}^J \sigma_i^2 \sinc[0.5 \beta_i (n-m) \phi_i ] e^{j(n-m)\phi_i} + \sigma_a^2 \delta(n-m)
\ee
where $n,m \in \{1,\ldots,N\}$, $J$ is the number of jammers, $\sigma_i^2$ is the power associated with the $i$th jammer, $\phi_i$ is the jammer phase angle with respect to the antenna phase center, $\beta_i$ is the fractional bandwidth, $\sigma_a^2$ is the actual power level of the white disturbance term, and $\delta(n,m)$ has the value of 1 only when $n=m$ and 0 otherwise. This simulation model has in fact been widely and very successfully used in previous literature \cite{Steiner00,Aubry12,Pallotta12,DeMaio09} for performance analysis. It is easily seen that $\mb{R}$ is Hermitian and Toeplitz since $\mb{R}(n,m)$ depends on only $n-m$ and $\sinc$ function is an even function. In addition, $\mb{R}$ generally has a rank less than $N$. Therefore, this model can not only be used to simulate radar disturbance samples but also makes ground truth covariance available.

Data from the L-band data set of KASSPER program is the other data set used in our experiments. Note, the KASSPER dataset also makes the true ground truth covariance available and we picked range bins such that their covariance matrices were exactly or approximately Toeplitz. The L-band data set consists of a data cube of 1000 range bins corresponding to the returns from a single coherent processing interval from $11$ channels and $32$ pulses. Therefore, the dimension of observations (or the spatio temporal product) $N$ is $11 \times 32 = 352$. Other key parameters are detailed in Table \ref{Tb:parameters}.

We compare the following six different covariance estimation techniques: A host of competing techniques like FML, eigencanceller, and shrinkage estimators have been compared with the RCML method in Chapter \ref{Ch:RCML}. The results of Chapter \ref{Ch:RCML} demonstrate that RCML ouperforms these techniques under all conditions of training data support and hence they are not reproduced here.
\begin{itemize}
\item \textbf{Sample Covariance Matrix:} The sample covariance matrix is given by $\mb S = \dfrac{1}{K}\mb Z \mb Z^H$. It is well known that the sample covariance is the unconstrained maximum likelihood estimator under Gaussian disturbance statistics. We refer to the use of this technique as SMI.
\item \textbf{Iterated Toeplitz Approximation Method:} The iterated Toeplitz approximation method (ITAM) \cite{Wilkes88} alternatively estimates a rank deficient matrix using the eigenvalue decomposition and then makes the resulting matrix Toeplitz by substituting diagonal entries with the average value of themselves for each diagonal of the estimated matrix. After that, the same process is repeated until the estimated Toeplitz matrix has a desired rank. The estimated covariance satisfies both a desired rank and Toeplitz property and it is closer to the true covariance matrix in the sense of Frobenius norm than the sample covariance matrix.
\item \textbf{Asymptotic Maximum Likelihood:} The asymptotic maximum likelihood (AML) \cite{Li99} exploits Toeplitz property of the structured covariance matrix. The authors derived a closed-form formula for Toeplitz covariance matrix estimation and it facilitates computationally efficient implementation. However, they assumed a large number of training samples and their closed-form solution is asymptotically valid. That is, in the low/realistic training regime, estimation performance invariably suffers.

\item \textbf{Rank Constrained ML estimator:} The RCML estimator \cite{Kang14} proposed in Chapter \ref{Ch:RCML} exploits the clutter rank information of the structured covariance matrix but not Toeplitz property. It is also the first step of the closed form solution of our proposed method.

\item \textbf{Sequential Quadratic Programming:} Al-Homidan proposed a sequential quadratic programming (SQP) algorithm to find the nearest symmetric positive semi-definite Toeplitz matrix to given a matrix \cite{AlHomidan02}. There are many other Toeplitz approximation algorithms in applied mathematics \cite{Davis98,Shaw98,Suffridge93}. We choose the SQP algorithm largely because it guarantees a global minima in approximation error and the $l_1$ SQP method is considerably faster \cite{AlHomidan02} than alternatives. In practice, the estimator is developed by making a Toeplitz approximation to the RCML estimator. This makes the technique analogous to our proposal of decoupling the rank and Toeplitz constraints in Section \ref{Sec:EASTR}. However, using applied math approximations in a `black-box' manner has two major drawbacks: 1.) the approximation may not necessarily preserve rank and radar STAP specific structure (e.g. eigenvector matrix is perturbed as well), and 2.) the techniques are numerically involved particularly with an increase in data dimension.

\item \textbf{EASTR:} The proposed Efficient Approximation of Structured covariance under joint Toeplitz and Rank (EASTR) constraints. It incorporates Toeplitz structure, the rank of the clutter component as well as the STAP structural constraint.
\end{itemize}

In the results to follow, the ITAM, RCML, SQP and EASTR exploit rank information. The clutter rank for the simulation model covariance is of course known and for the KASSPER data set was inferred via the Brennan rule.

\begin{figure}[!t]
\begin{center}
\subfigure[simulation model]{\includegraphics[scale=0.5]{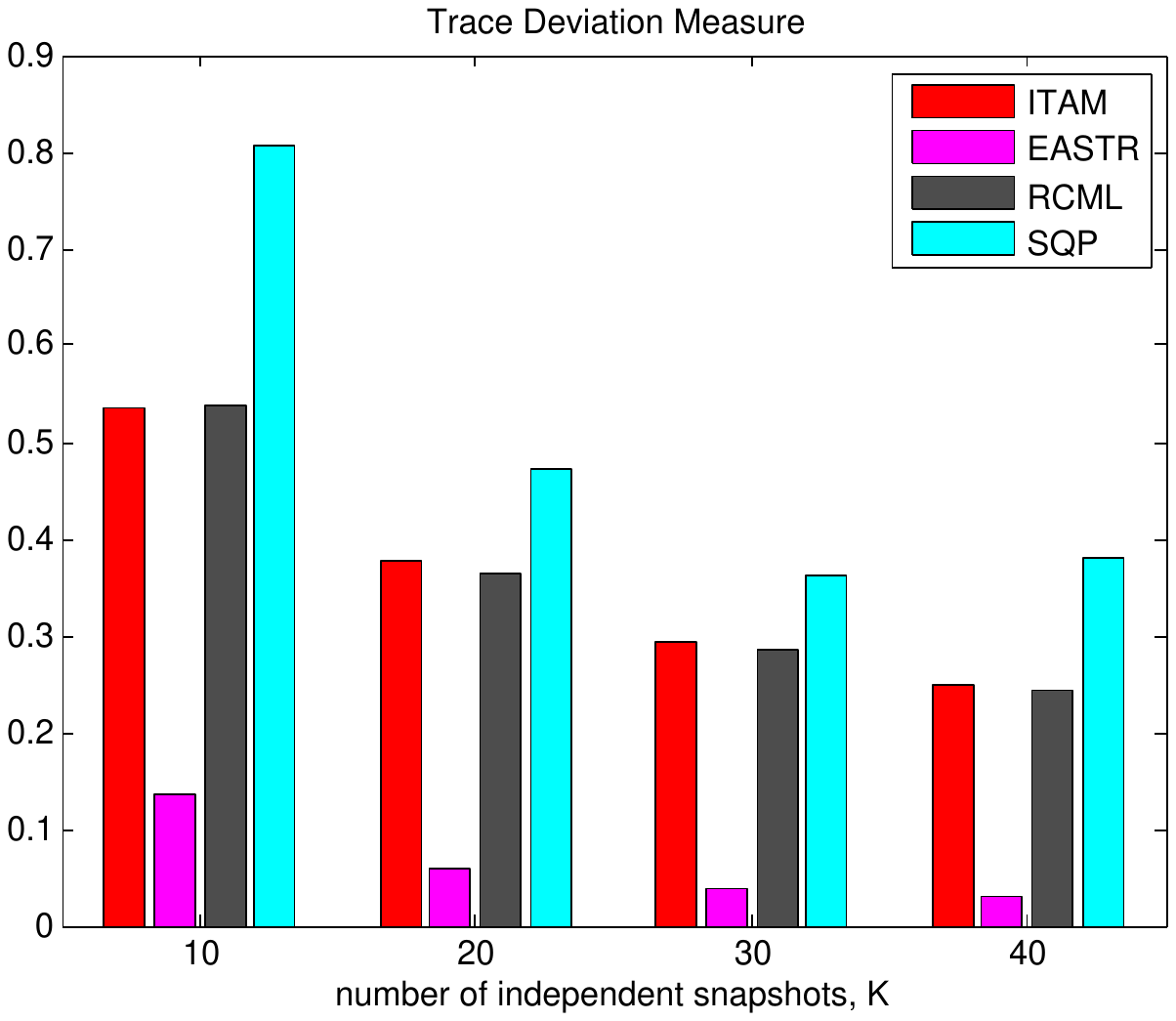}\label{Fig:TRD_model}}
\hfil
\subfigure[KASSPER data set]{\includegraphics[scale=0.5]{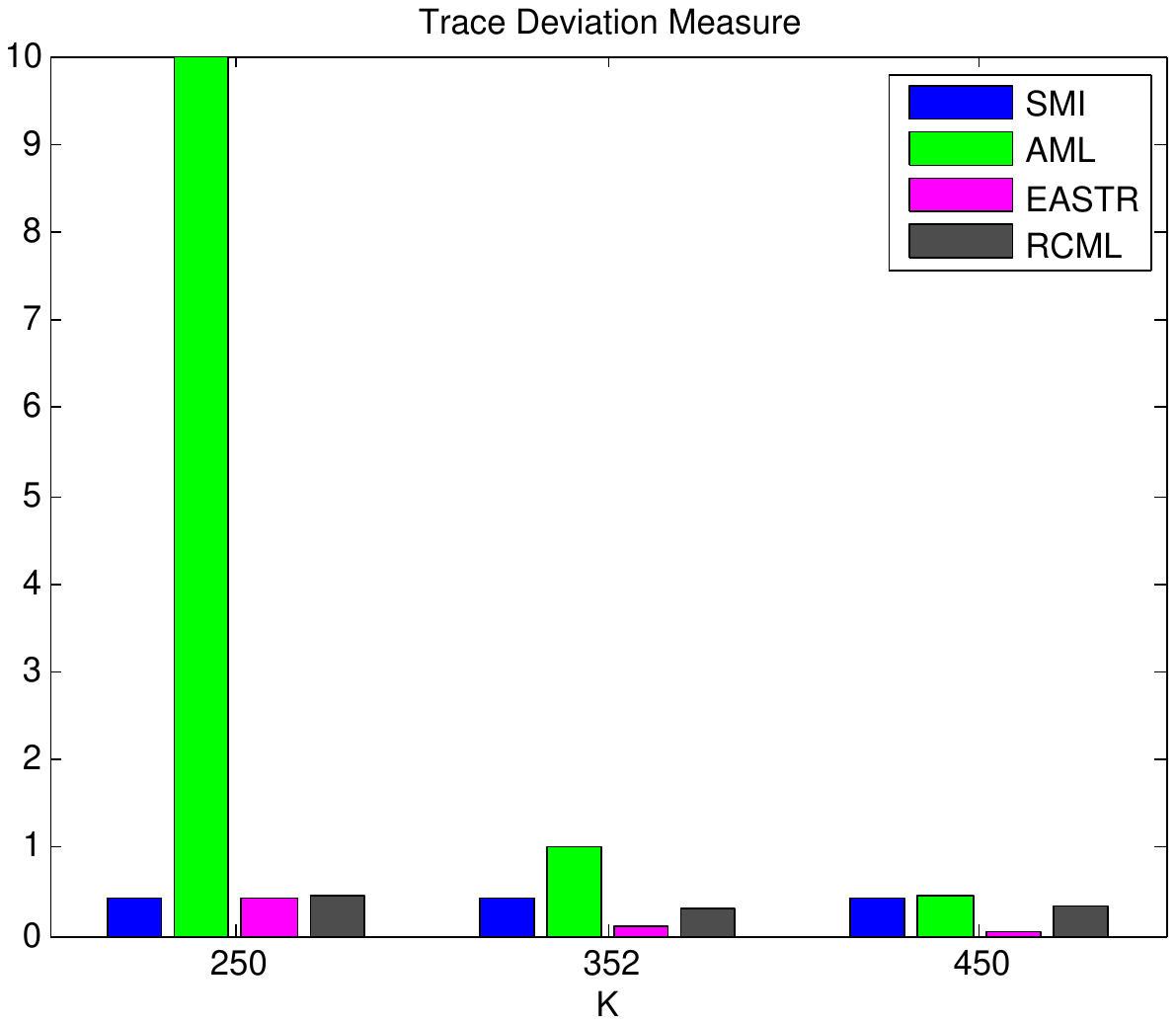}\label{Fig:TRD_KASSPER}}
\end{center}
\caption{Trace deviation measure vs. the number of training samples.}
\label{Fig:TRD}
\end{figure}

\subsection{Whiteness Test}

Before using popular radar STAP measures, we apply a `whiteness test'. The trace deviation measure \cite{Kang13} is one way of evaluating covariance matrix estimators since it captures the extent to which the estimated covariance matrix whitens the true covariance matrix. It is given by
\be
TRD(\hat{\mb R}) = | tr\{\mb R^{-1} \hat{\mb R}\}/N -1 |
\ee
Intuitively, we can see that its lower bound is zero when $\hat{\mb R} = \mb R$ and smaller value of TRD means better performance.

\begin{figure}[!t]
\centering
\includegraphics[scale=0.59]{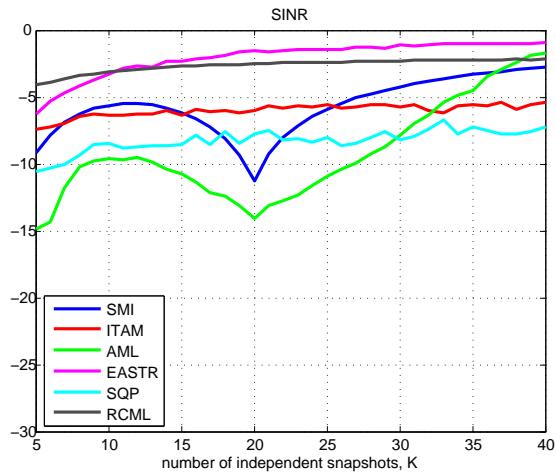}
\caption{Normalized SINR versus the number of training samples for the simulation model}
\label{Fig:SINR_model}
\end{figure}

Figure \ref{Fig:TRD} shows bargraphs of the performance of compared methods for simulation model and KASSPER data set respectively. Figure \ref{Fig:TRD_model} shows bargraphs of the performance in terms of TRD measure versus the number of training samples. Because the SMI and the AML show very high TRD values, we do not plot them in Figure \ref{Fig:TRD_model}.  Figure \ref{Fig:TRD_KASSPER} similarly shows the result of TRD measure across three training regimes for the KASSPER data set. The TRD measure results in Figures \ref{Fig:TRD_model} and \ref{Fig:TRD_KASSPER} reveal hence that EASTR is in fact ``structurally'' the closest to the true covariance matrix.

\begin{figure}[!t]
\begin{center}
\subfigure[$K(=250)<N$]{\includegraphics[scale=0.5]{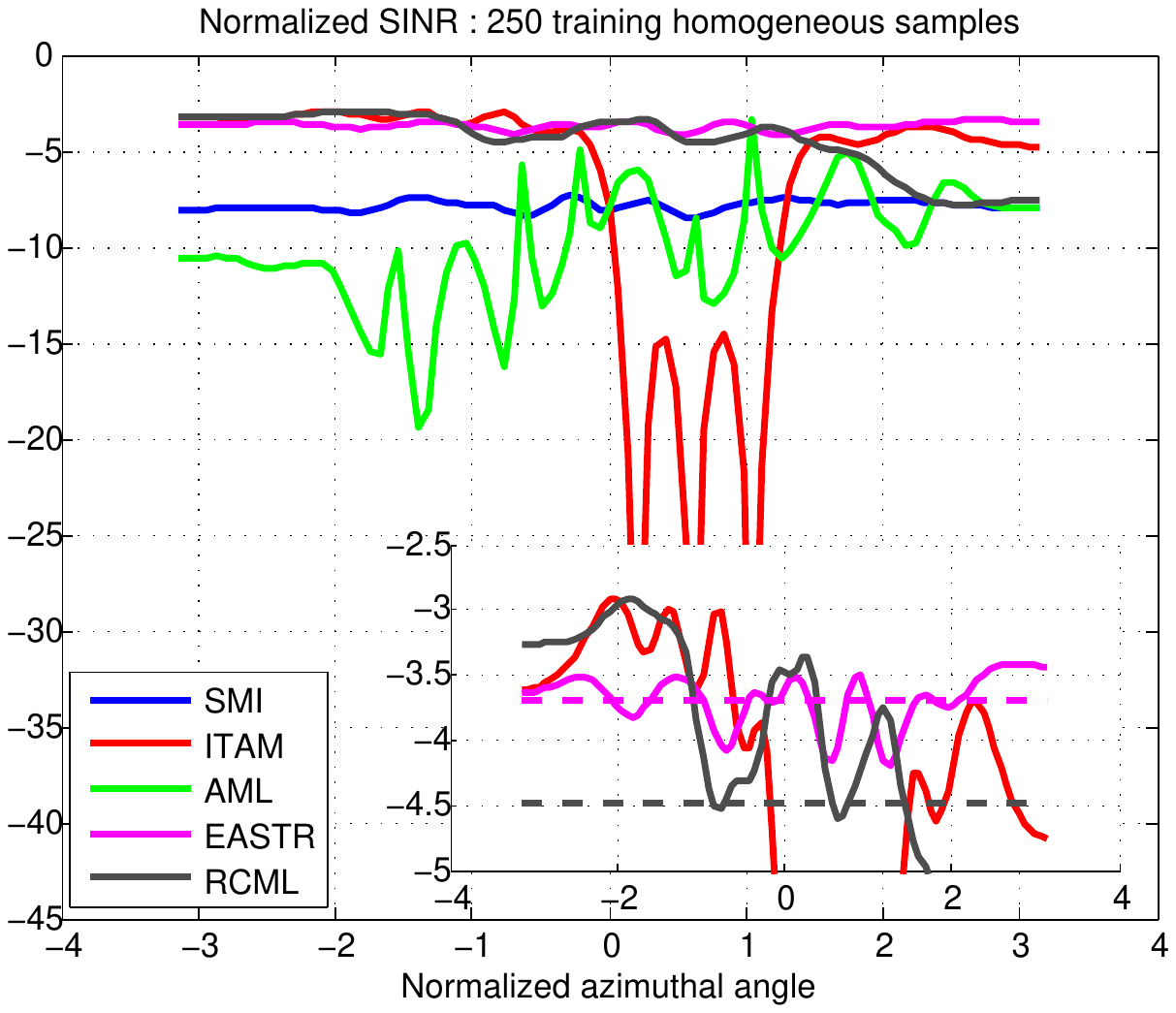}\label{Fig:SINR_KASSPER_angle_250}}
\hfil
\subfigure[$K(=250)<N$]{\includegraphics[scale=0.5]{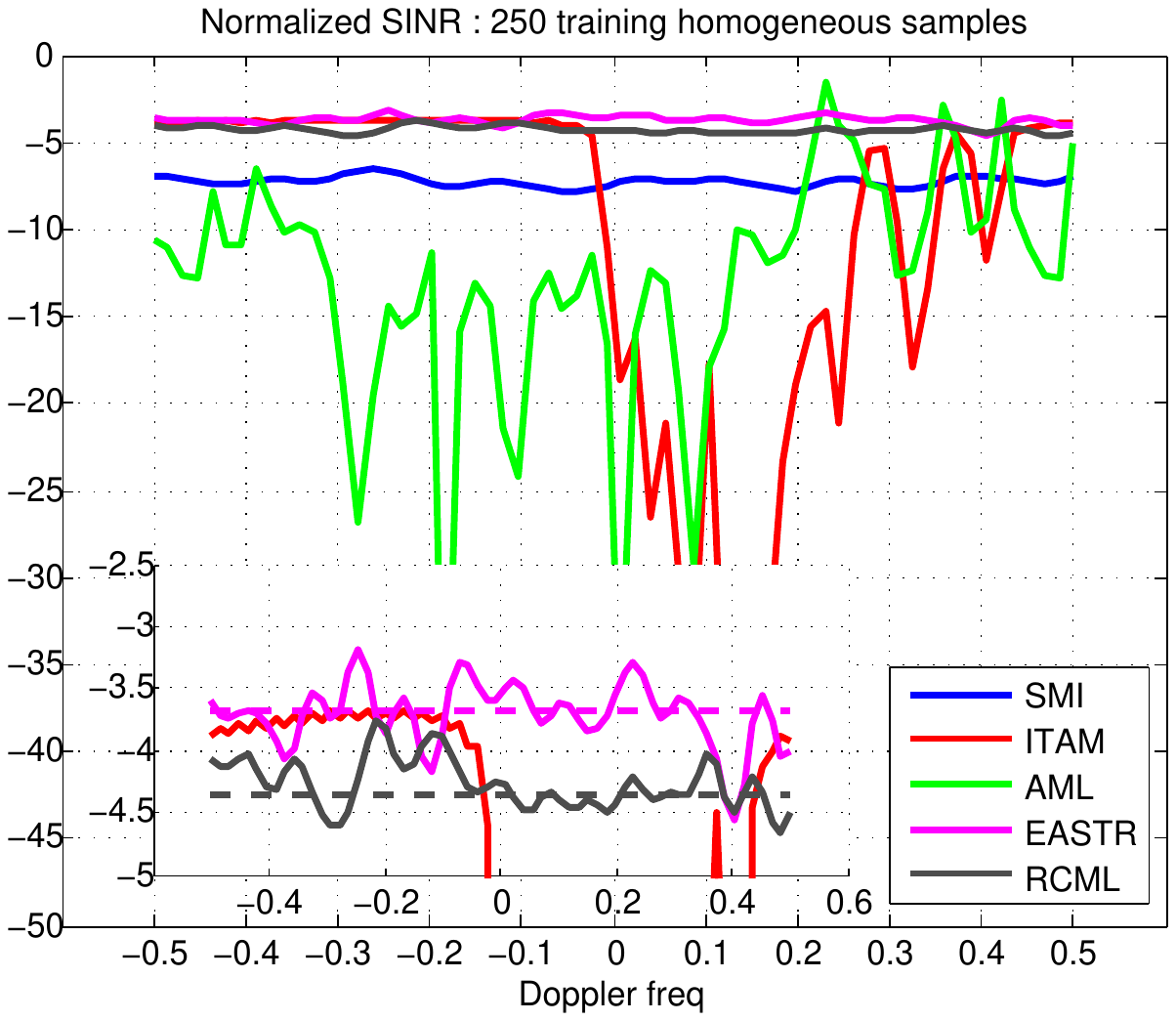}\label{Fig:SINR_KASSPER_dop_250}}\\
\subfigure[$K=N=352$]{\includegraphics[scale=0.5]{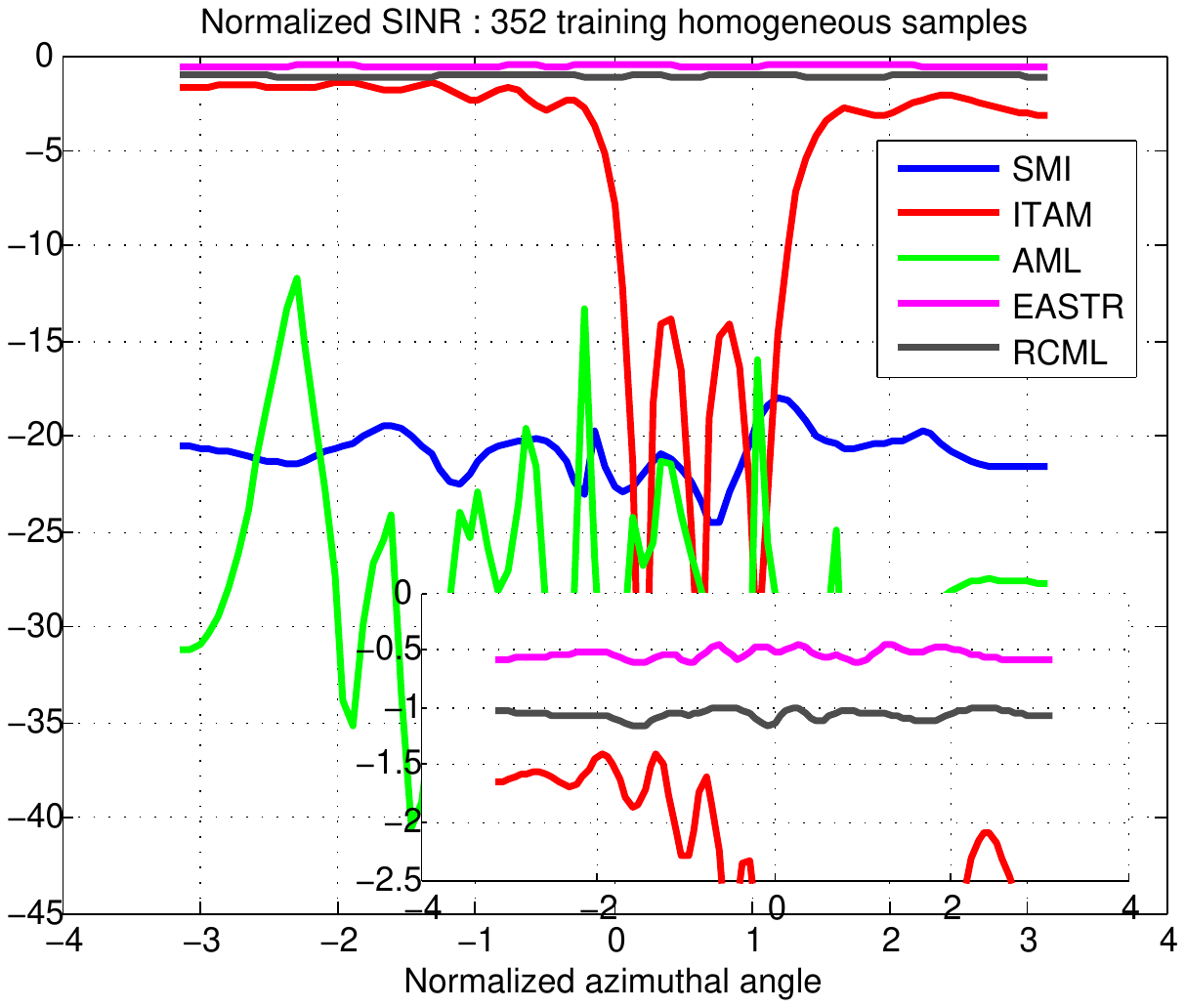}\label{Fig:SINR_KASSPER_angle_352}}
\hfil
\subfigure[$K=N=352$]{\includegraphics[scale=0.5]{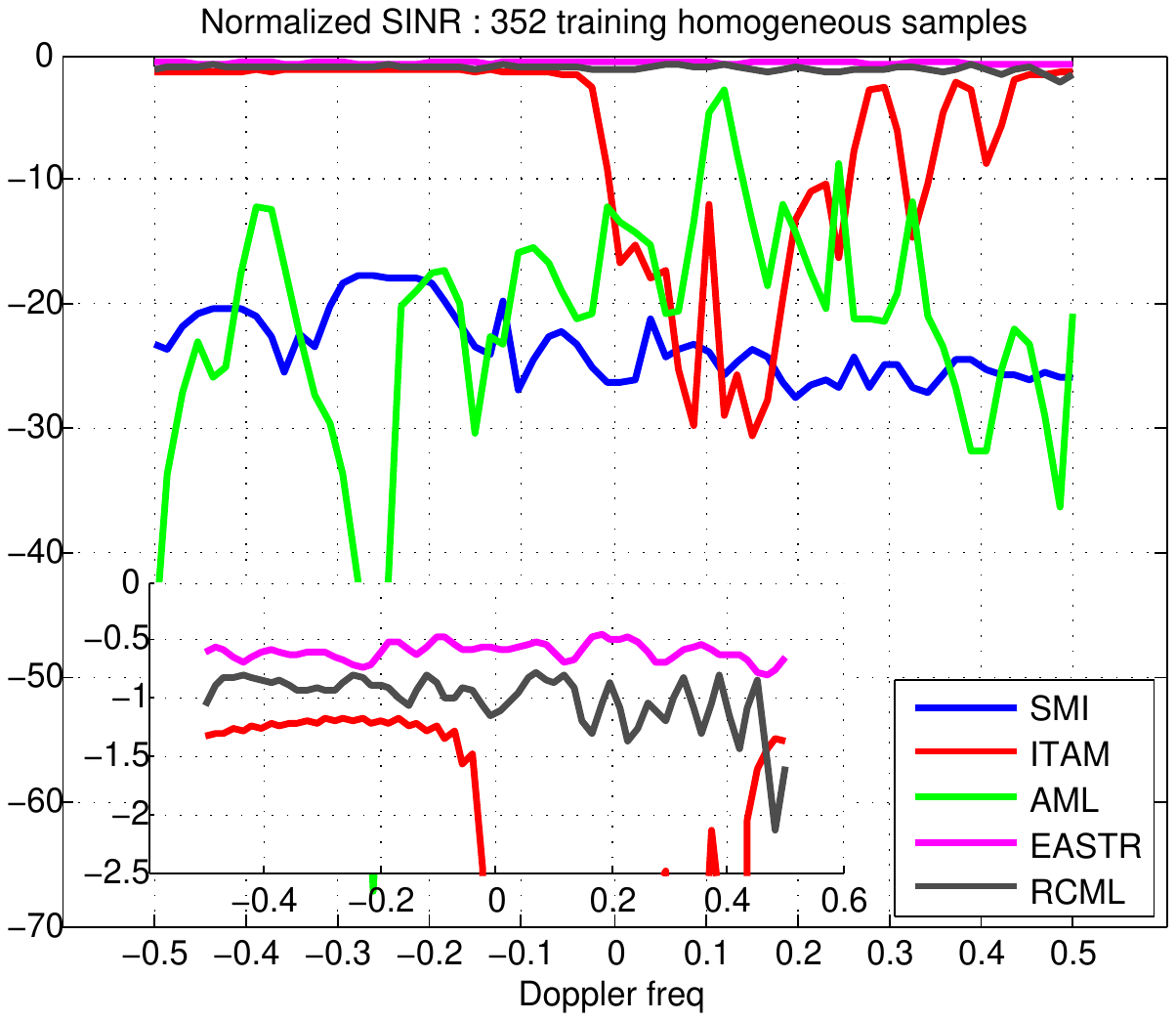}\label{Fig:SINR_KASSPER_dop_352}}\\
\subfigure[$K(=450)>N$]{\includegraphics[scale=0.5]{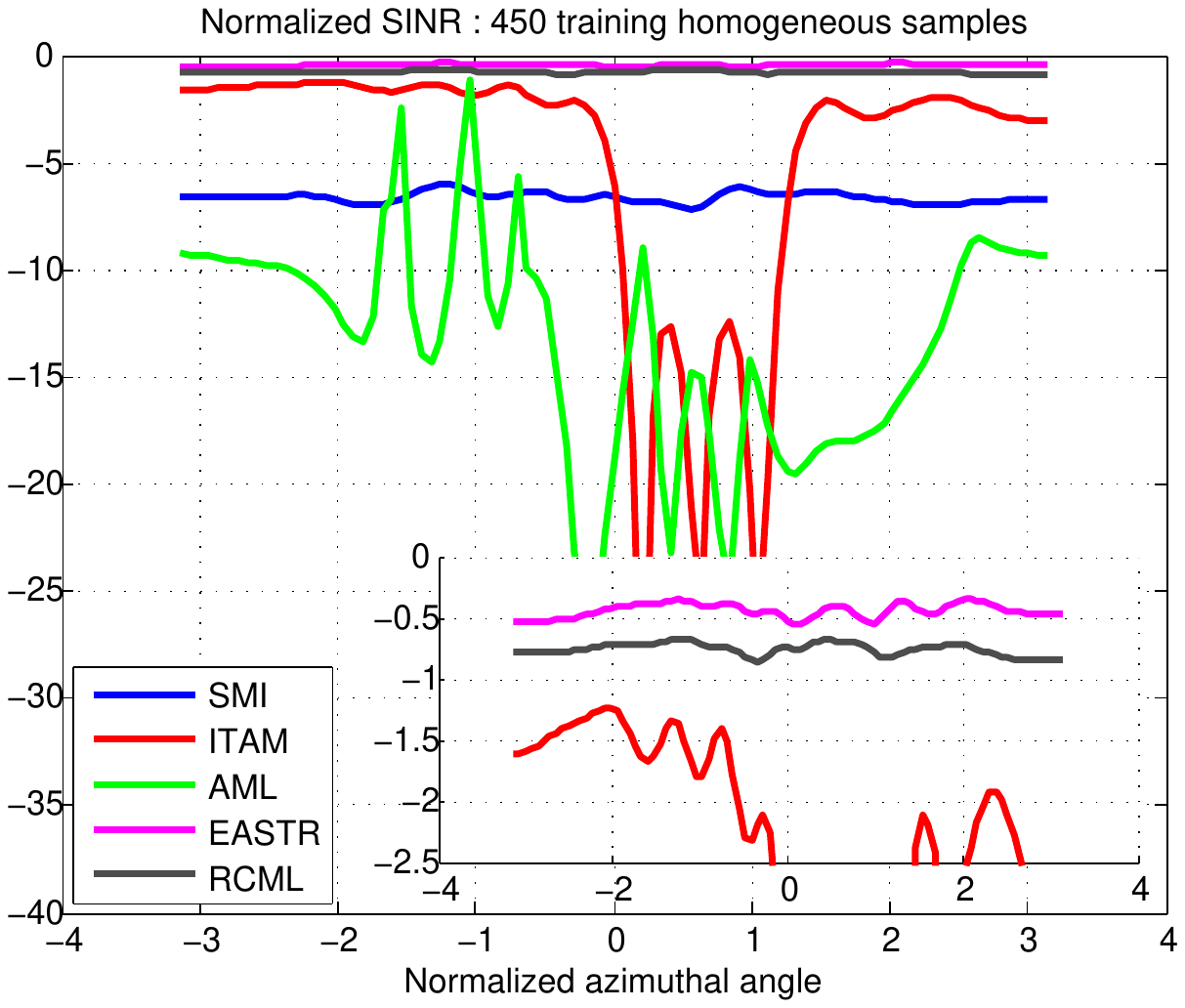}\label{Fig:SINR_KASSPER_angle_450}}
\hfil
\subfigure[$K(=450)>N$]{\includegraphics[scale=0.5]{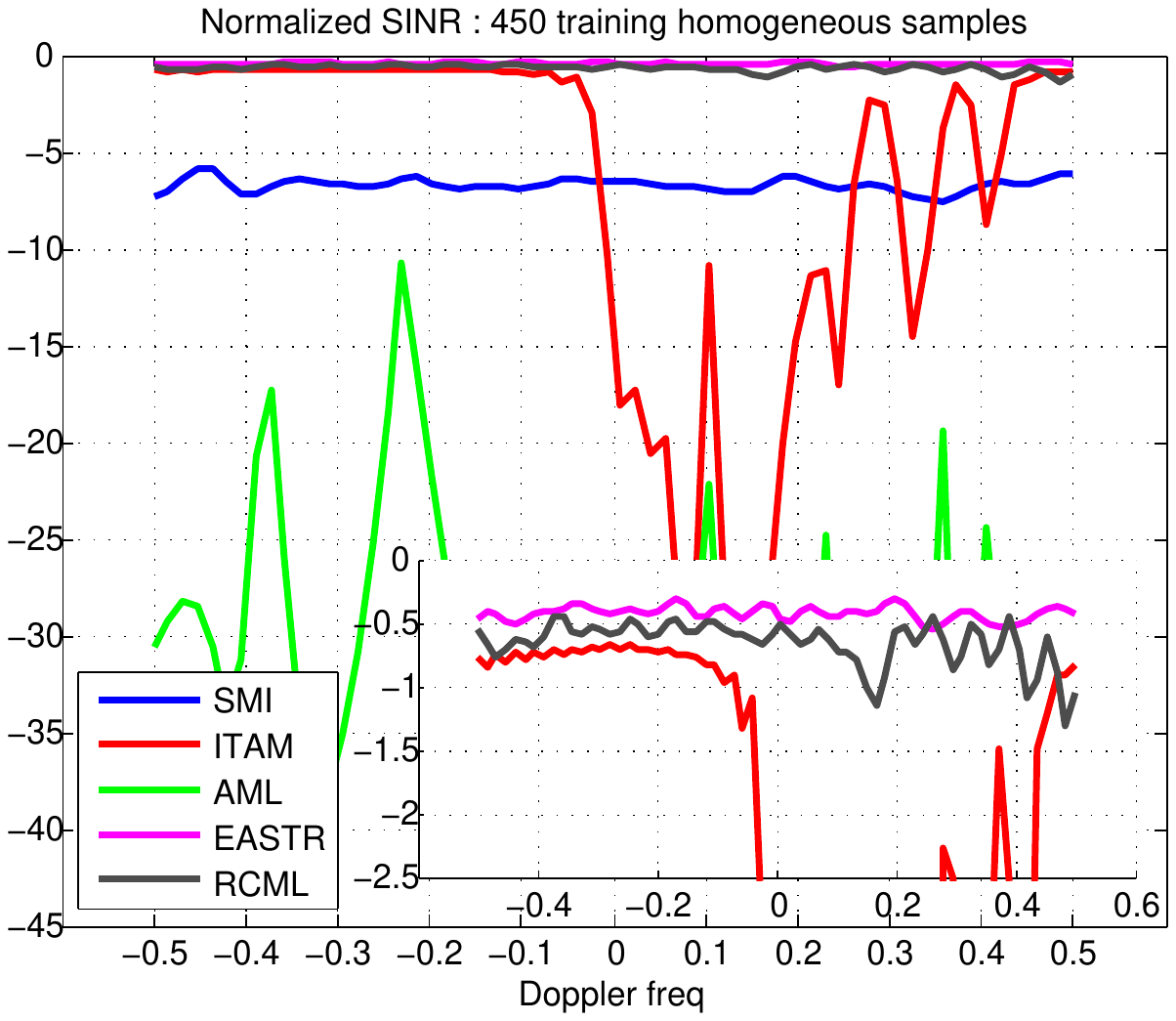}\label{Fig:SINR_KASSPER_dop_450}}\\
\end{center}
\caption{Normalized SINR versus azimuthal angle Doppler frequency for KASSPER data set.}
\label{Fig:SINR_KASSPER}
\end{figure}

\subsection{Normalized SINR}
\label{Sec:SINR}
The normalized SINR measure \cite{Monzingo04} is commonly used in the radar literature. We plot the normalized average SINR versus the number of training samples $K$ for the simulation model in Figure \ref{Fig:SINR_model}. In this case, we consider the presence of wideband jamming $J=3$. In particular, the fractional bandwidth $\beta_i = [0.2, 0, 0.3]$, the powers and phases of jammers are $10$ dB, $20$ dB, $30$ dB and $20$ deg, $40$ deg, and $60$ deg, respectively. $\sigma_a^2 =1$ and the rank of $r=7$ are used in the experiments. When $K<N$ the sample covariance is singular, therefore we used its pseudo-inverse instead of inverse itself. Note also that SMI and AML have a dip when $K=20$ due to numerical instabilities in the $K=N=20$ training regime. In contrast, ITAM, RCML, EASTR, and SQP guarantee nonsingularity in all training regimes. Interpreting the results in Figure \ref{Fig:SINR_model}, it is useful to start with AML which well performs when training is generous $K >> N$. However, because AML is asymptotically based - its performance is poor when $ K < N$ or $K \approx N$ due to numerical instabilities. Even, the SMI and SQP estimators outperform AML when training is low/realistic. ITAM is effective in very low training as expected because it exploits both rank and Toeplitz constraints (though in a largely heuristic way) - ITAM does not exhibit scalable improvements as training support is increased. However, EASTR performs the best overall, even better than RCML (which was recently demonstrated to be the most competitive radar STAP estimator \cite{Kang14}) by virtue of additionally capturing the Toeplitz structure on covariance.

Figure \ref{Fig:SINR_KASSPER} plots the normalized SINR results for KASSPER data set as a function of the azimuthal angle and Doppler frequency (averaged over all ranges of Doppler frequency and the azimuthal angle, respectively) for three different training regimes. Specifically, each row of Figure \ref{Fig:SINR_KASSPER} is corresponding to $K=250(<N)$, $K=N=352$, and $K > N = 450$ training samples, respectively. Zoomed in versions of ITAM, EASTR, and RCML are shown in each plot to make difference clearly seen. The sample covariance technique and the AML suffer tremendously when $K \leq N$. For low training, ITAM shows comparable performance to the EASTR and the RCML estimators in some ranges of the azimuthal angle but is worse in some other ranges. On an average (over azimuthal angle and Doppler frequency), EASTR is easily the best in Figure \ref{Fig:SINR_KASSPER}, even providing appreciably gains over the second best RCML estimator. Further, EASTR is stable and effective across all training regimes $K < N, K \approx N$ and $K > N$.

\subsection{Probability of Detection vs. SNR}

In order to compute probability of detection, $P_d$, we apply the normalized matched filter (NMF) \cite{Conte95} as the test statistic given by \eqref{Eq:NMF}. The detection probability $P_d$ is defined as the probability that the value of test statistic is grater than a threshold conditioned on the hypothesis that the received data includes target information. Therefore, it depends on signal to noise ratio (SNR, by virtue of $\mb s$,) and the estimated covariance matrix. Since $P_d$ does not typically admit a closed form, we first generate a number of samples from the true covariance to determine $\lambda$ corresponding to the fixed false alarm rate and then employ Monte Carlo simulations to evaluate $P_d$ corresponding to each estimator. We set a constant false alarm rate to $10^{-4}$.

Figure \ref{Fig:PD_model} shows the detection probability $P_d$ for simulation model plotted as a function of SNR for different estimators. We use $K=N=20$ and $K=2N=40$ training samples to estimate the covariance matrix in Figure \ref{Fig:PD_N_model} and Figure \ref{Fig:PD_2N_model}, respectively. It is well-known that $K=2N$ training samples are needed to keep the performance within 3 dB. Indeed, we see that the sample covariance matrix has about 3 dB loss vs. the true covariance matrix in Figure \ref{Fig:PD_2N_model}. The proposed EASTR is the closest to the $P_d$ achieved by using the true covariance matrix (upper bound) for both cases. In Figure \ref{Fig:PD_N_model}, we do not plot for ITAM and SQP because they do not guarantee positive semi-definiteness of final estimate in the case of $K=N=20$, so we cannot calculate the detection probabilities for them.

Figure \ref{Fig:PD_KASSPER} also shows the probability of detection versus SNR plots. We use the same training regimes as used in Section \ref{Sec:SINR}. Figure \ref{Fig:PD_KASSPER_352} and Figure \ref{Fig:PD_KASSPER_704} plot results for $K=352$ and $K=2N=704$, respectively. We can see similar trends in Figure \ref{Fig:PD_KASSPER} hence for KASSPER data to the ones for the simulation model in Figure \ref{Fig:PD_model}. EASTR exhibits the best performance in both plots.

\begin{figure}[!t]
\begin{center}
\subfigure[$K=N=20$]{\includegraphics[scale=0.6]{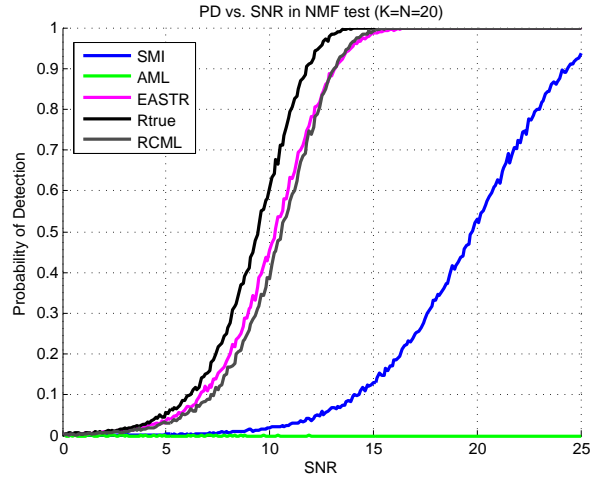}\label{Fig:PD_N_model}}
\hfil
\subfigure[$K=2N=40$]{\includegraphics[scale=0.6]{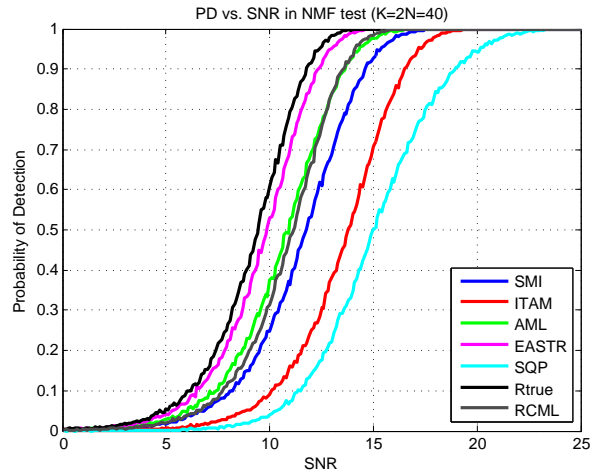}\label{Fig:PD_2N_model}}
\end{center}
\caption{Probability of detection vs. SNR for simulation model via normalized matched filter (NMF) test.}
\label{Fig:PD_model}
\end{figure}



\begin{figure}[!t]
\begin{center}
\subfigure[$K=N=352$]{\includegraphics[scale=0.6]{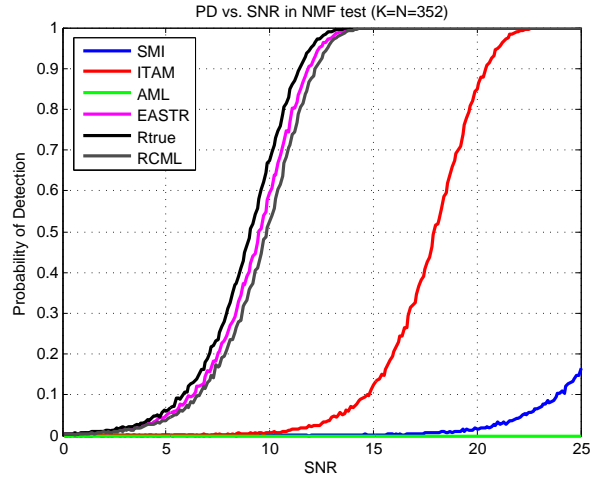}\label{Fig:PD_KASSPER_352}}
\hfil
\subfigure[$K=2N=704$]{\includegraphics[scale=0.6]{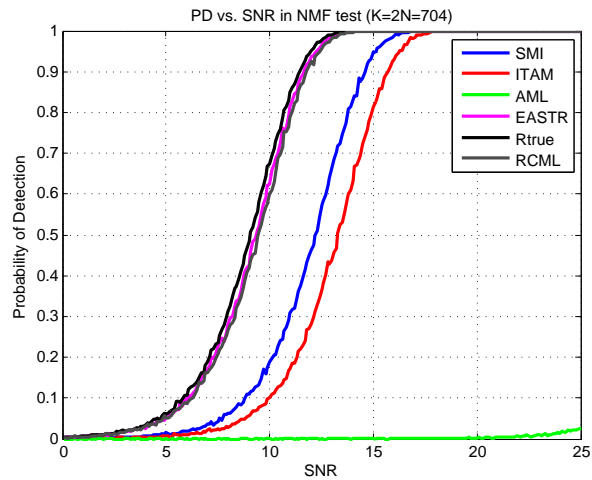}\label{Fig:PD_KASSPER_704}}
\end{center}
\caption{Probability of detection vs. SNR for KASSPER data set via normalized matched filter (NMF) test.}
\label{Fig:PD_KASSPER}
\end{figure}



\subsection{Complexity Comparison}
\begin{table}[!t]
\begin{center}
\caption{Running time (sec) for various estimators for simulation model}
\label{Tb:ComplexityEASTR}
\subfigure[$N=20$. Eigenvalue decomposition: $3.4546 \times 10^{-4}$]{ \begin{tabular}{|c|c|c|c|}
  \hline
  $K$ & 20 & 30 & 40\\
  \hline
  SMI & 7.8362 $\times$ $10^{-5}$ & 1.0345 $\times$ $10^{-4}$ & 1.3710 $\times$ $10^{-4}$\\
  \hline
  ITAM & 0.0939 & 0.1024 & 0.1157\\
  \hline
  AML & 0.0615 & 0.0585 & 0.0514\\
  \hline
  SQP & 0.2265 & 0.2128 & 0.2006\\
  \hline
  EASTR & 0.0212 & 0.0210 & 0.0234\\
  \hline
\end{tabular}\label{Tb:ComplexityEASTR20}}\\

\subfigure[$N=40$. Eigenvalue decomposition: 0.0017]{ \begin{tabular}{|c|c|c|c|}
  \hline
  $K$ & 40 & 60 & 80\\
  \hline
  SMI & 1.4791 $\times$ $10^{-4}$ & 5.5810 $\times$ $10^{-4}$ & 3.7152 $\times$ $10^{-4}$\\
  \hline
  ITAM & 0.3915 & 0.4541 & 0.4749\\
  \hline
  AML & 0.6593 & 0.7182 & 0.7012\\
  \hline
  SQP & 2.7752 & 2.8421 & 2.8463\\
  \hline
  EASTR & 0.1115 & 0.1167 & 0.1119\\
  \hline
\end{tabular}\label{Tb:ComplexityEASTR40}}\\

\subfigure[$N=80$. Eigenvalue decompostion: 0.0071]{ \begin{tabular}{|c|c|c|c|}
  \hline
 $K$ & 80 & 120 & 160\\
  \hline
  SMI & 4.7273 $\times$ $10^{-4}$ & 5.1667 $\times$ $10^{-4}$ & 6.7018 $\times$ $10^{-4}$\\
  \hline
  ITAM & 2.6020 & 2.8289 & 2.5226\\
  \hline
  AML & 15.8855 & 16.2998 & 15.9827\\
  \hline
  SQP & 110.3448 & 109.8137 & 114.0377\\
  \hline
  EASTR & 0.4939 & 0.5162 & 0.5447\\
  \hline
\end{tabular}\label{Tb:ComplexityEASTR80}}\\

\end{center}
\end{table}

We compare computational complexity of the compared methods for the simulation model. We take average values of results of 100 trials for each estimator and the experiments are performed on the desktop with Intel Core i7-2600 CPU 3.40 GHz and 8.00 GB RAM. Table \ref{Tb:ComplexityEASTR} shows running times in second for SMI, ITAM, AML, SQP, and EASTR. Tables \ref{Tb:ComplexityEASTR20}, \ref{Tb:ComplexityEASTR40}, and \ref{Tb:ComplexityEASTR80} show the results for various data dimensions $N= 20, 40, 80$, respectively. ITAM and EASTR involve eigenvalue decomposition and running times of eigenvalue decomposition for these experiments are shown in caption lines. EASTR is the cheapest method except the sample covariance matrix even though it shows the best performance in the sense of the normalized SINR and the probability of detection. Though a closed form solution is available for AML, the running time of AML increases exponentially as the data dimension increases since AML involves calculations of matrices of which size is $N^2$. This results shows that EASTR is the best estimator for the performance as well as for computational complexity.

\section{Conclusion}
\label{Sec:Conclusion}

Our work focuses on jointly exploiting a Toeplitz structure as well as a rank constraint on the clutter covariance for radar STAP. The problem is inherently hard because it is well known that there is no closed form solution for ML estimation under Toeplitz constraint for all training regimes. While past work has provided iterative often expensive solutions, we develop a new estimator that is based on a cascade of two closed forms. The first closed form is the recently proposed RCML estimator. Our core contribution, the second step of Toeplitz approximation performs constrained optimization of eigenvalues to either exactly or approximately satisfy the Toeplitz constraint without compromising the rank. Crucially, this optimization also has a closed form making the overall estimator very friendly from a computational standpoint. Via performance analysis evaluating probability of detection, normalized SINR, and trace deviation measure, our estimator is shown to outperform traditional efforts in Toeplitz and low rank covariance estimation including those based on expensive numerical solutions. Recently, the optimality of the fast maximum likelihood \cite{Steiner00} covariance estimator has been proven with respect to cost functions involving the Frobenius or the spectral norm \cite{Aubry13}. EASTR can also be investigated for similar notions of optimality. In addition, more analysis of our estimator such as asymptotic convergence can be performed. Finally, practical evaluation may be performed on other radar data sets involving departures from idealized scenarios. 
\chapter{Robust Covariance Estimation under Imperfect Constraints using Expected Likelihood Approach}
\label{Ch:EL}

\section{Introduction}
\label{Sec:Introduction}
Radar systems using multiple antenna elements and processing multiple pulses are widely used in modern radar signal processing since it helps overcome the directivity and resolution limits of a single sensor. Joint adaptive processing in the spatial and temporal domains for the radar systems, called space time adaptive processing (STAP) \cite{Guerci03,Klemm02,Monzingo04}, enables to suppress interfering signals as well as to preserve gain on the desired signal. Interference statistics, in particular the covariance matrix of the disturbance, which must be estimated from secondary training samples in practice plays a critical role on success of STAP. To obtain a reliable estimate of the disturbance covariance matrix, a large number of homogeneous training samples are necessary. This gives rise to a compelling challenge for radar STAP because such generous homogeneous (target free) training is generally not available in practice \cite{Himed97}.

Much recent research for radar STAP has been developed to overcome this practical limitation of generous homogeneous training. Specifically, the knowledge-based processing which uses \emph{a priori} information about the interference environment is widely referred in the literature \cite{Guerci06,Wicks06} and has merit in the regime of limited training data. These techniques include intelligent training selection \cite{Guerci06} and the spatio-temporal degrees of freedom reduction \cite{Wicks06,Wang91,Gini08}. In addition, covariance matrix estimation techniques the enforce and exploit a particular structure have been pursed as one approach of these techniques. Examples of structure include persymmetry \cite{Nitzberg80}, Toeplitz structure \cite{Li99,Fuhrmann91,Abramovich98}, circulant structure \cite{Conte98}, and eigenstructure \cite{Steiner00,Kang14,Aubry12}. In particular, the fast maximum likelihood (FML) method \cite{Steiner00} which enforces a special eigenstructure that the disturbance covariance matrix represents a scaled identity matrix plus a rank deficient and positive semidefinite clutter component also falls in this category and is shown to be the most competitive technique experimentally.

Recently, the works by Kang \emph{et al.} \cite{Kang14} and Aubry \emph{et al.} \cite{Aubry12} have also improved upon the FML by exploiting practical constraints inspired by physical radar environment, specifically the eigenstructure of the disturbance covariance matrix for radar STAP. They employed a rank of the clutter subspace and a condition number of the interference covariance matrix respectively as a constraint as well as the structural constraint used in the FML into the optimization problem. For both methods, though the initial optimization problems are non-convex, the estimation problems are reduced to a convex optimization problems and admit closed-form solutions. Their methods have also been shown to enable higher normalized SINR over the state-of-the art alternatives for the simulation model and the knowledge-aided sensor signal processing and expert reasoning (KASSPER) data set.

In \cite{Kang14}, the authors assume the rank of the clutter is given by Brennan rule \cite{Ward94} under ideal conditions of no coupling. However, in practice (under non-ideal conditions) the clutter rank departs from the Brennan rule prediction due to antenna errors and internal clutter motion. In this case, the rank is not known precisely and needs to be determined before using with the RCML estimator. Determination of the number of signals in a measurement record is a classical eigenvalue problem, which has received considerable attention in the past 60 years. It is important to note that the problem does not have a simple and unique solution. Consequently, a number of techniques have been developed to address this problem \cite{Akaike74,Rissanen78,Wax85,Yin87,Tufts94}. In addition, the noise level and the condition number should be estimated as well if they are unknown or non precisely known in practice.

Expected likelihood (EL) approach \cite{Abramovich07} has been proposed to determine a regularization parameter based on the statistical invariance property of the likelihood ratio (LR) values. More specifically, the probability distribution function (pdf) of LR values for the true covariance matrix depends on only the number of training samples ($K$) and the dimension of the true covariance matrix ($N$), not the true covariance itself under a Gaussian assumption on the observations. This statistical independence of LR values on the true covariance itself enables pre-calculation of LR values even though the true covariance is unknown. Finally, the regularization parameters are selected so that the LR value of the estimate agrees as closely as possible with the {\em median} LR value determined via its pre-characterized pdf.

\textbf{Contributions:} In view of the aforementioned observations, we develop covariance estimation methods which automatically and adaptively determines the values of practical constraints via an expected likelihood approach for practical radar STAP. Our main contributions are:
\begin{itemize}
    \item \textbf{A method of choice of constraints using the EL approach:} We propose a method of a choice of practical constraints employed in the optimization problems for covariance estimation in radar STAP using the expected likelihood approach. The proposed method guides the selection of the constraints via the expected likelihood criteria in the case that the knowledge of the constraints is imperfectly known in practice. We consider three different cases of the constraints in this chapter: 1) only the clutter rank constraint, 2) both the clutter rank and the noise power constraints, and 3) the condition number constraint.
    \item \textbf{Analytical results with formal proofs for three different cases of imperfect constraints:} For each case mentioned above, we develop significant analytical results. We first formally prove that the rank selection problem based on the expected likelihood approach has a unique solution. This guarantees there is only one rank which is the best (global optimal) rank in the sense of the EL approach. Second, we derive a closed form solution of the optimal noise power in the sense of the EL approach for a given rank. This means we do not need iterative or numerical method to find the optimal noise power and enables fast implementation. Finally, we also prove there exists the unique condition number for the condition number selection method via the EL approach.
    \item \textbf{Experimental Results through simulated model and the KASSPER data set:} Experimental investigation on a simulation model and on the KASSPER data set shows that the proposed methods for three different cases outperform alternatives such as the FML, leading rank selection methods in radar literature and statistics, and the ML estimation of the condition number constraint in the sense of normalized SINR.
\end{itemize}

The rest of the chapter is organized as follows. We provide the proposed methods of the constraint selection problems via the EL approach in Section \ref{Sec:Proposed}. Experimental validation of our method is provided in Section \ref{Sec:Experiments} wherein we report the performance of the proposed method and compare it against existing methods in terms of normalized SINR on both the simulation model and the KASSPER data set.

\section{Constraints selection method via Expected Likelihood Approach}
\label{Sec:Proposed}

\subsection{Imperfect rank constraint}
\label{Sec:Rankonly}

\begin{figure}
\centering
\includegraphics[scale=0.43]{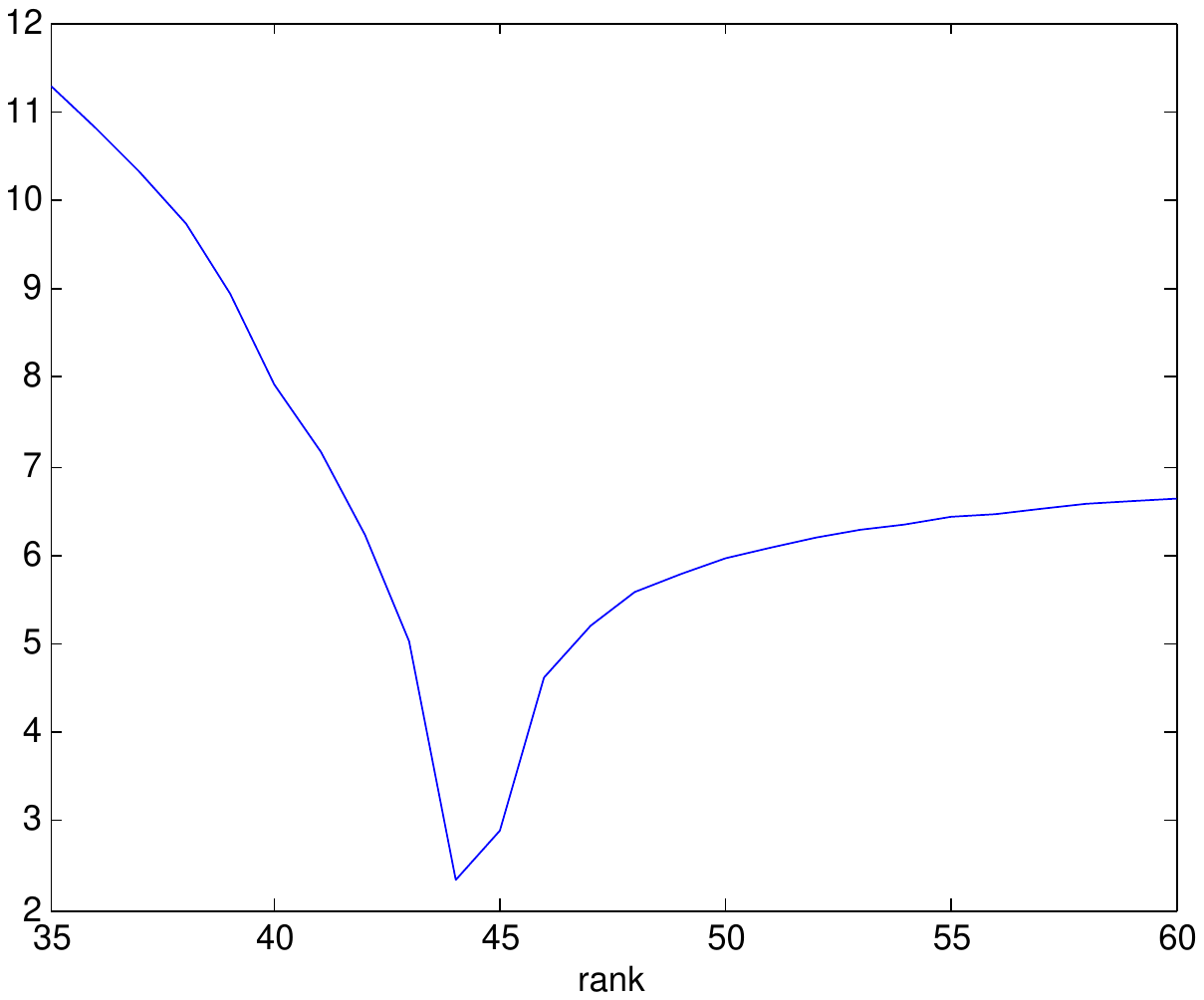}
\caption{$\bigg( \log\Big( \lr\big(\mb R_\text{RCML}(r), \mb Z\big)/\lr_0\Big)\bigg)^2$ versus $r$ for KASSPER dataset ($K=2N=704$)}
\label{Fig:LRdifference}
\end{figure}

In Chapter \ref{Ch:RCML}, we discuss that the RCML estimator is not only powerful in practice but also computationally cheap and the EL approach is shown to be useful to select parameters so that the estimate is consistent with the true covariance matrix in the sense of the LR value in Section \ref{Sec:EL_Background}. From Eq. \eqref{Eq:RCMLsolution}, we see the RCML solution is a function of the rank $r$ and $d_i$'s which are given in the problem. We propose to use the EL approach to refine and find the {\em optimal} rank when the rank determined by underlying physics is not necessarily accurate.

Now we set up the optimization criterion to find the rank via the EL approach. Since the rank is an integer, there may not exist the rank which exactly satisfies Eq. \eqref{Eq:OptimalBeta}. Therefore, we instead find a rank which makes its corresponding LR value closer to $\lr_0$ than any other ranks. That is,
\be
\hat{\mb R}_{\text{RCML}_\text{EL}} = \sigma^2 \mb{V} {\mb\Lambda^\star}^{-1}(\hat r) \mb{V}^H
\ee
where
\be
\label{Eq:OptimalRank}
\hat r \equiv \arg\min_{r \in \mathds{Z}} \Big| \lr\big(\mb R_\text{RCML}(r), \mb Z\big)  - \lr_0 \Big|^2
\ee
and $\lr\big(\mb R_\text{RCML}(r), \mb Z\big)$ is given by Eq. \eqref{Eq:LR_RCML1}.

Now we investigate the optimization problem \eqref{Eq:OptimalRank} for the rank selection. Since the eigenvectors of $\mb R_\text{RCML}$ are identical to those of the sample covariance matrix $\mb S$ as shown in Eq. \eqref{Eq:RCMLsolution2}, the LR value of $\mb R_\text{RCML}$ in Eq. \eqref{Eq:OptimalRank} can be reduced to the function of the eigenvalues of $\mb R_\text{RCML}$ and $\mb S$. Let the eigenvalues of $\mb R_\text{RCML}$ and $\mb S$ be $\lambda_i$ and $d_i$ (arranged in descending order). Then the LR value of $\mb R_\text{RCML}$ can be simplified to a function of ratio of $d_i$ to $\lambda_i$, $\dfrac{d_i}{\lambda_i}$. That is,
\bea
\lr\big(\mb R_\text{RCML}(r), \mb Z\big) & = & \dfrac{|\hat{\mb R}_{\text{RCML}}^{-1}(r) \mb S| \exp N}{\exp \Big( \tr\big[\hat{\mb R}_{\text{RCML}}^{-1}( r) \mb S\big] \Big)}\label{Eq:LR_RCML1}\\
& = & \frac{\ds\prod_{i=1}^N \dfrac{d_i}{\lambda_i} \cdot \exp N}{\exp\Big[\ds\sum_{i=1}^N \dfrac{d_i}{\lambda_i}\Big]}\label{Eq:SimplifiedLR}
\eea
\begin{lem}
\label{Lemma1}
The LR value of the RCML estimator, $\lr\big(\mb R_\text{RCML}(r), \mb Z\big)$, is a monotonically increasing function with respect to the rank $r$ and there is only one unique $\hat r$ in the optimization problem \eqref{Eq:OptimalRank}.
\end{lem}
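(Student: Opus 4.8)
The plan is to work directly with the simplified expression for the likelihood‑ratio value in Eq.~\eqref{Eq:SimplifiedLR} and track how it changes as the rank $r$ is incremented by one. Recall that for the RCML solution the optimal eigenvalues $\lambda_i^\star(r)$ (of $\mb R_\text{RCML}$) are $\min(d_i,\sigma^2)$ for $i\le r$ and $\sigma^2$ for $i>r$ (up to the normalization by $\sigma^2$ used in Lemma~\ref{Lemma3.1}), so that the ratios $d_i/\lambda_i^\star$ that appear in \eqref{Eq:SimplifiedLR} equal $\max(d_i/\sigma^2,1)$ for $i\le r$ and $d_i/\sigma^2$ for $i>r$. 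The key observation is that the only index whose contribution changes when we pass from rank $r$ to rank $r+1$ is $i=r+1$: for that index the factor $d_{r+1}/\lambda_{r+1}$ switches from $d_{r+1}/\sigma^2$ (the ``noise'' regime) to $\max(d_{r+1}/\sigma^2,1)$ (the ``signal'' regime). First I would isolate this single‑index change and write $\lr\big(\mb R_\text{RCML}(r+1),\mb Z\big)\big/\lr\big(\mb R_\text{RCML}(r),\mb Z\big)$ as a ratio depending only on $t \equiv d_{r+1}/\sigma^2$.

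Second, I would define $g(t) = \dfrac{\max(t,1)\,e^{t}}{t\,e^{\max(t,1)}}$, which is exactly that ratio, and show $g(t)\ge 1$ for all $t>0$, with equality only when $t\ge 1$ (i.e.\ when $d_{r+1}\ge\sigma^2$, in which case incrementing the rank changes nothing). For $t\ge 1$ we have $\max(t,1)=t$ and $g(t)=1$ trivially. For $0<t<1$ we have $\max(t,1)=1$, so $g(t) = \dfrac{e^{t}}{t\,e} = \dfrac{e^{t-1}}{t}$; a one‑line calculus check (the derivative of $\log g(t) = (t-1)-\log t$ is $1-1/t<0$ on $(0,1)$, and $\log g(1)=0$) shows $\log g(t)>0$ there. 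Hence $\lr\big(\mb R_\text{RCML}(r),\mb Z\big)$ is nondecreasing in $r$, and strictly increasing precisely on the range of $r$ for which $d_{r+1}<\sigma^2$ — which, since the $d_i$ are sorted in descending order, is a contiguous block of indices. This establishes the ``monotonically increasing'' claim (with the understanding that it is constant once the eigenvalues fall below the noise floor, which is the regime the estimator treats as pure noise anyway).

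Third, for uniqueness of $\hat r$ in \eqref{Eq:OptimalRank} I would argue as follows. The objective $\big|\lr(\mb R_\text{RCML}(r),\mb Z)-\lr_0\big|^2$ is the square of the distance from a \emph{monotone} sequence $r\mapsto \lr(\mb R_\text{RCML}(r),\mb Z)$ to the fixed target $\lr_0$. Since $\lr_0$ is the median of the invariant LR pdf and the unconstrained ML value is $1$ while true‑covariance LR values are strictly below $1$, we have $0<\lr_0<1$; meanwhile $\lr(\mb R_\text{RCML}(N),\mb Z)=\lr(\mb S,\mb Z)=1$ (full rank recovers the SCM) and $\lr(\mb R_\text{RCML}(0),\mb Z)$ is the smallest value. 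For a nondecreasing sequence, the map $r\mapsto |\lr(\cdot,r)-\lr_0|$ is first nonincreasing and then nondecreasing (it is ``unimodal''/quasiconvex), so its minimizer over the integers is unique unless the sequence is flat at the optimal level or two consecutive integer values of $\lr$ straddle $\lr_0$ symmetrically. Both degenerate ties can be excluded generically, and in the strictly‑increasing regime they cannot occur at all; I would state the uniqueness for the strictly increasing portion and note that ties outside it are immaterial because they correspond to ranks already in the pure‑noise subspace.

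The main obstacle I anticipate is the bookkeeping around the boundary cases — making the ``$\min(d_i,\sigma^2)$'' substitution precise (the excerpt's RCML formula is written after the $\bar{\mb S}=\mb S/\sigma^2$ normalization, so one must be careful whether the relevant comparison is $d_i\lessgtr 1$ or $d_i\lessgtr\sigma^2$), and handling the flat portion of the $\lr$ sequence cleanly so that ``monotonically increasing'' and ``unique minimizer'' are stated with the correct qualifiers. The analytic heart of the argument — the inequality $e^{t-1}\ge t$ with equality iff $t=1$ — is elementary; everything else is careful case analysis on the sorted eigenvalues $d_1\ge\cdots\ge d_N$.
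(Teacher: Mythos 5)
Your overall strategy is exactly the paper's: isolate the single index $i=r+1$ whose contribution to the LR changes when the rank is incremented, write the step ratio $\lr(r+1)/\lr(r)$ as a function of $t=d_{r+1}/\sigma^2$, and reduce everything to the elementary inequality $e^{t-1}\ge t$. The paper does precisely this, obtaining $\lr(i+1)=\lr(i)\cdot\tfrac{\sigma^2}{d_{i+1}}\exp(d_{i+1}/\sigma^2-1)$ and then handling the regime $d_{i+1}<\sigma^2$ separately by noting the estimator does not change there. Your uniqueness discussion (quasiconvexity of $r\mapsto|\lr(r)-\lr_0|$ for a monotone sequence, with explicit attention to possible ties) is in fact more careful than the paper's, which asserts uniqueness directly from monotonicity and verifies the key inequality only by inspecting a plot.

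However, you have the $\min$/$\max$ orientation inverted throughout, and the inversion is not cosmetic. From \eqref{eqn:RCML} ($\lambda_i^\star=\min(1,1/d_i)$ in the $\mb X=\sigma^2\mb R^{-1}$ variable), the eigenvalues of $\mb R_{\text{RCML}}(r)$ are $\max(d_i,\sigma^2)$ for $i\le r$ --- the estimator lifts small eigenvalues up to the noise floor --- not $\min(d_i,\sigma^2)$. Hence $d_i/\lambda_i=\min(d_i/\sigma^2,1)$ for $i\le r$, the correct step multiplier is $g(t)=\min(t,1)\,e^{t}/(t\,e^{\min(t,1)})$, and the two regimes swap: $g(t)=e^{t-1}/t>1$ for $t>1$ (strict increase while $d_{r+1}>\sigma^2$, i.e.\ over the signal eigenvalues) and $g(t)=1$ for $t\le1$ (flat once the eigenvalues fall below the noise floor, since the rank-$r$ and rank-$(r+1)$ estimators then coincide). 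Your formulas claim the opposite, and your closing parenthetical then states the correct behaviour, contradicting your own $g$. This matters for uniqueness: you dismiss ties on the grounds that the flat portion lies ``in the pure-noise subspace,'' which is only true under the corrected characterization; under your own formulas the flat portion would be precisely the candidate signal ranks and the argument would collapse. A second, smaller slip: $\lr\big(\mb R_{\text{RCML}}(N),\mb Z\big)$ is the LR of the FML estimator (eigenvalues $\max(d_i,\sigma^2)$), not of $\mb S$, so it need not equal $1$; the sequence's supremum is $\lr(\mb R_{\text{FML}},\mb Z)$, and if that falls below $\lr_0$ the minimizer sits on the flat tail --- a corner case neither you nor the paper resolves. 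With the $\min$/$\max$ flip repaired, your proof becomes essentially identical to the paper's.
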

\begin{proof}
First, let $r$ be the largest $i$ such that $d_{i+1} \geq \sigma^2$. Then, from the closed form solution of the RCML estimator, the eigenvalues of the RCML estimator with rank $i$ and $i+1$ for given $i  < r$ will be
\begin{itemize}
  \item $\hat{\mb R}_\text{RCML}(i)$ : $d_1, \; \; d_2,  \; \; \ldots,  \; \; d_i,  \; \; \sigma^2,  \; \; \ldots,  \; \; \sigma^2$
  \item $\hat{\mb R}_\text{RCML}(i+1)$ : $d_1, \; \; d_2,  \; \; \ldots,  \; \; d_i,  \; \; d_{i+1}, \; \; \sigma^2,  \; \; \ldots,  \; \; \sigma^2$
\end{itemize}
since $d_{i+1} \geq \sigma^2$.
Then $\dfrac{d_i}{\lambda_i}$ should be
\begin{itemize}
  \item $\hat{\mb R}_\text{RCML}(i)$ : $1, \; \; 1,  \; \; \ldots,  \; \; 1_i,  \; \; \dfrac{d_{i+1}}{\sigma^2},  \; \; \ldots,  \; \; \dfrac{d_N}{\sigma^2}$
  \item $\hat{\mb R}_\text{RCML}(i+1)$ : $1, \; \; 1,  \; \; \ldots,  \; \; 1_i,  \; \; 1_{i+1}, \; \; \dfrac{d_{i+2}}{\sigma^2},  \; \; \ldots,  \; \; \dfrac{d_N}{\sigma^2}$
\end{itemize}
From Eq. \eqref{Eq:SimplifiedLR}, the LR values of the RCML estimators with the ranks $i$ and $i+1$ are
\be
\label{Eq:LR_r}
\lr(i) = \frac{\dfrac{\exp N}{\sigma^{2(N-i)}}\ds\prod_{k=i+1}^N d_k }{\exp (i + \dfrac{1}{\sigma^2}\ds\sum_{k=i+1}^N d_k)}
\ee
\be
\label{Eq:LR_r+1}
\lr(i+1) = \frac{\dfrac{\exp N}{\sigma^{2(N-i-1)}}\ds\prod_{k=i+2}^N d_k }{\exp (i + 1 +\dfrac{1}{\sigma^2}\ds\sum_{k=i+2}^N d_k)}
\ee

From Eq. \eqref{Eq:LR_r} and Eq. \eqref{Eq:LR_r+1}, we obtain
\bea
\lr(i+1) & = & \frac{\dfrac{\exp N}{\sigma^{2(N-i-1)}}\ds\prod_{k=i+2}^N d_k }{\exp (i + 1 +\dfrac{1}{\sigma^2}\ds\sum_{k=i+2}^N d_k)}\\
& = & \frac{\dfrac{\exp N}{\sigma^{2(N-i)}}\ds\prod_{k=i+1}^N d_k \cdot \frac{\sigma^2}{d_{i+1}}}{\exp (i +\dfrac{1}{\sigma^2}\ds\sum_{k=i+1}^N d_k) \exp(1-\dfrac{d_{i+1}}{\sigma^2})}\\
& = & \lr(i) \cdot \frac{\sigma^2}{d_{i+1}} \cdot \exp(\frac{d_{i+1}}{\sigma^2}-1)\label{Eq:LRrelation}
\eea
Eq. \eqref{Eq:LRrelation} tells us $\lr (i+1)$ can be calculated by multiplying $\lr (i)$ by the coefficient $\dfrac{\sigma^2}{d_{i+1}} \cdot \exp(\dfrac{d_{i+1}}{\sigma^2}-1)$. Fig. \ref{Fig:increase} shows that
\be
\dfrac{\sigma^2}{d_{i+1}} \cdot \exp(\dfrac{d_{i+1}}{\sigma^2}-1) \geq 1
\ee
for all values of $\dfrac{\sigma^2}{d_{i+1}}$. Therefore, it is obvious that
\be
\lr(i+1) \geq \lr(i),
\ee
which means the LR value monotonically increases with respect to $i$.

Now, let's consider the other case, $i \geq r$. In this case, since $d_{i+1} < \sigma^2$, it is easily shown that
\be
\mb R_\text{RCML}(i) = \mb R_\text{RCML}(i+1)
\ee
Therefore,
\be
\lr(i+1) = \lr(i)
\ee

This proves that $\lr(i)$ monotonically increases for all $1 \leq i \leq N$.

\end{proof}

\begin{figure}
\centering
\includegraphics[scale=0.5]{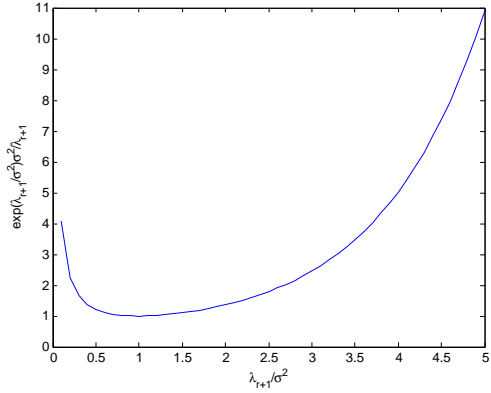}
\caption{The value of the coefficient $\dfrac{\sigma^2}{\lambda_{r+1}} \cdot \exp(\dfrac{\lambda_{r+1}}{\sigma^2}-1)$}
\label{Fig:increase}
\end{figure}

Lemma \ref{Lemma1} gives us a significant analytical result that is the EL approach leads to a unique value of the rank, i.e., when searching over the various values of the rank it is impossible to come up with multiple choices. That also means that it is guaranteed that we can always find the global optimum of $r$ not local optima (minima) for the optimization problem \eqref{Eq:OptimalRank} regardless of an initial value of $r$. We plot the values of $\bigg( \log\Big( \lr\big(\mb R_\text{RCML}(r), \mb Z\big)/\lr_0\Big)\bigg)^2$ versus the rank $r$ for one realization for the KASSPER dataset ($K=2N=704$) in Fig. \ref{Fig:LRdifference}. Since the LR values are too small in this case, we use a log scale and the ratio between two instead of the distance to see the variation clearly. Note that monotonic increase of the value of $\lr\big(\mb R_\text{RCML}(r), \mb Z\big)$ w.r.t $r$ guarantees a unique optimal rank even if the optimization function as defined in (\ref{Eq:OptimalRank}) is not necessarily convex in $r$.

The algorithm to find the optimal rank is simple and not computationally expensive due to the analytical results above. For a given initial rank such as Brennan rule for the KASSPER data set and the number of jammers for a simulation model, we first determine a direction of searching and then find the optimal rank. The procedure of finding the optimal rank is shown in Algorithm 1 in detail.

\begin{algorithm}[t]
\caption{The proposed algorithm to select the rank via EL }
\label{Alg:Rank}
    \begin{algorithmic}[1]
        \STATE Initialize the rank $r$ by physical environment such as Brennan rule.
        \STATE Evaluate $\lr(r-1)$, $\lr(r)$, $\lr(r+1))$, the LR values of RCML estimators for the ranks $r-1$, $r$, $r+1$, respectively.
            \begin{itemize}
                \item if $|\lr(r+1) - \lr_0| < |\lr(r) - \lr_0|$\\
                $\rightarrow$ increase $r$ by 1 until $|\lr(r) - \lr_0|$ is minimized to find $\hat r$.
                \item elseif $|\lr(r-1) - \lr_0| < |\lr(r) - \lr_0|$\\
                $\rightarrow$ decrease $r$ by 1 until $|\lr(r) - \lr_0|$ is minimized to find $\hat r$.
                \item else $\hat r = r$, the initial rank.
            \end{itemize}
    \end{algorithmic}
\end{algorithm}

\subsection{Imperfect rank and noise power constraints}
\label{Sec:Both}

In this section, we investigate the second case that both the rank $r$ and the noise power $\sigma^2$ are not perfectly known. We propose the estimation of both the rank and the noise level based on the EL approach. The estimator with both the rank and the noise power obtained by the EL approach is given by
\be
\hat{\mb R}_{\text{RCML}_\text{EL}} = \hat \sigma^2 \mb{V} {\mb\Lambda^\star}^{-1}(\hat r) \mb{V}^H
\ee
where
\be
\label{Eq:OptimalRankSigma}
(\hat r, \hat \sigma^2) \equiv \arg\min_{r \in \mathds{Z}, \sigma^2 > 0} \Big| \lr\big(\mb R_\text{RCML}(r,\sigma^2), \mb Z\big)  - \lr_0 \Big|^2
\ee

In section \ref{Sec:Rankonly}, we have shown that the optimal rank via the EL approach is uniquely obtained for a fixed $\sigma^2$. Now we analyze the LR values of the RCML estimator for various $\sigma^2$ and a fixed rank.

\begin{lem}
\label{Lemma2}
For a fixed rank, the LR value of the RCML estimator, which is a function of $\sigma^2$, has a maximum value at $\sigma^2 = \sigma_{\text{ML}}^2$. It monotonically increases for $\sigma^2 < \sigma_{\text{ML}}^2$ and monotonically decreases for $\sigma^2 > \sigma_{\text{ML}}^2$.
\end{lem}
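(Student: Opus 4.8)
The plan is to reduce the LR value to an explicit, piecewise-in-$\sigma^2$ function (with the rank $r$ held fixed), prove unimodality on each of two simple regions, and glue the two pieces together by continuity.

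\textbf{Step 1 (the LR as a function of $\sigma^2$).} Let $\ell_1\ge\ell_2\ge\cdots\ge\ell_N>0$ denote the eigenvalues of the sample covariance matrix $\mb S$. Since $\mb R_\text{RCML}$ shares its eigenvectors with $\mb S$, the RCML closed form (Eqs.\ \eqref{Eq:RCMLsolution} and \eqref{Eq:RCMLsolution2}) gives, for a fixed rank $r$, the eigenvalues $\mu_i(\sigma^2)=\max(\sigma^2,\ell_i)$ for $i\le r$ and $\mu_i(\sigma^2)=\sigma^2$ for $i>r$. Plugging these into Eq.\ \eqref{Eq:SimplifiedLR}, $\log\lr\big(\mb R_\text{RCML}(r,\sigma^2),\mb Z\big)$ equals a constant independent of $\sigma^2$ plus $\sum_{i=1}^N\psi_i(\sigma^2)$, where $\psi_i(\sigma^2)=-\log\mu_i(\sigma^2)-\ell_i/\mu_i(\sigma^2)$. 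Hence $\lr$ is monotone in $\sigma^2$ exactly where $\sum_i\psi_i$ is, and it suffices to study $\sum_i\psi_i$.

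\textbf{Step 2 (the regime $0<\sigma^2\le\ell_r$).} Here $\mu_i=\ell_i$ for every $i\le r$ (so those $\psi_i$ are constant), while $\mu_i=\sigma^2$ for $i>r$. Up to an additive constant, $\sum_i\psi_i(\sigma^2)=-(N-r)\log\sigma^2-\tfrac{1}{\sigma^2}\sum_{i=r+1}^N\ell_i$, whose derivative in $\sigma^2$ is $\sigma^{-4}\big(\sum_{i=r+1}^N\ell_i-(N-r)\sigma^2\big)$; this is positive for $\sigma^2<\tau$ and negative for $\sigma^2>\tau$, with $\tau:=\tfrac{1}{N-r}\sum_{i=r+1}^N\ell_i$. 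Because $\tau$ is the average of $\ell_{r+1}\ge\cdots\ge\ell_N$, we have $\ell_N\le\tau\le\ell_{r+1}\le\ell_r$, so $\tau$ lies in this regime, and $\lr$ strictly increases on $(0,\tau)$ and strictly decreases on $(\tau,\ell_r]$.

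\textbf{Step 3 (the regime $\sigma^2>\ell_r$ and stitching).} For $i>r$, $\ell_i\le\ell_{r+1}\le\ell_r<\sigma^2$, so $\psi_i(\sigma^2)=-\log\sigma^2-\ell_i/\sigma^2$ has derivative $(\ell_i-\sigma^2)/\sigma^4<0$; for $i\le r$, $\psi_i$ is constant when $\ell_i\ge\sigma^2$ and has the same strictly negative derivative when $\ell_i<\sigma^2$. Thus $\sum_i\psi_i$ is strictly decreasing on $(\ell_r,\infty)$. Since $\mu_i(\sigma^2)$, and hence $\lr$, is continuous in $\sigma^2$, Steps 2 and 3 combine to show that $\lr\big(\mb R_\text{RCML}(r,\sigma^2),\mb Z\big)$ is strictly increasing on $(0,\tau)$ and strictly decreasing on $(\tau,\infty)$, with $\lr\to 0$ at both endpoints; in particular the maximum is unique and attained at $\sigma^2=\tau$. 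Finally, $\tau=\tfrac{1}{N-r}\sum_{i=r+1}^N\ell_i$ is exactly $\sigma_\text{ML}^2$: it is the unknown-noise RCML solution of Lemma \ref{Lemma3.2} with a zero lower bound (equivalently, the classical fact that the ML noise level is the mean of the $N-r$ smallest eigenvalues of $\mb S$), and it is also immediate because $\lr$ is a strictly increasing function of the likelihood, which is maximized at $\sigma_\text{ML}^2$.

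\textbf{Anticipated obstacle.} The only delicate points are (i) pinning down the piecewise form $\mu_i(\sigma^2)=\max(\sigma^2,\ell_i)$ for the clutter eigenvalues while the rank is held at $r$ rather than left to adapt, and (ii) the boundary bookkeeping at $\sigma^2=\ell_r$ together with possible eigenvalue ties or a rank-deficient $\mb S$, where one restricts to $\ell_{r+1}>0$ so that $\tau>0$. Neither is conceptually hard once the reduction of Step 1 is written out; the rest is elementary calculus.
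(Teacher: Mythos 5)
Your proof is correct and follows essentially the same route as the paper: write the LR as an explicit function of $t=\sigma^2$, differentiate, and locate the unique critical point at $\tau=\frac{1}{N-r}\sum_{i=r+1}^{N}\ell_i$, which is recognized as the ML noise level. The only difference is that you explicitly track the clipping $\mu_i=\max(\sigma^2,\ell_i)$ for $i\le r$ and separately verify monotone decrease on the regime $\sigma^2>\ell_r$, which the paper's single global derivative computation silently assumes away; since $\tau\le\ell_{r+1}\le\ell_r$ the critical point lies in the unclipped regime either way, so this is added rigor rather than a different argument.
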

\begin{proof}
In this section, I investigate the LR values for varying noise level $\sigma^2$ and a given rank $r$. From Eq. \eqref{Eq:LR_r} we obtain the LR value when the rank is $r$,
\be
\label{Eq:LR_sigma}
\lr(\sigma^2) = \frac{\dfrac{\exp N}{\sigma^{2(N-r)}}\ds\prod_{k=r+1}^N d_k }{\exp (r + \dfrac{1}{\sigma^2}\ds\sum_{k=r+1}^N d_k)}
\ee
For simplicity, let $\sigma^2 = t$ then Eq. \eqref{Eq:LR_sigma} can be simplified as
\be
\lr(t) = \dfrac{e^{N-r}\ds\prod_{k=r+1}^N d_k }{t^{N-r} e^{\dfrac{\sum_{k=r+1}^N d_k}{t}}}
\ee
Now let $\ds\sum_{k=r+1}^N d_k = d_s$ and $\ds\prod_{k=r+1}^N d_k = d_p$, then
\bea
\lr(t) & = & \dfrac{e^{N-r}d_p }{t^{N-r} e^{\frac{d_s}{t}}}\\
& = & d_p e^{N-r} t^{r-N} e^{-\frac{d_s}{t}}\label{Eq:LR_t}
\eea
To analyze increasing or decreasing property Eq. \eqref{Eq:LR_t}, I calculate its first derivative. Since $d_p e^{N-r}$ is a positive constant, it does not affect increasing or decreasing of the function. Therefore,
\bea
\lefteqn{(t^{r-N} e^{-\frac{d_s}{t}})^\prime}\nonumber\\
& = & (r-N)t^{r-N-1}e^{-d_s/t} + t^{r-N} e^{-d_s/t} \dfrac{d_s}{t^2}\\
& = & (r-N)t^{r-N-1}e^{-d_s/t} + t^{r-N-2} e^{-d_s/t} d_s\\
& = & t^{r-N-2}\big((r-N)t + d_s\big) e^{-d_s/t}
\eea
Since $t^{r-N-2}$ and $e^{-d_s/t}$ are always positive, the first derivative $(t^{r-N} e^{-\frac{d_s}{t}})^\prime = 0$ if and only if
\be
t = \dfrac{d_s}{N-r} = \dfrac{\sum_{k=r+1}^N d_k}{N-r}
\ee
and it is positive when $t<\dfrac{\sum_{k=r+1}^N d_k}{N-r}$ and negative otherwise. This means that $\lr(\sigma^2)$ increases for $\sigma^2<\dfrac{\sum_{k=r+1}^N d_k}{N-r}$ and decreases for $\sigma^2>\dfrac{\sum_{k=r+1}^N d_k}{N-r}$. The LR value is maximized when $\sigma^2 = \dfrac{\sum_{k=r+1}^N d_k}{N-r}$. Note that $\dfrac{\sum_{k=r+1}^N d_k}{N-r}$ is the average value of $N-r$ smallest eigenvalues of the sample covariance matrix and in fact a maximum likelihood solution of $\sigma^2$ as shown in the RCML estimator \cite{Kang14}.
\end{proof}

Fig. \ref{Fig:LR_sigma} shows an example of the LR values as a function of the noise level $\sigma^2$. As shown in Lemma \ref{Lemma2}, we see that the LR value is maximized for the ML solution of $\sigma^2$ and monotonically increases and decreases for each direction. It is obvious that we have three cases of the solution of the optimal noise power from Lemma \ref{Lemma2}: 1) no solution, 2) only one solution, and 3) two optimal solution. Now we discuss how to obtain the optimal noise power for a fixed rank.
\begin{lem}
\label{Lemma3}
The noise power obtained by the expected likelihood approach, $\hat{\sigma}_{\text{EL}}^2$, is given by
\be
\hat{\sigma}_{\text{EL}}^2 = \exp\Bigg(W_k\bigg(\frac{b}{a} e^{-\frac{c}{a}}\bigg) + \frac{c}{a}\Bigg)
\ee
where $W_k (z)$ is the $k$-th branch of Lambert $W$ function and
\be
\left\{\begin{array}{l}
a = r - N\\
b = \sum_{k=r+1}^N d_k\\
c = \log \lr_0 - \log\Big (\prod_{k=r+1}^N d_k\Big) + a
\end{array} \right.
\ee
\end{lem}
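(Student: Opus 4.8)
The plan is to solve the defining equation $\lr\big(\mb R_\text{RCML}(r,\sigma^2),\mb Z\big)=\lr_0$ directly for $\sigma^2$, reducing it to the canonical form of the Lambert $W$ equation. First I would reuse the simplified expression for the LR value already obtained in the proof of Lemma \ref{Lemma2}: writing $t=\sigma^2$, $d_s=\sum_{k=r+1}^N d_k$, and $d_p=\prod_{k=r+1}^N d_k$, we have
\be
\lr(t)=d_p\,e^{N-r}\,t^{\,r-N}\,e^{-d_s/t}.
\ee
Setting this equal to $\lr_0$ and taking natural logarithms converts the problem into the transcendental equation $(r-N)\log t - d_s/t = \log\lr_0 - \log d_p - (N-r)$, which, upon introducing $a=r-N$, $b=d_s$, and $c=\log\lr_0-\log d_p + a$ exactly as in the statement, reads $a\log t - b/t = c$.

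Next I would substitute $u=\log t$ (so $t=e^u$), giving $au-be^{-u}=c$. Dividing by the nonzero constant $a$ and rearranging yields $u-c/a=(b/a)e^{-u}$; with the shifted variable $v=u-c/a$ this becomes $v=(b/a)e^{-c/a}e^{-v}$, i.e. the canonical form
\be
v\,e^{v}=\frac{b}{a}\,e^{-c/a}.
\ee
By definition of the Lambert $W$ function, $v=W_k\!\big((b/a)e^{-c/a}\big)$ for an appropriate branch index $k$, and unwinding the substitutions $v\to u\to t$ gives $\hat\sigma_{\text{EL}}^2=\exp\!\big(W_k((b/a)e^{-c/a})+c/a\big)$, which is the claimed formula.

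The only delicate point — and what I expect to be the main obstacle — is the branch selection and the existence question for $W$, which must be reconciled with Lemma \ref{Lemma2}. Since $r<N$ we have $a<0$ and $b>0$, so the argument $(b/a)e^{-c/a}$ is negative, and real solutions exist precisely when it lies in $[-1/e,0)$, in which case there are the two real branches $W_0$ and $W_{-1}$. This matches Lemma \ref{Lemma2} exactly: the unimodal shape of $\lr(\sigma^2)$ (increasing up to $\sigma_{\text{ML}}^2$, then decreasing) forces $\lr(\sigma^2)=\lr_0$ to have no solution, one solution, or two solutions according to whether $\lr_0$ exceeds, equals, or lies below $\lr(\sigma_{\text{ML}}^2)$, and these three cases correspond respectively to the $W$-argument being below $-1/e$, equal to $-1/e$, or in $(-1/e,0)$. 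I would close the proof by checking consistency at the critical point: when $\lr_0=\lr(\sigma_{\text{ML}}^2)$ one computes $c/a=\log(-b/a)+1$, so at $t=-b/a=\sigma_{\text{ML}}^2$ the shifted variable is $v=-1$, which is indeed the common value $W_0(-1/e)=W_{-1}(-1/e)$, confirming the two-branch picture. The remaining verification — substituting $\hat\sigma_{\text{EL}}^2$ back to confirm $\lr(\hat\sigma_{\text{EL}}^2)=\lr_0$ — is routine algebra.
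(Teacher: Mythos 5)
Your proposal follows essentially the same route as the paper's proof: set $\lr(t)=\lr_0$, take logarithms, introduce $a$, $b$, $c$, substitute $u=\log t$ and then shift by $c/a$ to reach the canonical form $s e^{s}=(b/a)e^{-c/a}$, and invert with the Lambert $W$ function. Your additional discussion of branch selection, existence of real solutions, and the consistency check against the unimodal shape from Lemma~\ref{Lemma2} is correct and in fact supplies detail that the paper's proof leaves implicit (the paper writes a generic $W(\cdot)$ without addressing which branch yields which of the two possible solutions).
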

\begin{proof}
For a given rank $r$, the optimal solution of the noise power via the EL approach, $\hat t (=\hat{\sigma}_{\text{EL}}^2)$, is the solution of $\lr(t) = \lr_0$. From Eq. \eqref{Eq:LR_t}, that is, $\hat t$ is the solution of the equation given by
\be
d_p e^{N-r} t^{r-N} e^{-\frac{d_s}{t}} = \lr_0
\ee
Taking $\log$ on both side leads
\be
\label{Eq:OptimalT}
\log d_p + N-r + (r-N)\log t -\frac{d_s}{t} = \log \lr_0
\ee
For simplification, we take substitutions of variables,
\be
\left\{ \begin{array}{l}
a = r - N\\
b = \sum_{k=r+1}^N d_k\\
c = \log \lr_0 - \log\Big (\prod_{k=r+1}^N d_k\Big) + a
\end{array} \right.
\ee
Then, Eq. \eqref{Eq:OptimalT} is simplified to an equation of $t$,
\be
a \log t - \frac{b}{t} = c
\ee
Again, let $u = \log t$. Then, since $t = e^u$, we obtain
\bea
au - be^{-u} = c\\
e^{-u} = \frac{a}{b} u - \frac{c}{b}
\eea
Now let $s=u-\frac{c}{a}$. Then, the equation is
\bea
e^{-s-\frac{c}{a}} = \frac{a}{b} s\\
s e^s = \frac{b}{a} e^{-\frac{c}{a}}\label{Eq:S}
\eea
The solution of Eq. \eqref{Eq:S} is known to be obtained using Lambert $W$ function \cite{Corless96}. That is,
\be
s = W\bigg(\frac{b}{a} e^{-\frac{c}{a}}\bigg)
\ee
where $W(\cdot)$ is a Lambert $W$ function which is defined to be the function satisfying
\be
W(z) e^{W(z)} = z
\ee
Finally, we obtain
\be
u = W\bigg(\frac{b}{a} e^{-\frac{c}{a}}\bigg) + \frac{c}{a}
\ee
and
\be
\hat{\sigma}_{\text{EL}}^2 = \hat t = \exp\Bigg(W\bigg(\frac{b}{a} e^{-\frac{c}{a}}\bigg) + \frac{c}{a}\Bigg)
\ee
\end{proof}
Lemma \ref{Lemma3} shows that there is a closed form solution of the optimal noise power for a fixed rank. Therefore we do not need any iterative and numerical algorithms to obtain both the optimal rank and noise power.

Now we propose the method to alternately find the optimal solution of both the rank and the noise power. For a fixed $\sigma^2$, we can obtain the optimal rank via Algorithm 1. For a fixed rank, we should consider three cases described above. The first case implies that the LR value corresponding $\sigma_{\text{ML}}^2$ is less than $\lr_0$ and therefore, we increase the rank until the solution of $\sigma^2$ exists. In the second case, we can easily determine $\hat \sigma^2 = \sigma_{\text{ML}}^2$. For the third case that there are two solutions of $\sigma^2$, we have to choose one among two EL solutions and the ML solution. We experimentally observe that the threshold in the test statistics such as the normalized matched filter is typically smaller for the better estimator in the sense of the normalized SINR and the probability of detection from our experiments. Therefore, we choose one of $\sigma_{\text{ML}}^2$, $\sigma_{\text{EL1}}^2$, $\sigma_{\text{EL2}}^2$, which generates the smallest value of the test statistics. The detail procedure of the algorithm is described in Algorithm 2.

\begin{figure}
\centering
\includegraphics[scale=0.5]{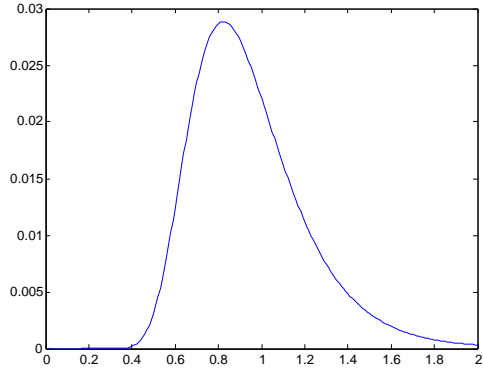}
\caption{The LR value versus $\sigma^2$ for the simulation model, $N=20$, $K=40$, $r=5$}
\label{Fig:LR_sigma}
\end{figure}

\begin{algorithm}[t]
\caption{The proposed algorithm to select the rank and the noise level via EL}
\label{Alg:RankNoise}
    \begin{algorithmic}[1]
        \STATE Initialize the rank $r$ by physical environment such as Brennan rule or the number of jammers.
        \STATE If there is no solution of $\sigma^2$ for given $r$, increase $r$ until the solution of $\sigma^2$ exists.
        \STATE Obtain $\sigma_{\text{ML}}^2 =\frac{1}{N-r} \sum_{i=r+1}^N d_i$.
        \STATE For given $\sigma_{\text{ML}}^2$, find a new $r$ using Algorithm 1.
        \STATE Repeat Step 3 and Step 4 until the rank $r$ converges.
        \STATE After $r$ is determined, choose $\hat \sigma^2$ among $\sigma_{\text{ML}}^2$, $\sigma_{\text{EL1}}^2$, $\sigma_{\text{EL2}}^2$.
    \end{algorithmic}
\end{algorithm}

\subsection{Imperfect condition number constraint}
\label{Sec:ConditionNumber}

Now we discuss the proposed method to determine the condition number constraint through the EL approach in this section. As shown in Eq. \eqref{Eq:CN1} through Eq. \eqref{Eq:Giu2}, the condition number constrained ML estimator is a function of $u$ which is a function of the condition number $K_{\max}$. Therefore, the final estimate is also a function of $K_{\max}$. Similar to what we have done in previous sections, we find an optimal condition number so that the LR value of the estimated covariance matrix should be same as a statistical median value of the LR value of the true covariance matrix, that is
\be
\hat{\mb R}_{\text{CNCML}_\text{EL}} = \hat \sigma^2 \mb{V} {\mb\Lambda^\star}^{-1}(\hat K_{\max}) \mb{V}^H
\ee
where
\be
\label{Eq:OptimalCN}
\hat K_{\max} \equiv \arg\min_{\Kmax \geq 1} \Big| \lr\big(\mb R_\text{CNCML}(\Kmax), \mb Z\big)  - \lr_0 \Big|^2
\ee

Before we discuss the algorithm to find the optimal condition number, we analyze the closed form solution for the condition number constrained ML estimation which is proposed in \cite{Aubry12}. We derive a more explicit closed form solution.

\begin{lem}
\label{Lemma4}
The more simplified closed form solution of the condition number constrained ML estimator is given by
   \begin{enumerate}
    \item $d_1\leq \sigma^2$,
    \be
    \hat{\mb R}_{\text{CN}} = \sigma^2 \mb I
    \ee

    \item $\sigma^2 \leq d_1 \leq \sigma^2 K_{\max}$,
    \be
    \hat{\mb R}_{\text{CN}} = \hat{\mb R}_{\text{FML}}
    \ee

    \item $d_1 > \sigma^2 K_{\max}$ and $K_{\max} \geq \frac{\sum_{i=1}^c d_i}{c - \sum_{\bar N + 1}^N (d_i -1)}$,
    \be
    \hat{\mb R}_{\text{CN}} = \mb\Phi \diag(\bs\lambda^*) \mb\Phi^H
    \ee
    where
    \be
    \bs\lambda^\star = \big[ \sigma^2 K_{\max}, \ldots, \sigma^2 K_{\max}, d_{c+1}, \ldots, d_{\bar N}, \sigma^2,\ldots,\sigma^2  \big],
    \ee
     $c$ and $\bar N$ are the vector of the eigenvalues of the estimate, the largest indices so that $d_c > \sigma^2 K_{\max}$, and $d_{\bar N} \geq \sigma^2$

    \item $d_1 > \sigma^2 K_{\max}$ and $K_{\max} < \frac{\sum_{i=1}^c d_i}{c - \sum_{\bar N + 1}^N (d_i -1)}$,
    \be
    \bs\lambda^\star = \big[ \frac{\sigma^2}{u}, \ldots, \frac{\sigma^2}{u}, d_{p+1}, \ldots, d_q,\frac{\sigma^2}{uK_{\max}},\ldots,\frac{\sigma^2}{uK_{\max}}  \big]
    \ee
    \end{enumerate}
And the condition numbers of the estimates are $1$, $\frac{d_1}{\sigma^2}$, $\Kmax$, and $\Kmax$, respectively.
\end{lem}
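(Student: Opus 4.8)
The plan is to reduce the condition-number-constrained ML problem \eqref{Eq:InitialProblemRCML} to a one-dimensional optimization over the smallest eigenvalue of $\mb R$ --- in the same spirit as the reduction to an eigenvalue problem in Section~\ref{sec:MLestimation} --- and then to perform an explicit case analysis. Writing $\mb R=\mb\Phi\,\diag(\mu_1,\dots,\mu_N)\,\mb\Phi^H$ with $\mb\Phi$ the eigenvector matrix of $\mb S$ (the optimal choice of eigenvectors, exactly as in the derivation leading to \eqref{Eq:vectorcostfunction}) and $\mu_1\ge\cdots\ge\mu_N$, the objective $\tr\{\mb R^{-1}\mb S\}+\log|\mb R|$ becomes $\sum_{i=1}^N\big(d_i/\mu_i+\log\mu_i\big)$, while the constraints $\mb R=\sigma^2\mb I+\mb R_c$, $\mb R_c\succeq\mb 0$ and $\lambda_{\max}(\mb R)/\lambda_{\min}(\mb R)\le\Kmax$ become $\mu_i\ge\sigma^2$ for every $i$ and $\mu_1\le\Kmax\mu_N$. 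First I would fix $m:=\mu_N\ge\sigma^2$; then each $\mu_i$ is confined to $[m,\Kmax m]$ and, since $t\mapsto d_i/t+\log t$ is strictly convex with unconstrained minimizer $t=d_i$, the optimal $\mu_i$ is the clipped value $\mu_i^\star(m)=\min\!\big(\Kmax m,\max(m,d_i)\big)$. A short relaxation argument (any feasible $\bs\mu$ has $\min_i\mu_i=:m\ge\sigma^2$ with all components in $[m,\Kmax m]$, so its objective dominates $g(m):=\sum_i\big(d_i/\mu_i^\star(m)+\log\mu_i^\star(m)\big)$, which is itself attained) shows the original problem is equivalent to $\min_{m\ge\sigma^2} g(m)$.

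Next I would show $g$ is $C^1$ and convex on $(0,\infty)$: each summand $\psi_i(m)=d_i/\mu_i^\star(m)+\log\mu_i^\star(m)$ is the composition of the convex map $t\mapsto d_i/t+\log t$ with the clipping map, and a short computation at the two breakpoints $m=d_i/\Kmax$ and $m=d_i$ shows the one-sided derivatives agree (both equal $0$), so $\psi_i$, and hence $g=\sum_i\psi_i$, is convex and $C^1$. Therefore $g$ has a unique minimizer over $[\sigma^2,\infty)$: it is $m=\sigma^2$ precisely when $g'(\sigma^2)\ge 0$, and otherwise the unique interior stationary point. The four cases of the lemma are then the four configurations of the spectrum $\{d_i\}$ relative to the thresholds $\sigma^2$ and $\sigma^2\Kmax$.

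I would dispatch the two easy cases directly. If $d_1\le\sigma^2$ then $\mu_i^\star(m)=m$ for all $i$ and all $m\ge\sigma^2$, so $g(m)=N\log m+(\sum_i d_i)/m$ is nondecreasing on $[\sigma^2,\infty)$ (its derivative is nonnegative since $\sigma^2\ge d_1\ge\frac1N\sum_i d_i$), whence $m^\star=\sigma^2$ and $\hat{\mb R}_{\mathrm{CN}}=\sigma^2\mb I$, of condition number $1$. If $\sigma^2\le d_1\le\sigma^2\Kmax$ then at $m=\sigma^2$ the upper clip is inactive, so $\mu_i^\star(\sigma^2)=\max(d_i,\sigma^2)$ --- exactly the FML spectrum associated with \eqref{Eq:structuralconstraint}; since the FML estimate is ML under the structural constraint alone and is here moreover feasible for the (inactive) condition-number constraint, it solves the constrained problem, and its condition number is $d_1/\sigma^2$.

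The substantive part, and the main obstacle, is $d_1>\sigma^2\Kmax$, where the condition-number constraint is active. With $c=\#\{i:d_i>\sigma^2\Kmax\}$ and $\bar N=\#\{i:d_i\ge\sigma^2\}$, the partition induced by $\mu_i^\star(\sigma^2)$ is: the top $c$ indices clipped to $\sigma^2\Kmax$, the middle indices $c+1,\dots,\bar N$ left at $d_i$, and the bottom $N-\bar N$ indices clipped up to $\sigma^2$; this yields $g'(\sigma^2)=\sigma^{-4}\big[(c+N-\bar N)\sigma^2-\Kmax^{-1}\sum_{i=1}^c d_i-\sum_{i=\bar N+1}^N d_i\big]$, and the sign condition $g'(\sigma^2)\ge 0$ is exactly $\Kmax\ge\frac{\sum_{i=1}^c d_i}{c-\sum_{i=\bar N+1}^N(d_i-1)}$ under the normalization $\sigma^2=1$ used in the statement. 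In that subcase $m^\star=\sigma^2$, giving the spectrum $[\sigma^2\Kmax,\dots,\sigma^2\Kmax,d_{c+1},\dots,d_{\bar N},\sigma^2,\dots,\sigma^2]$ with condition number $\sigma^2\Kmax/\sigma^2=\Kmax$. Otherwise $m^\star>\sigma^2$ is the unique root of $g'=0$; writing $p=\#\{i:d_i>\Kmax m^\star\}$, $q=\#\{i:d_i\ge m^\star\}$ and solving $g'(m^\star)=0$ on the corresponding linear piece gives $m^\star=\big(\Kmax^{-1}\sum_{i\le p}d_i+\sum_{i>q}d_i\big)/(p+N-q)$, and with $u:=\sigma^2/(\Kmax m^\star)$ one recovers the spectrum $[\sigma^2/u,\dots,\sigma^2/u,d_{p+1},\dots,d_q,\sigma^2/(u\Kmax),\dots,\sigma^2/(u\Kmax)]$, again of condition number $\Kmax$. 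The delicate point is the self-consistency between the index partition $(p,q)$ and the value $m^\star$ it defines: one must track how the partition changes at the finitely many breakpoints $\{d_i/\Kmax\}\cup\{d_i\}$ as $m$ grows, invoke convexity of $g$ to be sure the first stationary point reached is the global minimum, and verify that eliminating $u$ reproduces the closed form of Aubry \emph{et al.} in \eqref{Eq:CN1}.
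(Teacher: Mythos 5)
Your route is genuinely different from the paper's. The paper takes Aubry \emph{et al.}'s parameterized solution $\lambda_i^\star(\bar u)=\min\big(\min(\Kmax\bar u,1),\max(\bar u,1/\bar d_i)\big)$ as given and performs a case analysis on where the optimizer $u^\star$ of the one-dimensional problem \eqref{Eq:OptimizationU} sits ($u^\star=1/\Kmax$, $u^\star=1/\bar d_1$, or interior), reading off the sign of $dG/du$ at the candidate points; you instead re-derive the whole solution from scratch by parameterizing with $m=\lambda_{\min}(\mb R)$, clipping each eigenvalue of $\mb R$ to $[m,\Kmax m]$, and minimizing the resulting one-dimensional $g(m)$ over $m\ge\sigma^2$. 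The two parameterizations are related by $u=\sigma^2/(\Kmax m)$ in the active-constraint regime, and your endpoint test $g'(\sigma^2)\ge 0$ reproduces exactly the paper's threshold $\Kmax\ge\sum_{i\le c}\bar d_i/\big(c-\sum_{i>\bar N}(\bar d_i-1)\big)$. What your version buys is self-containedness and a more transparent derivation of the case boundaries; the cost is that you must re-establish facts the paper simply imports.

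There is, however, a genuine flaw in your uniqueness/global-optimality step. The map $t\mapsto d_i/t+\log t$ is \emph{not} strictly convex on $(0,\infty)$: its second derivative is $(2d_i-t)/t^3$, which is negative for $t>2d_i$; the map is only unimodal with minimum at $t=d_i$. Consequently $\psi_i(m)=d_i/\mu_i^\star(m)+\log\mu_i^\star(m)$ is not convex either (for $m>d_i$ it equals $d_i/m+\log m$, which is concave beyond $m=2d_i$), and neither is $g$. Since you explicitly ``invoke convexity of $g$ to be sure the first stationary point reached is the global minimum,'' that step fails as written. It is repairable: unimodality of each $t\mapsto d_i/t+\log t$ already justifies the clipping formula, and for the outer problem one checks that $m^2g'(m)=\sum_{i:\,d_i\ge\Kmax m}(m-d_i/\Kmax)+\sum_{i:\,d_i\le m}(m-d_i)$ is continuous and piecewise linear with nonnegative slope equal to the number of active terms, with terms entering and leaving the two sums at value zero; hence $m^2g'(m)$ is nondecreasing and $g'$ changes sign at most once, from negative to positive. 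That single-sign-change property delivers everything convexity was meant to provide: $g'(\sigma^2)\ge0$ forces $m^\star=\sigma^2$, and otherwise the unique stationary point is the global minimizer. With that substitution (and granting the deferred bookkeeping that the index pair $(p,q)$ is self-consistent with the $m^\star$ it defines --- a point the paper likewise leaves to Aubry \emph{et al.}), your four cases, the resulting spectra, and the stated condition numbers all match the lemma.
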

\begin{proof}
We consider 5 cases provided in \cite{Aubry12}.
\begin{enumerate}
    \item $d_1 \leq \sigma^2 \leq \sigma^2 \Kmax$\\
    Since $u^\star = \frac{1}{\Kmax}$,
    \bea
    \lambda_i^\star & = & \min (\min (\Kmax  u^\star,1),\max(u^\star,\frac{1}{\bar d_i}))\\
    & = & \min (\min (1,1),\max(\frac{1}{\Kmax},\frac{1}{\bar d_i}))\\
    & = & \min (1,\frac{1}{\bar d_i}) = 1
    \eea
    Therefore,
    \be
    \hat{\mb R}_{\text{CN}} = \sigma^2 \mb I
    \ee
    and the condition number is $1$.
    \item $\sigma^2 < d_1 \leq \Kmax$\\
    Since $u^\star = \frac{1}{\bar d_1}$,
    \bea
    \lambda_i^\star & = & \min (\min (\Kmax  u^\star,1),\max(u^\star,\frac{1}{\bar d_i}))\\
    & = & \min (\min (\frac{\Kmax}{\bar d_1},1),\max(\frac{1}{\bar d_1},\frac{1}{\bar d_i}))\\
    & = & \min (1,\frac{1}{\bar d_i})\\
    & = & \left\{ \begin{array}{cc} \frac{1}{\bar d_i} & \bar d_i \geq 1\\ 1 & \bar d_i < 1 \end{array} \right.
    \eea
    Therefore,
    \be
    \hat{\mb R}_{\text{CN}} = \hat{\mb R}_{\text{FML}}
    \ee
    and the condition number is $\frac{d_1}{\sigma^2}$.

    \item $d_1 > \sigma^2 \Kmax$ and $u^\star = \frac{1}{\bar d_1}$\\
    Since $u^\star$ is the optimal solution of the optimization problem \eqref{Eq:OptimizationU}, $\frac{dG(u)}{du}|_{u=\frac{1}{\bar d_1}}$ must be zero if $u^\star = \frac{1}{\bar d_1}$. From, Eq. \eqref{Eq:Giu1} and Eq. \eqref{Eq:Giu2}, the first derivative of $G_i(u)$ is given by
    \be
    \label{Eq:Gpiu1}
    G_i^\prime(u) = \left\{ \begin{array}{ll}
    -\frac{1}{u} + \Kmax \bar d_i & \text{if} \quad 0 < u \leq \frac{1}{\Kmax}\\
    0 & \text{if} \quad \frac{1}{\Kmax} \leq u \leq 1 \end{array} \right.
    \ee
    for $\bar d_i \leq 1$, and
    \be
    \label{Eq:Gpiu2}
    G_i^\prime(u) = \left\{ \begin{array}{ll}
    -\frac{1}{u} + \Kmax \bar d_i & \text{if} \quad 0 < u \leq \frac{1}{\Kmax \bar d_i}\\
    0 & \text{if} \quad \frac{1}{\Kmax \bar d_i} < u \leq \frac{1}{\bar d_i}\\
    -\frac{1}{u} + \bar d_i & \text{if} \quad \frac{1}{\bar d_i} \leq u \leq 1 \end{array} \right.
    \ee
    for $\bar d_i > 1$. Therefore,
    \be
   \frac{dG(u)}{du}|_{u=\frac{1}{\bar d_1}} = \sum_{i=\bar N +1}^N (\Kmax \bar d_i - \bar d_1) + \sum_{i=p}^{\bar N} (\Kmax \bar d_i - \bar d_1)
    \ee
    where $p$ is the greatest index such that $\frac{1}{\bar d_1} < \frac{1}{\Kmax \bar d_p}$. For $i=\bar N, \ldots, N$, since $\bar d_i \leq 1$,
    \be
    \Kmax \bar d_i - \bar d_1 < \Kmax - \bar d_1 < 0
    \ee
    and for $i=p, \ldots, \bar N -1$, since $\bar d_1 > \Kmax \bar d_i$, $\Kmax \bar d_i - \bar d_1 < 0$. Therefore, in this case, it is obvious that
    \be
    \frac{dG(u)}{du}|_{u=\frac{1}{\bar d_1}} < 0
    \ee
    which implies $u = \frac{1}{\bar d_1}$ can not be the optimal solution of \eqref{Eq:OptimizationU}.

    \item $d_1 > \sigma^2 \Kmax$ and $u^\star = \frac{1}{\Kmax}$\\
    Aubry \emph{et al.} \cite{Aubry12} showed that $u^\star = \frac{1}{\Kmax}$ if $\frac{dG(u)}{du}|_{u=\frac{1}{\Kmax}} \leq 0$. From Eq. \eqref{Eq:Gpiu1} and Eq. \eqref{Eq:Gpiu2},
    \be
    \frac{dG(u)}{du}|_{u=\frac{1}{\Kmax}} = \sum_{i=\bar N +1}^N \Kmax (\bar d_i - 1) + \sum_{i=1}^p (\bar d_i - \Kmax)
    \ee
    where $p$ is the greatest index such that $\bar d_p > \Kmax$. Therefore,
   \bea
    & \frac{dG(u)}{du}|_{u=\frac{1}{\Kmax}} \leq 0\\
    \Leftrightarrow & \ds\sum_{i=\bar N +1}^N \Kmax (\bar d_i - 1) + \ds\sum_{i=1}^p (\bar d_i - \Kmax) \leq 0\\
    \Leftrightarrow & \Kmax(\sum_{i=\bar N +1}^N (\bar d_i - 1) - p) + \sum_{i=1}^p \bar d_i \leq 0\\
    \Leftrightarrow & \Kmax(\sum_{i=\bar N +1}^N (\bar d_i - 1) - p) \leq -\sum_{i=1}^p \bar d_i\\
    \Leftrightarrow & \Kmax \geq \frac{\sum_{i=1}^p \bar d_i}{p-\sum_{i=\bar N +1}^N (\bar d_i - 1)}
    \eea
    In this case,
    \bea
    \lambda_i^\star & = & \min (\min (\Kmax  u^\star,1),\max(u^\star,\frac{1}{\bar d_i}))\\
    & = & \min (\min (1,1),\max(\frac{1}{\Kmax},\frac{1}{\bar d_i}))\\
    & = & \min (1,\max(\frac{1}{\Kmax},\frac{1}{\bar d_i}))\\
     & = & \left\{ \begin{array}{cc} \min(1,\frac{1}{\Kmax}) & \bar d_i \geq \Kmax\\ \min(1,\frac{1}{\bar d_i}) & \bar d_i < \Kmax \end{array} \right.\\
     & = & \left\{ \begin{array}{cc} \frac{1}{\Kmax} & \bar d_i \geq \Kmax\\ \frac{1}{\bar d_i} & \bar 1 \leq \bar d_i < \Kmax\\ 1 & \bar d_i < 1 \end{array} \right.
    \eea
    Finally we obtain
    \be
    \bs\lambda^\star = \big[ \sigma^2 K_{\max}, \ldots, \sigma^2 K_{\max}, d_{p+1}, \ldots, d_{\bar N}, \sigma^2,\ldots,\sigma^2  \big],
    \ee
     where $p$ and $\bar N$ are the largest indices so that $d_p > \sigma^2 K_{\max}$ and $d_{\bar N} \geq \sigma^2$, respectively.

     \item $d_1 > \sigma^2 K_{\max}$ and $ \Kmax < \frac{\sum_{i=1}^p \bar d_i}{p-\sum_{i=\bar N +1}^N (\bar d_i - 1)}$\\
     In this case, since $\frac{1}{\bar d_1} < u^\star < \frac{1}{\Kmax}$,
      \bea
    \lambda_i^\star & = & \min (\min (\Kmax  u^\star,1),\max(u^\star,\frac{1}{\bar d_i}))\\
    & = & \min (\Kmax  u^\star,\max(u^\star,\frac{1}{\bar d_i}))\\
     & = & \left\{ \begin{array}{cc} \min(\Kmax  u^\star,u^\star) & \bar d_i \geq \frac{1}{u^\star}\\ \min(\Kmax  u^\star,\frac{1}{\bar d_i}) & \bar d_i < \frac{1}{u^\star} \end{array} \right.\\
     & = & \left\{ \begin{array}{cc} u^\star & \bar d_i \geq \frac{1}{u^\star}\\ \frac{1}{\bar d_i} & \frac{1}{\Kmax u^\star} \leq \bar d_i \leq \frac{1}{u^\star}\\ \Kmax u^\star & \bar d_i < \frac{1}{\Kmax u^\star} \end{array} \right.
    \eea
    Therefore, we obtain
     \be
    \bs\lambda^\star = \big[ \frac{\sigma^2}{u^\star}, \ldots, \frac{\sigma^2}{u^\star}, d_{p+1}, \ldots, d_q,\frac{\sigma^2}{u^\star K_{\max}},\ldots,\frac{\sigma^2}{u^\star K_{\max}}  \big]
    \ee
    where $p$ and $q$ are the largest indices so that $d_p > \frac{\sigma^2}{u}$ and $d_q > \frac{\sigma^2}{u \Kmax}$, respectively.
\end{enumerate}

\end{proof}

From Lemma \ref{Lemma4}, for the first two cases that is $d_1 \leq \sigma^2 K_{\max}$, the estimator is either a scaled identity matrix or the FML. Therefore, there is no need to find an optimal condition number in these cases since the estimator is not a function of the condition number.

Now we investigate uniqueness of the optimal condition number as we have done in the case of only rank constraint for the last two cases where the optimal eigenvalues are functions of the condition number.

\begin{lem}
\label{Lemma5}
The LR value of the condition number ML estimator is a monotonically increasing function with respect to the condition number $K_{\max}$ and there is only one unique $K_{{\max}_{\text{EL}}}$.
\end{lem}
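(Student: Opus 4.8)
The plan is to mirror the arguments of Lemma~\ref{Lemma1} and Lemma~\ref{Lemma2}, but to lean on the variational characterization of the condition--number constrained ML (CNCML) estimator rather than on the piecewise formulas. Throughout I assume $d_1>\sigma^2$ (otherwise, by case~1 of Lemma~\ref{Lemma4}, $\hat{\mb R}_{\text{CNCML}}=\sigma^2\mb I$ for every $\Kmax$ and the problem is trivial) and that $\sigma^2$ is fixed. First I would recall that, since $\mb R_{\text{CNCML}}$ shares the eigenvectors of $\mb S$, the LR value collapses to a function of the eigenvalue ratios $d_i/\lambda_i$ exactly as in \eqref{Eq:SimplifiedLR}. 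Combining this with the elementary identity $\log\lr(\mb R,\mb Z)=\big(N+\log|\mb S|\big)-\big(\tr\{\mb R^{-1}\mb S\}+\log(|\mb R|)\big)$ shows that $\lr\big(\mb R_{\text{CNCML}}(\Kmax),\mb Z\big)=g\big(V(\Kmax)\big)$, where $g$ is strictly decreasing and $V(\Kmax)$ is the optimal value of the CNCML minimization problem \eqref{Eq:InitialProblemRCML} with $\sigma^2$ held fixed. Because increasing $\Kmax$ only enlarges the feasible set $\{\mb R=\sigma^2\mb I+\mb R_c:\ \mb R_c\succeq\mb 0,\ \lambda_{\max}(\mb R)/\lambda_{\min}(\mb R)\le \Kmax\}$, the value $V(\Kmax)$ is nonincreasing, hence $\lr\big(\mb R_{\text{CNCML}}(\Kmax),\mb Z\big)$ is nondecreasing in $\Kmax$ --- which is the monotonicity assertion.

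Next I would upgrade this to \emph{strict} monotonicity on the range of $\Kmax$ where the estimator genuinely depends on $\Kmax$. By Lemma~\ref{Lemma4} (its cases~3 and~4), whenever $1\le \Kmax<d_1/\sigma^2$ the CNCML estimate has condition number exactly $\Kmax$, i.e.\ the constraint is active; for $\Kmax\ge d_1/\sigma^2$ the estimate is the FML estimate (case~2), so $V$, and hence the LR value, is constant there. So take $1\le K_1<K_2<d_1/\sigma^2$. The $K_1$--optimal estimate is feasible for the $K_2$--problem, so $V(K_2)\le V(K_1)$; if equality held, the $K_1$--optimal estimate would also be a $K_2$--minimizer, but the $K_2$--problem has a \emph{unique} minimizer --- after the change of variables $\mb X=\sigma^2\mb R^{-1}$ the objective $\mb d^T\bs\lambda-\mb 1^T\log\bs\lambda$ is strictly convex in $\bs\lambda$ and the feasible eigenvalue set is convex, while the optimal eigenvectors are pinned to $\mb V$ when $\mb S$ has distinct eigenvalues --- and that minimizer has condition number $K_2\ne K_1$, a contradiction. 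Hence $V(K_2)<V(K_1)$ and $\Kmax\mapsto\lr\big(\mb R_{\text{CNCML}}(\Kmax),\mb Z\big)$ is strictly increasing on $[1,d_1/\sigma^2)$.

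Finally, uniqueness of $K_{{\max}_{\text{EL}}}$ follows from strict monotonicity plus continuity: on $[1,d_1/\sigma^2)$ the strictly increasing continuous map $\Kmax\mapsto\lr\big(\mb R_{\text{CNCML}}(\Kmax),\mb Z\big)$ attains the reference level $\lr_0$ at most once, so the minimizer of $\big|\lr(\mb R_{\text{CNCML}}(\Kmax),\mb Z)-\lr_0\big|^2$ in \eqref{Eq:OptimalCN} is unique --- it equals $\Kmax=1$ when $\lr_0\le\lr\big(\mb R_{\text{CNCML}}(1),\mb Z\big)$ and the interior root otherwise. The flat branch $\Kmax\ge d_1/\sigma^2$, on which the LR value equals that of the FML estimate, only competes for the minimum in the degenerate case $\lr_0\ge\lr(\hat{\mb R}_{\text{FML}},\mb Z)$, which I would either exclude or resolve by taking $K_{{\max}_{\text{EL}}}=d_1/\sigma^2$.

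I expect the main obstacle to be the strict--monotonicity step inside case~4 of Lemma~\ref{Lemma4}, where the optimal eigenvalues depend on $\Kmax$ only implicitly through the auxiliary minimizer $\bar u(\Kmax)$ of $\sum_i G_i(u)$: a direct finite--difference computation in the spirit of Lemmas~\ref{Lemma1}--\ref{Lemma2} would force one to track $d\bar u/d\Kmax$ across the breakpoints of the piecewise--linear $G_i$, which is unpleasant. The uniqueness--of--the--optimum argument above is precisely the device that sidesteps this, so the technical care shifts to (i) justifying that uniqueness --- strict convexity in the transformed variable and distinctness of the eigenvalues of $\mb S$ --- and (ii) verifying continuity of $\lr\big(\mb R_{\text{CNCML}}(\Kmax),\mb Z\big)$ at the transition $\Kmax=d_1/\sigma^2$ between the active--constraint regime and the FML regime, which is immediate since the two formulas in Lemma~\ref{Lemma4} agree there.
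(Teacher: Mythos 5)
Your proof is correct, but it takes a genuinely different route from the paper's. The paper proves Lemma~\ref{Lemma5} by brute force: it walks through the cases of Lemma~\ref{Lemma4}, writes out $\lr(\Kmax)$ explicitly from the closed-form eigenvalues, differentiates in each regime (including a separate sub-argument that the auxiliary minimizer $u$ of $\sum_i G_i(u)$ decreases as $\Kmax$ increases, and then that $\lr$ is decreasing in $u$ and increasing in $\Kmax$ separately), and patches the pieces together by continuity at the breakpoints where $p$ and $q$ change. You replace essentially all of that with a variational argument: since $\log\lr(\mb R,\mb Z)=N+\log|\mb S|-\bigl(\tr\{\mb R^{-1}\mb S\}+\log|\mb R|\bigr)$ and the CNCML estimate minimizes the bracketed quantity over a feasible set that is nested in $\Kmax$, the LR value of the estimate is automatically nondecreasing; strictness on $[1,d_1/\sigma^2)$ then follows from uniqueness of the minimizer together with Lemma~\ref{Lemma4}'s statement that the constraint is active there (condition number exactly $\Kmax$), so a common minimizer for $K_1\neq K_2$ is impossible. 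This buys a much shorter and more robust proof that never touches $d\bar u/d\Kmax$ or the piecewise structure of the $G_i$; what it costs is two extra hypotheses the paper's computation does not need to make explicit --- uniqueness of the ML minimizer (you pin the eigenvectors via distinctness of the eigenvalues of $\mb S$, a genericity assumption) and reliance on the active-constraint assertion of Lemma~\ref{Lemma4} as a black box. Your treatment of uniqueness of $K_{{\max}_{\text{EL}}}$ (strict increase plus continuity gives at most one crossing of $\lr_0$, with the flat FML branch only competing in a degenerate case) is also cleaner than, and consistent with, what the paper leaves implicit.
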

\begin{proof}
\begin{enumerate}
   \item $d_1\leq \sigma^2$
    \be
    \hat{\mb R}_{\text{CN}} = \sigma^2 \mb I
    \ee
    In this case, $\hat{\mb R}_{\text{CN}}$ does not change, so $\lr(\Kmax)$ is a constant.

    \item $\sigma^2 \leq d_1 \leq \sigma^2 K_{\max}$
    \be
    \hat{\mb R}_{\text{CN}} = \hat{\mb R}_{\text{FML}}
    \ee
    In this case, $\hat{\mb R}_{\text{CN}}$ does not change, so $\lr(\Kmax)$ is a constant.

    \item $d_1 > \sigma^2 K_{\max}$ and $K_{\max} \geq \frac{\sum_{i=1}^p d_i}{c - \sum_{\bar N + 1}^N (d_i -1)}$
    \be
    \hat{\mb R}_{\text{CN}} = \mb\Phi \diag(\bs\lambda^*) \mb\Phi^H
    \ee
    where
    \be
    \bs\lambda^\star = \big[ \sigma^2 K_{\max}, \ldots, \sigma^2 K_{\max}, d_{p+1}, \ldots, d_{\bar N}, \sigma^2,\ldots,\sigma^2  \big],
    \ee
     $p$ and $\bar N$ are the largest indices so that $d_p > \sigma^2 K_{\max}$ and $d_{\bar N} \geq \sigma^2$, respectively.
     \beaa
     \lefteqn{\lr(\Kmax)}\nonumber\\ & = & \frac{\prod_{i=1}^N \frac{d_i}{\lambda_i} e^N }{ \exp (\sum_{i=1}^N \frac{d_i}{\lambda_i}) }\\
     & = & \frac{\ds\prod_{i=1}^p \frac{d_i}{\sigma^2 \Kmax} \cdot \ds\prod_{i=p+1}^{\bar N} 1\cdot \ds\prod_{i=\bar N + 1}^N \frac{d_i}{\sigma^2} \cdot e^N}{ \exp(\ds\sum_{i=1}^p \frac{d_i}{\sigma^2 \Kmax} + \ds\sum_{i=p+1}^{\bar N} 1 + \ds\sum_{i=\bar N + 1}^N \frac{d_i}{\sigma^2} )}\\
     & = & \frac{\prod_{i=1}^p \frac{d_i}{\sigma^2 \Kmax} \cdot \prod_{i=\bar N + 1}^N \frac{d_i}{\sigma^2} \cdot e^N}{ \exp(\ds\sum_{i=1}^p \frac{d_i}{\sigma^2 \Kmax}) \cdot e^{\bar N - p}  \cdot \exp( \ds\sum_{i=\bar N + 1}^N \frac{d_i}{\sigma^2} )}
     \eeaa

     \begin{enumerate}
        \item within the range where $p$ remains same
       \beaa
        \lefteqn{\lr(\Kmax)}\nonumber\\  & = & \frac{\prod_{i=1}^p \frac{d_i}{\sigma^2 \Kmax} \cdot \prod_{i=\bar N + 1}^N \frac{d_i}{\sigma^2} \cdot e^N}{ \exp(\ds\sum_{i=1}^p \frac{d_i}{\sigma^2 \Kmax}) \cdot e^{\bar N - p}  \cdot \exp( \ds\sum_{i=\bar N + 1}^N \frac{d_i}{\sigma^2} )}\\
         & = & c_1 \frac{\prod_{i=1}^p \frac{d_i}{\sigma^2 \Kmax}}{\exp(\sum_{i=1}^p \frac{d_i}{\sigma^2 \Kmax})}\\
        & = & c_1 \frac{\frac{1}{(\sigma^2 \Kmax)^p} \prod_{i=1}^p d_i}{\exp( \frac{1}{\sigma^2 \Kmax} \sum_{i=1}^p d_i)}\\
        & = & c_1 \frac{\frac{1}{(\sigma^2 \Kmax)^p} \prod_{i=1}^p d_i}{(\exp( \sum_{i=1}^p d_i))^\frac{1}{\sigma^2 \Kmax}}\\
        & = & c_2 \frac{ (\frac{1}{\Kmax})^p}{c_3^\frac{1}{\Kmax}}\\
        & = & c_2 \frac{1}{(\Kmax)^p \cdot c_3^\frac{1}{\Kmax}}\label{Eq:LRKmax}
        \eeaa
        where $c_1 = \frac{\prod_{i=\bar N + 1}^N \frac{d_i}{\sigma^2} \cdot e^N}{\exp (\bar N - p)  \cdot \exp( \sum_{i=\bar N + 1}^N \frac{d_i}{\sigma^2} )}$, $c_2 = c_1 \frac{ \prod_{i=1}^p d_i}{\sigma^{2p}}$, and $c_3 = \exp( \frac{1}{\sigma^2}\sum_{i=1}^p d_i)$.\\
        Now let's evaluate the first derivative of the denominator of Eq. \eqref{Eq:LRKmax}.
        \beaa
        \lefteqn{((\Kmax)^p \cdot c_3^\frac{1}{\Kmax})^\prime}\nonumber\\ & = & p (\Kmax)^{p-1} c_3^\frac{1}{\Kmax} + (\Kmax)^p \frac{c_3^\frac{1}{\Kmax} \log c_3}{-(\Kmax)^2}\\
        & = & p (\Kmax)^{p-1} c_3^\frac{1}{\Kmax} - (\Kmax)^{p-2} c_3^\frac{1}{\Kmax} \log c_3\\
        & = & (\Kmax)^{p-2} c_3^\frac{1}{\Kmax} (p\Kmax- \log c_3)\\
        & = & (\Kmax)^{p-2} c_3^\frac{1}{\Kmax} (p\Kmax- \frac{1}{\sigma^2}\sum_{i=1}^p d_i)
        \eeaa
        Since $d_1 > d_2 > \cdots > d_p > \sigma^2 \Kmax$,
        \be
        p\Kmax- \frac{1}{\sigma^2}\sum_{i=1}^p d_i < 0
        \ee
        This implies the denominator of Eq. \eqref{Eq:LRKmax} is a decreasing function, and therefore, $LR(\Kmax)$ is a increasing function with respect to $\Kmax$.

        \item $p \rightarrow p+1$ as $\Kmax$ decreases\\
        The $\lr(\Kmax)$ is a continuous function since $\lambda_{p+1} = d_{p+1}$ at the moment that $\sigma^2 \Kmax = d_{p+1}$ and there is no discontinuity of $\lambda_i$. Therefore, $\lr(\Kmax)$ is an increasing function in this case.
    \end{enumerate}

    \item $d_1 > \sigma^2 K_{\max}$ and $K_{\max} < \frac{\sum_{i=1}^c d_i}{c - \sum_{\bar N + 1}^N (d_i -1)}$\\
    \be
    \hat{\mb R}_{\text{CN}} = \mb\Phi \diag(\bs\lambda^*) \mb\Phi^H
    \ee
    where
    \be
    \bs\lambda^\star = \big[ \frac{\sigma^2}{u}, \ldots, \frac{\sigma^2}{u}, d_{p+1}, \ldots, d_q,\frac{\sigma^2}{u K_{\max}},\ldots,\frac{\sigma^2}{uK_{\max}}  \big]
    \ee
    $p$, $q$, and $\bar N$ are the vector of the eigenvalues of the estimate, the largest indices so that $d_p > \frac{\sigma^2}{u}$, $d_q > \frac{\sigma^2}{u \Kmax}$, and $d_{\bar N} \geq \sigma^2$, respectively.

    Before we prove the increasing property of $\lr(\Kmax)$, we show $u$ decreases as $\Kmax$ increases. $u$ is the optimal solution of the optimization problem. In this case, $u^\star$, the optimal solution of the optimization problem \eqref{Eq:OptimizationU} is obtained by making the first derivative of the cost function 0. Let $u_1$ and $u_2$ be the optimal solutions for $\Kmax_1$ and $\Kmax_2$, respectively. Then, $\sum_{i=1}^N G_i^\prime(u_1) = 0$ for $\Kmax_1$. Since $\frac{1}{d_i} \leq u_1 \leq \frac{1}{\Kmax_1}$ in this case, for $\Kmax_2 < \Kmax_1$, the value of $G_i^\prime (u_1)$ decreases for $d_i \leq 1$. $G_i^\prime(u)$ also decreases for $d_i >1$ and $u \leq \frac{1}{\Kmax d_i}$ and remain same for $d_i > 1$ and $\frac{1}{\Kmax d_i} < u$. Therefore, $\sum_{i=1}^N G_i^\prime(u_1) < 0$ for $\Kmax_2$. Finally, since $\sum_{i=1}^N G_i^\prime(u_2)$ must be zero for $\Kmax_2$, it is obvious that $u_1 < u_2$. This shows that $u$ decreases as $\Kmax$ increases.

    Now we show the increasing property of $\lr(\Kmax)$.
    \begin{enumerate}
        \item within the range where $p$ and $q$ remain same\\
        In this case, We show $\lr(u)$ is a decreasing function of $u$ and an increasing function of $\Kmax$ for each of $u$ and $\Kmax$.
        \begin{enumerate}
        \item Proof of $\lr(u)$ is a decreasing function.
        \beaa
        \lefteqn{\lr(u)}\nonumber\\ & = & \frac{\prod_{i=1}^p \frac{u d_i}{\sigma^2} \cdot \prod_{i=q+1}^{\bar N} \frac{\Kmax u d_i}{\sigma^2} \cdot e^N}{ \exp(\sum_{i=1}^p \frac{u d_i}{\sigma^2} + \sum_{i=p+1}^{q} 1}\nonumber\\
         && \: \frac{}{+ \sum_{i=q + 1}^N \frac{\Kmax u d_i}{\sigma^2} )}\\
        & = & \frac{u^p \prod_{i=1}^p \frac{d_i}{\sigma^2} \cdot u^{N-q} \prod_{i=q+1}^{\bar N} \frac{\Kmax d_i}{\sigma^2} \cdot e^N}{ \exp( u (\sum_{i=1}^p \frac{d_i}{\sigma^2} + \sum_{i=q+1}^N \frac{\Kmax d_i}{\sigma^2})}\nonumber\\
        && \: \frac{}{ + q - p )}\\
        & = & \frac{c_1 u^{N-q+p}}{ \exp( c_2 u + c_3 )}\\
        & = & c_4 \frac{u^{N-q+p}}{c_5^u}\label{Eq:LR_u}
        \eeaa
        where $c_1 = \prod_{i=1}^p \frac{d_i}{\sigma^2} \cdot \prod_{i=q+1}^{\bar N} \frac{\Kmax d_i}{\sigma^2}\cdot e^N$, $c_2 = \sum_{i=1}^p \frac{d_i}{\sigma^2} + \sum_{i=q+1}^N \frac{\Kmax d_i}{\sigma^2}$, $c_3 = q-p$, $c_4 = \frac{c_1}{e^{c_3}}$, and $c_5 = e^{c_2}$.
        The first derivative of Eq. \eqref{Eq:LR_u} is obtained by
        \beaa
        \lefteqn{\lr^\prime(u)}\nonumber\\ & = & (N-q+p) u^{N-q+p-1}c_5^{-u}\nonumber\\
        &&  - \:  u^{N-q+p} \log c_5 \cdot c_5^{-u}\\
        & = & u^{N-q+p-1}c_5^{-u}(N-q+p  - u\log c_5)\\
        & = & u^{N-q+p-1}c_5^{-u}(N-q+p  - c_2 u)\\
        & = & u^{N-q+p-1}c_5^{-u}(N-q+p\nonumber\\
        && - \:  u(\sum_{i=1}^p \frac{d_i}{\sigma^2} + \sum_{i=q+1}^N \frac{\Kmax d_i}{\sigma^2}))
        \eeaa
        Since $\frac{\sigma^2}{u} \leq d_p$,
        \beaa
        \lefteqn{N-q+p  - u(\sum_{i=1}^p \frac{d_i}{\sigma^2} + \sum_{i=q+1}^N \frac{\Kmax d_i}{\sigma^2})}\nonumber\\ & \leq & N-q+p - u(\frac{p}{u} + \frac{N-q}{u}\cdot \Kmax)\\
        & = & N - q - \Kmax(N-q)
        \eeaa
        Since $\Kmax > 1$, $\lr^\prime(u)<0$ which implies $\lr(u)$ is a decreasing function with respect to $u$.

        \item Proof of $\lr(\Kmax)$ is an increasing function.
        \beaa
        \lefteqn{\lr(\Kmax)}\nonumber\\ & = & \frac{\prod_{i=1}^p \frac{u d_i}{\sigma^2} \cdot \prod_{i=q+1}^{\bar N} \frac{\Kmax u d_i}{\sigma^2} \cdot e^N}{ \exp(\sum_{i=1}^p \frac{u d_i}{\sigma^2} + \sum_{i=p+1}^{q} 1}\nonumber\\
         && \frac{}{+\sum_{i=q + 1}^N \frac{\Kmax u d_i}{\sigma^2} )}\\
        & = & \frac{c_1 \Kmax^{N-q}}{ \exp( c_2 \Kmax + c_3)}\\
        & = & c_4 \frac{\Kmax^{N-q}}{c_5^{\Kmax}}\label{Eq:LR_Kmax}
        \eeaa
        where $c_1 = \prod_{i=1}^p \frac{u d_i}{\sigma^2} \cdot \prod_{i=q+1}^{\bar N} \frac{u d_i}{\sigma^2} \cdot e^N$, $c_2 = \sum_{i=q+1}^N \frac{ud_i}{\sigma^2}$, $c_3 = \sum_{i=1}^p \frac{u d_i}{\sigma^2} + q - p$, $c_4 = \frac{c_1}{e^{c_3}}$, and $c_5 = e^{c_2}$. The first derivative is
      \bea
        \lefteqn{\lr^\prime(\Kmax)}\nonumber\\ & = & (N-q) \Kmax^{N-q-1}c_5^{-\Kmax} \nonumber\\
        && -\: \Kmax^{N-q} \log c_5 \cdot c_5^{-\Kmax}\\
        & = & \Kmax^{N-q-1}\nonumber\\
        && \times \: c_5^{-\Kmax}(N-q  - \Kmax\log c_5)\\
        & = & \Kmax^{N-q+p-1}\nonumber\\
        && \times \: c_5^{-u}(N-q  - c_2 \Kmax)\\
        & = & \Kmax^{N-q+p-1}\nonumber\\
        && \times \: c_5^{-u}(N-q  - \Kmax \sum_{i=q+1}^N \frac{ud_i}{\sigma^2})
        \eea
        Since $\frac{\sigma^2}{u\Kmax} \leq d_{q+1}$,
        \bea
        \lefteqn{N-q  -\Kmax \sum_{i=q+1}^N \frac{ud_i}{\sigma^2}}\nonumber\\ & \geq & N-q - \Kmax (\frac{N-q}{\Kmax}) = 0
        \eea
        Therefore, $\lr^\prime(\Kmax) \geq 0$ and $\lr(\Kmax)$ is an increasing function with respect to $\Kmax$.
        \end{enumerate}
        These two proofs show that $\lr(u,\Kmax)$ is an increasing function with respect to $\Kmax$.

   \item $p$ and $q$ changes as $\Kmax$ decreases\\
        The $\lr(u, \Kmax)$ is a continuous function, and therefore, $\lr(u,\Kmax)$ is an increasing function in this case.
    \end{enumerate}

\end{enumerate}
\end{proof}

Lemma \ref{Lemma5} formally proves that the there exist only one optimal condition number and therefore we can find the optimal condition number numerically. The algorithm of finding the optimal condition number is shown in Algorithm \ref{Alg:ConditionNumber}. We first set the initial condition number as the ML condition number obtained by \cite{Aubry12}. Then we increase or decrease the condition number to the direction where the LR value decreases. Reducing the stepsize as the direction is reversed, we find the optimal condition number as precisely as we want.

\begin{algorithm}[t]
\caption{The proposed algorithm to select condition number via EL }
\label{Alg:ConditionNumber}
    \begin{algorithmic}[1]
        \STATE Obtain the ML solution of the condition number $K_{\max_{\text{ML}}}$ by the method in \cite{Aubry12} and set the initial value of $K_{\max} = K_{\max_{\text{ML}}}$
        \STATE Set the initial step, $\Delta = K_{\max}/100$
        \STATE Evaluate $\lr(K_{\max}-\Delta)$, $\lr(K_{\max})$, $\lr(K_{\max}+\Delta)$
            \begin{itemize}
                \item if $|\lr(K_{\max_{\text{ML}}}+\Delta) - \lr_0| < |\lr(K_{\max_{\text{ML}}}) - \lr_0|$\\
                $\rightarrow$ increase $K_{\max}$ by $\Delta$ until it does not hold.\\
                $\rightarrow$ then $\Delta = -\Delta/10$
                \item elseif $|\lr(K_{\max_{\text{ML}}}+\Delta) - \lr_0| > |\lr(K_{\max_{\text{ML}}}) - \lr_0|$\\
                $\rightarrow$ decrease $K_{\max}$ by $\Delta$ until it does not hold.\\
                $\rightarrow$ then $\Delta = -\Delta/10$
            \end{itemize}
         \STATE Repeat Step 3 until $\Delta < 0.0001$.
    \end{algorithmic}
\end{algorithm}

\section{Experimental Validation}
\label{Sec:Experiments}

\subsection{Experimental setup}

In this section, we compare the proposed methods with alternative covariance estimation algorithms and parameter estimation algorithms. Two data sets are used in the experiments: 1) a radar covariance simulation model and 2) the KASSPER dataset \cite{Bergin02}.

First, we consider a radar system with an $N$-element uniform linear array for the simulation model. The overall covariance which is composed of jammer and additive white noise can be modeled by
\be
\label{Eq:SimulationModel}
\mb{R}(n,m) = \sum_{i=1}^J \sigma_i^2 \sinc[0.5 \beta_i (n-m) \phi_i ] e^{j(n-m)\phi_i} + \sigma_a^2 \delta(n,m)
\ee
where $n,m \in \{1,\ldots,N\}$, $J$ is the number of jammers, $\sigma_i^2$ is the power associated with the $i$th jammer, $\phi_i$ is the jammer phase angle with respect to the antenna phase center, $\beta_i$ is the fractional bandwidth, $\sigma_a^2$ is the actual power level of the white disturbance term, and $\delta(n,m)$ has the value of 1 only when $n=m$ and 0 otherwise. This simulation model has been widely and very successfully used in previous literature \cite{Steiner00,Aubry12,Pallotta12,DeMaio09} for performance analysis.

Data from the L-band data set of KASSPER program is the other data set used in our experiments. Note that the KASSPER data set exhibits two desirable characteristics: 1) the low-rank structure of clutter and 2) the true covariance matrices for each range bin have been made available. These two characteristics facilitate comparisons via powerful figures of merit. The L-band data set consists of a data cube of 1000 range bins corresponding to the returns from a single coherent processing interval from $11$ channels and $32$ pulses. Therefore, the dimension of observations (or the spatio-temporal product) $N$ is $11 \times 32 = 352$. Other key parameters are detailed in Table \ref{Tb:parameters}.

We measure the normalized signal to interference and noise ratio (SINR). The normalized SINR measure is commonly used in the radar literature and given by
\be
\eta = \dfrac{|\mb{s}^H\hat{\mb{R}}^{-1}\mb{s}|^2}{|\mb{s}^H\hat{\mb{R}}^{-1}\mb{R}\hat{\mb{R}}^{-1}\mb{s}||\mb{s}^H\mb{R}^{-1}\mb{s}|}
\ee
where $\mb s$ is the spatio-temporal steering vector, $\hat{\mb R}$ is the data-dependent estimate of $\mb R$, and $\mb R$ is the true covariance matrix. It is easily seen that $0<\eta<1$ and $\eta=1$ if and only if $\hat{\mb R} = \mb R$. The SINR is plotted in decibels in all our experiments, that is, $\text{SINR} \text{(dB)} = 10 \log_{10} \eta$. Therefore, $\text{SINR} \text{(dB)} \leq 0$. For the KASSPER data set, since the steering vector is a function of both azimuthal angle and Doppler frequency, we obtain plots as a function of one variable (azimuthal angle or Doppler) by marginalizing over the other variable. We evaluate and compare different covariance estimation techniques and parameter selection methods in the following three subsections:
\begin{itemize}
    \item \textbf{Sample Covariance Matrix:} The sample covariance matrix is given by $\mb S = \frac{1}{K} \mb Z \mb Z^H$. It is well known that $\mb S$ is the unconstrained ML estimator under Gaussian disturbance statistics. We refer to this as SMI.
    \item \textbf{Fast Maximum Likelihood:} The fast maximum likelihood (FML) \cite{Steiner00} uses the structural constraint of the covariance matrix. The FML method just involves the eigenvalue decomposition of the sample covariance and perturbing eigenvalues to conform to the structure. The FML also can be considered as the RCML estimator with the rank which is the greatest index $i$ satisfying $\lambda_i > \sigma^2$ where $\lambda_i$'s are the eigenvalues of the sample covariance in descending order. Therefore, a rank can be considered as an output of the FML. The FML's success in radar STAP is widely known \cite{Rangaswamy04Sep}.
    \item \textbf{Rank Constrained ML Estimators:} The RCML estimator with the rank or the rank and the noise level obtained by the proposed methods using the expected likelihood approach. The rank is obtained by the EL approach in the case of the imperfect rank constraint and both of the rank and the noise level are obtained by the EL approach in the case of imperfect rank and noise power constraints. We refer to these as \RCMLEL.
    \item \textbf{Chen \emph{et al.} Rank Selection Method:} Chen \emph{et al.} \cite{Chen01} proposed a statistical procedure for detecting the multiplicity of the smallest eigenvalue of the structured covariance matrix using statistical selection theory. The rank can be estimated from their methods using pre-calculated parameters. We refer to this method as \RCMLC.
    \item \textbf{AIC:} Akaike \cite{Akaike74} proposed the information theoretic criteria for model selection. The AIC selects the model that best fits the data for given a set of observations and a family of models, that is, a parameterized family of probability densities. Wax and Kailath \cite{Wax85} proposed the method to determine the number of signals from the observed data based on the AIC. We compare Wax and Kailath's method.
    \item \textbf{Condition number constrained ML estimators:} The maximum likelihood estimation method of the covariance matrix with a condition number \cite{Aubry12} proposed by Aubry \emph{et al.} is considered for evaluating the performance with three different condition numbers. 1) \CNCML: the condition number obtained by the proposed method in \cite{Aubry12}, 2) \CNCEL: the condition number obtained by the expected likelihood approach, and 3) \CNCtrue: the true condition number.
\end{itemize}

\subsection{Imperfect rank constraint}
\label{Sec:ResultRank}

\begin{figure}
\centering
\includegraphics[scale=0.5]{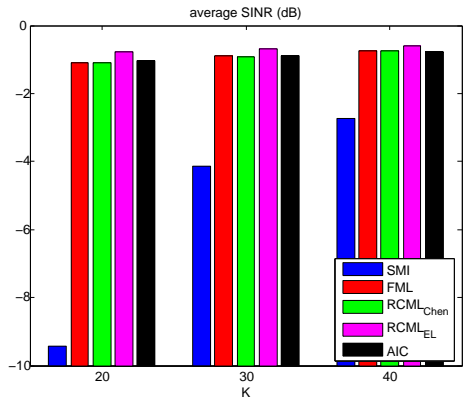}\label{Fig:SINR_Simulation}
\caption{Normalized SINR in dB versus number of training samples $K$ $(N=20)$ for the simulation model.}
\label{Fig:Simulation_rank}
\end{figure}

\begin{figure}[!t]
\begin{center}
\subfigure[$K=N=352$]{\includegraphics[scale=0.5]{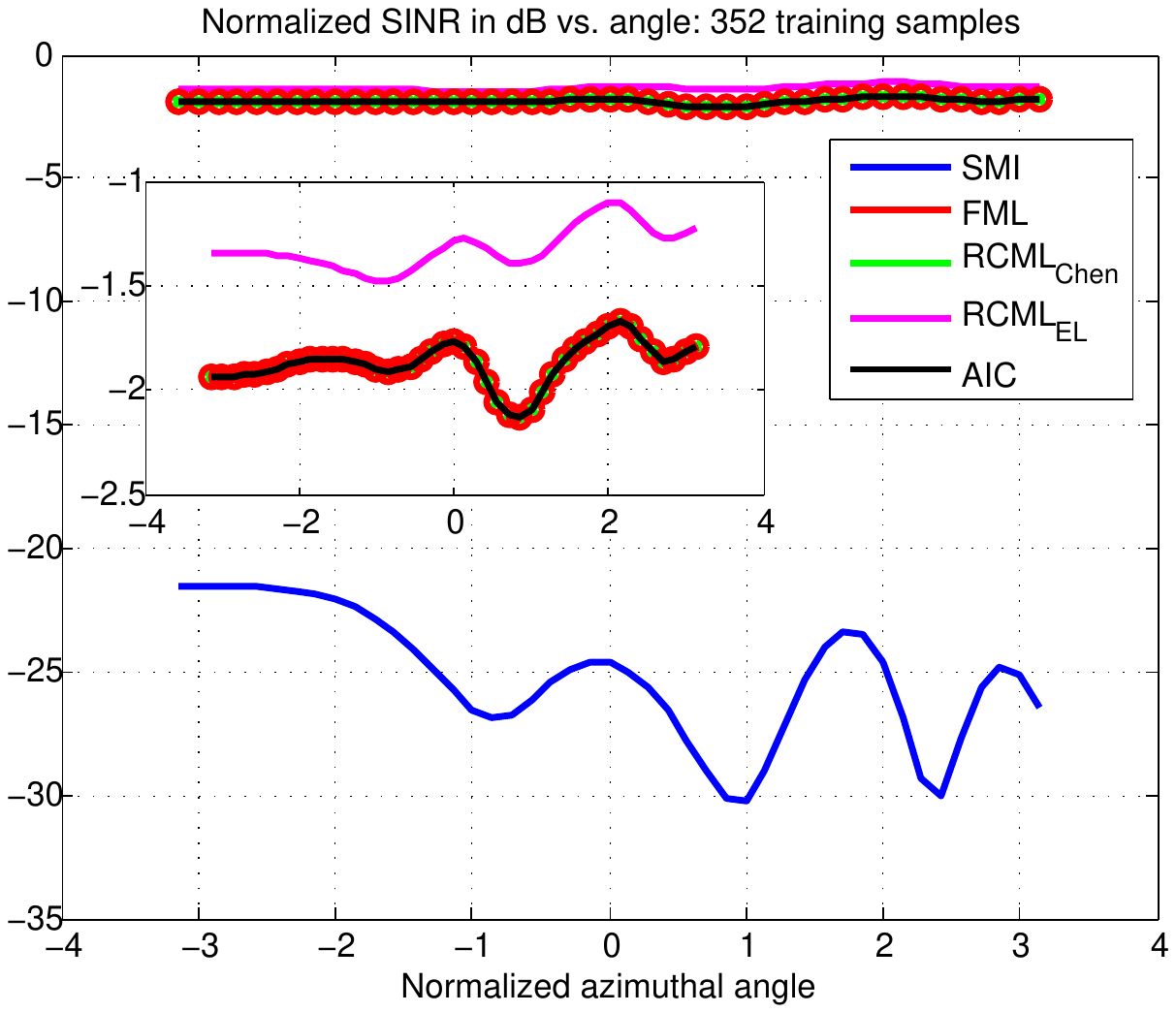}\label{Fig:KASSPER_rank_angle_352}}
\hfil
\subfigure[$K=N=352$]{\includegraphics[scale=0.5]{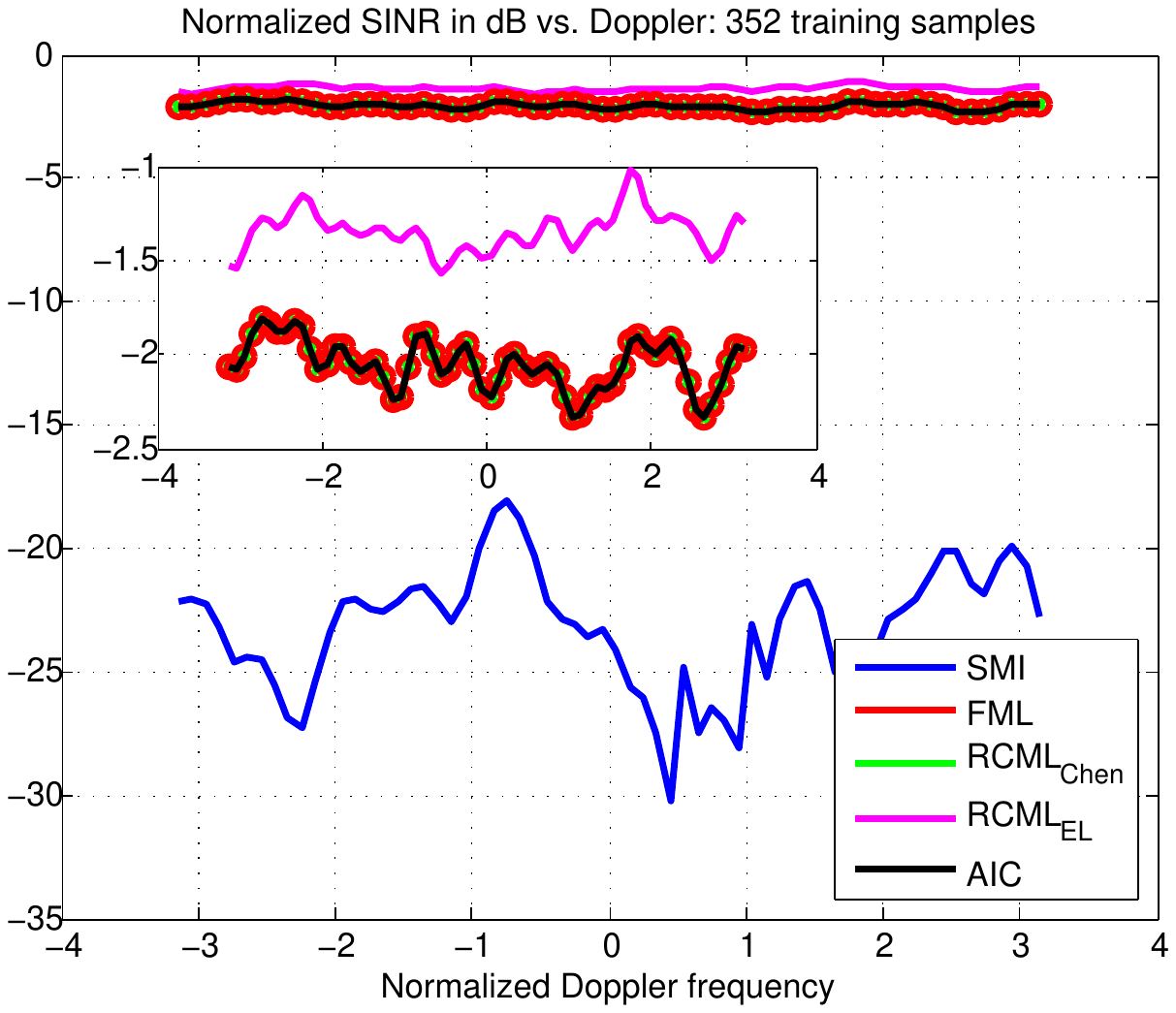}\label{Fig:KASSPER_rank_dop_352}}\\
\subfigure[$K=1.5N=528$]{\includegraphics[scale=0.5]{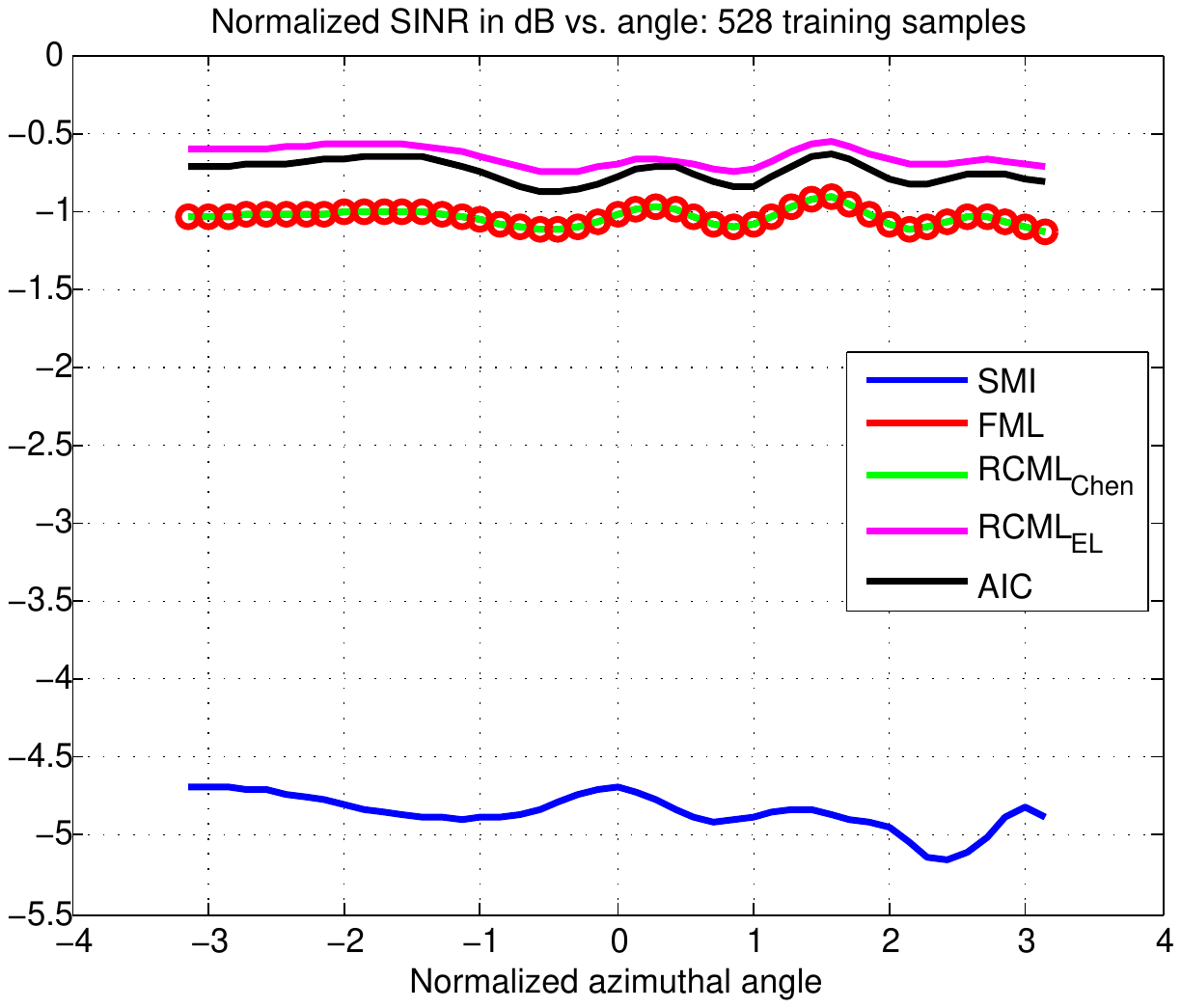}\label{Fig:KASSPER_rank_angle_528}}
\hfil
\subfigure[$K=1.5N=528$]{\includegraphics[scale=0.5]{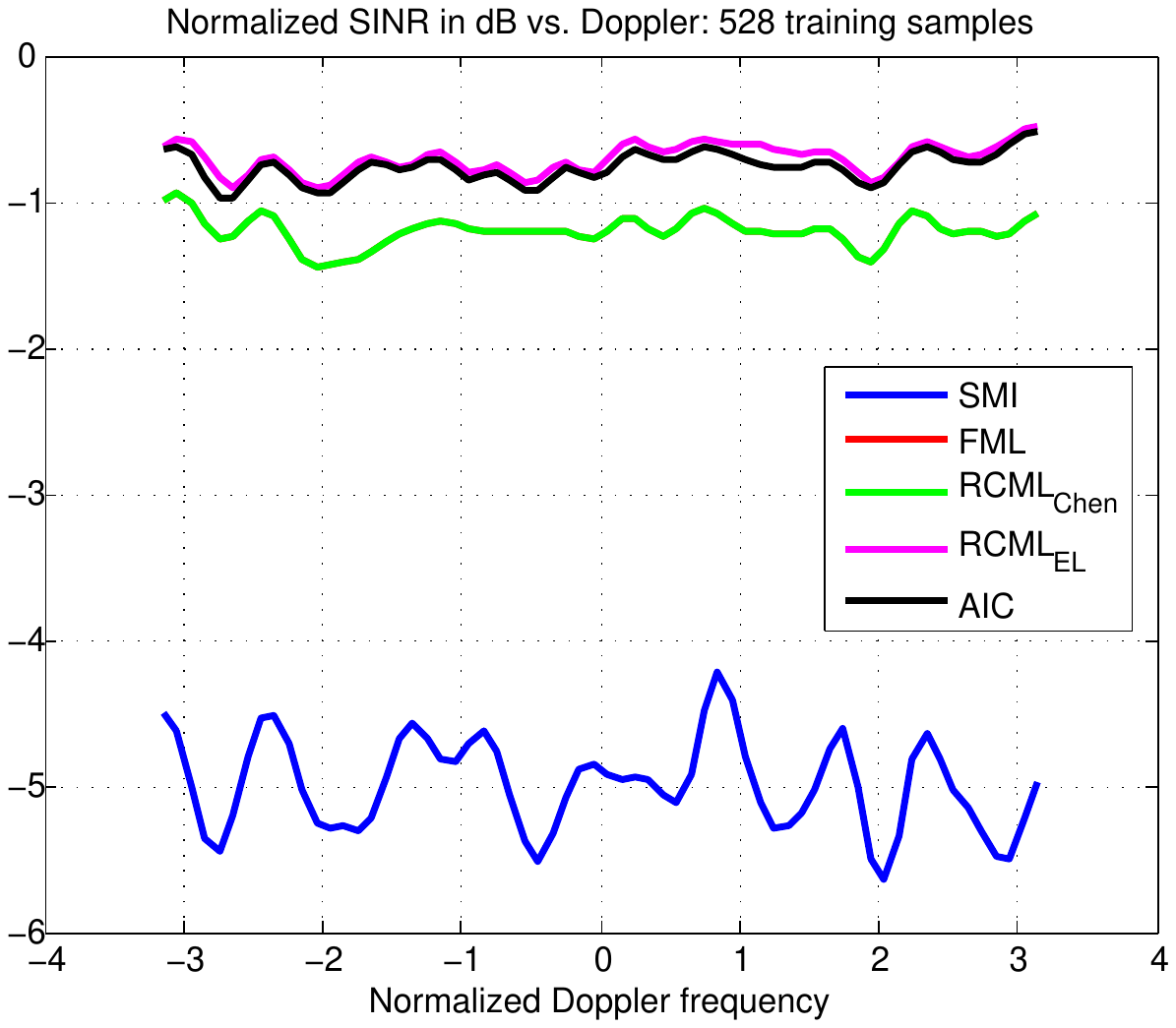}\label{Fig:KASSPER_rank_dop_528}}\\
\subfigure[$K=2N=704$]{\includegraphics[scale=0.5]{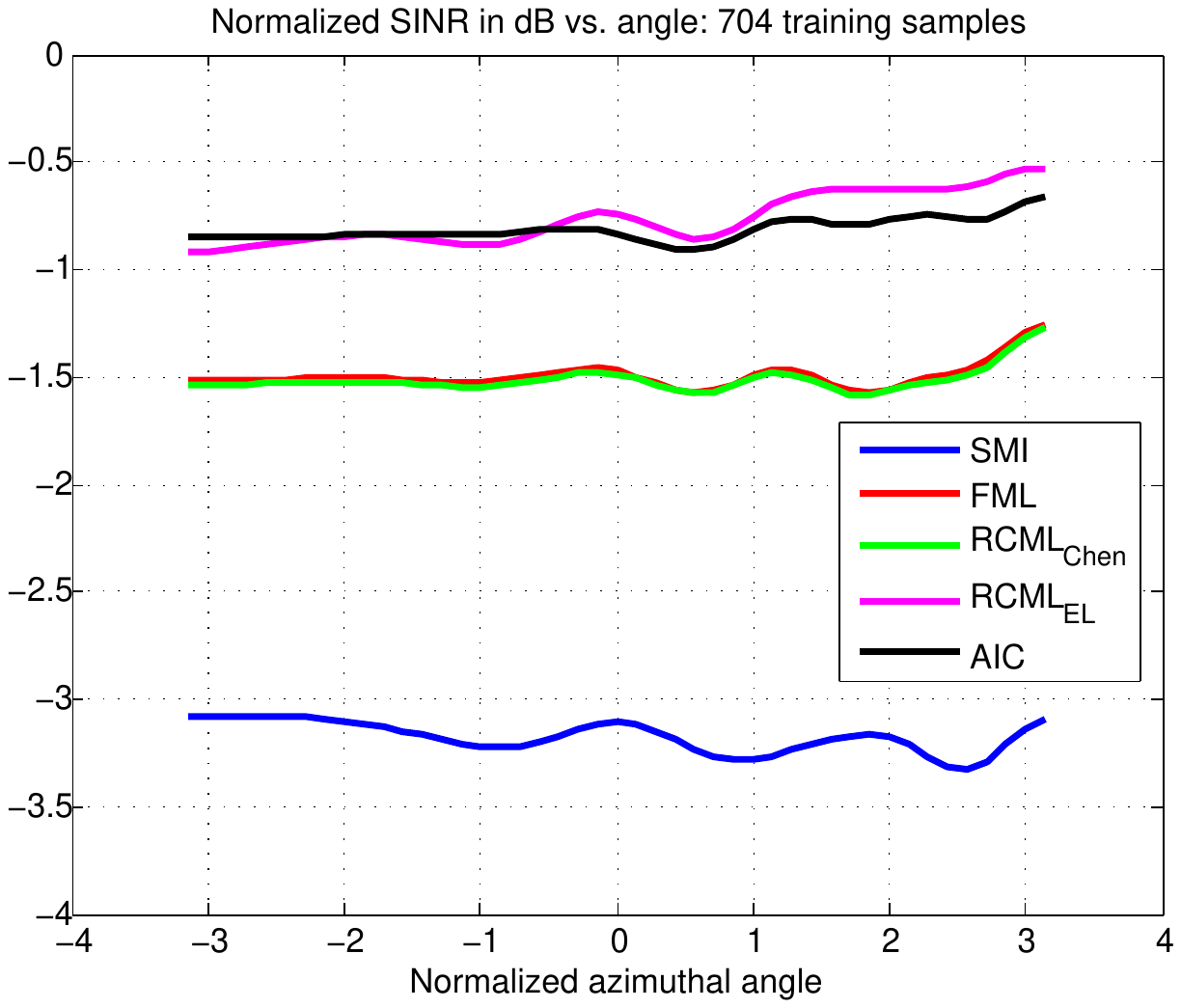}\label{Fig:KASSPER_rank_angle_704}}
\hfil
\subfigure[$K=2N=704$]{\includegraphics[scale=0.5]{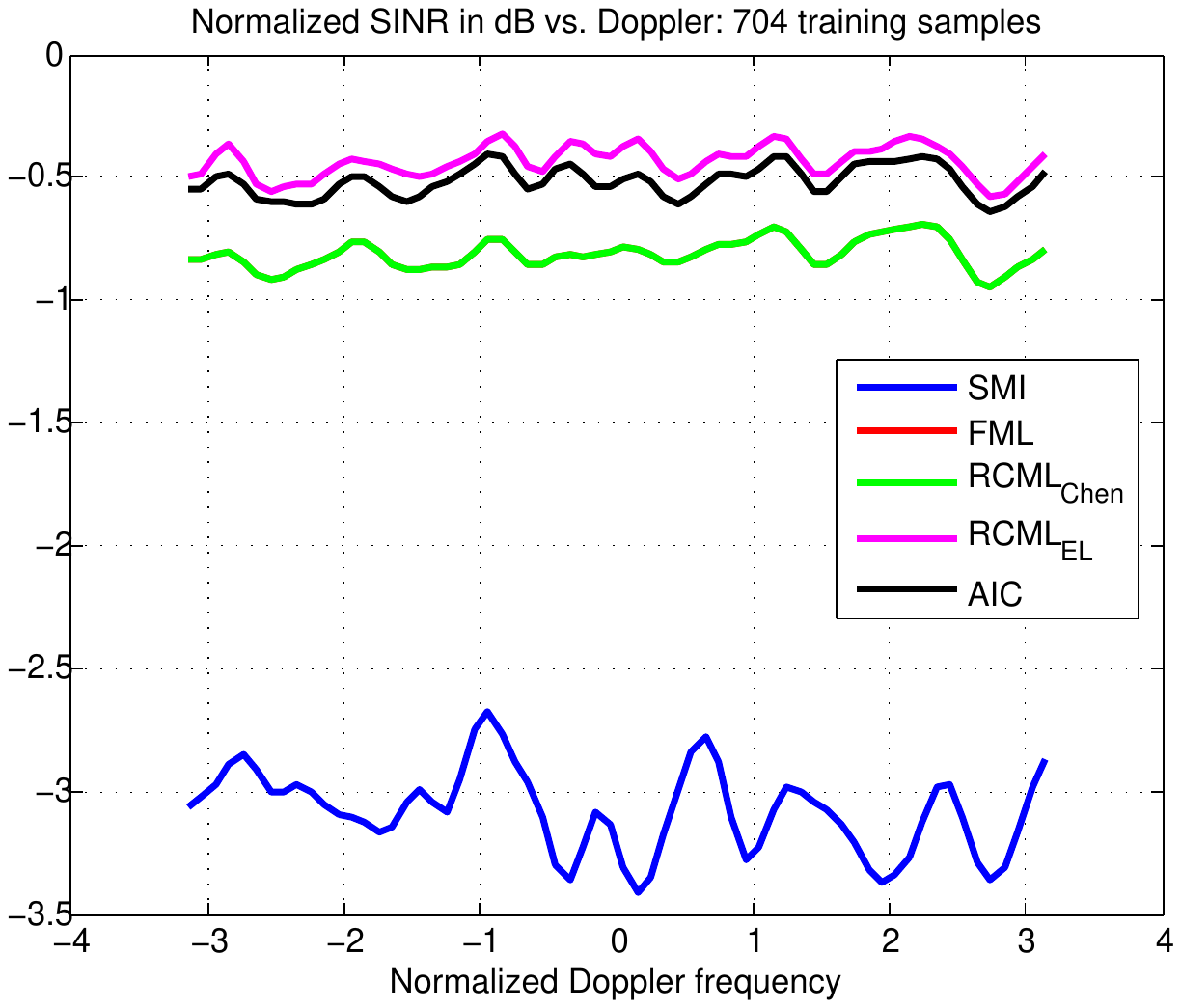}\label{Fig:KASSPER_rank_dop_704}}
\end{center}
\caption{Normalized SINR versus azimuthal angle and Doppler frequency for the KASSPER data set.}
\label{Fig:KASSPER_rank}
\end{figure}

\begin{figure}[!t]
\begin{center}
\subfigure[$K=N=352$]{\includegraphics[scale=0.5]{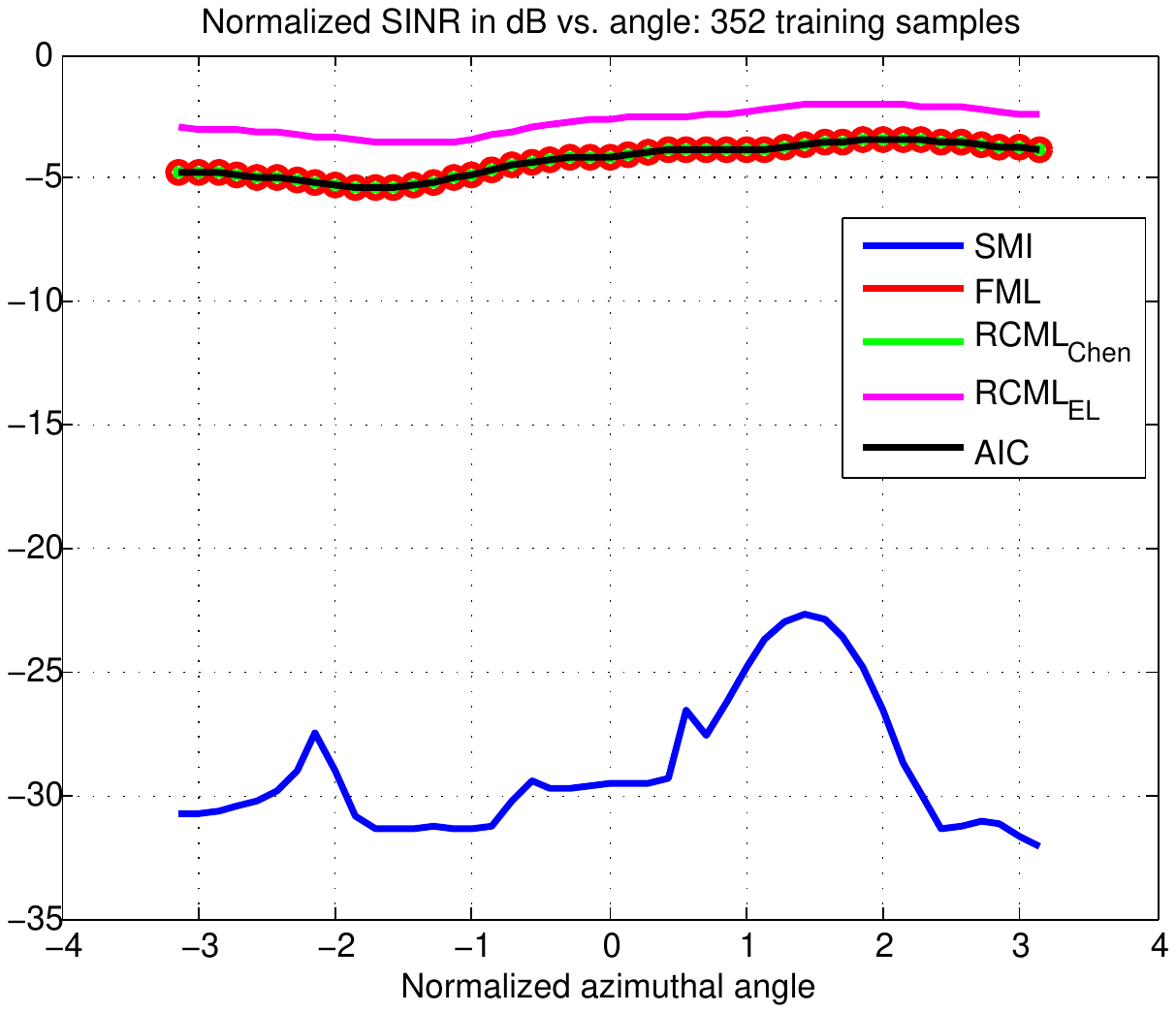}\label{Fig:KASSPER_angle_nonhomo_352}}
\hfil
\subfigure[$K=N=352$]{\includegraphics[scale=0.5]{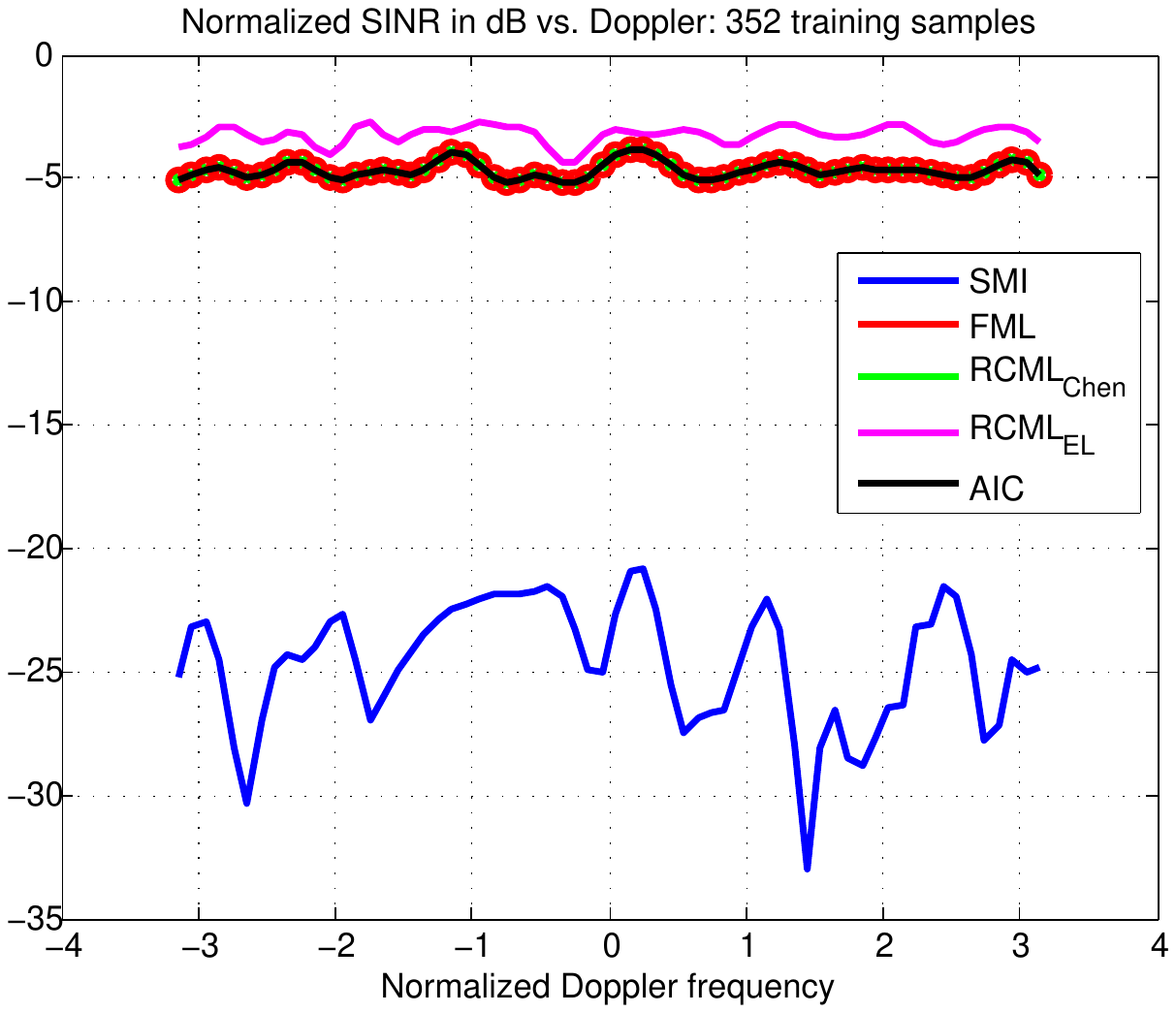}\label{Fig:KASSPER_dop_nonhomo_352}}\\
\subfigure[$K=1.5N=528$]{\includegraphics[scale=0.5]{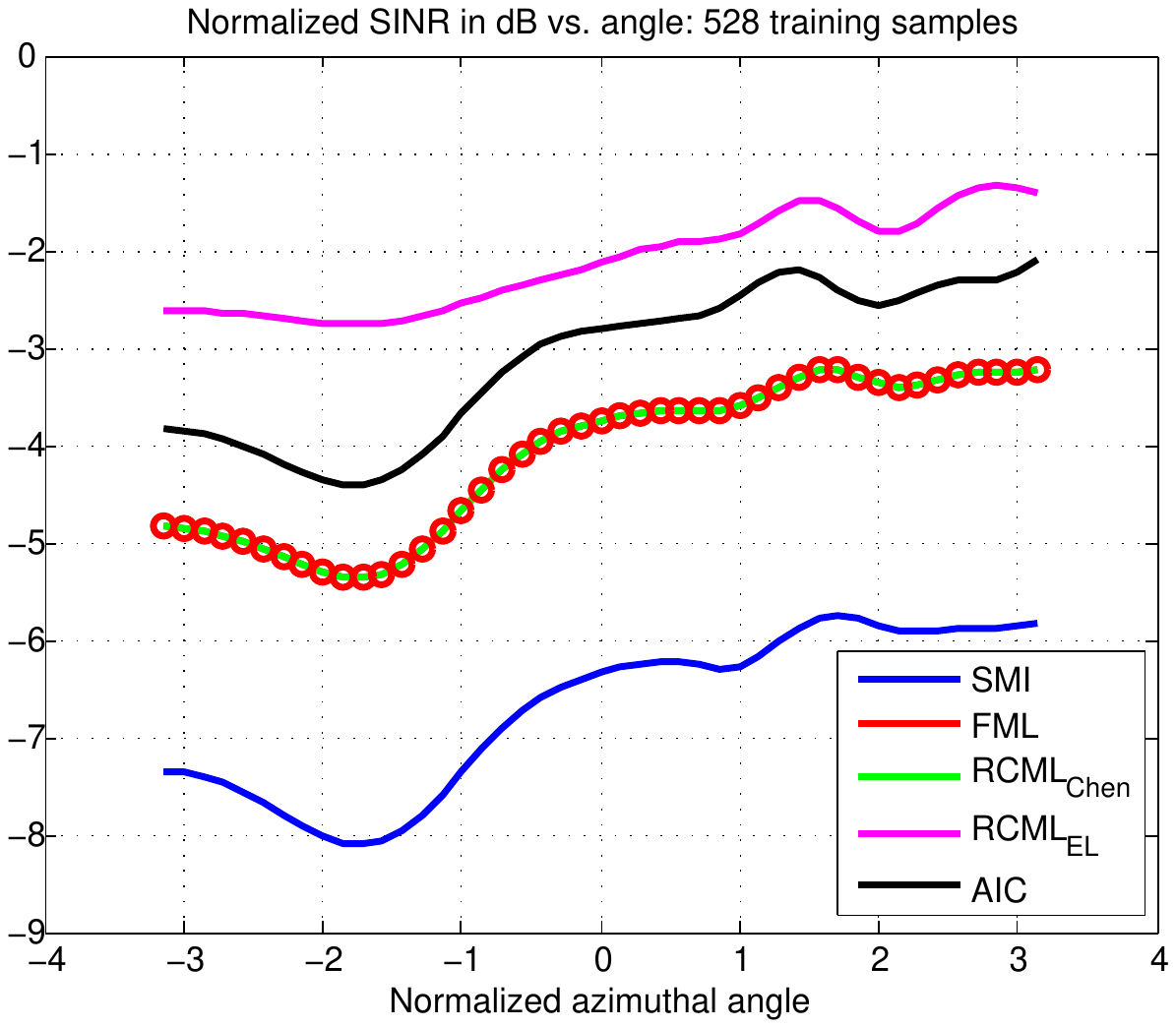}\label{Fig:KASSPER_angle_nonhomo_528}}
\hfil
\subfigure[$K=1.5N=528$]{\includegraphics[scale=0.5]{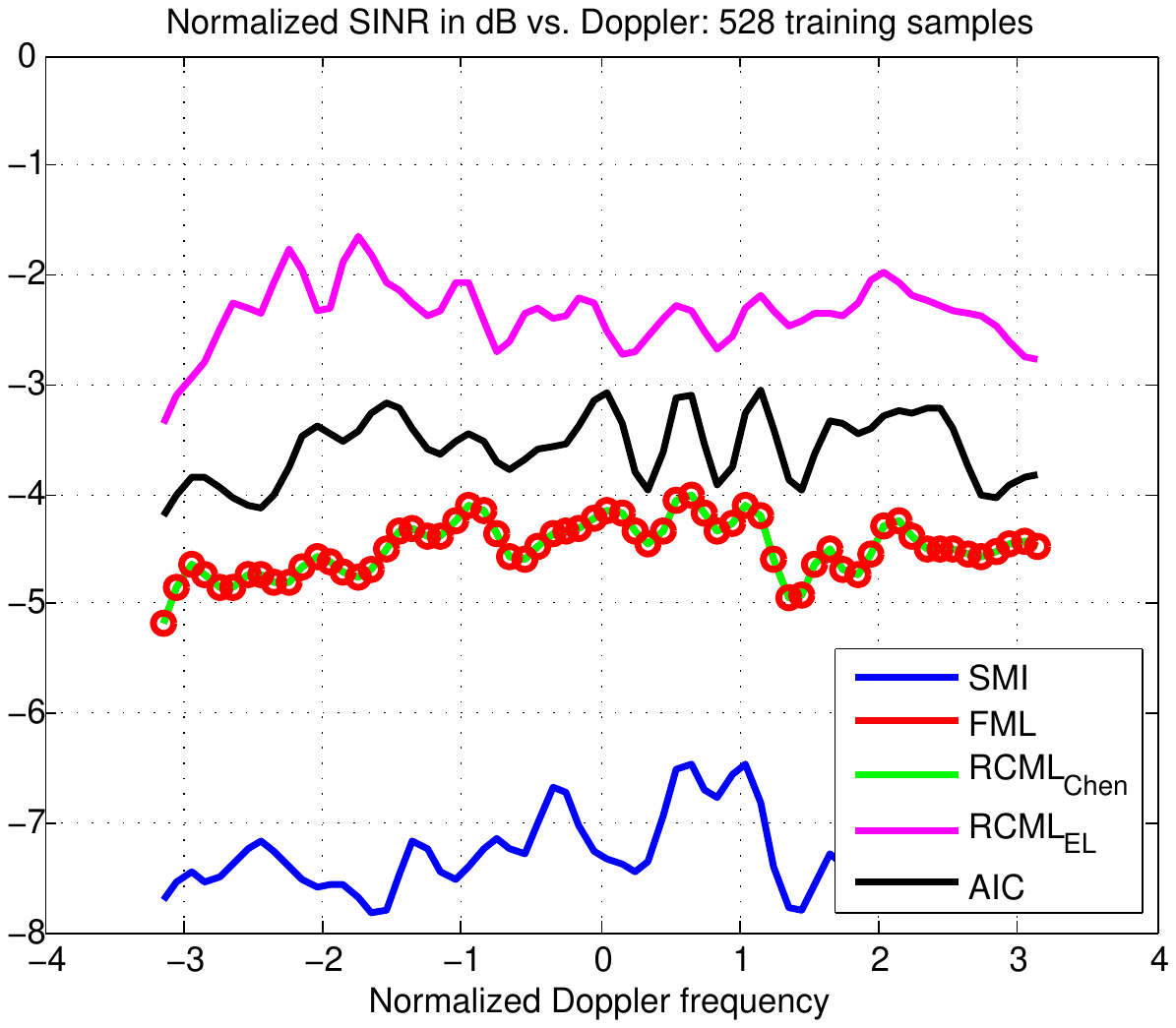}\label{Fig:KASSPER_dop_nonhomo_528}}\\
\subfigure[$K=2N=704$]{\includegraphics[scale=0.5]{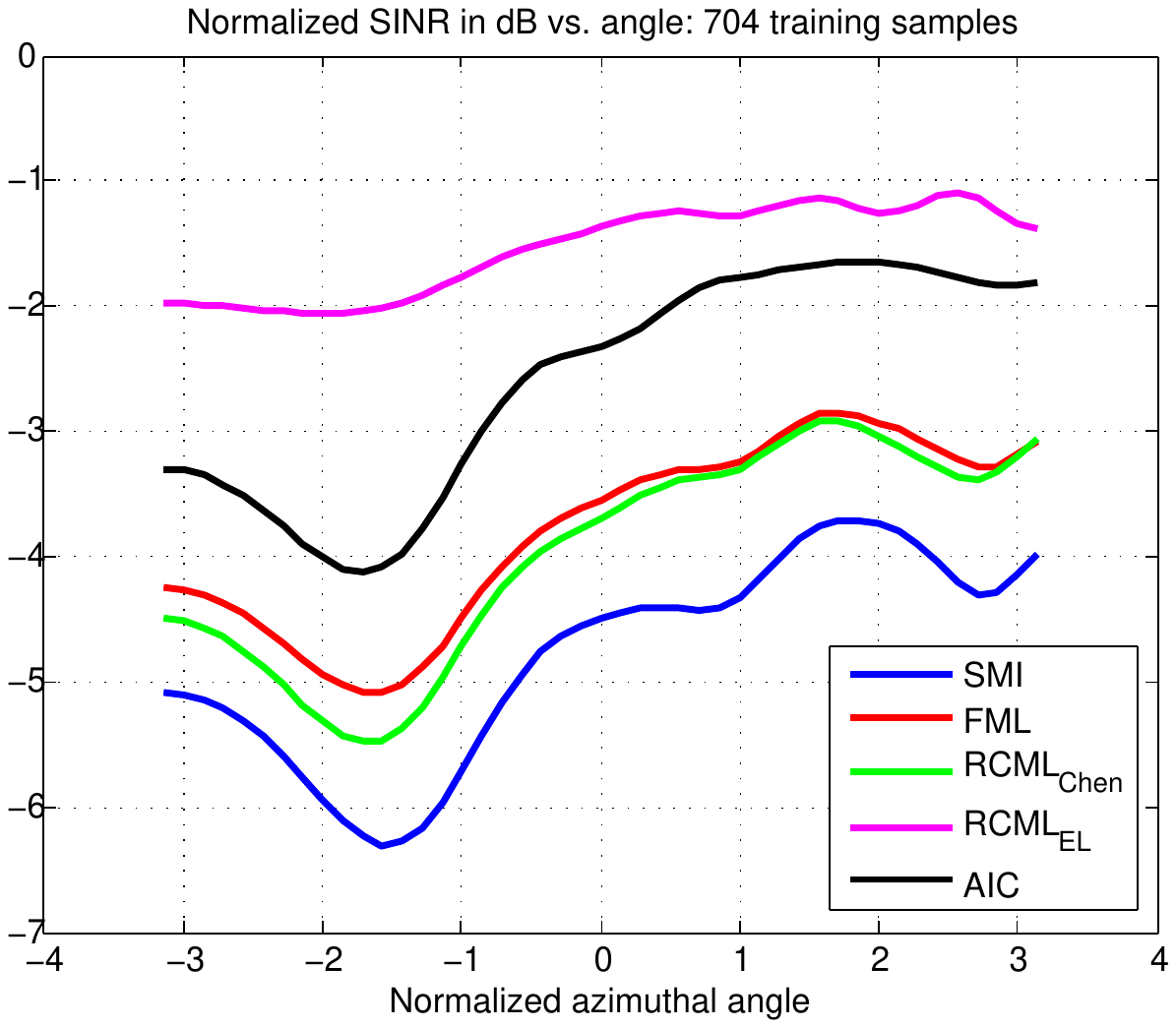}\label{Fig:KASSPER_angle_nonhomo_704}}
\hfil
\subfigure[$K=2N=704$]{\includegraphics[scale=0.5]{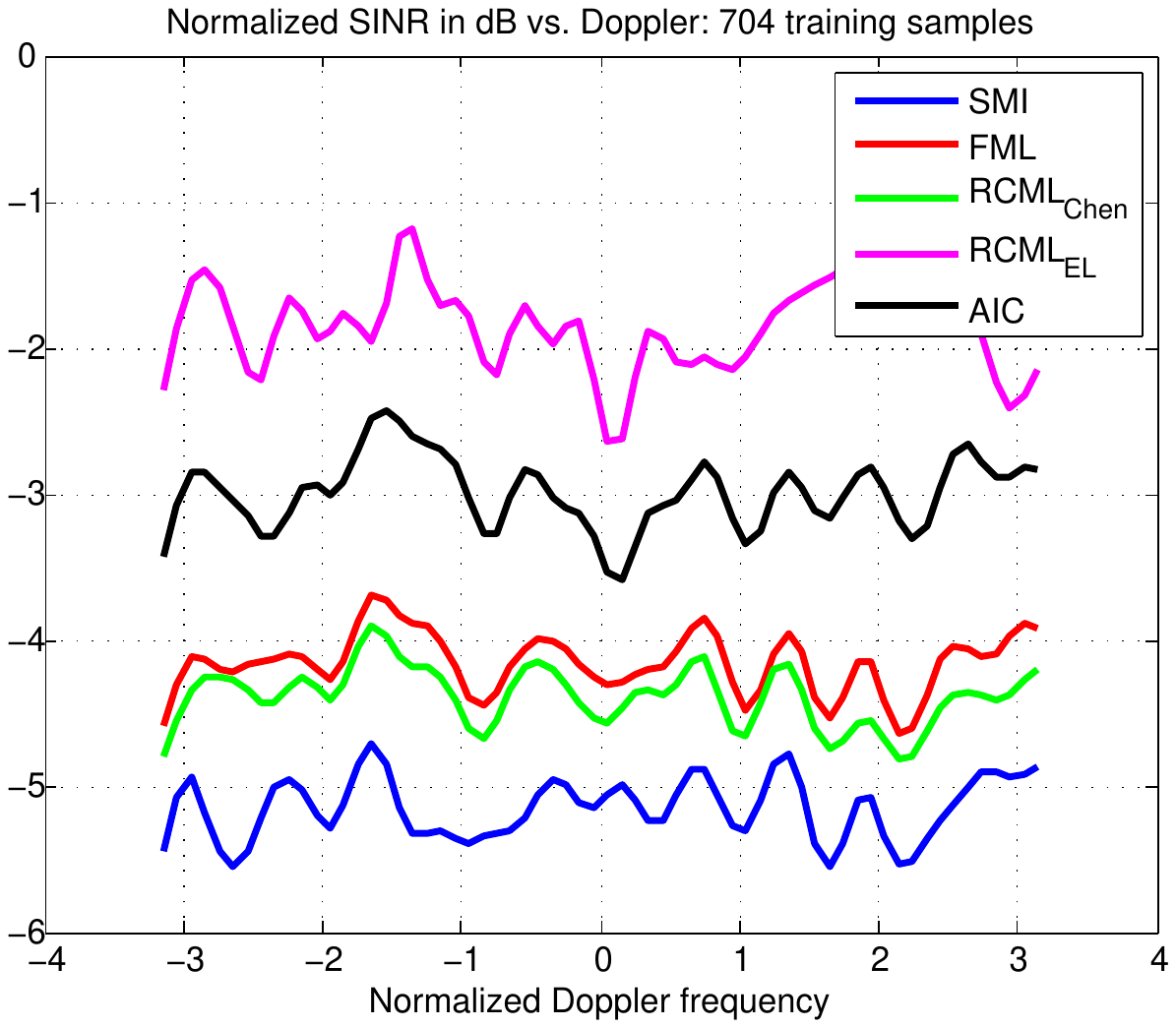}\label{Fig:KASSPER_dop_nonhomo_704}}
\end{center}
\caption{Normalized SINR versus azimuthal angle and Doppler frequency for the KASSPER data set.}
\label{Fig:KASSPER_rank_nonhomogeneous}
\end{figure}

First, we compare the rank estimation method proposed in Section \ref{Sec:Rankonly} with alternative algorithms including SMI, FML, AIC, and Chen's algorithm. We plot the normalized SINR (in dB) versus the number of training samples, 20, 30, and 40 in Fig. \ref{Fig:Simulation_rank} for the simulation model. The SINR values are obtained by averaging SINR values from 500 Monte Carlo trials. It is shown that the SINR values increases monotonically as $K$ increases. The \RCMLEL exhibits the best performance in all training regimes. Particularly, the difference between \RCMLEL and other methods increases when training samples are limited.

Figure \ref{Fig:KASSPER_rank} shows the normalized SINR values for various number of training samples for the KASSPER data set. We plot the averaged SINR values in decibel over either azimuth angle or Doppler frequency domain. The left and right column show the results for angle and Doppler, respectively. Similarly to the results for the simulation model, \RCMLEL outperforms all the other compared methods in all training regime. This implies that the rank obtained by the EL approach is more accurate and closer to the rank predicted by Brennan rule ($M + P - 1 = 42$) than any other methods.

\textbf{Realistic case of contaminated observations:} In practice, homogeneous training samples are hard to obtain and the received signals are often corrupted by target information. Therefore, it is meaningful to compare the performance for nonhomogeneous observation to investigate which algorithm indeed works well and is robust in practice. In this case, the received signal is given by
\be
\mb z = \alpha \mb s + \mb d
\ee
where $\mb s$ and $\mb d$ are the deterministic steering vector and stochastic disturbance vector, respectively. Figure \ref{Fig:KASSPER_rank_nonhomogeneous} shows the normalized SINR values when a half of the training samples contain $\mb s$ with $\alpha = 50$. The gap between \RCMLEL and the others is bigger than that in Figure \ref{Fig:KASSPER_rank}. In particular, AIC shows compatible performance in homogeneous cases though, Figure \ref{Fig:KASSPER_angle_nonhomo_704} and Figure \ref{Fig:KASSPER_dop_nonhomo_704} show that the difference between AIC and \RCMLEL is larger for non-homogeneous case. Therefore the results show remarkably that \RCMLEL still excels under target contamination or heterogeneous training where other techniques face severe degradation in performance.

\subsection{Imperfect rank and noise power constraints}
\label{Sec:ResultBoth}

\begin{figure}
\centering
\includegraphics[scale=0.5]{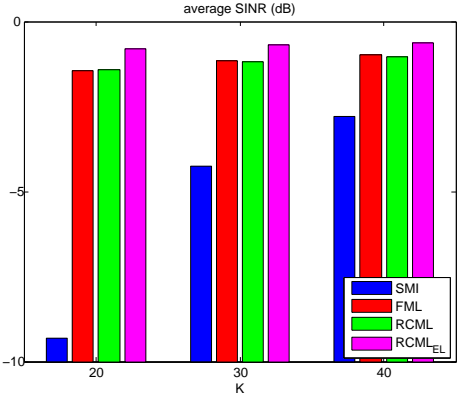}\label{Fig:SINR_Simulation_both}
\caption{Normalized SINR in dB versus number of training samples $K$ $(N=20)$ for the simulation model.}
\label{Fig:Simulation_both}
\end{figure}

\begin{figure}[!t]
\begin{center}
\subfigure[$K=N=352$]{\includegraphics[scale=0.5]{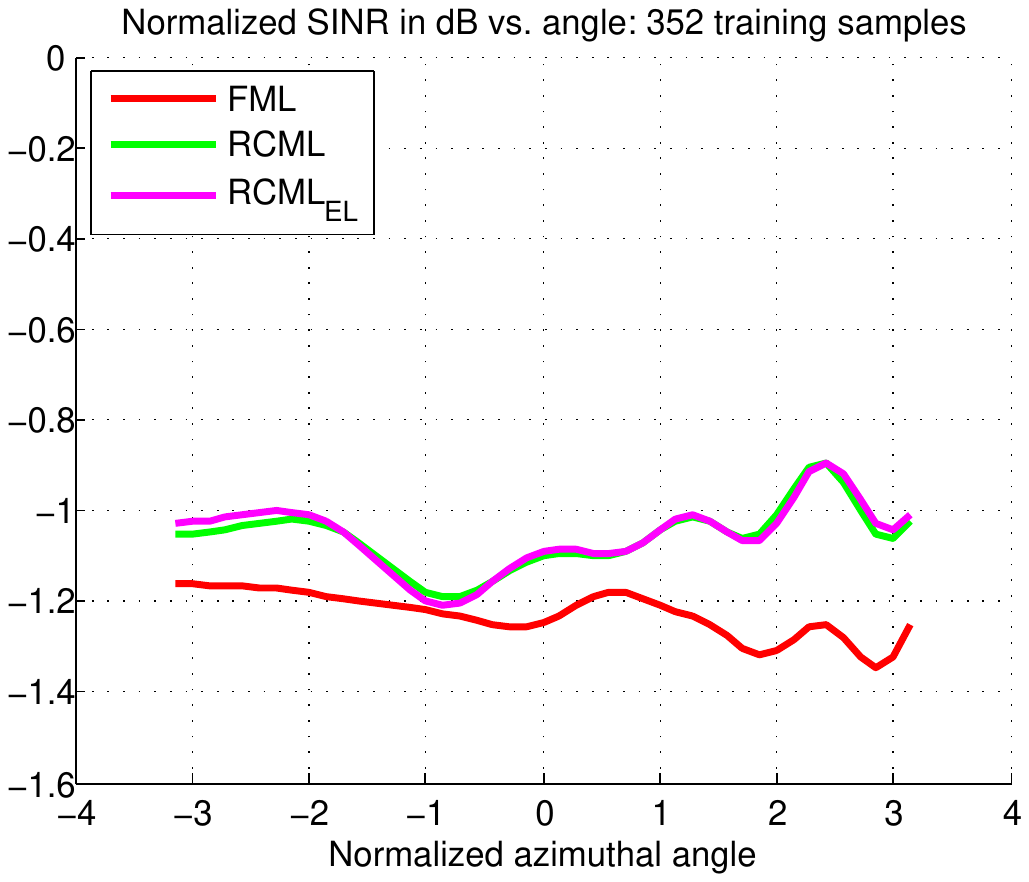}\label{Fig:SINR_angle_352}}
\hfil
\subfigure[$K=N=352$]{\includegraphics[scale=0.5]{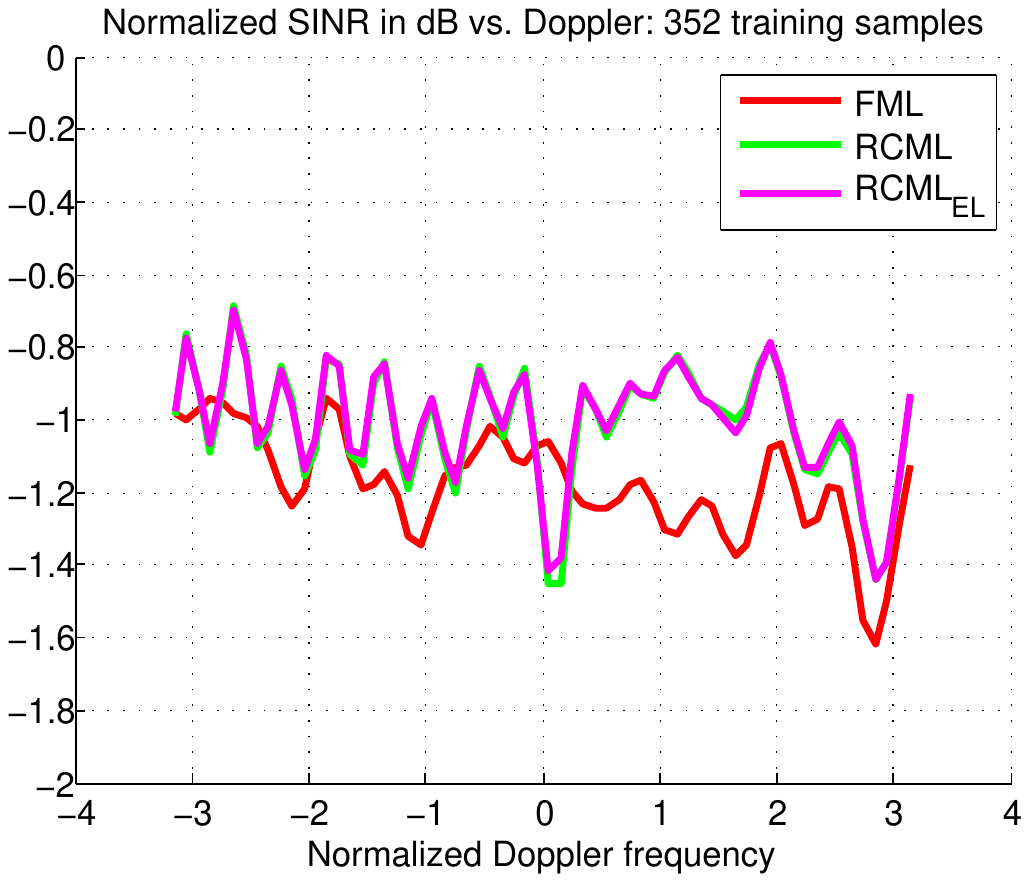}\label{Fig:PD_Simulation_20}}\\
\subfigure[$K=1.5N=528$]{\includegraphics[scale=0.5]{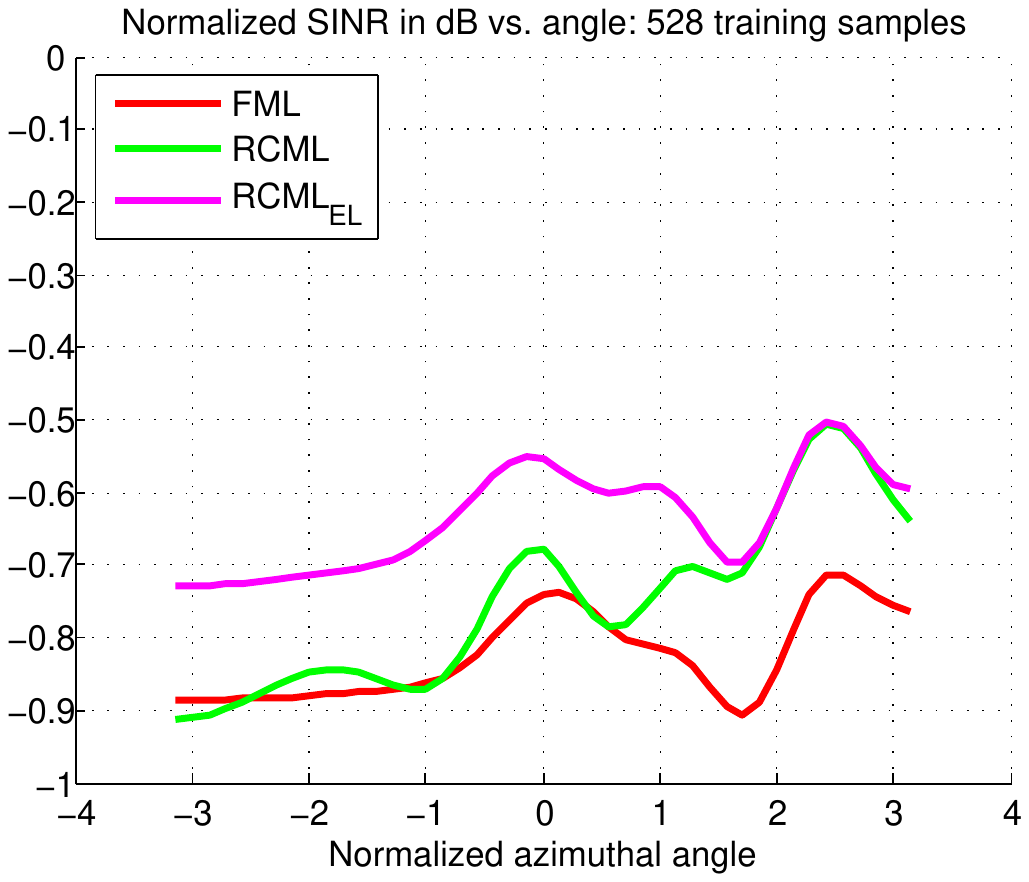}\label{Fig:SINR_angle_352}}
\hfil
\subfigure[$K=1.5N=528$]{\includegraphics[scale=0.5]{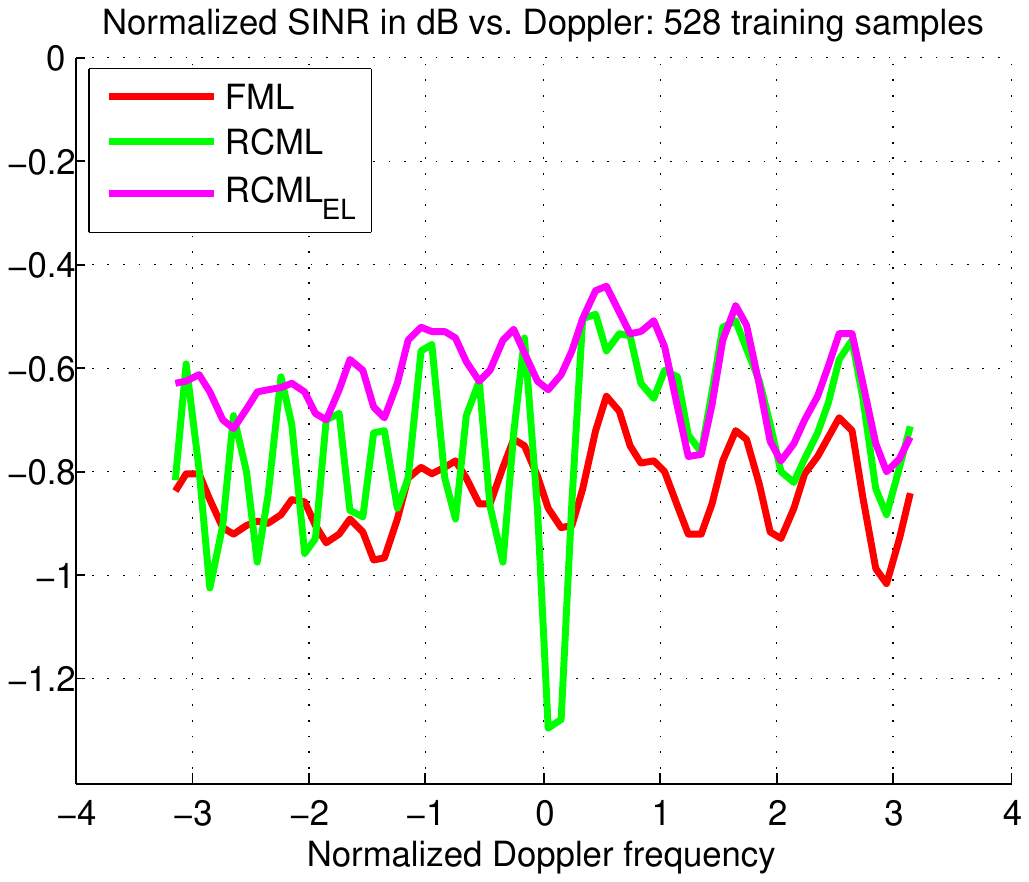}\label{Fig:PD_Simulation_20}}\\
\subfigure[$K=2N=704$]{\includegraphics[scale=0.5]{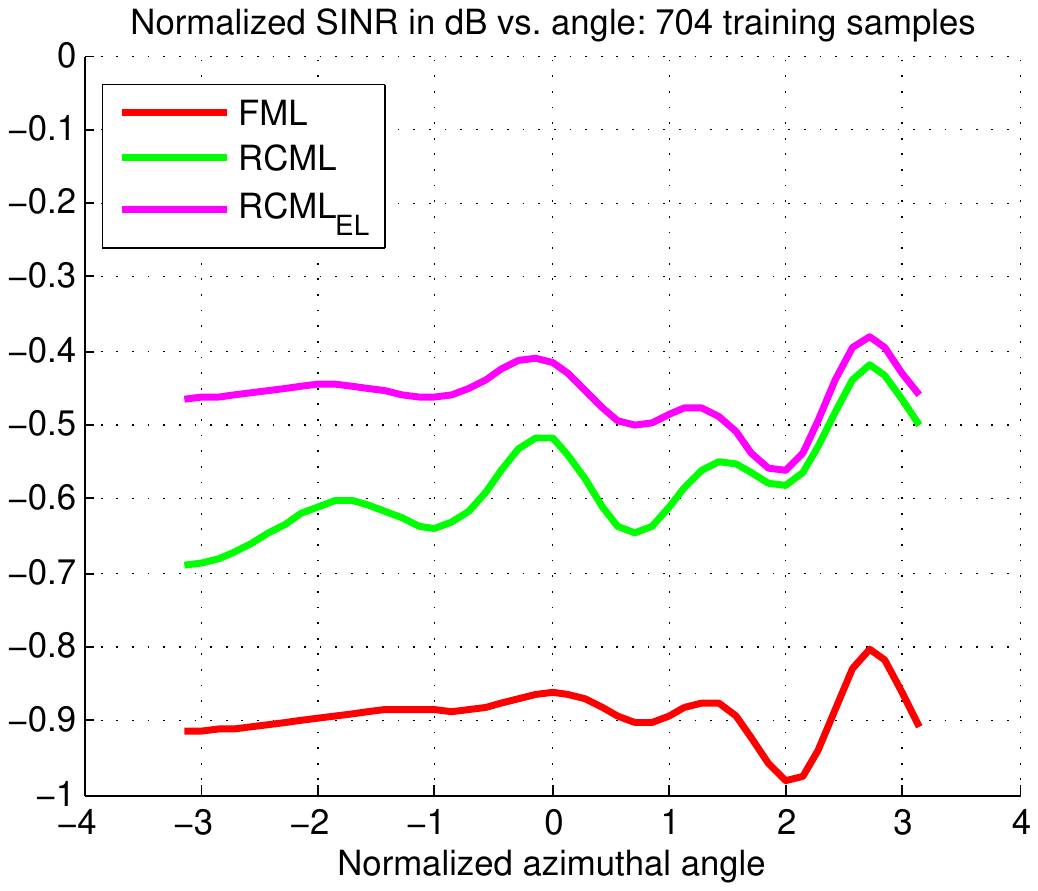}\label{Fig:SINR_angle_352}}
\hfil
\subfigure[$K=2N=704$]{\includegraphics[scale=0.5]{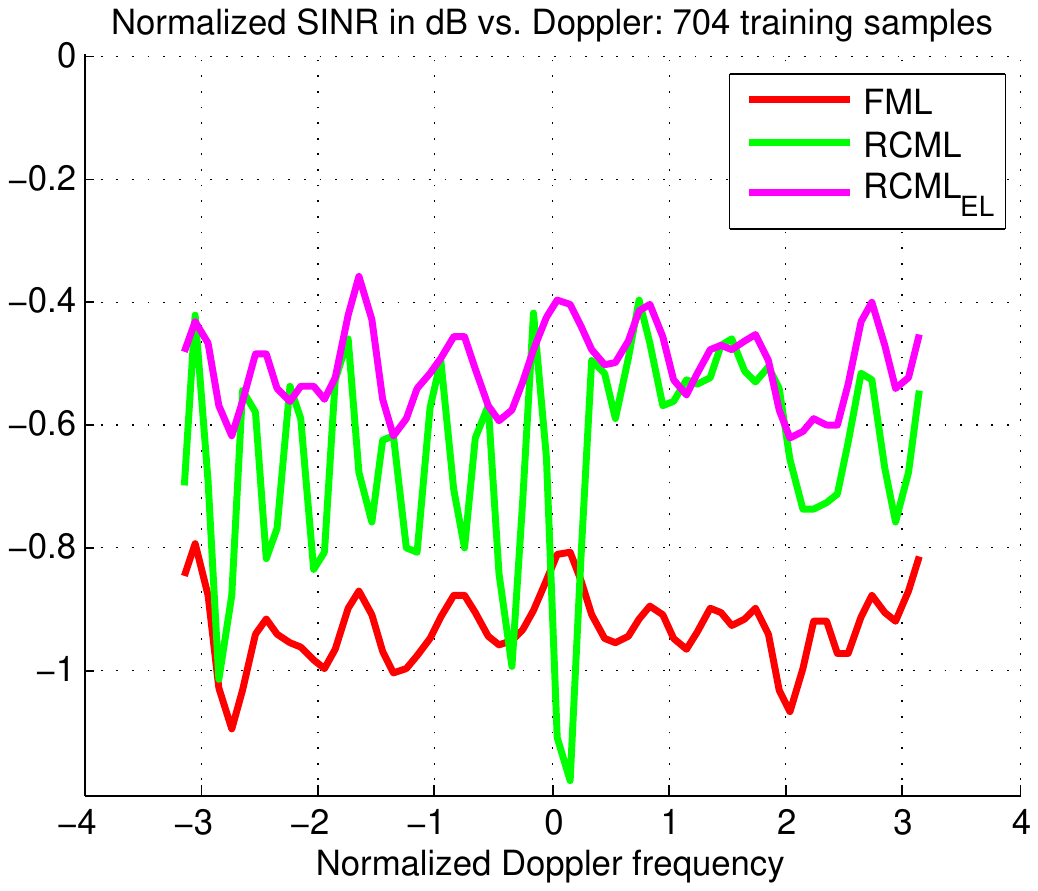}\label{Fig:PD_Simulation_20}}
\end{center}
\caption{Normalized SINR versus azimuthal angle and Doppler frequency for the KASSPER data set.}
\label{Fig:KASSPER_both}
\end{figure}

In this section, we show experimental results for estimation of both a rank and a noise power via the expected likelihood approach, which is proposed in Section \ref{Sec:Both}. In this case, we assume that both the rank and the noise power are unknown and to be estimated for both the simulation model and the KASSPER data set. Since the previous works such as AIC and Chen's algorithm are for only estimating the rank and can not be extended to estimate both the rank and the noise power, we compare the proposed EL method with the sample covariance, FML, and the RCML estimator with a prior knowledge of the rank. For the RCML estimator, we employ the number of jammers $(r=5)$ and the Brennan rule $(r=42)$ as the clutter rank for the simulation model and the KASSPER data set, respectively. In addition, since the FML method requires a prior knowledge of the noise power, we calculate and use the maximum likelihood estimate of the noise power for a rank given by a prior knowledge for the FML.

Figure \ref{Fig:Simulation_both} shows the performance of various estimators in the sense of the normalized SINR values for the simulation model. Similarly to the case of only rank estimation, the proposed method show the best performance in all training regimes.

Figure \ref{Fig:KASSPER_both} shows the performance of the methods in the normalized SINR for the KASSPER data set. The proposed method is comparable with or slightly better than the RCML estimator using the rank by Brennan rule. This means that the proposed method estimates both the rank and the noise power adaptively from training samples whereas the rank by Brennan rule is fixed regardless of the training samples.

\subsection{Imperfect condition number constraint}
\label{Sec:ResultConditionNumber}

\begin{table}[!t]
\begin{center}
\subfigure[]{             \begin{tabular}{|c|c|c|c|c|c|c|}
               \hline
               $\sigma^2$ & K & SMI & FML & CNCML & $\text{CNCML}_\text{EL}$ & $\text{CNCML}_\text{true}$ \\
               \hline
               & 20 & -9.3785 & -0.5195 & -0.5212 & \textbf{-0.4822} & -0.5200 \\
               -5 &30 & -4.2579 & \textbf{-0.4242} & -0.4257 & -0.4256 & -0.4250 \\
               & 40 & -2.7424 & \textbf{-0.3460} & -0.3476 & -0.3476 & -0.3468 \\
               \hline
                \hline
               & 20 & -9.3196 & -0.5511 & -0.5521 & \textbf{-0.5141} & -0.5515 \\
               0 & 30 & -4.2276 & \textbf{-0.4202} & -0.4221 & -0.4220 & -0.4210 \\
               & 40 & -2.7649 & \textbf{-0.3513} & -0.3530 & -0.3528 & -0.3521 \\
               \hline
              \hline
               & 20 & -9.0922 & -0.5269 & -0.5279 & \textbf{-0.4875} & -0.5272 \\
               5 & 30 & -4.2172 & \textbf{-0.4348} & -0.4364 & -0.4362 & -0.4357 \\
               & 40 & -2.7300 & \textbf{-0.3484} & -0.3503 & -0.3505 & -0.3493 \\
               \hline
               \hline
               & 20 & -9.3511 & -0.5355 & -0.5305 & \textbf{-0.4998} & -0.5360 \\
               10 & 30 & -4.1955 & \textbf{-0.4164} & -0.4180 & -0.4175 & -0.4175 \\
               & 40 & -2.7491 & \textbf{-0.3501} & -0.3515 & -0.3518 & -0.3509 \\
               \hline
             \end{tabular}\label{Tb:CN_simulation_a}}\\

\subfigure[]{              \begin{tabular}{|c|c|c|c|c|c|c|}
               \hline
               $\sigma^2$ & K & SMI & FML & CNCML & $\text{CNCML}_\text{EL}$ & $\text{CNCML}_\text{true}$ \\
               \hline
               & 20 & -9.3069 & -1.7371 & \textbf{-1.7322} & -1.7358 & -1.7350 \\
               -5 & 30 & -4.1795 & -1.2399 & -1.2388 & \textbf{-1.2347} & -1.2397 \\
               & 40 & -2.7535 & -0.9496 & -0.9492 & \textbf{-0.9456} & -0.9493 \\
               \hline
                \hline
               & 20 & -9.1354 & -1.6944 & \textbf{-1.6928} & -1.7027 & -1.6940 \\
               0 & 30 & -4.2345 & -1.2986 & -1.2987 & \textbf{-1.2955} & -1.2990 \\
               & 40 & -2.7545 & -1.0041 & -1.0043 & \textbf{-1.0023} & -1.0046 \\
               \hline
              \hline
               & 20 & -9.2524 & -1.3976 & -1.4016 & \textbf{-1.3244} & -1.4000 \\
               5 & 30 & -4.2309 & -1.0737 & -1.0784 & \textbf{-1.0666} & -1.0762 \\
               & 40 & -2.7523 & -0.8848 & -0.8876 & \textbf{-0.8818} & -0.8866 \\
               \hline
               \hline
               & 20 & -9.3660 & -1.2567 & -1.2569 & \textbf{-1.2115} & -1.2570 \\
               10 & 30 & -4.3013 & -0.9526 & -0.9545 & \textbf{-0.9450} & -0.9537 \\
               & 40 & -2.7350 & -0.7171 & -0.7197 & \textbf{-0.7139} & -0.7186 \\
               \hline
             \end{tabular}  \label{Tb:CN_simulation_b} }\\
             \subfigure[]{
\begin{tabular}{|c|c|c|c|c|c|c|}
               \hline
               $\sigma^2$ & K & SMI & FML & CNCML & $\text{CNCML}_\text{EL}$ & $\text{CNCML}_\text{true}$ \\
               \hline
               & 20 & -9.3702 & -0.5340 & -0.5349 & \textbf{-0.4925} & -0.5340 \\
               -5 & 30 & -4.2791 & \textbf{-0.4302} & -0.4316 & -0.4315 & -0.4308 \\
               & 40 & -2.7856 & \textbf{-0.3493} & -0.3510 & -0.3509 & -0.3501 \\
               \hline
                \hline
               & 20 & -9.2898 & -0.5485 & -0.5501 & \textbf{-0.5104} & -0.5491 \\
               0 & 30 & -4.2648 & \textbf{-0.4202} & -0.4219 & -0.4220 & -0.4209 \\
               & 40 & -2.7274 & \textbf{-0.3604} & -0.3621 & -0.3621 & -0.3611 \\
               \hline
              \hline
               & 20 & -9.0582 & -0.5318 & -0.5328 & \textbf{-0.4899} & -0.5322 \\
               5 & 30 & -4.1548 & \textbf{-0.4142} & -0.4155 & -0.4152 & -0.4149 \\
               & 40 & -2.7655 & \textbf{-0.3515} & -0.3531 & -0.3533 & -0.3521 \\
               \hline
               \hline
               & 20 & -9.3632 & -0.5352 & -0.5363 & \textbf{-0.4974} & -0.5360 \\
               10 & 30 & -4.2728 & \textbf{-0.4328} & -0.4348 & -0.4349 & -0.4337 \\
               & 40 & -2.7577 & \textbf{-0.3538} & -0.3554 & -0.3547 & -0.3547 \\
               \hline
             \end{tabular}\label{Tb:CN_simulation_c}}\\
        \end{center}
        \caption{Normalized SINR for various values of parameters for the simulation model.}
        \label{Tb:CN_simulation1}
\end{table}

\begin{table}[!t]
\begin{center}
\subfigure[]{\begin{tabular}{|c|c|c|c|c|c|c|}
               \hline
               $\sigma^2$ & K & SMI & FML & CNCML & $\text{CNCML}_\text{EL}$ & $\text{CNCML}_\text{true}$ \\
               \hline
               & 20 & -9.0316 & -1.7161 & \textbf{-1.7131} & -1.7634 & -1.7150 \\
               -5 & 30 & -4.1465 & -1.1704 & -1.1691 & \textbf{-1.1659} & -1.1693 \\
               & 40 & -2.7727 & -0.9390 & -0.9384 & \textbf{-0.9351} & -0.9387 \\
               \hline
                \hline
               & 20 & -9.2091 & -1.6706 & -1.6701 & \textbf{-1.6674} & -1.6706 \\
               0 & 30 & -4.2004 & -1.2681 & -1.2682 & \textbf{-1.2633} & -1.2682 \\
               & 40 & -2.7423 & -1.0102 & -1.0117 & \textbf{-1.1009} & -1.0106 \\
               \hline
              \hline
               & 20 & -9.3538 & -1.3980 & -1.4028 & \textbf{-1.3216} & -1.4004 \\
               5 & 30 & -4.2203 & -1.0869 & -1.0910 & \textbf{-1.0785} & -1.0889 \\
               & 40 & -2.7079 & -0.8694 & -0.8721 & \textbf{-0.8666} & -0.8713 \\
               \hline
               \hline
               & 20 & -9.221 & -1.2446 & -1.2455 & \textbf{-1.1982} & -1.2452 \\
               10 & 30 & -4.2116 & -0.9428 & -0.9460 & \textbf{-0.9382} & -0.9444 \\
               & 40 & -2.7563 & -0.7235 & -0.7264 & \textbf{-0.7226} & -0.7253 \\
               \hline
             \end{tabular}\label{Tb:CN_simulation_d}}\\
             \subfigure[]{            \begin{tabular}{|c|c|c|c|c|c|c|}
               \hline
               $\sigma^2$ & K & SMI & FML & CNCML & $\text{CNCML}_\text{EL}$ & $\text{CNCML}_\text{true}$ \\
               \hline
               & 20 & -9.2679 & -1.1593 & -1.1616 & \textbf{-1.1150} & -1.1610 \\
               -5 & 30 & -4.2234 & -0.9262 & -0.9286 & \textbf{-0.9242} & -0.9278 \\
               & 40 & -2.8271 & \textbf{-0.7705} & -0.7729 & -0.7712 & -0.7723 \\
               \hline
                \hline
               & 20 & -9.2934 & -0.9052 & -0.9051 & \textbf{-0.8422} & -0.9059 \\
               0 & 30 & -4.1617 & -0.6909 & -0.6920 & \textbf{-0.6862} & -0.6924 \\
               & 40 & -2.7387 & -0.5711 & -0.5724 & \textbf{-0.5676} & -0.5725 \\
               \hline
              \hline
               & 20 & -9.4154 & -0.8398 & -0.8334 & \textbf{-0.7909} & -0.8399 \\
               5 & 30 & -4.2284 & -0.6273 & -0.6231 & \textbf{-0.6070} & -0.6278 \\
               & 40 & -2.7208 & -0.5034 & -0.5022 & \textbf{-0.4945} & -0.5046 \\
               \hline
               \hline
               & 20 & -9.1447 & -0.7388 & -0.7225 & \textbf{-0.6815} & -0.7392 \\
               10 & 30 & -4.2046 & -0.5931 & -0.5803 & \textbf{-0.5535} & -0.5931 \\
               & 40 & -2.7241 & -0.4821 & -0.4738 & \textbf{-0.4576} & -0.4827 \\
               \hline
             \end{tabular}\label{Tb:CN_simulation_e}}\\
             \renewcommand{\arraystretch}{1.3}
             \subfigure[]{\begin{tabular}{|c|c|c|c|c|}
               \hline
               & $J$ & $\sigma_J^2$ & $\phi$ & $B_f$\\
               \hline
               (a) & 1 & 30 & 20$^{\circ}$ & 0\\
                \hline
               (b) & 1 & 30 & 20$^{\circ}$ & 0.3\\
              \hline
               (c) & 3 & [30 30 30] & [20$^{\circ}$ 40$^{\circ}$ 60$^{\circ}$] & [0 0 0]\\
               \hline
               (d) & 3 & [30 30 30] & [20$^{\circ}$ 40$^{\circ}$ 60$^{\circ}$] & [0.3 0.3 0.3]\\
               \hline
               (e) & 3 & [10 20 30] & [20$^{\circ}$ 40$^{\circ}$ 60$^{\circ}$] & [0.2 0 0.3]\\
               \hline
             \end{tabular}\label{Tb:CN_parameter}}
        \end{center}
        \caption{Normalized SINR for various values of parameters for the simulation model.}
        \label{Tb:CN_simulation2}
\end{table}

\begin{figure}[!t]
\begin{center}
\subfigure[$K=N=352$]{\includegraphics[scale=0.5]{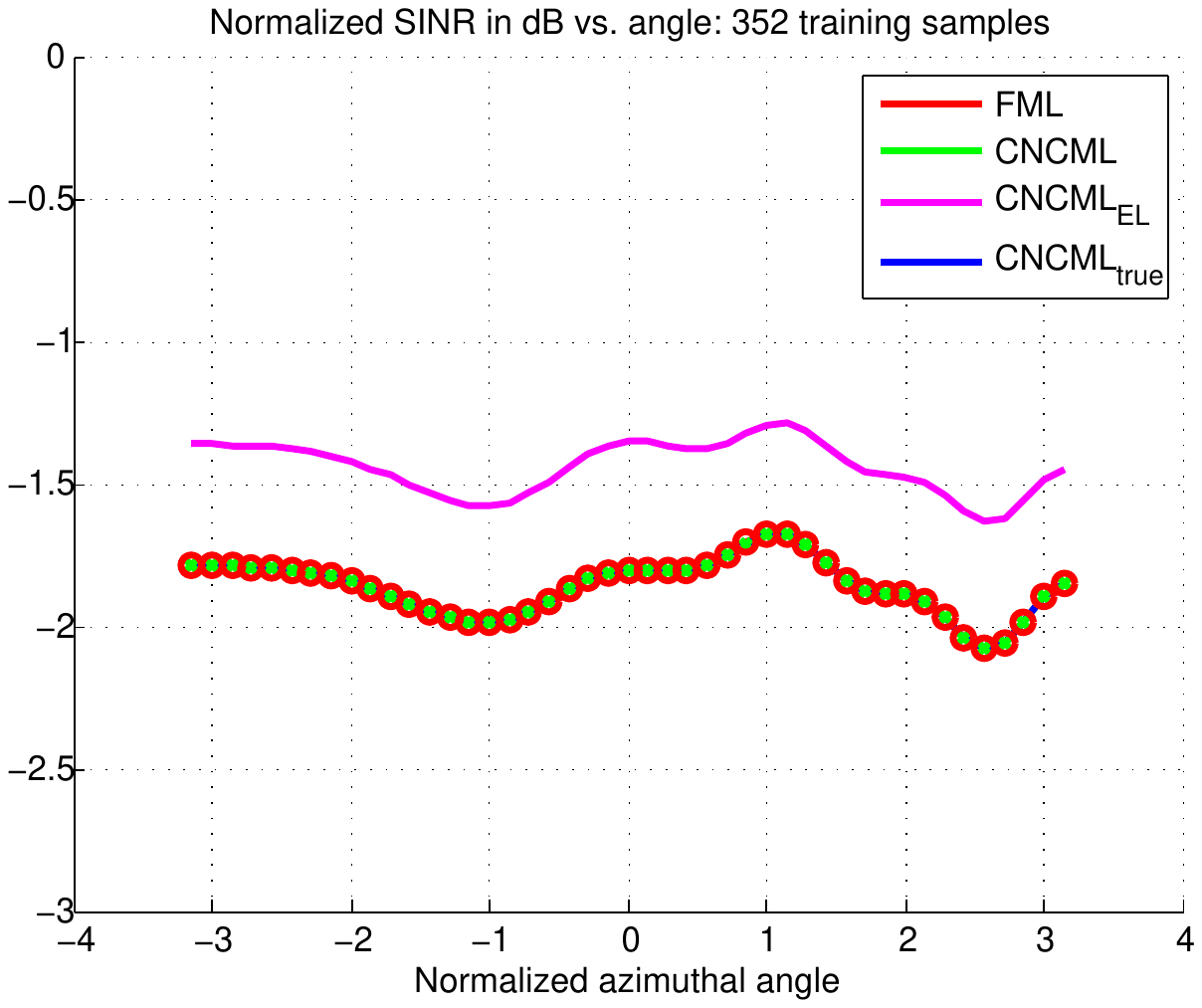}\label{Fig:SINR_angle_352}}
\hfil
\subfigure[$K=N=352$]{\includegraphics[scale=0.5]{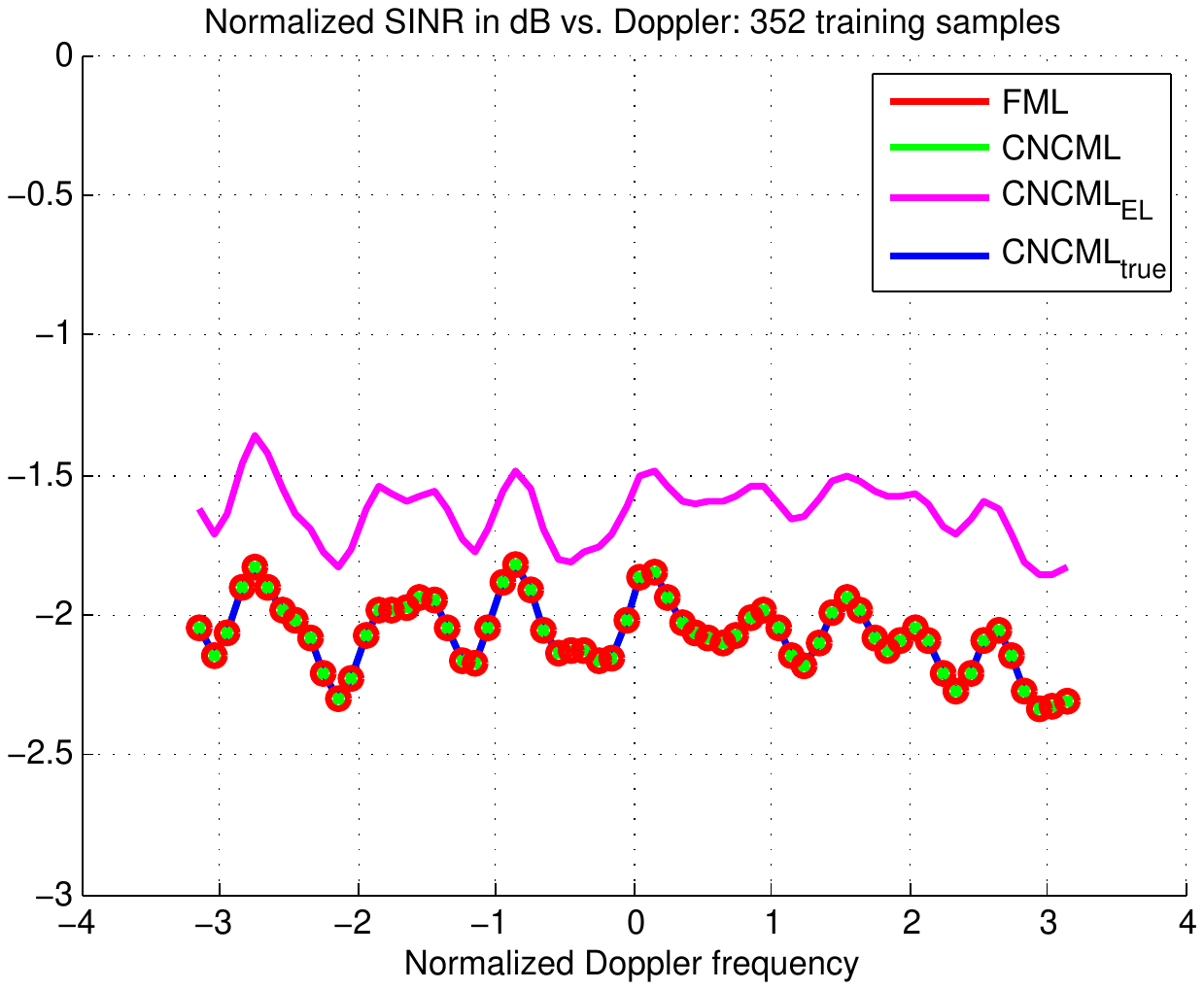}\label{Fig:PD_Simulation_20}}\\
\subfigure[$K=1.5N=528$]{\includegraphics[scale=0.5]{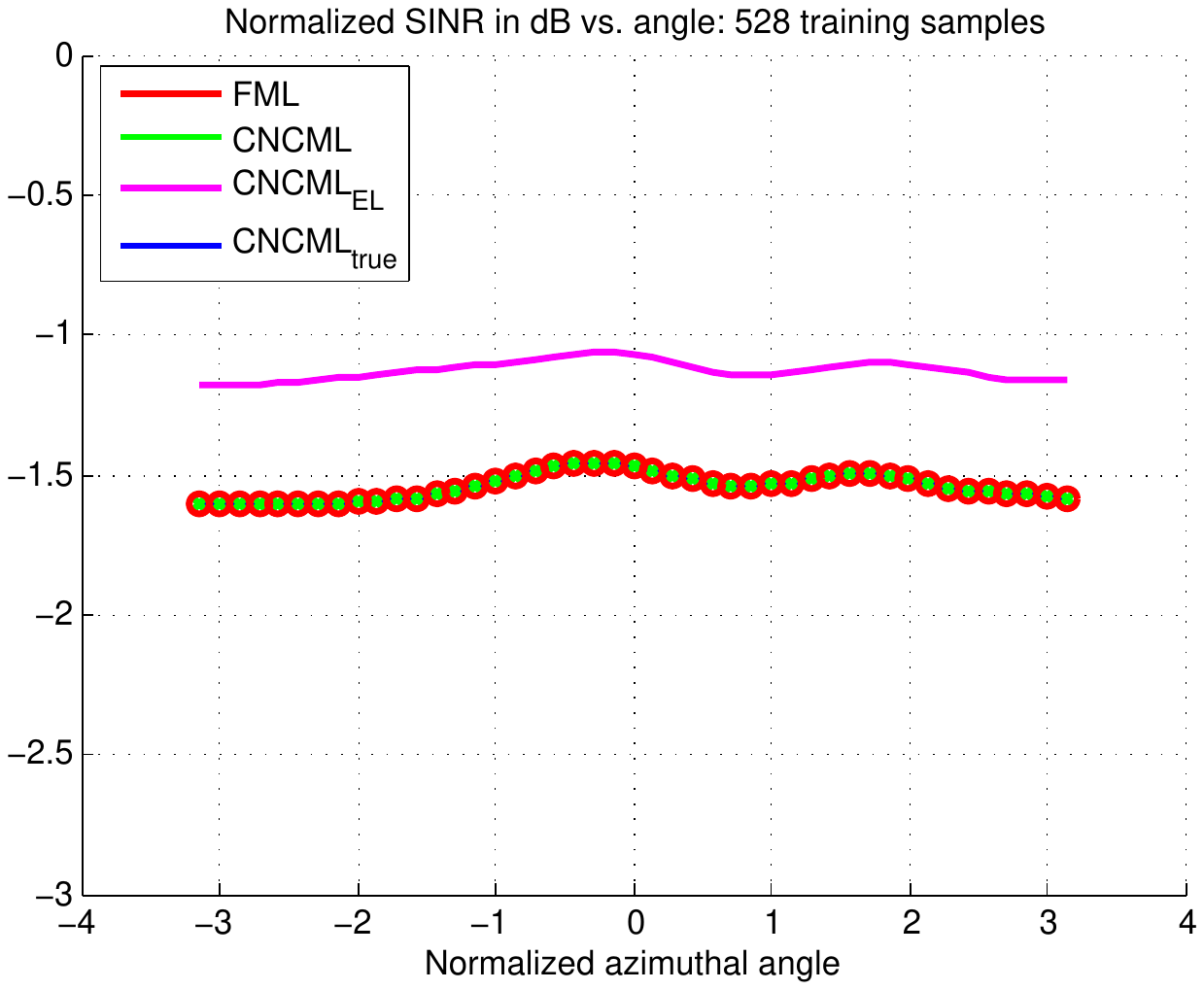}\label{Fig:SINR_angle_352}}
\hfil
\subfigure[$K=1.5N=528$]{\includegraphics[scale=0.5]{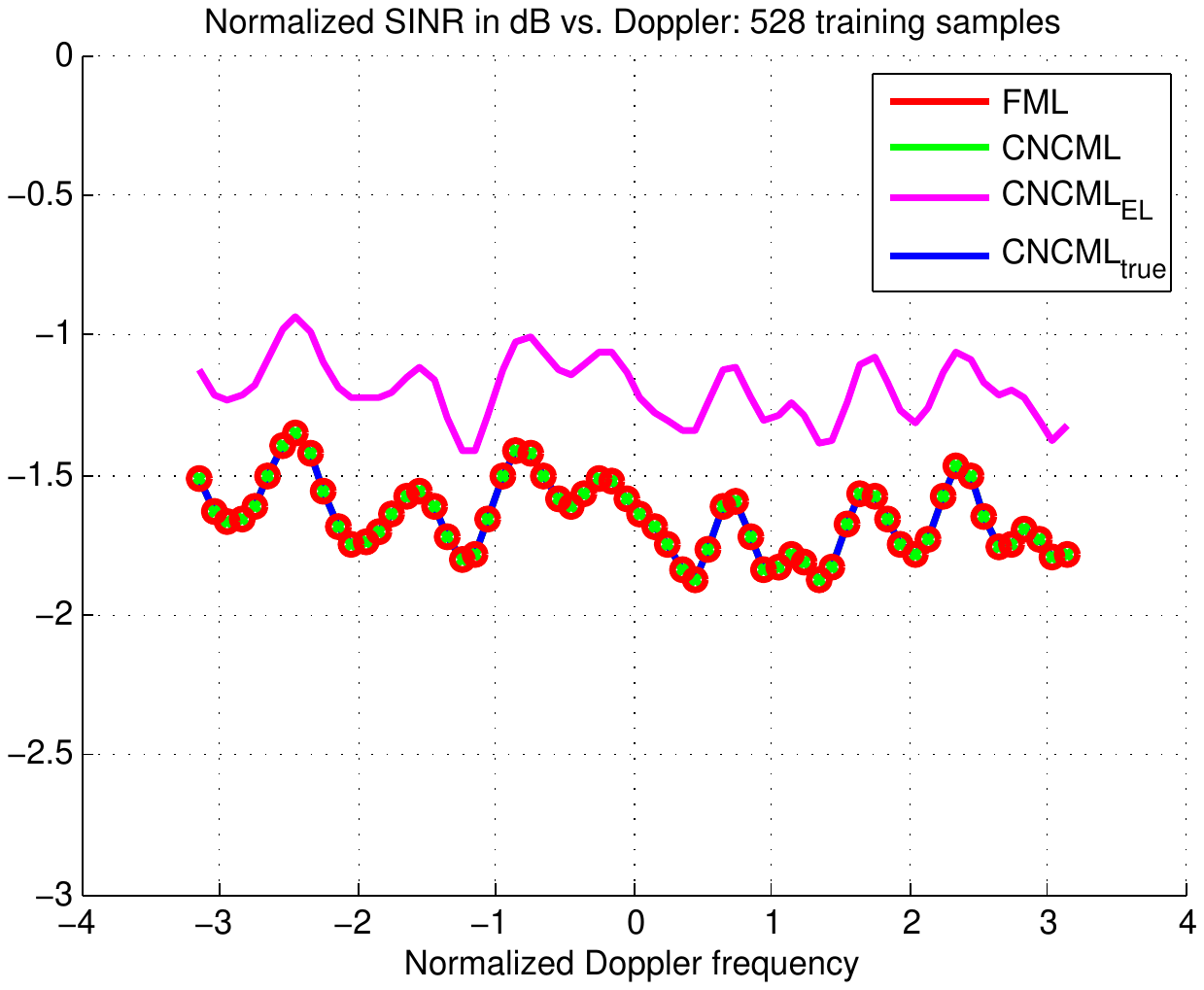}\label{Fig:PD_Simulation_20}}\\
\subfigure[$K=2N=704$]{\includegraphics[scale=0.5]{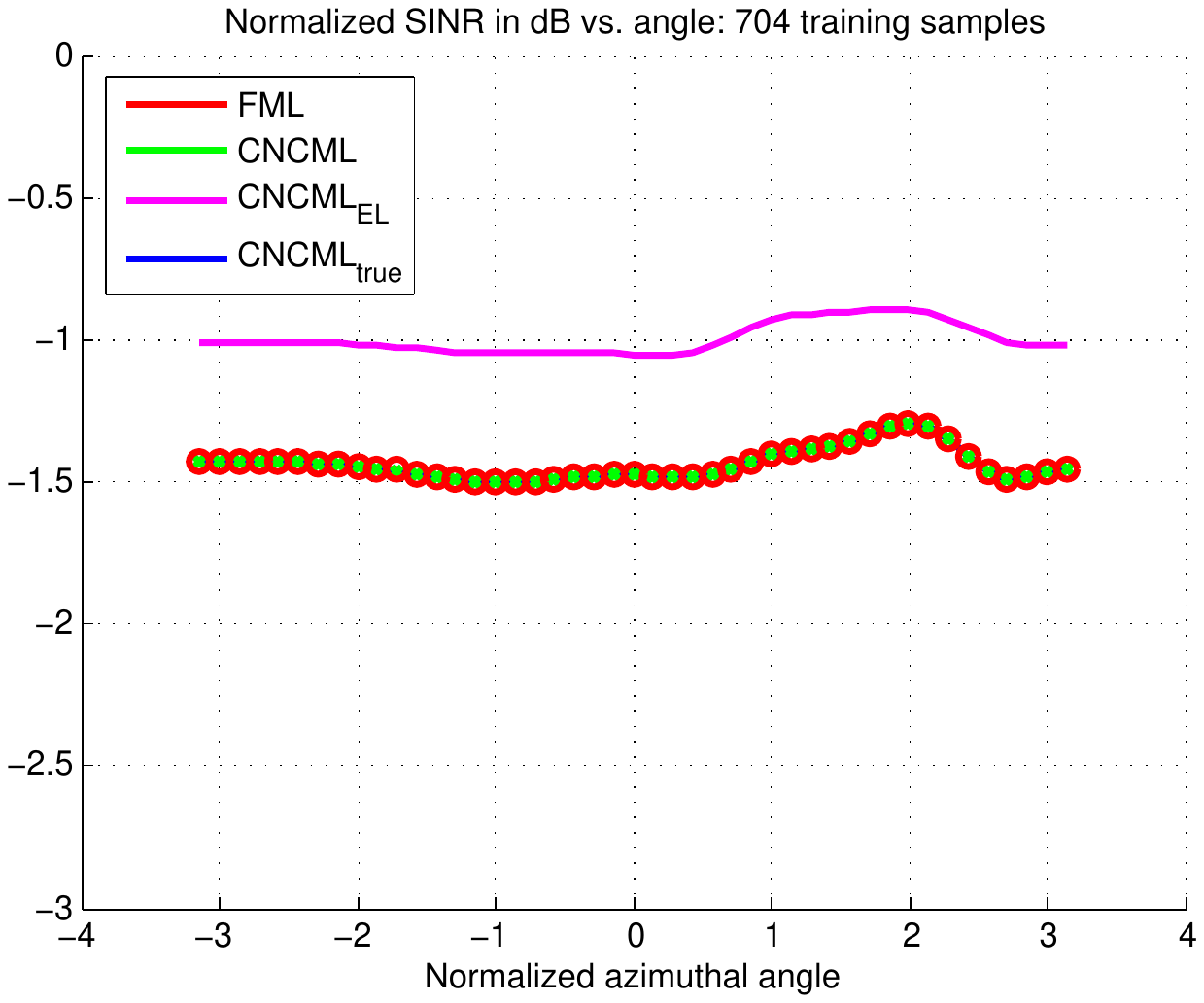}\label{Fig:SINR_angle_352}}
\hfil
\subfigure[$K=2N=704$]{\includegraphics[scale=0.5]{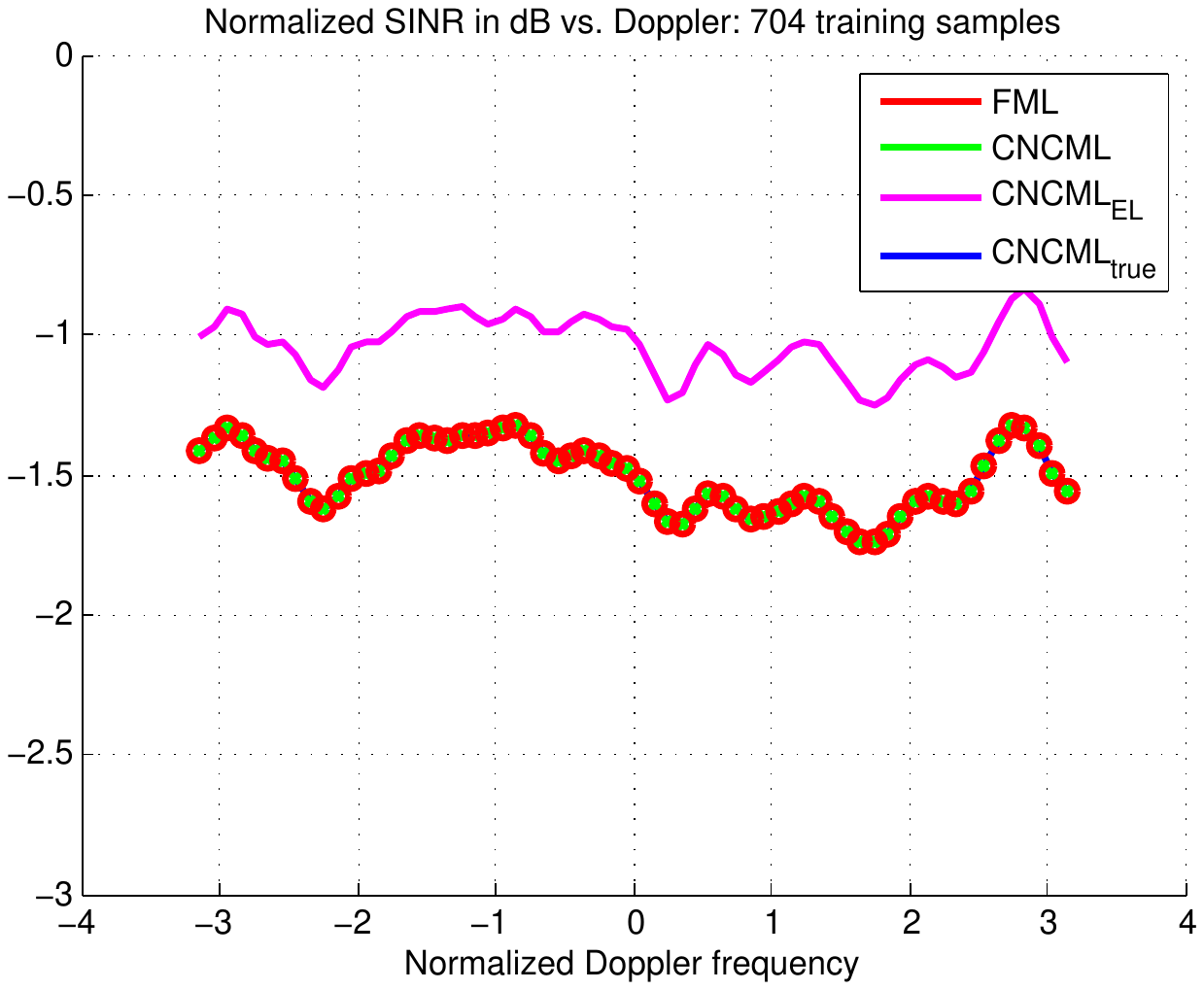}\label{Fig:PD_Simulation_20}}
\end{center}
\caption{Normalized SINR versus azimuthal angle and Doppler frequency for the KASSPER data set. (a) and (b) for $K=N=352$, (c) and (d) for $K=1.5N=528$, and (e) and (f) for $K=2N=704$}
\label{Fig:KASSPER_CN}
\end{figure}

Now we show experimental results for the condition number estimation method proposed in Section \ref{Sec:ConditionNumber}. We compare the proposed method, denoted by $\text{CNCML}_{\text{EL}}$, with four different covariance estimation methods, the sample covariance matrix (SMI), FML, \CNCML, and \CNCtrue.

Tables \ref{Tb:CN_simulation1} and \ref{Tb:CN_simulation2} show the normalized SINR values for the simulation model. We analyze five different scenarios with different parameters of the simulated covariance model given by Eq. \eqref{Eq:SimulationModel}. We use the same parameters as those used in \cite{Aubry12} to evaluate the performances and they are shown in Table 5.2(c).

For the narrowband scenarios ($B_f = 0$) in Table 5.1(a) and Table 5.1(c), $\text{CNCML}_{\text{EL}}$ outperforms the alternatives for the limited training regime and FML is the best in other training regimes. Note that the gap between \CNCEL and FML (at most 0.002) is much smaller than that of the limited training regime (at least 0.3).

On the other hand, for the wideband scenarios in Tables 5.1(b), 5.2(a), and 5.2(b), $\text{CNCML}_{\text{EL}}$ shows the best performance in most cases including \CNCtrue using the true condition number.

The experimental results for the KASSPER data set are shown in Figure \ref{Fig:KASSPER_CN}. We do not plot the sample covariance matrix to clarify the difference among the estimators. In every case, FML, $\text{CNCML}$, and \CNCtrue are very close to one another and $\text{CNCML}_{\text{EL}}$ is the best estimator. Note that $\text{CNCML}_{\text{EL}}$ is based on the same algorithm as \CNCML and \CNCtrue and differs from them in the point that \CNCEL uses a different condition number which is estimated by the expected likelihood approach. Again, this shows that the expected likelihood criterion is really useful and powerful to estimate parameters which is imperfectly known and leads to an adaptive and robust covariance estimator.

\section{Conclusion}
\label{Sec:Conclusion}

We propose robust covariance estimation algorithms which automatically determine the optimal values of practical constraints via the expected likelihood criterion for radar STAP in this chapter. Three different cases of practical constraints which is exploited in recent works including the rank constrained ML estimation and the condition number constrained ML estimation are investigated. Significant analytic results and proofs are derived for each case. Uniqueness of the optimal values of the rank constraint and the condition number constraint is formally proved and a closed form solution of the noise level is derived. Experimental results show that the estimators with the constraints obtained by the expected likelihood outperform previous works which uses constraints obtained by other parameter estimation methods including the maximum likelihood solution of the constraints.

\chapter{Conclusion}
\label{Ch:Conclusion}

\section{Summary of Contributions}
\label{Sec:Contribution}

This dissertation developed robust covariance estimation methods for radar STAP. Here, I briefly summarize the contributions of this dissertation.

\textbf{Covariance matrix estimation under practical constraints:} In Chapter \ref{Ch:RCML} and \ref{Ch:EASTR}, we focused on maximum likelihood estimation methods of structured covariance estimation exploiting practical constraints for radar STAP. It is shown that introducing practical constraints into covariance estimation problems leads to powerful estimators in the radar literatures. In particular, the rank of the clutter subspace is really powerful constraint because the rank can be either calculated by the well-known Brennan rule under ideal condition or estimated by robust rank estimation methods. We develop the RCML estimator which exploits the structural constraint and the rank of clutter subspace. Though the rank is a challenging constraint, we reduce the estimation problem to a convex optimization problem and derive a closed form solution. We also derive an estimator for when the noise level is unknown but only a lower bound is available.

Chapter \ref{Ch:EASTR} focuses on exploiting a Toeplitz structure as well as the rank constraint on the structured interference for radar STAP. The problem is inherently hard because there is no closed form solution for ML estimation of Toeplitz covariance matrices for all training regimes. We develop a new estimator that is based on a cascade of two closed forms. The first closed form is the RCML estimator and the second step of the Toeplitz approximation performs constrained optimization of eigenvalues to either exactly or approximately satisfy the Toeplitz constraint with preserving the rank.

\textbf{Robust covariance estimation under imperfect constraints:} In Chapter \ref{Ch:EL}, we develop robust covariance estimation algorithms under imperfect constraints. The proposed algorithms automatically determine the optimal values of practical constraints which are imperfectly or approximately known via the expected likelihood approach. We consider three different cases of practical constraints, 1) the rank constraint, 2) the rank constraint and the noise level, and 3) the condition number constraint, which are exploited in recent works including the RCML estimator and the condition number ML estimator. We develop significant analytical results and derive formal proof for each case of the constraints.

\section{Future Research}
\label{Sec:Future}

\subsection{Newer Constraint: Sparsity or Block Sparsity}
The principal merit of recent radar STAP estimator including the methods proposed in this dissertation is that they employ constrained optimization theory to yield tractable solutions. We can take this theme to fruition by investigating newer increasingly relevant constraints/scenarios that represent covariance estimation in modern radar STAP. The recently proposed \cite{Friedman08} graphical lasso technique is an $L_1$ regularized maximum likelihood estimate, such that
\be
\mb R = \arg\max_{\mb R \in \psi} \Big\{\log\big(p(\mb Z | \mb R)\big) - \rho  \| \mb R^{-1} \|_1 \Big\}
\ee
Here, $\psi$ denotes the set of covariance matrices of interest, i.e., for our problem the set of positive definite matrices which can be expressed as a sum of an identity and a rank-deficient positive semi-definite (the clutter) matrix. Note the cost function consists of a standard log likelihood term plus a regularization term. The regularization term $\rho  \| \mb R^{-1} \|$ encourages the inverse of the covariance matrix to be sparse. That is, if the $ij$th component of $\mb R^{-1}$ is zero, then clutter variables $i$ and $j$ are conditionally independent given the other variables. The sparsity constraint may often capture real-world clutter scenarios. In fact, physical effects such as locally pronounced clutter correlation over the spatio-temporal range may in fact give rise to ``block sparsity."

\subsection{Non-Homogeneity Detection}
The presence of outliers in the training data for radar STAP causes target cancelation resulting in degraded output signal to interference ratio and perforce degraded detection performance. A common signal processing method in this context is to excise outliers from the training data and use the resulting outlier free training data for covariance matrix estimation. Several algorithms for outlier removal have been proposed recently \cite{Blunt04,Michels04,Rangaswamy04May,Rangaswamy04Sep,Rangaswamy05} in a variety of dense target environments. We take the perspective of observing interference or clutter specific training data as normal and outlier contamination as an anomaly. In machine learning, anomaly detection is an extensively studied area \cite{Chandola09}. Anomaly detection refers to the problem of finding patterns in data that do not conform to expected behavior. By treating outlier (target) contamination as an anomaly we can benefit from machine learning techniques devoted to identifying which data samples are anomalous and subsequently use the remaining normal data for estimation purposes. We can particularly focus on methods that can identify anomalies even as a small number of observations is presented because training samples are not abundant in practice. We can investigate and apply two categories of anomaly detection techniques for the radar non-homogeneity detection (NHD) problem, 1.) supervised learning based anomaly detection \cite{Chawla04,Phua04,Ratsch02} and 2.) unsupervised learning or clustering based anomaly detection \cite{Tan05,Levy08}.

\subsection{Expansion of the EL Approach}
As shown in this dissertation, the expected likelihood approach is powerful to estimate regularization parameters employed as constraints in covariance estimation problems. Unsurprisingly, other constraints which have been commonly exploited in previous covariance estimation methods in radar literature can be also selected or estimated using the expected likelihood approach. For example, the condition number constraint and the noise level can be jointly estimated and the intensity parameter in the shrinkage estimator also can be estimated by the expected likelihood approach. Furthermore, we investigate experimental results for only $K \geq N$, i.e., the number of training samples are greater than or equal to the data dimension since the complex Wishart distribution is defined only for $K \geq N$. Analyzing theoretical properties of the likelihood ratio values for true covariance and focusing on how to apply the expected likelihood approach to the case of $K < N$ can be another future work. Finally, using the expected likelihood approach not just for tuning parameters but for direct estimation of disturbance covariance can be investigated in the future. For example, the EL approach can be applicable to estimate a whole set of eigenvalues and eigenvectors of the covariance matrix.

} 

  \begin{singlespace}

   \bibliographystyle{IEEEtran}
   \addcontentsline{toc}{chapter}{Bibliography}
   \bibliographystyle{C:/Users/BXK265/BOXSYN\string~2/BOXSYN\string~1/LaTex/Bibliography/IEEEbib}
   \bibliography{C:/Users/BXK265/BOXSYN\string~2/BOXSYN\string~1/LaTex/Bibliography/IEEEabrv,C:/Users/BXK265/BOXSYN\string~2/BOXSYN\string~1/LaTex/Bibliography/Bosung}
   \end{singlespace}

\backmatter

\vita{SupplementaryMaterial/Vita}

\end{document}